\documentclass{article}

\usepackage{amsmath,amssymb,amsthm}
\usepackage{fullpage}
\usepackage{xcolor,xspace}
\usepackage{graphicx}
\usepackage{caption}
\usepackage{booktabs}
\usepackage{microtype}
\usepackage{enumerate}
\usepackage{algorithm}
\usepackage[
    indLines=true,
    noEnd=true,
    rightComments=true,
    italicComments=true,
]{algpseudocodex}
\algrenewcommand\algorithmicrequire{\textbf{Input:}}
\algrenewcommand\algorithmicensure{\textbf{Output:}}

\usepackage{thmtools}
\usepackage{thm-restate}

\declaretheorem[name=Theorem]{theorem}
\declaretheorem[name=Lemma,sibling=theorem]{lemma}
\declaretheorem[name=Corollary,sibling=theorem]{corollary}
\declaretheorem[name=Observation,sibling=theorem]{observation}

\declaretheorem[name=Definition,sibling=theorem,style=definition]{definition}

\usepackage[
    colorlinks=true,
    linkcolor=blue!62!black,
    citecolor=green!48!black,
    urlcolor=blue!70!black,
    linktoc=page
]{hyperref}
\usepackage[nameinlink,noabbrev]{cleveref}

\crefname{theorem}{Theorem}{Theorems}
\Crefname{theorem}{Theorem}{Theorems}
\crefname{lemma}{Lemma}{Lemmas}
\Crefname{lemma}{Lemma}{Lemmas}
\crefname{corollary}{Corollary}{Corollaries}
\Crefname{corollary}{Corollary}{Corollaries}
\crefname{observation}{Observation}{Observations}
\Crefname{observation}{Observation}{Observations}
\crefname{proposition}{Proposition}{Propositions}
\Crefname{proposition}{Proposition}{Propositions}
\crefname{claim}{Claim}{Claims}
\Crefname{claim}{Claim}{Claims}
\crefname{conjecture}{Conjecture}{Conjectures}
\Crefname{conjecture}{Conjecture}{Conjectures}
\crefname{assumption}{Assumption}{Assumptions}
\Crefname{assumption}{Assumption}{Assumptions}
\crefname{definition}{Definition}{Definitions}
\Crefname{definition}{Definition}{Definitions}
\crefname{remark}{Remark}{Remarks}
\Crefname{remark}{Remark}{Remarks}
\crefname{algorithm}{Algorithm}{Algorithms}
\Crefname{algorithm}{Algorithm}{Algorithms}
\crefname{section}{Section}{Sections}
\Crefname{section}{Section}{Sections}
\crefname{appendix}{Appendix}{Appendices}
\Crefname{appendix}{Appendix}{Appendices}

\tikzset{
  algpxIndentLine/.style={draw=black!100,very thin}
}

\title{Faster Minimum \(k\)-Cut II: Near-Optimal and Deterministic for Weighted Graphs}

\author{
    Trevor Vaughn\thanks{Carnegie Mellon University. email: tnvaughn@cmu.edu}
}

\begin{document}

\maketitle

\begin{abstract}
The Minimum $k$-Cut problem asks for a minimum-weight set of edges whose
removal leaves an undirected weighted graph with at least $k$ connected components.
We consider only $k \ge 3$. Under the Max-Weight Clique conjecture, weighted Minimum $k$-Cut requires
$n^{k-1-o(1)}$ time for every fixed $k$.  The fastest previous algorithm
for weighted graphs ran in
$n^{k-2}(m+n)(\log n)^{O(k^2)}$ randomized time~\cite{LV26}; for
$k=3$, this gave an $\widetilde O(nm)$-time algorithm.

We give randomized and deterministic algorithms matching the conditional
lower bound in the exponent.  On an $n$-vertex, $m$-edge weighted
graph, our randomized algorithm runs with high probability in
\begin{equation*}
k^{O(k^2)}n^{k-1}\log^2n
\end{equation*}
time.  Our deterministic algorithm runs in
\begin{equation*}
k^{O(k^2)}n^{k-1}\log^{O(1)}n
\end{equation*}
time.  In particular, weighted Minimum $3$-Cut can be solved in
$O(n^2 \log^2 n)$ randomized time and in
$\widetilde O(n^2)$ deterministic time.

The algorithms have two main components.  First, we give a faster
algorithm for weighted Minimum $3$-Cut.  After handling optima with a very
small side and optima with two light sides, the remaining optimum has a
unique structured side.  Tree packing reduces its completion to a batched
collection of $2$-respecting cut problems.  Second, we reduce Minimum
$k$-Cut to Minimum $3$-Cut by enumerating a bounded family of light-cut
candidates and recursively completing either side of each candidate.  If
the enumeration produces too many cuts, then we can instead produce an optimum $k$-cut directly.  We derandomize
the $3$-cut algorithm using a deterministic near-minimum-cut skeleton,
and derandomize the reduction using a specialized $4$-cut algorithm using the skeleton,
the constructive light-cut bounds, and the deterministic spectral sparsifier of \cite{BSS12}.
\end{abstract}

\newpage

\tableofcontents

\newpage

\section{Introduction}

The Minimum $k$-Cut problem asks for a minimum-capacity set of edges whose
removal separates an undirected graph into at least $k$ connected
components.  We write $\lambda_k(G)$ for its optimum value.  The case
$k=2$ is global minimum cut, for which randomized and deterministic
near-linear-time algorithms are known~\cite{Kar00,HLRW24}.  For every fixed
$k$, Minimum $k$-Cut is polynomial-time solvable, but historically the
growth of the exponent with $k$ has been large.

Goldschmidt and Hochbaum~\cite{GH94} gave the first polynomial-time
algorithm for fixed $k$, a deterministic algorithm running in
$n^{O(k^2)}$ time.  Karger and Stein~\cite{KS96} subsequently obtained a
randomized contraction algorithm with running time
$\widetilde O(n^{2k-2})$.  On the deterministic side,
Thorup~\cite{Tho08} developed a tree-packing algorithm running in
$\widetilde O(mn^{2k-2})$ time.  Chekuri, Quanrud, and Xu~\cite{CQX19}
later used the dual of an LP relaxation to find a tree that
$(2k-3)$-respects an optimum cut, improving the deterministic running
time to $\widetilde O(mn^{2k-3})$.

More recent work substantially improved the dependence on $k$ in the
exponent.  Gupta, Lee, and Li~\cite{GLL18,GLL19} gave an
$n^{(1.981+o(1))k}$-time randomized algorithm for general weighted graphs
and, for polynomially bounded integral capacities, a deterministic
$k^{O(k)}n^{(2\omega/3+o(1))k}$-time algorithm.  Gupta, Harris, Lee, and
Li~\cite{GHLL22} then proved nearly tight bounds on the number of minimum
and near-minimum $k$-cuts and obtained a randomized
$n^k(\log n)^{O(k^2)}$-time algorithm.

The natural target is determined by a conditional lower bound.  A reduction
from Max-Weight $(k-1)$-Clique shows that, under the Max-Weight Clique
conjecture, weighted Minimum $k$-Cut requires $n^{k-1-o(1)}$ time for
every fixed $k$~\cite{GLL18,GHLL22,HL22}.  Thus an
$n^{k-1+o(1)}$-time algorithm would be conditionally optimal in the
exponent up to subpolynomial factors.

In our previous work~\cite{LV26}, we combined the structural bounds on
light cuts with tree packing to obtain a randomized weighted Minimum
$k$-Cut algorithm running in
$n^{k-2}(m+n)(\log n)^{O(k^2)}$ time.  This is conditionally optimal for
sufficiently sparse graphs, but becomes $\widetilde O_k(n^k)$ when
$m=\Theta(n^2)$.  For $k=3$, it gives an
$\widetilde O(nm)$-time algorithm, leaving a gap between the cubic
worst-case bound and the natural quadratic target.  That work also gave a
deterministic $k^{O(k^2)}n^{k+O(1)}$-time algorithm.  The present paper
closes the remaining gap in the exponent, in both the randomized and
deterministic settings.

\paragraph{Our results.}
Our first result is a faster algorithm for weighted Minimum
$3$-Cut.  It runs with high probability in
$O(n^2 \log^2 n)$ time.  We also give a deterministic algorithm running
in $\widetilde O(n^2)$ time.

We then reduce Minimum $k$-Cut to Minimum $3$-Cut.  Combined with the
new base case, the reduction gives the following bounds.

\begin{theorem}[Informal]
\label{thm:informal_main}
For every fixed $k\ge3$, Minimum $k$-Cut on a weighted graph can be solved with
high probability in
$k^{O(k^2)}n^{k-1} \log^2 n$ time.  It
can be solved deterministically in
$k^{O(k^2)}n^{k-1} \log^{O(1)}n$ time.
\end{theorem}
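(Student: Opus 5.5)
The plan is to prove the theorem in two layers, mirroring the abstract: a base-case algorithm for $k=3$ running in $O(n^2\log^2 n)$ randomized time (and $\widetilde O(n^2)$ deterministic time), and a recursive reduction from Minimum $k$-Cut to Minimum $3$-Cut whose overhead per level is $k^{O(k)}n$ times polylogarithmic factors. Unrolling the recursion over $k-3$ levels then gives $k^{O(k^2)}n^{k-3}\cdot n^2\log^2 n=k^{O(k^2)}n^{k-1}\log^2 n$. The point is that the polylogarithmic factors are paid only once, at the base case. Every reduction level must cost only a $k^{O(k)}n$ multiplicative factor, with no extra $\log n$.

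\textbf{Step 1: the $3$-cut base case.} Let $\{S_1,S_2,S_3\}$ be an optimum $3$-cut. The first case is when some side has at most a constant number of vertices (a very small side). Then we enumerate, or handle it directly via a minimum-cut computation on the graph with that side contracted. The second case is when two sides are light, i.e.\ $w(\partial S_i)$ is close to $\lambda_2$ for two indices $i$. Then both sides are near-minimum $2$-cuts. We enumerate near-minimum cuts with Karger's bound, or via the deterministic skeleton, in $\widetilde O(n^2)$ time and pair them up. Otherwise exactly one side, say $S_1$, is structured. We pack $O(\log n)$ spanning trees on a skeleton, so that some tree $2$-respects $\partial S_1$. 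For each tree, completing $S_1$ to a $3$-cut then becomes a batched family of $2$-respecting minimum-cut queries: for each candidate $S_1$ we need the minimum cut of $G\setminus S_1$. We answer these in $O(n^2)$ total time per tree, using the standard dynamic-programming/Monge approach for $2$-respecting cuts over all pairs of tree edges. That yields $O(n^2\log n)$ per tree family and $O(n^2\log^2n)$ overall.

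\textbf{Step 2: the reduction.} For an optimum $k$-cut, the light-cut bounds (Gupta--Harris--Lee--Li style) guarantee that some union of parts $A$ has $w(\partial A)\le c_k\lambda_k$ with $k^{O(k)}$ such candidates up to structure. We enumerate a bounded family of candidate light cuts $(A,V\setminus A)$ of size $k^{O(k)}n$. For each candidate we recursively solve Minimum $j$-Cut on $G[A]$ and Minimum $(k-j)$-Cut on $G[V\setminus A]$ (or $k-j+1$, completing either side). If the enumeration overflows its budget, the counting argument forces an optimum $k$-cut to be exhibited directly by the enumerated cuts, and we output it. The recurrence is $T_k(n)\le k^{O(k)}\,n\max_j\bigl(T_j(n)+T_{k-j}(n)\bigr)$ plus enumeration cost, which solves to the claimed bound.

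\textbf{Derandomization.} Replace random contraction and tree packing by the deterministic near-minimum-cut skeleton, and sparsify with \cite{BSS12} so that each enumeration step runs on $O(n)$ edges. For the light-cut enumeration, use the constructive light-cut bounds together with a specialized deterministic $4$-cut routine to find candidates. The \textbf{main obstacle} is Step 1's structured case: we must show that all $2$-respecting completions can be batched in $O(n^2)$ per tree rather than $O(nm)$. I would first try a divide-and-conquer over heavy-path decompositions with Monge row minima (SMAWK), charging each vertex pair once.
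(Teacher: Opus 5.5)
Your overall architecture matches the paper: a quadratic $3$-cut base case, $k^{O(k)}n$-way branching per level, and an overflow certificate when the candidate budget is exceeded. However, Step~1 has a genuine gap exactly where the paper does most of its work. A tree $T$ that $2$-respects $\partial S_1$ does not give a spanning tree of $G\setminus S_1$ that the residual cut $(S_2,S_3)$ $2$-respects. In general $T\setminus S_1$ is a forest with $d+1-\kappa$ components, where $d=|\delta_T(S_1)|$ and $\kappa$ is the number of components of $T[S_1]$. The residual cut crosses $h-d$ of its edges, where $h$ is the number of tree edges crossing the whole optimum, and nothing in your argument bounds $h$. So ``completing $S_1$ becomes a batched family of $2$-respecting minimum-cut queries'' is unjustified, and the heavy-path/SMAWK design you flag as the main obstacle presupposes it.

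The paper closes this gap in three steps. (i) A good-tree lemma bounds the expectation of $d_A+2d_B+2d_C=4h-d_A$ below $15$, so some tree has $h\le4$ and avoids the patterns $(1,3,4)$ and $(1,4,3)$. (ii) When $d_A=2$ and $T[A]$ is connected, a zero-capacity virtual link joins the two components of $T\setminus A$. For the leftover pattern $(2,3,3)$ it adds a \emph{repair edge} between those components with both endpoints in $B$ or both in $C$. Such edges have total capacity exceeding $\lambda_3/18$ by cut identities and local optimality, and one is found deterministically by a constant-size rectangle net. (iii) When $d_A\in\{3,4\}$, it enumerates directly. Once these residual trees exist, batching is comparatively routine. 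Every rooted subtree of a residual tree is a union of $O(1)$ DFS intervals of the base tree, so one capacity matrix answers all oracle queries of the GMW $2$-respecting optimizer in $O(1)$ time, giving $O(n\log n)$ per candidate.

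Your other two $3$-cut cases also fail as stated. The paper's ``very small side'' is a side of boundary below $\lambda_3/2$, not one of constant cardinality. Such a cut is the unique strict-small bipartition, so one global minimum cut exposes it. It must be removed because the later analysis needs every optimum side to have boundary at least $\lambda_3/2$: this gives $\lambda_3<3\lambda_2$ in the packing bound and $c(E_G(B,C))\le\lambda_3/2$ in the repair estimate. Your version, a minimum-cut computation per constant-size side, costs $\Omega(nm)$ already for singletons.

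For two light sides, there can be far too many cuts of value at most $2\lambda_3/3$ to enumerate and pair. In a cycle, all $\binom n2$ arc cuts have value $2=2\lambda_3/3$. The paper retains only the $2n+1$ lightest $2$-respecting cuts and takes their longest laminar prefix. It uses that two crossing cuts of value at most $B$ give a $3$-cut of value at most $3B/2$. Hence either an optimum is exhibited, or both light sides lie in a laminar family of $O(n)$ sets, whose pairs are scanned in $O(n^2)$.

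In Step~2 you leave the enumeration cost open, but the bound needs a parameter-$j$ call to find its candidates in $\widetilde O(n^{j-1})$ time. This is why the paper uses a single side of boundary at most $2\lambda_j/j$, which a sampled dual-packing tree $(j-1)$-respects with probability $\Omega(1/j)$. It also uses a constructive overflow certificate run at $U^*=jB/2$, so that $\lambda_j$ need not be known.

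Finally, deterministically one cannot sample, so the paper enumerates over the entire $\widetilde O(n)$-tree support of the packing computed on the BSS sparsifier. That is the role of BSS, not making enumeration run on $O(n)$ edges. This is affordable only because for $k\ge5$ some support tree $(k-2)$-respects a light side. That guarantee fails for $k=4$, which is why $4$-cut needs its own skeleton-and-safe-contraction algorithm.
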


\paragraph{A faster algorithm for $3$-cut.}
Let $(A,B,C)$ be an optimum $3$-cut.  Since the boundaries of its three
sides sum to $2\lambda_3$, at least one side has boundary at most
$2\lambda_3/3$.  We call such sides light.  We first handle two easier
configurations.  If an optimum has a side of boundary below
$\lambda_3/2$, that cut is the unique bipartition of this size (called strict-small); a
global minimum cut identifies it, after which one minimum-cut computation
completes the optimum.  If an optimum has two light sides, then either the
corresponding rooted cuts are represented in a laminar family and can be
combined directly, or two relevant cuts cross and immediately certify an
optimum $3$-cut.

To obtain the relevant light cuts, we compute an approximately optimal fractional tree
packing and sample $O(\log n)$ spanning trees from its distribution.  A
fixed light optimum side is $2$-respected by a sampled tree with constant
probability.  For each sampled tree, we enumerate all cuts obtained by
deleting at most two tree edges and retain only its $O(n)$ lightest
candidates.  A capacity matrix in DFS order evaluates each candidate in
constant time after $O(m+n^2)$ preprocessing for the tree.  We then
deduplicate the retained candidates and keep the longest laminar prefix of
the globally lightest cuts.  If a relevant optimum side is lost during
either truncation, the retained cuts themselves contain a crossing
certificate for an optimum $3$-cut.

It remains to complete a unique medium side $A$.  Performing an
independent global minimum-cut computation in $G[V\setminus A]$ for every
candidate $A$ would cost $\widetilde O(nm)$.  Instead, a sample
from the tree packing gives a controlled representation of the three
optimum sides by the tree edges crossing them.  We classify the possible
crossing patterns (how many tree edges are crossed by each side).
The patterns $(1,3,4)$ and $(1,4,3)$, two of the hard cases, can be avoided with constant probability.
We reduce the remaining patterns to finding a minimum cut that
$2$-respects a residual tree or to direct enumeration.
In the most difficult such pattern, $(2,3,3)$, we add a repair edge joining two
components of $T\setminus A$.  We show that a constant fraction of the
eligible edge capacity has both endpoints in the same residual optimum
side.  We deterministically construct a constant-size net containing a
valid repair edge.  After adding the repair edge, the residual optimum $2$-respects the
repaired tree.

The final optimization uses the $2$-respecting cut algorithm
of~\cite{GMW20b}.  Its graph-dependent preprocessing can be performed once
for a completion tree, while the remaining optimization is repeated for
all $O(n)$ candidate sides.  This separation is what permits the
$\widetilde O(m+n^2)$ batched running time.

\paragraph{Reducing $k$-cut to $3$-cut.}
Let $(A_1,\ldots,A_k)$ be an optimum $k$-cut.  Because the sum of its
side boundaries is $2\lambda_k$, some side $A_i$ satisfies
$c_G(A_i)\le2\lambda_k/k$.  We call such a cut light.  The reduction
constructs a bounded family of candidate cuts (oriented away from a root) and, for each
candidate $A$, recursively computes a minimum $(k-1)$-cut in both
$G[A]$ and $G[V\setminus A]$.  If the rooted orientation of $A_i$
appears in the family, one of these two recursive calls recovers the other
$k-1$ optimum parts.

For the randomized reduction, we sample spanning trees from the dual
packing distribution.  A light optimum side is $(k-1)$-respected by one
sample with probability $\Omega(1/k)$.  For each sampled tree we
enumerate the $(k-1)$-respecting cuts, but retain only the
$C_kn+1$ lightest, where $C_k=k^{O(k)}$ is obtained from the
constructive light-cut bound.  We then retain the same number of globally
lightest distinct cuts.

This truncation does not require knowing $\lambda_k$.  If the relevant
optimum side is discarded, the retained family is full and every retained
cut is no heavier.  If $B$ is the largest retained boundary value, then
$kB/2\le\lambda_k$.  The constructive bounds
turn the overflowing family into a $k$-cut of value at most $kB/2$,
which must therefore be optimum.  These arguments algorithmize the
near-minimum-cut bounds of~\cite{GHLL22} and avoid the perturbation argument
used in our previous work.

Only $k^{O(k)}n$ candidates are recursively completed.  The parameter
drops from $k$ to $k-1$ at every recursive level, so expanding the
recurrence down to the new $3$-cut base case gives
$k^{O(k^2)}n^{k-3}(m+n^2)\log^2 n$ time.  All randomness is confined to
sampling from explicitly computed tree-packing distributions.

\paragraph{Derandomization.}
We derandomize the $3$-cut algorithm using a deterministic
near-minimum-cut skeleton based on~\cite{HLRW24}.  After the strict-small
and two-light-side cases have been removed, all cuts needed by the
completion algorithm lie in a fixed range above the global
minimum cut.  The skeleton preserves the required upper bounds for these
cuts and a lower bound for every cut.  A packing of only
$\log^{O(1)}n$ edge-disjoint spanning trees in the skeleton
deterministically covers every relevant light optimum side and contains a
tree with the crossing pattern required by the completion algorithm.
Processing this entire packing replaces both random tree samples and gives
the $\widetilde O(n^2)$ deterministic bound.

The reduction requires a different construction.  For $k=4$, we reuse
both graphs produced by the deterministic near-minimum-cut construction.
If the global minimum cut of the current quotient has value at least one
third of the optimum $4$-cut, then a packing of only
$\log^{O(1)}n$ trees in the final skeleton contains a tree that
$3$-respects a light optimum side.  Enumerating the
$3$-respecting cuts of these trees costs $\widetilde O(n^3)$.

Otherwise, unless the optimum already refines the current minimum cut, the
intermediate graph exposes a polylogarithmic-size set of contractions, one
of which joins two vertices in the same optimum part.  Branching over these
contractions therefore preserves an optimum in one child.  Each safe
contraction destroys a distinct cut of value below
$\lambda_4/3$, and the strict-small-cut bound of~\cite{GHLL22}
limits such a path to three contractions.  This gives a 
deterministic $O(m)+\widetilde O(n^3)$-time $4$-cut algorithm.

For $k \ge 5$ we use a deterministic spectral sparsifier and
inspect the full support of its dual tree packing. For every relevant light cut, some support tree $(k-2)$-respects that cut. Although the support has $O(n)$ trees, enumerating the $(k-2)$-respecting cuts of
each tree takes $n^{k-2}$ time, giving $n^{k-1}$ total enumeration time.
Together with the same constructive overflow certificate and recursive
completion, this yields the deterministic bound in
\cref{thm:informal_main}.

\paragraph{Organization.}
\cref{sec:faster_three_cut} gives the faster randomized
algorithm for weighted Minimum $3$-Cut.
\cref{sec:reduction_kcut} develops the constructive light-cut
certificates and the randomized reduction from Minimum $k$-Cut to Minimum
$3$-Cut.  \cref{sec:derandomizing_three_cut} derandomizes the
$3$-cut algorithm using the near-minimum-cut skeleton.
\cref{sec:derandomizing_reduction_overview} derandomizes the
reduction, treating $k=4$ separately and using deterministic spectral
sparsification for $k\ge5$.  The constant-adjusted version of the skeleton construction is
proved in \cref{app:skeleton_construction}. We apply the weighted algorithm to give faster algorithms on simple graphs for $k \in \{3,4\}$ in \cref{app:simple_graph_consequences}.

\section{Preliminaries}

\paragraph{Conventions.}
All graphs are finite, undirected, capacitated multigraphs unless stated
otherwise.  Parallel edges are allowed.  Self-loops may arise under
contraction but are discarded, since they cross no cut.  For a graph
$G=(V,E,c)$, write $n:=|V|$ and $m:=|E|$.  Edge capacities are positive
integers; for $F\subseteq E$, write $c(F):=\sum_{e\in F}c(e)$.
Unless stated otherwise, the total capacity $C:=c(E)$ is bounded by
$n^{O(1)}$. Auxiliary reweighted graphs may
have exact rational capacities of polynomial bit complexity.

We use $\widetilde O(\cdot)$ to suppress factors polylogarithmic in $n$.
An event holds with high probability if it holds with probability
$1-n^{-\Omega(1)}$; the constant in the exponent can be increased by
changing the constants in the relevant sampling bounds.

\paragraph{Graph and cut notation.}
For a function $z$ on a finite set and a subset $X$ of its domain, write
$z(X):=\sum_{x\in X}z(x)$.

For $S\subseteq V$, define its edge boundary and boundary value by
$\delta_G(S):=\{uv\in E:|\{u,v\}\cap S|=1\}$ and
$c_G(S):=c(\delta_G(S))$, respectively. We identify $S$ with the cut
$(S,V\setminus S)$ when no ambiguity can arise.

For a partition $\mathcal P=(P_1,\ldots,P_t)$ of $V$, let
$\delta_G(\mathcal P)$ be the set of edges whose endpoints lie in
different parts, and let $c_G(\mathcal P):=c(\delta_G(\mathcal P))$.
For pairwise disjoint vertex sets $A,B\subseteq V$, write
$E_G(A,B):=\{uv\in E:u\in A,\ v\in B\}$.  Also write
$E_G(A):=\{uv\in E:u,v\in A\}$ and
$E_G(v,A):=E_G(\{v\},A)$.

Let $G[S]$ denote the subgraph induced by $S$, and let
$\operatorname{cc}(G)$ denote the number of connected components of
$G$.  For a nonempty $S\subseteq V$, the quotient $G/S$ is obtained
by contracting $S$ to one vertex, deleting the resulting self-loops, and
retaining parallel edges.  More generally, a quotient may contract several
pairwise disjoint vertex sets.  Every vertex of the quotient represents
the corresponding subset of original vertices, so cuts and partitions of
the quotient lift canonically to $G$.

\paragraph{Sets, cuts, and partitions.}
A family $\mathcal L$ of vertex subsets is laminar if, for every
$A,B\in\mathcal L$, either $A\cap B=\emptyset$, $A\subseteq B$, or
$B\subseteq A$.  Two sets cross if none of these relations holds.

The two members of a cut $(S,V\setminus S)$ are its sides.  Its shore is
the smaller side, with an arbitrary fixed tie-breaking rule when the two
sides have equal cardinality.

When a root $r\in V$ is fixed, the rooted orientation of the cut
$(S,V\setminus S)$ is the unique side not containing $r$.  We refer to
this side as the rooted cut.  Whenever cuts from
different rooted trees are compared, all trees use the same root.

A partition $\mathcal P$ refines a partition $\mathcal Q$ if every part
of $\mathcal P$ is contained in a part of $\mathcal Q$; equivalently,
$\mathcal Q$ is a coarsening of $\mathcal P$.

\paragraph{Minimum $k$-cut.}
A $k$-cut is a partition $\mathcal P$ of $V$ into exactly $k$
nonempty parts; its cut set is $\delta_G(\mathcal P)$.  Equivalently, the
minimum $k$-cut value is the minimum capacity of an edge set whose removal
leaves at least $k$ connected components. The minimum value of a $k$-cut of $G$ is denoted by
$\lambda_k(G)$, or simply by $\lambda_k$ when the graph is clear.  We
use the conventions $\lambda_1(G):=0$ for every nonempty graph and
$\lambda_j(G):=+\infty$ when $|V(G)|<j$.

The parts of a $k$-cut are called its sides.  If
$\mathcal P=(P_1,\ldots,P_k)$, then
$c_G(P_1)+\cdots+c_G(P_k)=2c_G(\mathcal P)$, since every edge crossing
the partition is counted in the boundaries of exactly two sides.

\paragraph{Respecting a tree.}
An edge set $F$ $p$-respects a tree $T$ if
$|F\cap E(T)|\le p$.  A cut or partition $p$-respects $T$ when its
boundary does. When $T$ is rooted, deleting the edges
$\delta_T(S)$ represents the rooted orientation of $S$ as the symmetric
difference of the rooted subtrees below those edges.

\begin{definition}[Cut-value regimes]
\label{def:cut_sizes}
Fix an integer $k\ge3$ and a value $U>0$.  A cut $S\subsetneq V$ is
$U$-light if $c_G(S)\le2U/k$.  It is $U$-small if
$c_G(S)\le U/(k-1)$, and it is $U$-medium if
$U/(k-1)<c_G(S)\le2U/k$.  The strict versions replace the corresponding
upper inequality by a strict inequality; in particular, a
$U$-strict-medium cut satisfies
$U/(k-1)<c_G(S)<2U/k$.

When $U=\lambda_k(G)$, we omit $U$ and simply say light, small, medium,
strict-light, strict-small, or strict-medium.
\end{definition}

\section{A Faster Algorithm for Weighted Minimum \texorpdfstring{$3$}{3}-Cut}
\label{sec:faster_three_cut}

Throughout this section, let $G=(V,E,c)$ be a connected undirected
capacitated multigraph with $n:=|V|\ge3$ and $m:=|E|$, and write
$\lambda:=\lambda_3(G)$.  Capacities are positive.
We use the $O(m\log^2 n)$-time randomized global-minimum-cut
algorithm of Gawrychowski, Mozes, and Weimann~\cite{GMW20a}.

The algorithm constructs a collection of explicit $3$-cuts and returns the
lightest one.  Randomness is used by the global-minimum-cut algorithm in the
strict-small preprocessing, to construct a cut skeleton, and to sample
trees from an explicitly computed tree packing distribution.

For the randomized $3$-cut algorithm, the bound forced
by failure of the strict-small branch lets us compute the required packing
quickly on a sampled skeleton.

We also use the range-counting formulation of the deterministic fixed-tree
$2$-respecting minimum-cut algorithm of Gawrychowski, Mozes, and
Weimann~\cite{GMW20b}. The centroid searches in the $2$-respecting cut optimizer require a bounded-degree tree, although the
sampled spanning trees may have arbitrary degree.  For a rooted tree $T$,
construct its binary expansion $\widehat T$ by replacing the
children of each high-degree vertex $v$ by a rooted binary tree.  The
additional vertices and edges are called auxiliary.  Let
$\pi_T\colon V(\widehat T)\to V(T)$ map every auxiliary vertex in the
gadget of $v$ to $v$ and fix every original vertex.  Contracting all
auxiliary edges recovers $T$.  The edges corresponding to $E(T)$ are
called distinguished edges.

Auxiliary vertices are used only by tree data structures and never represent
vertices of $G$.  In particular, for $X\subseteq V(T)$, let
$\widehat X:=\pi_T^{-1}(X)$.  No auxiliary edge crosses $\widehat X$,
and the distinguished edges crossing $\widehat X$ correspond bijectively
to $\delta_T(X)$.  The expansion has $O(n)$ vertices and maximum degree
at most three. We record a modified version with improved work for our specific case below.

\begin{lemma}[Oracle form of the $2$-respecting optimizer]
\label{lem:oracle_two_respecting_optimizer}
Let $H$ be a weighted graph on a vertex set $W$, and let $R$ be a
rooted spanning tree on $W$; the edges of $R$ need not belong
to $H$.  Let $\widehat R$ be an $O(n)$-vertex constant-degree
expansion of $R$, where $n:=|W|$, and let
$\pi_R\colon V(\widehat R)\to W$ be its contraction map.  The edges
corresponding to $E(R)$ are distinguished, while all other edges are
auxiliary.  For $f\in E(\widehat R)$, let
\[
    R_f^\circ:=W\cap V(\widehat R_f),
\]
where $\widehat R_f$ is the rooted subtree below $f$.

Suppose that every query of the form
$c_H(R_f^\circ)$, $c(E_H(R_f^\circ,R_g^\circ))$, or
$c(E_H(R_g^\circ,W\setminus R_f^\circ))$, whenever the displayed sets
are disjoint, can be answered in $O(Q)$ time.  After
$O(n\log n)$ preprocessing, a minimum cut of $H$ that crosses at most
two distinguished edges of $\widehat R$, equivalently at most two edges
of $R$, can be found in $O(n(Q+1)\log n)$ time and
$O(n\log n)$ space.
\end{lemma}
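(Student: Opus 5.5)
We follow the structure of the Gawrychowski--Mozes--Weimann fixed-tree $2$-respecting algorithm~\cite{GMW20b}. Its only interaction with the graph is through a constant number of range-counting primitives on the tree, and we replace each primitive by a call to the oracle. The cut determined by one distinguished edge $f$ is $R_f^\circ$, and its value $c_H(R_f^\circ)$ is a single query. For two distinguished edges $f,g$ the cut is either $R_f^\circ\setminus R_g^\circ$, when $g$ lies below $f$, or $R_f^\circ\cup R_g^\circ$, when the subtrees are disjoint. The value is determined by the three query types in the statement:
\[
c_H(R_f^\circ\cup R_g^\circ)=c_H(R_f^\circ)+c_H(R_g^\circ)-2c(E_H(R_f^\circ,R_g^\circ)),
\]
and in the nested case
\[
c_H(R_f^\circ\setminus R_g^\circ)=c_H(R_f^\circ)-c_H(R_g^\circ)+2c(E_H(R_g^\circ,W\setminus R_f^\circ)).
\]
Auxiliary edges never separate original vertices, since $\pi_R$ contracts them. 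Hence restricting $f,g$ to distinguished edges enumerates exactly the cuts crossing at most two edges of $R$. Evaluating any candidate pair therefore costs $O(Q)$.

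\textbf{Step 1: preprocessing.} In $O(n\log n)$ time we build a centroid decomposition of $\widehat R$ and heavy-path and LCA structures for subtree-membership tests. These depend only on $R$, because $\widehat R$ has constant degree and $O(n)$ vertices. We also compute the best single-edge cut in $O(nQ)$ time.

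\textbf{Step 2: the disjoint (incomparable) case.} For each distinguished $f$, we locate the best partner $g$ off the root path of $f$. We do this by a centroid search: at each centroid node we compare the constantly many incident directions, using one or two oracle evaluations per direction. This is justified by the monotonicity/Monge-type property that GMW exploit, namely that the objective as a function of $g$, restricted to a centroid branch, identifies which branch contains the optimum by comparing values at the boundary edges. That gives $O(\log n)$ oracle calls per $f$.

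\textbf{Step 3: the nested case.} We use the heavy-path decomposition and the same centroid search restricted to descendants of $f$, again with $O(\log n)$ queries per $f$. The total over all $f$ is $O(n(Q+1)\log n)$ time. Space is $O(n\log n)$ for the decompositions.

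The main obstacle is verifying that the GMW search procedure really only requires these three query types, rather than explicit sums of edge weights over arbitrary vertex ranges. In particular, it must not need edge weights between a subtree and the complement of an ancestor beyond the forms listed, and it must handle disjointness correctly when $f$ or $g$ is auxiliary. We would first check that every quantity GMW compute at a centroid step is a difference of the two-edge cut values above. Once that holds, their correctness argument goes through unchanged with each range-counting query of cost $O(\log n)$ replaced by an oracle call of cost $O(Q)$. The preprocessing then no longer depends on $H$.
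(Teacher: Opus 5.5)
Expressing the two-edge cut values through the three query types is the right starting point, though your nested formula has its signs wrong. It should read $c_H(R_f^\circ\setminus R_g^\circ)=c_H(R_f^\circ)+c_H(R_g^\circ)-2c(E_H(R_g^\circ,W\setminus R_f^\circ))$. The real gap is in Steps 2 and 3. You assume that, for fixed $f$, the objective $g\mapsto c_H(R_f^\circ\cup R_g^\circ)$ (or its nested analogue) lets a centroid search decide which branch holds the minimizer by comparing values at the incident edges. No such property holds, and it is not what GMW exploit. For instance, let $g$ range over a path branch with no $H$-edges to $R_f^\circ$. The objective is then $c_H(R_g)$ plus a constant, and these values can form an arbitrary nonmonotone sequence. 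The value at the top edge of a branch therefore says nothing about the minimum inside it. Even the Monge structure GMW do use does not make a single row unimodal; SMAWK works only by processing all rows of a matrix together. So an independent $O(\log n)$-query search per $f$ cannot be justified, and without a further idea you are back to quadratic enumeration.

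The missing ingredient is the \emph{interest} predicate, which, unlike the objective, is monotone. Say $e$ is cross-interested in an independent $f$ if $c_H(R_e)<2c(E_H(R_e,R_f^\circ))$. Say it is down-interested in a descendant $f$ if $c_H(R_e)<2c(E_H(R_f^\circ,W\setminus R_e))$, and by convention in every ancestor. Three facts make this work. A two-edge cut beats both of its one-edge cuts only if its edges are mutually interested. Interest is preserved when passing to an ancestor subtree. Two disjoint subtrees cannot both receive more than half of $c_H(R_e)$. Hence the interested edges of $e$ form root-to-node paths in $\widehat R$. The centroid search is used to locate the \emph{endpoints} of these paths, with $O(\log n)$ threshold tests of $O(1)$ queries each.

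This also settles the obstacle you flagged. The tests use exactly the three query types and make sense on the whole laminar family $\{R_f^\circ:f\in E(\widehat R)\}$, auxiliary edges included. Since $R_f^\circ=R_f$ on distinguished edges, the paths in $\widehat R$ restrict to those in $R$.

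The minimization is a separate step. Each interested path meets $O(\log n)$ heavy paths of $R$, giving $O(n\log n)$ tags. Radix sorting groups these into mutually interested pairs of heavy paths. The cut-value matrix of each pair is Monge (partial Monge within one heavy path), so SMAWK-type searches find all minima with $O(n\log n)$ oracle queries in total. That is where the $O(n(Q+1)\log n)$ bound comes from.
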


\begin{proof}
Note that we do not consider auxillary edges as candidate cut edges.
Build a heavy-light decomposition of the original tree $R$ and a centroid
decomposition of the constant-degree expansion $\widehat R$.

Fix a distinguished edge $e\in E(R)$.  For
$f\in E(\widehat R)$ independent of $e$, say that $e$ is
\emph{cross-interested} in $f$ if
\[
    c_H(R_e)<2c\bigl(E_H(R_e,R_f^\circ)\bigr).
\]
If $f$ is a descendant of $e$, say that $e$ is
\emph{down-interested} in $f$ if
\[
    c_H(R_e)
    <
    2c\bigl(E_H(R_f^\circ,W\setminus R_e)\bigr).
\]
As in~\cite{GMW20b}, $e$ is defined to be down-interested in every
ancestor edge $f$. Every test uses a
constant number of the assumed oracle queries.

The argument of~\cite{GMW20b} to establish the existence of the paths of interested edges applies unchanged to the laminar
family $\{R_f^\circ:f\in E(\widehat R)\}$.  Indeed, two disjoint expanded subtrees cannot each
contain more than half of the relevant boundary capacity, while interest
is preserved when moving from a subtree to an ancestor subtree.  Hence the
cross-interested and down-interested edges form root-to-node paths
$\widehat C_e$ and $\widehat D_e$, ending at vertices
$\widehat c_e$ and $\widehat d_e$.

At a centroid $x$ of $\widehat R$, test every edge $f$ incident to
$x$ for membership in the relevant interested path.  When $f$ is
auxiliary, this is a test on the original vertices
$R_f^\circ$: use the displayed cross-interest or down-interest inequality
exactly as for a distinguished edge.  No capacity is assigned to $f$.  The outcomes determine
the component of $\widehat R-x$ containing the path endpoint.  In particular, if the edge from $x$ toward the root is not on the path, then
the endpoint lies in the root-side component; otherwise, if a child edge
is on the path, the endpoint lies in that child component; and if no child
edge is on the path, the endpoint is $x$.  Since $\widehat R$ has
constant degree, recursing through its centroid decomposition finds each
endpoint using $O(\log n)$ interest tests.  This argument remains valid
when either the centroid or an incident edge is auxiliary.

For every distinguished edge $f$, its expanded subtree contains exactly
the original vertices of $R_f$, and hence $R_f^\circ=R_f$.  Therefore
$e$ is interested in $f$ in $R$ if and only if it is interested in
the distinguished copy of $f$ in $\widehat R$.  It follows that
\[
    C_e=\widehat C_e\cap E(R),
    \qquad
    D_e=\widehat D_e\cap E(R).
\]
If an expanded endpoint is auxiliary, it lies in the gadget of the original
vertex at which the corresponding original interested path terminates.
Thus the endpoints in $R$ are
$c_e=\pi_R(\widehat c_e)$ and
$d_e=\pi_R(\widehat d_e)$. Computing all
critical endpoints takes $O(nQ\log n)$ time.

For each edge $e$, the interested paths determined by $c_e$ and $d_e$
meet $O(\log n)$ heavy paths of $R$.  Emit a tag $(P(e),P)$ for every
heavy path $P$ in which $e$ is interested.  The total number of records
is $L=O(n\log n)$.

Assign every heavy path an integer identifier in
$\{1,\ldots,n\}$, and radix-sort the tags by the unordered pair of their
path identifiers and one orientation bit.  A pair of heavy paths is
mutually interested exactly when its group contains both orientations.
This identifies all mutually interested pairs in $O(L+n)$ time.

For every incidence belonging to a mutually interested pair, emit a tag
consisting of the pair identifier, its side of the pair, and the position
of the edge on its heavy path.  A second radix sort groups the edges
belonging to each pair and orders every resulting list along its heavy
path, again in $O(L+n)$ time.  The entire interest-list construction
therefore costs $O(n\log n)$.

For two distinct mutually interested paths, search the corresponding Monge
matrices by SMAWK exactly as in~\cite{GMW20b}.  The sum of their row and
column counts is $O(L)$, so this step uses $O(n\log n)$ oracle queries
and takes $O(nQ\log n)$ time.  The same-heavy-path partial-Monge instances
require $O(n\alpha(n))$ further oracle queries and are covered by the same
bound.  All remaining tree operations take $O(n\log n)$ time, proving the
lemma.
\end{proof}

\subsection{Strict-Small Cuts}
\label{subsec:strict_small}

\begin{lemma}[Crossing certificate for $k=3,4$]
\label{lem:crossing_certificate_k34}
Let $k\in\{3,4\}$ and let $A,S\subsetneq V$ be crossing cuts satisfying
$c_G(A), c_G(S)\le2U/k$.  One can construct from $A$ and $S$ a
$k$-cut of value at most $U$.
\end{lemma}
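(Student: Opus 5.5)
The plan is to use the four atoms of the crossing pair. Since $A$ and $S$ cross, the sets
\[
  X_1:=A\cap S,\quad X_2:=A\setminus S,\quad X_3:=S\setminus A,\quad X_4:=V\setminus(A\cup S)
\]
are all nonempty. Write $w_{ij}:=c(E_G(X_i,X_j))$ for the capacity between atoms, and let $W:=\sum_{i<j}w_{ij}$. The partition $(X_1,X_2,X_3,X_4)$ has cut set $\delta_G(A)\cup\delta_G(S)$, so $W$ is its value. Each pair of atoms is separated by $A$, by $S$, or by both, and the pairs separated by both are the diagonal pairs $\{X_1,X_4\}$ and $\{X_2,X_3\}$. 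Set $D:=w_{14}+w_{23}$. Counting each edge once for $\delta_G(A)$ and once for $\delta_G(S)$ gives the identity
\[
  c_G(A)+c_G(S)=W+D .
\]

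\textbf{Case $k=4$.} I output $(X_1,X_2,X_3,X_4)$ itself. It is a $4$-cut because all atoms are nonempty. Its value satisfies
\[
  W\le W+D=c_G(A)+c_G(S)\le 2\cdot\tfrac{2U}{4}=U .
\]

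\textbf{Case $k=3$.} I merge two adjacent atoms, that is, a non-diagonal pair, into one part. The candidate pairs are $\{X_1,X_2\}$ (giving the $3$-cut $(A,\,S\setminus A,\,V\setminus(A\cup S))$), $\{X_1,X_3\}$, $\{X_2,X_4\}$ and $\{X_3,X_4\}$. Each merge gives three nonempty parts, hence a genuine $3$-cut. Merging $X_i$ and $X_j$ gives value $W-w_{ij}$. The four non-diagonal weights sum to $W-D$, so the four candidate values sum to
\[
  4W-(W-D)=3W+D .
\]
The lightest candidate therefore has value at most $(3W+D)/4$. Since $D\ge0$, this is at most
\[
  \tfrac34(W+D)=\tfrac34\bigl(c_G(A)+c_G(S)\bigr)\le\tfrac34\cdot\tfrac{4U}{3}=U .
\]
Constructively, I evaluate the four candidates and return the lightest one.

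The only delicate step is the $k=3$ averaging. A naive average over all six merges only yields $\tfrac56 W\le\tfrac{10}{9}U$, which is too weak. The fix is to restrict to the non-diagonal merges and to use the identity $c_G(A)+c_G(S)=W+D$ rather than the weaker bound $W\le c_G(A)+c_G(S)$. Beyond that, I only need to check that nonemptiness of the atoms, which is exactly the crossing hypothesis, makes every produced partition a valid $k$-cut.
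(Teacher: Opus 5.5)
Your proof is correct and follows the paper's route. For $k=4$ you output the four atoms of the crossing pair, and for $k=3$ you merge two atoms lying on a common side of one cut and keep the cheapest result. The only difference is bookkeeping. The paper compares just the two merges that keep $V\setminus A$ or $A$ intact, with costs $c_G(A)+c(E_G(A\cap S,A\setminus S))$ and $c_G(A)+c(E_G(S\setminus A,V\setminus(A\cup S)))$. It bounds the smaller extra term by $c_G(S)/2\le U/3$, because these two edge classes are disjoint subsets of $\delta_G(S)$. Your four-candidate average is exactly that two-candidate average symmetrized over the roles of $A$ and $S$.
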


\begin{proof}
Let $X:=A\cap S$, $Y:=A\setminus S$, $Z:=S\setminus A$, and
$W:=V\setminus(A\cup S)$.  All four sets are nonempty.  If $k=4$, this
partition has value at most
$c_G(A)+ c_G(S)\le U$.

Suppose $k=3$, and set $a:=c(E_G(X,Y))$ and $f:=c(E_G(Z,W))$.  These two
edge classes are disjoint subsets of $\delta_G(S)$, so
$\min\{a,f\}\le U/3$.  Splitting $A$ by $S$ costs
$c_G(A)+a$, whereas splitting $V\setminus A$ by $S$ costs
$c_G(A)+f$.  The cheaper partition has value at most $U$.
\end{proof}

\begin{lemma}[Uniqueness of a strict-small cut bipartition]
\label{lem:unique_strict_small_cut}
There is at most one strict-small cut bipartition.  Equivalently, if
$S,T\subsetneq V$ satisfy
$c_G(S), c_G(T)<\frac{\lambda}{2}$,
then
$\{S,V\setminus S\} = \{T,V\setminus T\}$.
\end{lemma}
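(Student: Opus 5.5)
We argue by contradiction, reducing to a small $3$-cut. Suppose $S,T\subsetneq V$ are nonempty with $c_G(S),c_G(T)<\lambda/2$, and suppose the bipartitions $\{S,V\setminus S\}$ and $\{T,V\setminus T\}$ differ. Replacing $T$ by its complement if needed (this does not change $c_G(T)$), we first settle how the two bipartitions can sit relative to each other. If they differ, then either $S$ and $T$ cross, or, after possibly complementing $T$, the sets are nested or disjoint with the four relevant regions not all nonempty, in which case one can arrange that three of the regions $S\cap T$, $S\setminus T$, $T\setminus S$, $V\setminus(S\cup T)$ are nonempty.

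\emph{Crossing case.} If $S$ and $T$ cross, all four regions are nonempty. Apply \cref{lem:crossing_certificate_k34} with $k=3$ and $U:=\frac{3}{2}\max\{c_G(S),c_G(T)\}$, so that $c_G(S),c_G(T)\le 2U/3$. This produces a $3$-cut of value at most $U<\frac32\cdot\frac{\lambda}{2}<\lambda$, contradicting the definition of $\lambda=\lambda_3(G)$. A more direct route is also available: the $4$-partition by the four regions has value at most $c_G(S)+c_G(T)<\lambda$, and merging any two parts gives a $3$-cut whose value is no larger, again a contradiction.

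\emph{Non-crossing case.} Up to complementing, assume $S\subsetneq T$ with $T\neq V$. Then $(S,\,T\setminus S,\,V\setminus T)$ is a $3$-cut with three nonempty parts. Every edge it cuts lies in $\delta_G(S)\cup\delta_G(T)$, so its value is at most $c_G(S)+c_G(T)<\lambda$, a contradiction. The disjoint case $S\cap T=\emptyset$ with $S\cup T\neq V$ reduces to this one, since $S\subseteq V\setminus T$ and complementing $T$ preserves its boundary; if the containment is not proper, then $S\cup T=V$ and the bipartitions coincide.

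The only step requiring care is this case bookkeeping: checking that whenever the two bipartitions differ, some choice of orientation yields at least three nonempty regions. The bound itself is immediate from $c_G(S)+c_G(T)<\lambda$.
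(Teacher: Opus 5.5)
Your proof is correct and is essentially the paper's argument. Two distinct bipartitions have a common refinement with at least three nonempty parts, and any $3$-part coarsening of it cuts only edges of $\delta_G(S)\cup\delta_G(T)$, so its value is below $\lambda$; the paper coarsens that refinement directly, which makes your crossing/nested/disjoint case analysis (and the detour through \cref{lem:crossing_certificate_k34}) unnecessary.
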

\begin{proof}
Suppose the two cut bipartitions are distinct.  Their common refinement has at least three nonempty parts.  Coarsen it arbitrarily to
three parts.  Every edge crossing the resulting partition crosses at least
one of the two original cuts, so its value is at most
$c_G(S)+c_G(T)<\lambda$, contradicting the definition of
$\lambda=\lambda_3(G)$.
\end{proof}

Let $\mu:=\lambda_2(G)$, and compute a global minimum cut
$(S,V\setminus S)$ of value $\mu$.  If $|S|\ge2$, compute a
global minimum cut of $G[S]$; if $|V\setminus S|\ge2$, compute a global
minimum cut of $G[V\setminus S]$.  Combining each induced cut with the
border $(S,V\setminus S)$ gives at most two candidate $3$-cuts.  Let
$\mathcal P_{\mathrm{small}}$ be the lighter one, with value $+\infty$
if neither side can be split.  The two induced graphs have disjoint edge
sets, so these computations take $O(m\log^2 n)$ total time using
the algorithm of~\cite{GMW20a}.

\begin{lemma}[Strict-small-side completion]
\label{lem:small_side_completion}
If an optimum $3$-cut has a strict-small side, then
$c_G(\mathcal P_{\mathrm{small}})=\lambda$.
\end{lemma}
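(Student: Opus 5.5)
Let $(A,B,C)$ be an optimum $3$-cut and suppose $A$ is a strict-small side, so $c_G(A)<\lambda/2$. The plan is to show that the global minimum cut $(S,V\setminus S)$ computed by the algorithm is exactly the bipartition $\{A,V\setminus A\}$, and that one of the two induced minimum-cut computations then recovers an optimal split of the remaining side. Since $\mathcal P_{\mathrm{small}}$ is a genuine $3$-cut whenever it is finite, $c_G(\mathcal P_{\mathrm{small}})\ge\lambda$ holds automatically, and only the upper bound needs proof.

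\emph{Step 1: identify the minimum cut.} Since $A$ is a cut with $c_G(A)<\lambda/2$, we have $\mu=\lambda_2(G)\le c_G(A)<\lambda/2$. Hence the computed cut $S$ also satisfies $c_G(S)=\mu<\lambda/2$, so $S$ is strict-small. By \cref{lem:unique_strict_small_cut}, $\{S,V\setminus S\}=\{A,V\setminus A\}$. Moreover $c_G(S)\le c_G(A)$ together with equality of the bipartitions gives $c_G(A)=\mu$.

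\emph{Step 2: complete the split.} Without loss of generality let $S=A$, so that $V\setminus S=B\cup C$ with $B,C$ nonempty; in particular $|V\setminus S|\ge2$ and the algorithm computes a global minimum cut $(Y,Z)$ of $G[V\setminus S]$. The partition $(B,C)$ is one candidate cut of $G[B\cup C]$, so $c(E_G(Y,Z))\le c(E_G(B,C))$. The edges crossing $(S,Y,Z)$ are exactly $\delta_G(S)$ together with $E_G(Y,Z)$. Therefore
\[
c_G(S,Y,Z)=\mu+c(E_G(Y,Z))\le c_G(A)+c(E_G(B,C))=\lambda .
\]
Since $\mathcal P_{\mathrm{small}}$ is the lighter of the candidate $3$-cuts, $c_G(\mathcal P_{\mathrm{small}})\le\lambda$, which gives equality.

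No step is a real obstacle. The only care needed is in Step 1: we must pass from "some side is strict-small" to "the algorithm's particular minimum cut equals that side's bipartition." This uses $\mu\le c_G(A)$ and the uniqueness lemma. The orientation of $S$ (either $A$ or its complement) does not matter, because the algorithm tries to split both sides.
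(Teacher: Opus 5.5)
Your proposal is correct and follows the paper's proof essentially step for step. You first use $\mu\le c_G(A)<\lambda/2$ and \cref{lem:unique_strict_small_cut} to identify the computed minimum cut with $\{A,V\setminus A\}$; you then observe that the induced minimum cut on the side containing the other two optimum parts costs at most the optimum split, so the resulting candidate has value at most $\lambda$, with equality forced by optimality.
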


\begin{proof}
Let $A$ be a strict-small optimum side.  The global minimum-cut value is at
most $c_G(A)<\lambda/2$, so every global minimum-cut bipartition is
strict-small.  By \cref{lem:unique_strict_small_cut}, it is exactly
$(A,V\setminus A)$ up to complementation.

The optimum partition splits one side of this bipartition.  The minimum
induced cut on that side is no more expensive than the corresponding
optimum split, so one of the two constructed $3$-cuts has value at most
$\lambda$.  By definition of $\lambda$, equality holds.
\end{proof}

The candidate $\mathcal P_{\mathrm{small}}$ is always a valid $3$-cut.

\begin{lemma}[Structure after failure of strict-small completion]
\label{lem:strict_small_failure_structure}
Suppose $c_G(\mathcal P_{\mathrm{small}})>\lambda$ and
$\mu<\lambda/2$, and let $(R,V\setminus R)$ be the global minimum cut used
in constructing $\mathcal P_{\mathrm{small}}$.  For every optimum
$3$-cut, after relabeling its parts as $(A,B,C)$ and possibly replacing
$R$ by its complement, we have $R\subsetneq A$.  Put
$D:=A\setminus R$, and define
$x:=c(E_G(R,D))$, $p:=c(E_G(R,B))$,
$q:=c(E_G(R,C))$, $r:=c(E_G(D,B))$,
$t:=c(E_G(D,C))$, and $s:=c(E_G(B,C))$.
Then $x>\max\{r,t\}$, $x\ge s$, and
$r+t\ge p+q$.  Consequently,
$x>\mu/3$, $c_G(D)<3\mu$, and $c_G(A)<2\mu$.
\end{lemma}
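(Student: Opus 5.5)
The plan is to first pin down where $R$ sits relative to an arbitrary optimum $3$-cut, and then read off each inequality by comparing $\lambda$ with one specific competing $3$-cut or bipartition.

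\emph{Locating $R$.} Since $\mu<\lambda/2$, the cut $R$ is strict-small. Fix an optimum $3$-cut $\mathcal P$, and suppose for contradiction that both $R$ and $V\setminus R$ meet at least two parts of $\mathcal P$. Let $F_{\mathrm{in}}$ and $F_{\mathrm{out}}$ be the edges of $\delta_G(\mathcal P)$ with both endpoints in $R$, respectively in $V\setminus R$. These sets are disjoint, so one of them, say $F_{\mathrm{out}}$, has capacity at most $\lambda/2$. Splitting $V\setminus R$ along the parts of $\mathcal P$ and coarsening together with $R$ gives a partition with at least three parts. Its value is at most $c_G(R)+c(F_{\mathrm{out}})<\lambda/2+\lambda/2=\lambda$. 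Coarsening it to exactly three parts keeps the value below $\lambda$, which contradicts the definition of $\lambda$.

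Hence, after possibly replacing $R$ by its complement and relabeling, $R\subseteq A$ for some part $A$. If $R=A$, then $\mathcal P$ has a strict-small side, and \cref{lem:small_side_completion} gives $c_G(\mathcal P_{\mathrm{small}})=\lambda$, contrary to hypothesis. So $R\subsetneq A$ and $D\neq\emptyset$. The induced split of $V\setminus R$ used when building $\mathcal P_{\mathrm{small}}$ is then the one on the side containing $D,B,C$; for the complementary orientation the argument is symmetric.

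\emph{The inequalities.} Write $\lambda=p+q+r+t+s$ and $\mu=c_G(R)=x+p+q$.
\begin{itemize}
\item $x>r$: the split $(D\cup B,\,C)$ of $G[V\setminus R]$ has value $t+s$. Hence $\lambda<c_G(\mathcal P_{\mathrm{small}})\le \mu+t+s=x+p+q+t+s$, which simplifies to $x>r$.
\item $x>t$: the same argument with the split $(D\cup C,\,B)$.
\item $x\ge s$: the $3$-cut $(R,D,B\cup C)$ has value $x+p+q+r+t\ge\lambda$.
\item $r+t\ge p+q$: since $\mu$ is the global minimum cut value, $x+r+t=c_G(D)\ge\mu=x+p+q$.
\end{itemize}

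\emph{Consequences.}
\begin{itemize}
\item $x>\mu/3$: we have $\mu=x+p+q\le x+r+t<3x$.
\item $c_G(D)<3\mu$: $x\le c_G(R)=\mu$, so $c_G(D)=x+r+t<3x\le3\mu$.
\item $c_G(A)<2\mu$: $c_G(A)=p+q+r+t=(\mu-x)+(r+t)<\mu-x+2x=\mu+x\le2\mu$.
\end{itemize}

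The only delicate step is the first one: showing $R$ lies strictly inside a single optimum part. It needs both the disjointness/averaging trick on $\delta_G(\mathcal P)$ and the observation that $R=A$ is excluded by \cref{lem:small_side_completion}. One must also check that the orientation of $R$ matches the side split in $\mathcal P_{\mathrm{small}}$; this is harmless because $\mathcal P_{\mathrm{small}}$ is the lighter of the splits of both sides.
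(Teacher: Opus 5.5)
Your proposal is correct and follows essentially the same route as the paper. You use the same averaging over the optimum-partition edges lying inside each side of $R$ to place $R$ in a single part, and the same appeal to the strict-small completion lemma to exclude $R=A$. You then use the same comparisons: the two induced splits of $V\setminus R$, the $3$-cut $(R,D,B\cup C)$, and global minimality of $R$ against $D$. The final arithmetic is identical.
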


\begin{proof}
Let $x_R$ and $x_{V\setminus R}$ be the capacities of optimum-partition
edges with both endpoints in the indicated sides of the global minimum cut.
If both sides meet at least two optimum parts, restricting the optimum
partition to either side gives two candidates of values at most
$\mu+x_R$ and $\mu+x_{V\setminus R}$.  Since
$x_R+x_{V\setminus R}\le\lambda$, the better candidate has value at most
$\mu+\lambda/2<\lambda$, a contradiction.  Thus one side of the global
minimum cut lies in one optimum part; orient the cut so that this side is
$R\subseteq A$.  Equality $R=A$ would make $R$ a strict-small optimum
side, and then \cref{lem:small_side_completion} would give
$c_G(\mathcal P_{\mathrm{small}})=\lambda$.  Hence $R\subsetneq A$.

We have $\mu=x+p+q$ and $\lambda=p+q+r+t+s$.  Splitting
$V\setminus R$ as $B$ and $D\cup C$ gives a strict-small candidate of
value at most $\mu+r+s=\lambda+x-t$.  Since
$c_G(\mathcal P_{\mathrm{small}})>\lambda$, this implies $x>t$.
The symmetric split gives $x>r$.  Optimality against the partition
$(R,D,B\cup C)$ gives $x\ge s$, while global minimality of $R$
against the cut $D$ gives $r+t\ge p+q$.

Now $r+t<2x$ and $r+t\ge\mu-x$, so $x>\mu/3$.  Moreover,
$c_G(D)=x+r+t<3x\le3\mu$, while
$c_G(A)=p+q+r+t<p+q+2x=\mu+x\le2\mu$.
\end{proof}

\begin{corollary}[Comparison of $\lambda_3$ and $\lambda_2$]
\label{cor:lambda_three_vs_lambda_two}
If $c_G(\mathcal P_{\mathrm{small}})>\lambda$, then
$\lambda<3\mu$.
\end{corollary}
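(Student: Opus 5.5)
We split according to whether $\mu<\lambda/2$. If $\mu\ge\lambda/2$, then $\lambda\le 2\mu<3\mu$, using $\mu>0$ because $G$ is connected with positive capacities. This case needs nothing else.

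Now suppose $\mu<\lambda/2$. Together with the hypothesis $c_G(\mathcal P_{\mathrm{small}})>\lambda$, this is exactly the setting of \cref{lem:strict_small_failure_structure}. Fix any optimum $3$-cut (one exists since $n\ge3$) and label it $(A,B,C)$ with $R\subsetneq A$ and $D:=A\setminus R$, as in that lemma. With the lemma's notation we have $\mu=x+p+q$ and $\lambda=p+q+r+t+s$.

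We use the inequalities from \cref{lem:strict_small_failure_structure}: $r<x$, $t<x$, and $s\le x$. They give
\[
\lambda=(p+q)+r+t+s<(p+q)+3x .
\]
Since $p,q\ge0$, the right-hand side is at most $3(p+q)+3x=3\mu$. Hence $\lambda<3\mu$.

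There is no real obstacle. The only points to check are that the case split covers everything and that the strict inequality survives. Strictness comes from $r<x$, so it holds even when $p=q=0$. If one prefers a more direct route in the second case, one can also combine $c_G(A)<2\mu$ with $c_G(B)+c_G(C)\le\ldots$, but the edge-class computation above is cleaner.
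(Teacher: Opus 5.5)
Your proof is correct and follows the paper's argument: the same case split on $\mu\ge\lambda/2$, then the inequalities $r,t<x$ and $s\le x$ from \cref{lem:strict_small_failure_structure} to get $\lambda<p+q+3x$. The only difference is cosmetic and occurs in the last step, where the paper writes $p+q+3x=\mu+2x\le3\mu$ using $x\le\mu$, while you use $p+q\le3(p+q)$; both steps are immediate.
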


\begin{proof}
If $\mu\ge\lambda/2$, then $\lambda\le2\mu<3\mu$.  Otherwise apply
\cref{lem:strict_small_failure_structure}.  Its inequalities give
$\lambda=p+q+r+t+s<p+q+3x=\mu+2x\le3\mu$.
\end{proof}

\subsection{Respecting Cuts and Capacity Matrices}
\label{sec:enumeration_respecting_cuts}

Let $T$ be a tree on $V(G)$, rooted at a fixed vertex $r$; its edges
need not belong to $G$.  We only consider rooted cuts
$S\subseteq V(G)\setminus\{r\}$.  For a non-root vertex $v$, let $e_v$
be the tree edge joining $v$ to its parent, and let $T_v$ be the rooted
subtree below $v$.  We also use a dummy symbol $\bot$, with
$T_\bot=\emptyset$.

\begin{lemma}[Normal form for rooted $p$-respecting cuts]
\label{lem:normal_form_p_respecting}
Let $p\ge1$.  A rooted cut $S\subseteq V\setminus\{r\}$ satisfies
$|\delta_T(S)|\le p$ if and only if it can be written as
$S=T_{a_1}\triangle T_{a_2}\triangle\cdots\triangle T_{a_q}$
for some $q\le p$ and some non-root vertices $a_1,\ldots,a_q$.  Equivalently, allowing dummy copies of $\bot$, every rooted $p$-respecting cut can be written as
$S=T_{a_1}\triangle T_{a_2}\triangle\cdots\triangle T_{a_p}, \qquad a_i\in V\cup\{\bot\}$.
Moreover, the representation is unique if we require $\{e_{a_1},\ldots,e_{a_q}\}=\delta_T(S)$ and list these tree edges in a fixed canonical order.
\end{lemma}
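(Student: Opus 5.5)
The plan is to reduce the statement to a single identity: for any rooted cut $S\subseteq V\setminus\{r\}$,
\[
    S=\mathop{\triangle}_{e_v\in\delta_T(S)} T_v .
\]
Both directions, and the uniqueness claim, follow quickly from this identity together with one easy converse computation.

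\textbf{Proving the identity.} I will use indicator functions over $\mathbb{F}_2$. For a vertex $u$, the indicator of the right-hand side at $u$ is the parity of the number of edges $e_v\in\delta_T(S)$ with $u\in T_v$. These are exactly the edges of $\delta_T(S)$ lying on the tree path from $u$ to $r$. Walking along that path, membership in $S$ flips precisely when we traverse an edge of $\delta_T(S)$. Since $r\notin S$, the parity equals $[u\in S]$, which proves the identity.

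\textbf{The two directions.} For the forward direction, take $q:=|\delta_T(S)|\le p$ and let $a_1,\ldots,a_q$ be the lower endpoints of the edges in $\delta_T(S)$. For the converse, suppose $S=T_{a_1}\triangle\cdots\triangle T_{a_q}$. Consider a tree edge $e_u=\{u,\mathrm{parent}(u)\}$. The two endpoints differ in membership in $T_a$ exactly when $a=u$. Hence $e_u\in\delta_T(S)$ if and only if $u$ occurs an odd number of times among the $a_i$. In particular $|\delta_T(S)|\le q\le p$, and $S$ is rooted because no $T_a$ contains $r$. The padded form follows by appending copies of $\bot$, using $T_\bot=\emptyset$; conversely, dummies can simply be dropped.

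\textbf{Uniqueness.} Suppose we require $\{e_{a_1},\ldots,e_{a_q}\}=\delta_T(S)$. Then the multiset of $a_i$ is determined, since it consists exactly of the lower endpoints of $\delta_T(S)$, and a fixed canonical order of these edges then fixes the tuple. It remains to check that the representation with this requirement still reproduces $S$; this is exactly the identity above.

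None of these steps is a real obstacle. The only care needed is the parity bookkeeping in the converse: repeated $a_i$ cancel, which is why the statement says ``at most $p$'' rather than ``exactly $p$''.
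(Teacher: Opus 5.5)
Your proposal is correct and follows the paper's proof essentially step for step: the same root-path parity argument gives $S=\triangle_{e_v\in\delta_T(S)}T_v$, the converse uses that each $T_a$ crosses only the tree edge $e_a$, and uniqueness comes from $\delta_T(S)$ being determined by $S$. The one difference is that you track cancellation of repeated $a_i$ explicitly in the converse, which the paper leaves implicit.
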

\begin{proof}
Let $B:=\delta_T(S)$.  Since $S$ is rooted, $r\notin S$.  For any vertex $x$, the membership of $x$ in $S$ changes exactly when the path from $r$ to $x$ crosses an edge of $B$.  Thus $x\in S$ if and only if the number of edges of $B$ on the $r$-to-$x$ path is odd.  This is precisely the membership rule for
$\triangle_{e_a\in B} T_a$.
Therefore $S=\triangle_{e_a\in B} T_a$, and $|B|\le p$.

Conversely, each fundamental cut $T_a$ crosses exactly the tree edge $e_a$. The symmetric difference of $q$ such cuts can therefore cross only the corresponding $q$ tree edges.

The uniqueness statement follows because, for a rooted cut $S$, the set of tree edges where membership changes is exactly $\delta_T(S)$.  Hence the canonical representation is determined by $S$.
\end{proof}

We use the following for the queries in \cref{lem:oracle_two_respecting_optimizer}.

\begin{lemma}[Capacity matrix]
\label{lem:capacity_matrix}
For a rooted tree $T$ on $V(G)$, let $\widehat T$ be its binary
expansion.  Order the original vertices $V$ by a DFS traversal of
$\widehat T$, ignoring auxiliary vertices, and define
$W_T[i,j]:=c(E_G(v_i,v_j))$, with parallel capacities summed.  The matrix
$W_T$ and its two-dimensional prefix sums can be constructed in
$O(m+n^2)$ time and $O(n^2)$ space.  After this preprocessing, if
$X,Y\subseteq V$ are each represented as a disjoint union of $O(1)$
intervals in this order, then $c(E_G(X,Y))$ can be computed in $O(1)$
time.  The same construction and query bound hold for any prescribed order
of $V$.

Every rooted subtree of $\widehat T$ intersects $V$ in one interval.
More generally, if $\widehat X\subseteq V(\widehat T)$ satisfies
$|\delta_{\widehat T}(\widehat X)|\le q$ for a fixed constant $q$, then
both $\widehat X\cap V$ and $V\setminus\widehat X$ can be represented as
disjoint unions of at most $q+1$ intervals in $O_q(1)$ time.  In
particular, the same conclusion holds for $X\subseteq V$ whenever
$|\delta_T(X)|\le q$.
\end{lemma}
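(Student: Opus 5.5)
The plan is to prove the lemma in three stages: build the matrix and its prefix sums, answer interval-block queries by inclusion–exclusion, and then show that tree-structured vertex sets break into few DFS intervals.

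\textbf{Construction.} Run a DFS of $\widehat T$ from the root in $O(n)$ time. This is possible because the expansion has $O(n)$ vertices. Record the order $v_1,\dots,v_n$ of the original vertices, skipping auxiliary ones, together with the inverse index map. Initialize an $n\times n$ zero matrix in $O(n^2)$ time. For each edge $uv$ of $G$, add $c(uv)$ to both $W_T[\mathrm{idx}(u),\mathrm{idx}(v)]$ and $W_T[\mathrm{idx}(v),\mathrm{idx}(u)]$; this costs $O(m)$ and automatically sums parallel capacities. Then compute the two-dimensional prefix sums $P[i,j]:=\sum_{i'\le i,\,j'\le j}W_T[i',j']$ by the standard recurrence $P[i,j]=W_T[i,j]+P[i-1,j]+P[i,j-1]-P[i-1,j-1]$. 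This takes $O(n^2)$ time and space. None of this used anything about $T$ beyond the order, so it works for any prescribed order.

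\textbf{Queries.} For intervals $I=[a,b]$ and $J=[c,d]$, the block sum $\sum_{i\in I,j\in J}W_T[i,j]$ equals $P[b,d]-P[a-1,d]-P[b,c-1]+P[a-1,c-1]$, which is $O(1)$ work. If $X$ and $Y$ are disjoint and each is a union of $O(1)$ disjoint intervals, then summing over pairs of intervals gives $\sum_{u\in X,v\in Y}c(E_G(u,v))=c(E_G(X,Y))$. Since $X\cap Y=\emptyset$, each edge is counted exactly once, from its endpoint in $X$. This takes $O(1)$ time.

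\textbf{Interval structure.} Preorder DFS numbering makes every rooted subtree $\widehat T_x$ occupy a contiguous range of DFS positions. Restricting to original vertices preserves contiguity in the induced order, so $\widehat T_x\cap V$ is one interval. For the general claim, let $F=\delta_{\widehat T}(\widehat X)$ with $|F|\le q$.

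By the argument of \cref{lem:normal_form_p_respecting} applied to $\widehat T$, the set $\widehat X$, or its complement if $\widehat X$ contains the root, equals the symmetric difference of the at most $q$ subtrees below the edges of $F$. Each subtree is an interval of DFS positions. The symmetric difference of at most $q$ nested-or-disjoint intervals has membership that changes only at the at most $2q$ interval endpoints. Moreover, the laminar structure forces the parity pattern to yield at most $q+1$ maximal runs for both the set and its complement. The main point to check is this count.

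To verify it, sort the $F$-edges by DFS position. Walking along the DFS order, membership flips only when entering or leaving a subtree $\widehat T_f$ with $f\in F$. Contract the edges of $\widehat T$ not in $F$: this leaves a tree on $|F|+1$ components, and each component of $\widehat T-F$ is a union of DFS intervals. A cleaner count is to charge each maximal run to the component of $\widehat T-F$ in which it starts. A run starting strictly inside a component, rather than at its first position, must begin immediately after some child subtree $\widehat T_f$ ends. Each such $f$ is charged at most once. Combining the $\le|F|+1$ component starts with these charges gives at most $2q+1$ runs overall. The set $\widehat X\cap V$ and the set $V\setminus\widehat X$ alternate runs, so one of them has at most $q+1$. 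To get $q+1$ for both, use that the runs alternate and that each $f$ contributes exactly one transition into and one out of $\widehat T_f$; the runs of each color are then bounded by $|F|+1$.

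The runs are produced in $O_q(1)$ time by sorting the $\le 2q$ endpoints, given the precomputed DFS entry and exit positions restricted to $V$. If the bound $q+1$ proves delicate, any $O(q)$ bound suffices for the $O(1)$ query claim. Finally, for $X\subseteq V$ with $|\delta_T(X)|\le q$, take $\widehat X=\pi_T^{-1}(X)$. Its boundary in $\widehat T$ consists exactly of the corresponding distinguished edges, so the previous case applies.
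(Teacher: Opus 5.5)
Your proposal is correct and follows the paper's proof. The shared steps are: DFS order of $\widehat T$, a symmetric capacity matrix with two-dimensional prefix sums answered by rectangle queries, writing the root-free side as the symmetric difference of the subtree intervals below $\delta_{\widehat T}(\widehat X)$ with a parity count, and lifting via $\pi_T^{-1}$.

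The component-charging detour is unnecessary. There are at most $2q$ membership changes, and the root's side comes first in DFS order. This already gives at most $2q+1$ alternating runs, hence at most $q+1$ for each side. Restricting to $V$ can only merge runs, never create new ones.
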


\begin{proof}
Constructing $\widehat T$ and its DFS order takes $O(n)$ time.
Initialize $W_T$, scan $E(G)$, and add every capacity to the two
symmetric entries corresponding to its endpoints.  This takes
$O(m+n^2)$ time, and constructing the prefix-sum table takes another
$O(n^2)$ time.  The interaction of two constant-size interval unions is
the sum of a constant number of rectangle sums.

A rooted subtree of $\widehat T$ is consecutive in DFS order, even when
its root edge is auxiliary.  Orient the cut of $\widehat T$ so that its
root is outside $\widehat X$.  The parity argument of
\cref{lem:normal_form_p_respecting}, applied to $\widehat T$, expresses
$\widehat X$ as the symmetric difference of the rooted subtrees below
the edges of $\delta_{\widehat T}(\widehat X)$.  The traces of these
subtrees on $V$ form a laminar family of intervals.  Sorting their
$O(q)$ endpoints and considering parity gives at most $q+1$
intervals for either side.

Finally, if $X\subseteq V$, then
$|\delta_{\widehat T}(\pi_T^{-1}(X))|=|\delta_T(X)|$, so the preceding
claim applies to its lift.
\end{proof}

\begin{lemma}[Enumeration of respecting cuts]
\label{lem:respecting_enumeration}
Fix a constant $p\ge1$, a rooted tree $T$ on $V(G)$, and an integer
$M\ge1$.  After constructing the capacity matrix of
\cref{lem:capacity_matrix}, one can enumerate every nonempty rooted
cut $S\subseteq V\setminus\{r\}$ satisfying $|\delta_T(S)|\le p$, compute
its capacity, and retain the $M$ lightest such cuts in
$p^{O(1)}n^p$ time and $O(n^2+pM)$ space.  Each retained cut is stored
by its set of crossing tree edges.  In particular, for $p=2$
the total time, including the capacity matrix, is $O(m+n^2)$.
\end{lemma}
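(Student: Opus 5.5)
The plan is to enumerate directly over the normal form of \cref{lem:normal_form_p_respecting}: every rooted cut $S$ with $|\delta_T(S)|\le p$ is $T_{a_1}\triangle\cdots\triangle T_{a_q}$ for $q\le p$ distinct non-root vertices listed in a canonical order (say increasing DFS index). We loop over all $q\le p$ and all $q$-tuples $a_1<\cdots<a_q$ of non-root vertices. There are $\sum_{q\le p}\binom{n-1}{q}=O(n^p)$ such tuples, and by the uniqueness part of \cref{lem:normal_form_p_respecting} each rooted $p$-respecting cut is produced exactly once. Because each $T_{a_i}$ crosses only $e_{a_i}$ and the parity rule shows that the symmetric difference crosses exactly $\{e_{a_1},\dots,e_{a_q}\}$, no tuple produces a duplicate. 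The only possible degenerate outputs are the empty set and $V\setminus\{r\}$-type issues. Any tuple with $q\ge1$ gives a nonempty set, since the vertex $a_i$ of maximal depth lies in exactly one subtree among those listed.

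For each tuple we compute $c_G(S)$ in $p^{O(1)}$ time. Each $T_{a_i}$ corresponds to the rooted subtree of $\widehat T$ below the distinguished edge for $e_{a_i}$, and its trace on $V$ is one DFS interval, whose endpoints we precompute in $O(n)$ time. Sorting the $2q$ endpoints and applying parity, as in the proof of \cref{lem:capacity_matrix}, expresses $S$ and $V\setminus S$ as unions of at most $q+1$ intervals in $p^{O(1)}$ time. Then
\[
c_G(S)=c\bigl(E_G(S,V\setminus S)\bigr)
\]
is a sum of $O(p^2)$ rectangle queries to the prefix-sum table, each answered in $O(1)$ time. The total enumeration time is therefore $p^{O(1)}n^p$.

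To retain the $M$ lightest cuts, we keep a max-heap of size at most $M$ keyed by capacity, storing each cut as its tuple of at most $p$ tree edges, which takes $O(p)$ words. A new cut is inserted if the heap has fewer than $M$ entries or if its capacity is below the current maximum, in which case the maximum is evicted. Each update costs $O(p+\log M)$. If $M>n^p$, we cap $M$ at the number of candidates, which is $O(n^p)$, so $\log M=O(p\log n)$. That leaves the time bound at $p^{O(1)}n^p\log n$ rather than $p^{O(1)}n^p$, and this logarithmic factor is the main obstacle. To remove it, we replace the heap by a buffer of size $2M$. Whenever the buffer fills, we run linear-time selection to find the $M$-th smallest capacity and discard everything above it, breaking ties consistently. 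This costs amortized $O(p)$ per candidate, and the space is $O(pM)$ plus $O(n^2)$ for the matrix, as required.

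For $p=2$, building the capacity matrix takes $O(m+n^2)$ time by \cref{lem:capacity_matrix}, and the enumeration takes $O(n^2)$ time, so the total is $O(m+n^2)$.
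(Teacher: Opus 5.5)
Your proposal is correct and takes essentially the same route as the paper: it enumerates the tree-edge sets of size at most $p$ via the normal form of \cref{lem:normal_form_p_respecting}, evaluates each cut with $p^{O(1)}$ rectangle queries on its $O(p)$-interval representation, and removes the heap's logarithmic factor with a $2M$ buffer pruned by linear-time selection. One small slip is that in your nonemptiness argument the vertex lying in exactly one listed subtree is the $a_i$ of \emph{minimal} depth, not maximal depth, since a deepest $a_i$ also lies in $T_{a_j}$ for every listed ancestor $a_j$; nonemptiness follows more simply from $\delta_T(S)=\{e_{a_1},\dots,e_{a_q}\}\neq\emptyset$.
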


\begin{proof}
Enumerate the subsets $B\subseteq E(T)$ with $1\le|B|\le p$ in canonical
order.  \cref{lem:normal_form_p_respecting} identifies the corresponding
rooted side as the symmetric difference of the subtrees below the edges of
$B$, and it shows that different sets $B$ give different cuts.  By
\cref{lem:capacity_matrix}, this side and its complement are unions
of $O(p)$ intervals, so its capacity is the sum of $p^{O(1)}$ rectangle
queries.  This gives $p^{O(1)}n^p$ total time.

To keep only the $M$ lightest cuts without an additional logarithmic
factor, store candidates in a buffer of size $2M$.  Whenever the buffer
fills, use linear-time selection to discard all but its $M$ lightest
members, breaking ties by the canonical boundary-edge order.  Another pruning happens only after $M$ new candidates have been inserted and takes $O(M)$, hence the total
selection time is linear in the number of enumerated cuts.
\end{proof}

\subsection{Sampling Trees from a Random Skeleton}
\label{subsec:sampling_trees}

We first record the sampled skeleton used to reduce the cost of the 
packing.  For a weighted graph $F$ with
$\mu_F:=\lambda_2(F)$, let $\overline F$ be obtained by replacing every
capacity $c(e)$ by $\overline c(e):=\min\{c(e),3\mu_F\}$.

\begin{lemma}[Random cut skeleton]
\label{lem:random_cut_skeleton}
Let $F$ be a connected $n$-vertex, $m$-edge weighted multigraph and let
$0<\varepsilon<1/2$.  In $O(m\varepsilon^{-2}\log^2 n)$ randomized time,
one can construct an unweighted multigraph $H$ on $V(F)$ and a scale $W>0$
such that, with probability $1-n^{-\Omega(1)}$, every cut $S$ satisfies
$(1-\varepsilon)c_{\overline F}(S)\le Wc_H(S)
\le(1+\varepsilon)c_{\overline F}(S)$.
Moreover, $H$ has $O(m\varepsilon^{-2}\log n)$ edges counted with
multiplicity, these edges can be stored in $O(m)$ space,
and $\mu_F/W=O(\varepsilon^{-2}\log n)$.  Every edge copy of $H$
corresponds to an edge of $F$, so every spanning tree of $H$ is 
a spanning tree of the underlying graph of $F$.
\end{lemma}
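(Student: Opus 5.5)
The plan is a Karger-style skeleton built from the capped capacities $\overline c$, using the fact that capping at $3\mu_F$ does not change the minimum cut value. Compute $\mu_F$ exactly, or a constant-factor approximation $\widetilde\mu$ with $\mu_F\le\widetilde\mu\le 2\mu_F$, via~\cite{GMW20a} in $O(m\log^2 n)$ time. Then form $\overline F$. Every cut of $\overline F$ has value at least $\min\{\mu_F,3\mu_F\}=\mu_F$, since a cut containing an edge of capacity at least $3\mu_F$ still has $\overline c$-value at least $3\mu_F$. Also the $F$-minimum cut keeps value $\mu_F$, because none of its edges exceeds $\mu_F$. Hence $\lambda_2(\overline F)=\mu_F$.

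Sample with a uniform rate. Set $W:=\varepsilon^2\mu_F/(d\log n)$ for a large constant $d$. For each edge $e$, write $\overline c(e)/W=a_e+f_e$ with $a_e$ an integer and $0\le f_e<1$. Put into $H$ the $a_e$ deterministic copies plus one extra copy with probability $f_e$. Then $\mathbb E[Wc_H(S)]=c_{\overline F}(S)$ for every cut $S$. The multiplicity $a_e\le 3\mu_F/W=O(\varepsilon^{-2}\log n)$ is stored as a single integer, so the storage is $O(m)$ and the total multiplicity is $O(m\varepsilon^{-2}\log n)$. Each copy is a copy of an edge of $F$, which gives the spanning-tree claim. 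The bound $\mu_F/W=O(\varepsilon^{-2}\log n)$ holds by construction.

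For concentration, fix a cut $S$ with $c_{\overline F}(S)=\alpha\mu_F$. The random part of $Wc_H(S)$ is a sum of independent Bernoulli variables in $[0,W]$ with mean at most $\alpha\mu_F$. A Chernoff bound therefore gives failure probability $\exp(-\Omega(\varepsilon^2\alpha\mu_F/W))=n^{-\Omega(d\alpha)}$. Karger's cut-counting bound says that $\overline F$ has at most $n^{2\alpha}$ cuts of value at most $\alpha\mu_F$. A union bound over the dyadic ranges of $\alpha\ge1$ then succeeds once $d$ is a large enough constant. This step is exactly where capping matters only through $\lambda_2(\overline F)=\mu_F$.

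The running time is dominated by the min-cut computation, $O(m\log^2 n)$, plus $O(m)$ sampling work. This is within the claimed $O(m\varepsilon^{-2}\log^2 n)$. If only the approximation $\widetilde\mu$ is used, the constants shift by a factor of $2$ and nothing else changes.

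The main obstacle is verifying that the cut-counting plus Chernoff union bound applies to the capped, non-integral capacities. Karger's counting theorem holds for arbitrary nonnegative weights relative to the minimum cut, so this should go through. The randomness of the min-cut subroutine also adds only a high-probability failure term.
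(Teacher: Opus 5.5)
Your proof is correct and follows the paper's route: capping at $3\mu_F$ leaves the minimum cut value unchanged, and Karger's cut-counting bound, a Chernoff bound, and a union bound over cut values then give the simultaneous approximation. The only difference is that you keep $\lfloor\overline c(e)/W\rfloor$ copies deterministically and flip one coin for the fractional part, whereas the paper samples each of the $\overline c(e)$ unit copies independently with probability $q$ using the weighted sampler of~\cite{GMW20a}; this only reduces variance and makes the size bound deterministic, but note that the cap defining $\overline F$ must use the exact $\mu_F$, so your fallback $\widetilde\mu$ could only enter through the choice of $W$.
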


\begin{proof}
This is the standard weighted sampling skeleton of
Karger~\cite{Kar00}, with the weighted sampling implementation of
Gawrychowski, Mozes, and Weimann~\cite{GMW20a}.  Conceptually replace an
edge $e$ by $\overline c(e)$ unit-capacity copies and sample each copy with
probability
$q:=\min\{1,C_0\varepsilon^{-2}\log n/\mu_F\}$, where $C_0$ is a
sufficiently large constant; set $W:=1/q$.  Capping does not change the
global minimum-cut value: a cut containing a capped edge has capped value
at least $3\mu_F$, while every other cut is unchanged.  Karger's
cut-counting bound and the Chernoff bound therefore give the simultaneous
cut approximation.

Since $\overline c(E)\le3m\mu_F$, the sampled graph has
$O(m\varepsilon^{-2}\log n)$ copies with high probability.  The weighted
sampling implementation of~\cite{GMW20a}
constructs it in $O(m\varepsilon^{-2}\log^2 n)$ time.
\end{proof}

\begin{lemma}[Fast tree packing on the skeleton]
\label{lem:fast_skeleton_packing}
For every $0<\eta\le1$, one can compute a feasible fractional
spanning-tree packing $(y_T)$ in $H$ whose value
$\tau:=\sum_Ty_T$ satisfies
$\tau\ge(1-\eta)\tau^*(H)$, where $\tau^*(H)$ is the maximum
fractional spanning-tree-packing value of $H$.
The packing has support
$O(\varepsilon^{-2}\eta^{-2}\log^2 n)$ and can be constructed in
$O(m\varepsilon^{-2}\eta^{-2}\log^2 n)$ randomized time with probability $1 - e^{-\Omega(m)}$, including
the construction of $H$.

Moreover, if
$c_G(\mathcal P_{\mathrm{small}})>\lambda_3(G)$, then
$\tau\ge(1-\eta)\lambda_3(H)/4$.
\end{lemma}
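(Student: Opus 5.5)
The plan has two parts: an algorithmic part (computing the packing) and a structural part (the lower bound on $\tau$ after strict-small failure).

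\textbf{Constructing the packing.} First build the skeleton $H$ of \cref{lem:random_cut_skeleton} with $F:=G$, in $O(m\varepsilon^{-2}\log^2 n)$ time. It has $O(m\varepsilon^{-2}\log n)$ unit edges, stored compactly as $O(m)$ multiplicity classes. On $H$, run a standard multiplicative-weights (Plotkin--Shmoys--Tardos / Garg--K\"onemann style) approximate fractional spanning-tree packing.

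Each iteration finds a minimum spanning tree with respect to the current edge lengths, adds it to the packing, and multiplicatively increases the lengths of its edges. By the Nash-Williams--Tutte formula, $\tau^*(H)=\min_{\mathcal P}c_H(\mathcal P)/(|\mathcal P|-1)\ge\lambda_2(H)/2$. By the scale bound of \cref{lem:random_cut_skeleton}, $\lambda_2(H)=\Theta(\mu_G/W)=O(\varepsilon^{-2}\log n)$. So the width of the packing LP, relative to the unit capacities, gives $O(\eta^{-2}\tau^*\log m)=O(\varepsilon^{-2}\eta^{-2}\log^2 n)$ iterations. This also bounds the support size.

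Each MST computation runs on the $O(m)$ distinct edge classes in $O(m)$ expected time (randomized linear-time MST). This yields the stated running time. I expect the only delicate point is that lengths are tracked per class rather than per copy. Since all copies of a class receive the same tree usage pattern when the MST selects one copy of an edge, this should be handled by counting the usage of each class against its multiplicity. The failure probability $e^{-\Omega(m)}$ comes from the linear-time MST concentration.

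\textbf{Lower bound on $\tau$.} Since $\tau\ge(1-\eta)\tau^*(H)$, it suffices to show $\tau^*(H)\ge\lambda_3(H)/4$. Take the partition $\mathcal P$ attaining the Nash-Williams minimum, with $t:=|\mathcal P|$ parts.

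If $t=2$, then $\tau^*(H)=\lambda_2(H)$. I would compare $\lambda_2(H)$ to $\lambda_3(H)$ through $G$. By \cref{cor:lambda_three_vs_lambda_two}, $\lambda_3(G)<3\mu$. The capped graph $\overline G$ satisfies $\lambda_3(\overline G)\le\lambda_3(G)$, and the skeleton approximation gives $W\lambda_3(H)\le(1+\varepsilon)\lambda_3(\overline G)$ and $W\lambda_2(H)\ge(1-\varepsilon)\mu$. Hence $\lambda_3(H)\le\frac{1+\varepsilon}{1-\varepsilon}\cdot3\lambda_2(H)<4\lambda_2(H)$ for $\varepsilon<1/2$ small enough. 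The constant $4$ leaves slack for this.

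If $t\ge3$, merge parts greedily to obtain a $3$-partition. Pair off the $t$ parts and merge along the cheapest inter-part edges. A cleaner route is to merge the $t-2$ lightest-boundary parts' contributions; averaging gives a $3$-cut of value at most $2c_H(\mathcal P)/(t-1)\cdot(\text{const})$. Concretely, I would show $\lambda_3(H)\le 2c_H(\mathcal P)\cdot\frac{2}{t-1}$ by an averaging argument: keep two parts whose boundaries sum to at most $4c_H(\mathcal P)/t$, and merge the rest. The resulting value is at most the sum of those two boundaries, which is at most $\frac{4}{t}c_H(\mathcal P)\le 4\frac{c_H(\mathcal P)}{t-1}=4\tau^*(H)$.

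\textbf{Main obstacle.} The $t=2$ case is where the strict-small failure hypothesis is essential, since otherwise $\lambda_3$ could greatly exceed $\lambda_2$. Getting the constant below $4$ uniformly there, accounting for both the capping and the $(1\pm\varepsilon)$ distortion, is the step I expect to need care. If $3(1+\varepsilon)/(1-\varepsilon)$ exceeds $4$, I would instead use the sharper inequality $\lambda<\mu+2x$ from \cref{lem:strict_small_failure_structure}, or restrict to $\varepsilon\le1/7$.
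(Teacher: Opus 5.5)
Your proposal is correct and follows the paper's proof: the same greedy (MWU) packing with $O(\eta^{-2}\log n)\cdot\tau^*(H)$ iterations, bounded using $\tau^*(H)\le\lambda_2(H)=O(\varepsilon^{-2}\log n)$ (this upper bound comes from the bipartition term of the formula, not from the lower bound $\lambda_2(H)/2$ you cite); bundled linear-time MSTs (each iteration uses only one copy of a bundle, so track usage per bundle against its multiplicity, e.g.\ by always taking a least-loaded copy); and the same comparison $\lambda_3(G)<3\mu$ with $\varepsilon<1/7$ for bipartitions. For $t\ge3$ parts you keep the two parts of smallest boundary and get $\lambda_3(H)\le4c_H(\mathcal P)/t$, while the paper keeps a uniformly random pair and gets the slightly sharper $(4t-6)c_H(\mathcal P)/(t(t-1))$; both give a ratio strictly above $\lambda_3(H)/4$, so either works.
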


\begin{proof}
Write $\lambda_H:=\lambda_3(H)$ and $\mu_H:=\lambda_2(H)$.

We apply the greedy tree-packing algorithm of~\cite{GMW20a}.  Writing $m':=|E(H)|$, set
$B:=\lceil3\eta^{-2}\ln m'\rceil$, each iteration adds weight $1/B$ to
the chosen tree and load $1/B$ to each of its edges, we stop once any edge reaches load $1$.  The standard MWU
analysis gives a feasible packing of value
$\tau\ge(1-\eta)\tau^*(H)$.  If $L$ is the number of iterations, then
$L=B\tau$.  Since every feasible tree packing has value at most
$c':=\lambda_2(H)$, we have $L\le Bc'$.  The skeleton guarantee gives
$c'\le(1+\varepsilon)\mu/W=O(\varepsilon^{-2}\log n)$, while
$\log m'=O(\log n)$.  Consequently
$L=O(\varepsilon^{-2}\eta^{-2}\log^2 n)$, which also bounds the support
size.

The parallel copies sampled from each original edge are stored by
multiplicity.  Always choosing a least-loaded copy keeps the loads of the
copies in one bundle within $1/B$ of one another, so its lightest copy is
determined by the bundle's total number of previous uses.  A minimum
spanning tree of $H$ can therefore be computed over the at most $m$
original edge bundles.  Using the randomized linear-time implementation
employed in~\cite{GMW20a}, each iteration takes $O(m)$ time with probability $1 - e^{-\Omega(m)}$, giving total
time $O(m\varepsilon^{-2}\eta^{-2}\log^2 n)$.

Now assume $c_G(\mathcal P_{\mathrm{small}})>\lambda_3(G)$ and write $\rho:=(1+\varepsilon)/(1-\varepsilon)$. By \cref{cor:lambda_three_vs_lambda_two}, $\lambda<3\mu$.
The skeleton guarantees consequently imply
$\mu_H\ge(1-\varepsilon)\mu/W$ and
$\lambda_H\le(1+\varepsilon)\lambda/W$, and hence
\[
  \mu_H>\frac{1-\varepsilon}{3(1+\varepsilon)}\lambda_H
       =\frac{\lambda_H}{3\rho}>\frac{\lambda_H}{4},
\]
where the last inequality uses $\varepsilon<1/7$.

By the Tutte--Nash-Williams formula,
\[
  \tau^*(H)=
  \min_{\mathcal P}
  \frac{c_H(\mathcal P)}{|\mathcal P|-1},
\]
where the minimum ranges over all nontrivial vertex partitions
$\mathcal P$.

For a bipartition, the corresponding ratio is at least
$\mu_H>\lambda_H/4$.  Now let
$\mathcal P=(P_1,\ldots,P_r)$ have $r\ge3$ parts.  Choose two parts
uniformly at random and retain them as two parts of a $3$-partition,
merging the other $r-2$ parts into the third part.  An edge of
$\mathcal P$ remains cut unless both of its incident parts belong to the
merged collection.  Its probability of remaining cut is
\[
  1-\frac{\binom{r-2}{2}}{\binom r2}
  =\frac{4r-6}{r(r-1)}.
\]
Some resulting $3$-cut therefore has value at most
$(4r-6)c_H(\mathcal P)/(r(r-1))$.  By the definition of $\lambda_H$,
\[
  \frac{c_H(\mathcal P)}{r-1}
  \ge\frac{r}{4r-6}\lambda_H
  >\frac{\lambda_H}{4}.
\]
Thus every partition in the Tutte--Nash-Williams formula has ratio
strictly larger than $\lambda_H/4$, and consequently
$\tau^*(H)>\lambda_H/4$.
\end{proof}

Fix $\varepsilon=\eta=1/100$, construct $H$, and compute the packing of
\cref{lem:fast_skeleton_packing}, aborting if it takes too long.  Write
$\lambda_H:=\lambda_3(H)$ and
$\rho:=(1+\varepsilon)/(1-\varepsilon)$.  Sample a tree with probability
$y_T/\tau$, and regard it as a spanning tree of $G$.

\begin{lemma}[A light side is $2$-respected with constant probability]
\label{lem:light_side_two_respected}
Suppose $c_G(\mathcal P_{\mathrm{small}})>\lambda$, and let
$S\subsetneq V$ satisfy $c_G(S)\le2\lambda/3$.  Then
\[
 \Pr[|\delta_T(S)|\le2]
 \ge\frac{9(1-\eta)-8\rho}{6(1-\eta)}.
\]
For $\varepsilon=\eta=1/100$, this probability is greater than $1/8$.
\end{lemma}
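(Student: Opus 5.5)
We use the standard Karger averaging argument on the packing of \cref{lem:fast_skeleton_packing}. For a tree $T$ sampled with probability $y_T/\tau$, write $X:=|\delta_T(S)|$. Since the packing is feasible in $H$, every edge copy of $H$ carries total load at most $1$. Every tree crosses $S$ at least once, so each $y_T$ contributes at least one crossing edge of $S$. Summing loads over $\delta_H(S)$ gives
\[
  \mathbb E[X]=\frac{1}{\tau}\sum_T y_T|\delta_T(S)|\le\frac{c_H(S)}{\tau}.
\]
The trees of $H$ are spanning trees of the underlying graph of $G$, so $\delta_T(S)$ is the same edge set whether we view $T$ in $H$ or in $G$.

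Next we bound the ratio $c_H(S)/\tau$ using the skeleton guarantee. We have $Wc_H(S)\le(1+\varepsilon)c_{\overline G}(S)\le(1+\varepsilon)c_G(S)\le(1+\varepsilon)\,2\lambda/3$; capping only lowers capacities. We also have $\lambda_H\ge(1-\varepsilon)\lambda/W$. To see this, apply the lower sandwich bound to an optimum $3$-cut of $H$, viewed as a partition. Its capped value is at least $\lambda$, because either no capped edge is cut, or a capped edge is cut and the value is at least $3\mu>\lambda$ by \cref{cor:lambda_three_vs_lambda_two}. The skeleton bounds are stated for cuts, but they transfer to partitions by summing over the parts' boundaries and halving. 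By \cref{lem:fast_skeleton_packing}, $\tau\ge(1-\eta)\lambda_H/4$. Combining these,
\[
  \mathbb E[X]\le\frac{(1+\varepsilon)2\lambda/(3W)}{(1-\eta)(1-\varepsilon)\lambda/(4W)}=\frac{8\rho}{3(1-\eta)}.
\]

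Finally we convert the expectation bound into a probability bound. Since $X\ge1$ always and $X\ge3$ whenever $X>2$, we get $\mathbb E[X]\ge1+2\Pr[X\ge3]$. Hence
\[
  \Pr[X\ge3]\le\tfrac12\Bigl(\tfrac{8\rho}{3(1-\eta)}-1\Bigr),
\]
and
\[
  \Pr[X\le2]\ge\tfrac32-\tfrac{4\rho}{3(1-\eta)}=\tfrac{9(1-\eta)-8\rho}{6(1-\eta)},
\]
which is exactly the claimed bound. Plugging in $\varepsilon=\eta=1/100$ gives $\rho=101/99$. The numerator is then $8.91-8.1616\approx0.748$ and the denominator is $5.94$, so the probability is about $0.126>1/8$.

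The only delicate step is the transfer of $\lambda_3$ from $G$ to $H$. It needs two facts: the capped graph $\overline G$ has the same $\lambda_3$ as $G$, which is why \cref{cor:lambda_three_vs_lambda_two} is used, and the sandwich bound extends from cuts to $3$-partitions. The rest is routine.
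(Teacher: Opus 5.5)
Your proof is correct and follows the paper's argument. You bound $\mathbb E[|\delta_T(S)|]\le c_H(S)/\tau\le 8\rho/(3(1-\eta))$ using the skeleton sandwich and $\tau\ge(1-\eta)\lambda_H/4$, then use integrality ($X\ge1$, and $X\ge3$ whenever $X>2$) to get $\Pr[X\le2]\ge(3-\mathbb E[X])/2$; your derivation of $\lambda_H\ge(1-\varepsilon)\lambda/W$ via capping and summing part boundaries spells out a step the paper attributes to the skeleton guarantees without detail.
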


\begin{proof}
\cref{cor:lambda_three_vs_lambda_two} gives
$c_G(S)<2\mu$, so capacity capping does not affect $S$.  The skeleton
guarantees give $c_H(S)\le(1+\varepsilon)c_G(S)/W$ and
$\lambda_H\ge(1-\varepsilon)\lambda/W$, and hence
$c_H(S)\le(2\rho/3)\lambda_H$.

Let $D_T:=|\delta_T(S)|$.  Feasibility of the packing and
\cref{lem:fast_skeleton_packing} give
\[
 \tau\mathbb E[D_T]\le c_H(S)
 \qquad\text{and}\qquad
 \tau\ge\frac{1-\eta}{4}\lambda_H.
\]
Consequently,
$\mathbb E[D_T]\le8\rho/(3(1-\eta))$.
If $p:=\Pr[D_T\le2]$, then every spanning tree crosses $S$ at least once,
while $D_T\ge3$ whenever $D_T>2$.  Thus
$\mathbb E[D_T]\ge p+3(1-p)=3-2p$, and therefore
\[
 p\ge\frac{3-\mathbb E[D_T]}2
 \ge\frac{9(1-\eta)-8\rho}{6(1-\eta)}.
\]
For $\varepsilon=\eta=1/100$, the last expression equals
$7409/58806>1/8$.
\end{proof}

For a sampled tree $T$, define
$d_A:=|\delta_T(A)|$, $d_B:=|\delta_T(B)|$,
$d_C:=|\delta_T(C)|$, and
$h:=|T\cap\delta_G(A,B,C)|$.  Thus $d_A+d_B+d_C=2h$.
We call $T$ \emph{good} for $(A,B,C)$ if $h\le4$ and
$(d_A,d_B,d_C)\notin\{(1,3,4),(1,4,3)\}$.

We next explain the intuition behind this definition in the unique-light-side case.

Let $(A,B,C)$ be an optimum $3$-cut in which $A$ is the unique light side and is medium.
The objective is not purely few optimum-crossing
edges.  What matters after fixing $A$ is how complicated the residual
$B$--$C$ cut remains, and the $(1,3,4)$ or $(1,4,3)$ patterns are the most problematic.
This is captured by charging an $A$--$B$ or
$A$--$C$ tree edge three units and a $B$--$C$ tree edge four units,
which gives the statistic
\[
    d_A+2d_B+2d_C=4h-d_A.
\]
Edges crossing the residual $B$--$C$ cut are charged more heavily
because they survive after $A$ is removed.

\begin{lemma}[Constant probability of a good tree]
\label{lem:constant_probability_good_tree}
Suppose $c_G(\mathcal P_{\mathrm{small}})>\lambda$, and let
$(A,B,C)$ be an optimum $3$-cut satisfying
$\lambda/2\le c_G(A)\le2\lambda/3$ and
$c_G(B),c_G(C)>2\lambda/3$.  Then
\[
 \Pr[T\text{ is good}]
 \ge\frac{15(1-\eta)-14\rho}{9(1-\eta)}.
\]
For $\varepsilon=\eta=1/100$, this probability is greater than $1/16$.
\end{lemma}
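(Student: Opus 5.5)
The plan is a first-moment argument, as in \cref{lem:light_side_two_respected}, but applied to the weighted statistic
\[
  X:=d_A+2d_B+2d_C=4h-d_A
\]
instead of the crossing number of a single side. The argument has three steps: bound $\mathbb E[X]$ from above using feasibility of the packing, bound $X$ from below pointwise (with a stronger bound on non-good trees), and combine the two with Markov-type averaging.

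\emph{Upper bound on $\mathbb E[X]$.} Feasibility of the packing gives $\tau\,\mathbb E[d_S]\le c_H(S)$ for every $S$. By linearity,
\[
  \tau\,\mathbb E[X]\le c_H(A)+2c_H(B)+2c_H(C).
\]
Next I transfer this to $G$ via \cref{lem:random_cut_skeleton}. I first check that capping changes none of the three boundaries. Every edge of $\delta_G(A,B,C)$ has capacity at most $\lambda$, and $\lambda<3\mu$ by \cref{cor:lambda_three_vs_lambda_two}. So no crossing edge is capped, and $c_{\overline G}=c_G$ on $A$, $B$ and $C$. Hence the right-hand side is at most
\[
  \frac{1+\varepsilon}{W}\bigl(c_G(A)+2c_G(B)+2c_G(C)\bigr)
  =\frac{1+\varepsilon}{W}\bigl(4\lambda-c_G(A)\bigr)
  \le\frac{1+\varepsilon}{W}\cdot\frac{7\lambda}{2}.
\]
The equality uses $c_G(A)+c_G(B)+c_G(C)=2\lambda$, and the inequality uses the hypothesis $c_G(A)\ge\lambda/2$. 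The skeleton also gives $\lambda\le W\lambda_H/(1-\varepsilon)$, so the right-hand side is at most $\tfrac72\rho\lambda_H$. Combining this with $\tau\ge(1-\eta)\lambda_H/4$ from \cref{lem:fast_skeleton_packing} yields
\[
  \mathbb E[X]\le\frac{14\rho}{1-\eta}.
\]

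\emph{Pointwise lower bounds on $X$.} This is the combinatorial core and the step to get right. Since $T$ is spanning and all three parts are nonempty, the quotient of $T$ by $(A,B,C)$ is connected on three vertices, so $h\ge2$. Each tree edge counted by $d_A$ crosses the partition, so $d_A\le h$. Therefore $X=4h-d_A\ge3h\ge6$ for every tree. If $T$ is not good, there are two cases. Either $h\ge5$, and then $X\ge3h\ge15$. Or $(d_A,d_B,d_C)\in\{(1,3,4),(1,4,3)\}$, in which case $h=4$ and $X=16-1=15$. So every non-good tree has $X\ge15$.

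\emph{Conclusion.} Write $p:=\Pr[T\text{ is good}]$. Then $\mathbb E[X]\ge6p+15(1-p)=15-9p$, so
\[
  p\ge\frac{15-\mathbb E[X]}{9}\ge\frac{15(1-\eta)-14\rho}{9(1-\eta)}.
\]
The numerical claim follows by substituting $\varepsilon=\eta=1/100$, which gives roughly $0.0964>1/16$.

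The only delicate points are the capping check and the pointwise bounds. In particular, the weights $1,2,2$ are chosen so that both excluded patterns and all trees with $h\ge5$ reach the same threshold $15$.
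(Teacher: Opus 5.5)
Your proof is correct and essentially matches the paper's argument. The paper takes expectations of the pointwise inequality $\mathbf 1[T\text{ is good}]\ge\frac53-\frac{X}{9}$, which rests on your bounds $X\ge6$ for every tree and $X\ge15$ for non-good trees and is equivalent to your $\mathbb E[X]\ge15-9p$; your capping, skeleton-transfer, and packing steps are the same as the paper's.

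One correction to the arithmetic. At $\varepsilon=\eta=1/100$ we have $\rho=101/99$, and the bound evaluates to $5615/88209\approx0.0637$, not $0.0964$. So it exceeds $1/16=0.0625$ only narrowly, as $16\cdot5615=89840>88209$ confirms.
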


\begin{proof}
For every tree $T$,
\[
 \mathbf 1[T\text{ is good}]
 \ge\frac53-\frac{d_A+2d_B+2d_C}{9}.
\]
Indeed, if $h\ge5$, then
$d_A+2d_B+2d_C=4h-d_A\ge3h\ge15$.  If $h\le4$ but $T$ is not good,
the degree triple is one of the two excluded triples and the weighted sum
is exactly $15$.  Finally, if $T$ is good, then $h\ge2$, so the
right-hand side is at most one.

Every optimum side has boundary at most $\lambda<3\mu$, so capacity
capping does not affect $A$, $B$, or $C$.  Since
$c_G(A)+2c_G(B)+2c_G(C)=4\lambda-c_G(A)\le7\lambda/2$, the skeleton
guarantees give
\[
 c_H(A)+2c_H(B)+2c_H(C)
 \le\frac{7\rho}{2}\lambda_H.
\]
Feasibility of the packing therefore gives
\[
 \tau\mathbb E[d_A+2d_B+2d_C]
 \le\frac{7\rho}{2}\lambda_H.
\]
Together with
$\tau\ge(1-\eta)\lambda_H/4$, this yields
\[
 \mathbb E[d_A+2d_B+2d_C]
 \le\frac{14\rho}{1-\eta}.
\]
Taking expectations in the pointwise inequality gives
\[
 \Pr[T\text{ is good}]
 \ge\frac53-\frac{14\rho}{9(1-\eta)}
 =\frac{15(1-\eta)-14\rho}{9(1-\eta)}.
\]
For $\varepsilon=\eta=1/100$, the last expression equals
$5615/88209>1/16$.
\end{proof}

Choose $s_{\mathrm{trees}}=\Theta(\log n)$ independent samples from the
packing distribution, with a sufficiently large absolute constant in the
$\Theta(\cdot)$, and denote the resulting multiset by $\mathcal T$.
By \cref{lem:light_side_two_respected} and
\cref{lem:constant_probability_good_tree}, with probability
$1-n^{-\Omega(1)}$, the following hold simultaneously for any fixed
optimum $3$-cut:

\begin{enumerate}
    \item every light optimum side is $2$-respected by some tree in
    $\mathcal T$;

    \item if the optimum has a unique medium side, some tree in
    $\mathcal T$ is good for that optimum.
\end{enumerate}

\subsection{Selecting the Relevant Light Cuts}
\label{subsec:reporting_light_cuts}

\begin{lemma}[Linear-incidence laminarity test]
\label{lem:laminarity_test}
Let $\mathcal F$ be a family of distinct explicitly represented nonempty
proper subsets of an $n$-element ground set.  In
$O\left(n+\sum_{S\in\mathcal F}|S|\right)$
time, one can either find two crossing members of $\mathcal F$, or construct
the inclusion forest of $\mathcal F$.
\end{lemma}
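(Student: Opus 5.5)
Sort the members of $\mathcal F$ by decreasing cardinality, breaking ties arbitrarily, using counting sort on sizes in $O(n+|\mathcal F|)$ time. Since the sets are distinct and nonempty, $|\mathcal F|\le\sum_S|S|$. Process the sets in this order and maintain, for every ground element $x$, a pointer $\mathrm{last}(x)$ to the most recently processed set containing $x$, initially $\bot$. If $\mathcal F$ is laminar, this pointer is exactly the smallest set processed so far that contains $x$. For the current set $S$, scan its elements and read the values $\mathrm{last}(x)$ for $x\in S$.

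\emph{Key test.} In a laminar family, every member that is processed earlier and meets $S$ is larger than or equal in size to $S$, so it must contain $S$; distinctness rules out equality. These containing sets form a chain, and the smallest of them is the parent of $S$. Therefore all elements $x\in S$ must have the same pointer value $P$, which is either $\bot$ or a set containing $S$. Set the parent of $S$ to $P$ and then overwrite $\mathrm{last}(x):=S$ for all $x\in S$. The cost is $O(|S|)$ per set, so the total time is linear.

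\emph{Certificate when the test fails.} Suppose two elements $x,y\in S$ have different pointers $P_x\ne P_y$. Assume inductively that the sets processed so far are laminar, since otherwise we would already have stopped. Then $P_x$ and $P_y$ are comparable or disjoint. Without loss of generality, $P_x\ne\bot$ and either $P_x\not\ni y$, or $P_y=\bot$. In that case $P_x$ meets $S$ at $x$ and misses $y\in S$, so $S\not\subseteq P_x$. Also $|P_x|\ge|S|$ and $P_x\ne S$, so $P_x\not\subseteq S$. Hence $P_x$ and $S$ cross.

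We must make sure some pointer has this property. If $P_x\subsetneq P_y$ with $P_y\ni y$, then $P_x$ was processed after $P_y$. Since $P_x$ is the latest set containing $x$, either $y\notin P_x$, which gives the certificate above, or $y\in P_x$. In the latter case $\mathrm{last}(y)$ would be $P_x$ or a later set, contradicting $P_y$. The symmetric cases are handled the same way, so we output the crossing pair $(P_x,S)$. Deciding which of $P_x,P_y$ to report requires a membership test. To make it $O(1)$, mark the elements of $S$ during the scan and store for each processed set a representative element. Simpler still, note that if $P_x\ni y$ then $P_y$ would be $P_x$ or later. So the pointer whose set was processed \emph{later} (record processing indices) always fails to contain the other element, and we output that set together with $S$.

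The main obstacle is this certificate step: showing that when pointers disagree, a specific, cheaply identifiable earlier set actually crosses $S$. The timestamp argument handles it. If no disagreement ever occurs, the parent pointers form the inclusion forest. A final induction shows that the recorded parent is the minimal proper superset: any earlier set containing $S$ is, by laminarity and the processing order, an ancestor of $P$ or equal to $P$.
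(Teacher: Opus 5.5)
Your proposal is correct and follows the paper's proof essentially step for step: sort by nonincreasing cardinality, and keep for each element a pointer to the most recently processed set containing it (under laminarity, the smallest such set). When all pointers on $S$ agree, make the common pointer the parent of $S$; otherwise report a crossing with $S$.

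Your timestamp rule for the certificate is a slightly sharper version of the paper's case analysis: report the later-processed of the two disagreeing pointers, which cannot contain the other element. The paper instead calls two distinct non-sentinel owners disjoint, but they can in fact be nested, with the smaller one missing the other element. It then concludes only that $S$ crosses at least one of them, without saying which.
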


\begin{proof}
Add the whole ground set as a sentinel root.  Sort the members of
$\mathcal F$ by nonincreasing cardinality.  Maintain, for every vertex $v$,
a pointer $\operatorname{owner}(v)$ to the smallest processed set containing
$v$, initially the sentinel.

Until a crossing pair is found, the processed sets are laminar, and
$\operatorname{owner}(v)$ is the inclusion-minimal processed set
containing $v$.

When processing $S$, inspect the owners of all vertices of $S$.  If they are
all the same set $P$, then $S\subseteq P$; make $P$ the parent of $S$ and set
$\operatorname{owner}(v):=S \qquad(v\in S)$.
No previously processed proper subset of $S$ exists, because sets are
processed in nonincreasing order of size.

Otherwise, choose vertices $x,y\in S$ with distinct owners $P,Q$.  If one of
$P,Q$ is the sentinel and the other is not, then $S$ intersects the
nonsentinel owner and also has a vertex outside it.  Since that owner was
processed earlier and has size at least $|S|$, it also has a vertex outside
$S$, so the two sets cross.  If both $P,Q$ are nonsentinel, then they are
disjoint minimal processed containers.  The set $S$ meets both, and neither
can be contained in $S$ because it was processed earlier and is distinct
from $S$.  Hence $S$ crosses at least one of them.

Every set incidence is examined and updated only once, which gives the claimed running time.
\end{proof}

Fix a root $r\in V$.  For every $T\in\mathcal T$, use
\cref{lem:respecting_enumeration} with $p=2$ to retain
the $2n+1$ lightest nonempty rooted cuts that $2$-respect $T$, or all of
them if fewer exist.  Materialize these $O(n\log n)$ cuts, radix-sort their
incidence vectors to remove duplicates, and retain the globally lightest
$2n+1$ distinct cuts, again retaining all cuts if fewer exist.  Denote this
family by $\mathcal F$, ordered by nondecreasing boundary value with a
fixed deterministic tie-breaking rule.

Laminarity is monotone under taking subsets.  Binary search with
\cref{lem:laminarity_test} therefore finds the longest laminar prefix
$\mathcal L$ of $\mathcal F$ in $O(n^2\log n)$ time.  If
$\mathcal L\ne\mathcal F$, apply the lemma once more to the prefix ending
at the first omitted cut and obtain a crossing pair $P,Q$.  Let $B$ be
the boundary of that last cut.  Both $P$ and $Q$ have boundary at most
$B$, so \cref{lem:crossing_certificate_k34}, with $k=3$ and
$U=3B/2$, constructs a $3$-cut $\mathcal P_{\mathrm{cross}}$ of value
at most $3B/2$.  If $\mathcal F$ is laminar, set
$c_G(\mathcal P_{\mathrm{cross}}):=+\infty$.

\begin{lemma}[Coverage by the truncated laminar prefix]
\label{lem:sampled_light_list_complete}
Let $S$ be a side of an optimum $3$-cut satisfying
$c_G(S)\le2\lambda/3$.  With probability
$1-n^{-\Omega(1)}$, either the rooted orientation of
$(S,V\setminus S)$ belongs to $\mathcal L$, or
$c_G(\mathcal P_{\mathrm{cross}})=\lambda$.
\end{lemma}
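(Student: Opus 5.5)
We work under the standing assumption of this part of the algorithm that $c_G(\mathcal P_{\mathrm{small}})>\lambda$, since otherwise $\mathcal P_{\mathrm{small}}$ is already optimal by \cref{lem:small_side_completion}. Let $S'$ be the rooted orientation of $(S,V\setminus S)$, so $c_G(S')=c_G(S)\le 2\lambda/3$. By \cref{lem:light_side_two_respected} and the choice $s_{\mathrm{trees}}=\Theta(\log n)$, with probability $1-n^{-\Omega(1)}$ some $T\in\mathcal T$ satisfies $|\delta_T(S')|\le2$. By \cref{lem:normal_form_p_respecting}, $S'$ is then one of the nonempty rooted cuts enumerated for $T$. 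The rest of the argument is deterministic. It is a truncation argument: we show that every way of losing $S'$ forces a crossing pair among cuts of boundary at most $c_G(S)$.

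The counting fact I would use is the following. Any laminar family of distinct nonempty subsets of $V\setminus\{r\}$ has at most $2(n-1)-1<2n+1$ members. Moreover, two rooted cuts $P,Q$ that cross in the laminar sense also satisfy $P\cup Q\ne V$, since $r\notin P\cup Q$. Hence all four sets in the proof of \cref{lem:crossing_certificate_k34} are nonempty, and that lemma applies.

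\textbf{First truncation.} Suppose $S'$ is not among the $2n+1$ retained cuts for $T$. Then $T$ retained $2n+1$ distinct cuts, each of boundary at most $c_G(S')$, with the tie-breaking rule consistent. All of these enter the global pool. In either case, after deduplication and global truncation, one of two things happens. Either $S'\in\mathcal F$, or $\mathcal F$ consists of $2n+1$ distinct rooted cuts, all of boundary at most $c_G(S')$. In the second case $\mathcal F$ cannot be laminar by the counting bound, so $\mathcal L\ne\mathcal F$. The first omitted cut then has boundary $B\le c_G(S')$.

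\textbf{Second truncation.} If instead $S'\in\mathcal F\setminus\mathcal L$, then the first omitted cut precedes or equals $S'$ in the sorted order, so again $B\le c_G(S')$.

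In both cases, the crossing pair $P,Q$ found in the prefix ending at the first omitted cut has boundaries at most $B\le 2\lambda/3$. \cref{lem:crossing_certificate_k34} with $k=3$ and $U=3B/2$ then yields a $3$-cut of value at most $3B/2\le\lambda$. Since no $3$-cut has value below $\lambda$, we get $c_G(\mathcal P_{\mathrm{cross}})=\lambda$. Otherwise $S'\in\mathcal L$, which is the other alternative of the lemma.

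The main obstacle I expect is the bookkeeping of ties and deduplication. The order must be the same deterministic total order (boundary value, then canonical representation) at both truncations. With that, "discarded" really implies "$2n+1$ distinct members no heavier than $S'$". Duplicates of $S'$ coming from several trees must not break this, and radix deduplication together with the fixed order handles it.
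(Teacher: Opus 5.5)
Your proposal is correct and follows the paper's proof essentially step for step. Some sampled tree $2$-respects the rooted side, and losing that side at the local truncation, the global truncation, or the laminar-prefix step forces a nonlaminar prefix of cuts of boundary at most $c_G(S)$, so \cref{lem:crossing_certificate_k34} gives a $3$-cut of value at most $3c_G(S)/2\le\lambda$. Your added details are sound: you make explicit the hypothesis $c_G(\mathcal P_{\mathrm{small}})>\lambda$ that \cref{lem:light_side_two_respected} requires, and you use $r\notin P\cup Q$ to guarantee the fourth atom is nonempty. The tie-breaking consistency you flag as the main obstacle is not actually needed, since every step uses only non-strict comparisons of boundary values.
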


\begin{proof}
By \cref{lem:light_side_two_respected}, some sampled tree
$2$-respects the cut with high probability.  Condition on this event and
let $S_r$ be its rooted orientation.  If $S_r\notin\mathcal F$, then the
local or global truncation retained $2n+1$ distinct cuts of boundary at
most $c_G(S)$.  Hence $|\mathcal F|=2n+1$, and every member of
$\mathcal F$, including the first nonlaminar prefix, has boundary at most
$c_G(S)$.  If instead $S_r\in\mathcal F\setminus\mathcal L$, the
last cut in the first nonlaminar prefix precedes $S_r$ and again has
boundary $B\le c_G(S)$.

In either case, the family is nonlaminar because a rooted laminar family
has fewer than $2n$ members in the first case and by definition in the
second.  Thus $\mathcal P_{\mathrm{cross}}$ exists and has value at most
$3B/2\le3 c_G(S)/2\le\lambda$.  Since no $3$-cut has value below
$\lambda$, it is optimum.  The only remaining possibility is
$S_r\in\mathcal L$.
\end{proof}

\subsection{Eliminating Two-Light-Side Optima}
\label{subsec:two_light_sides}

Attach every original vertex to the smallest
member of $\mathcal L$ containing it, or to the sentinel root if no such
member exists.  A DFS of this inclusion forest gives a vertex order in which
every set $S\in\mathcal L$ is one interval.  Build the capacity
matrix and its prefix sums in this order.

For two members $S,T\in\mathcal L$, laminarity implies that they are either
disjoint or nested.

\begin{lemma}[Value of the partition of two rooted cuts]
\label{lem:two_cut_value}
Let $S,T\in\mathcal L$ be distinct.

\begin{enumerate}
    \item If $S\cap T=\emptyset$, the partition
    $\bigl(S,T,V\setminus(S\cup T)\bigr)$
    has value
    $c_G(S)+ c_G(T)-c(E_G(S,T))$.

    \item If $S\subsetneq T$, the partition
    $\bigl(S,T\setminus S,V\setminus T\bigr)$
    has value
    $c_G(S)+ c_G(T)-c(E_G(S,V\setminus T))$.
\end{enumerate}

Each value can be computed in $O(1)$ time from the prefix table.
\end{lemma}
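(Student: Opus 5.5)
The plan is a direct edge-counting argument: classify each edge crossing the partition by which parts its endpoints lie in, and count how many times it appears in $c_G(S)+c_G(T)$. Afterwards we read off the $O(1)$ evaluation from the interval structure of $\mathcal L$.

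\emph{Disjoint case.} Let $S\cap T=\emptyset$ and put $R:=V\setminus(S\cup T)$. The crossing edges of $(S,T,R)$ split into the three disjoint classes $E_G(S,T)$, $E_G(S,R)$, and $E_G(T,R)$. We have $\delta_G(S)=E_G(S,T)\cup E_G(S,R)$ and $\delta_G(T)=E_G(S,T)\cup E_G(T,R)$. Hence $c_G(S)+c_G(T)$ counts $E_G(S,T)$ twice and the other two classes once. Subtracting $c(E_G(S,T))$ therefore gives the partition value.

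\emph{Nested case.} Let $S\subsetneq T$ and put $D:=T\setminus S$ and $R:=V\setminus T$. The crossing edges of $(S,D,R)$ are $E_G(S,D)$, $E_G(S,R)$, and $E_G(D,R)$. We have $\delta_G(S)=E_G(S,D)\cup E_G(S,R)$ and $\delta_G(T)=E_G(S,R)\cup E_G(D,R)$. So the sum counts $E_G(S,R)=E_G(S,V\setminus T)$ twice and the other classes once, which gives the stated formula.

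Both partitions are genuine $3$-cuts. All parts are nonempty because the members of $\mathcal L$ are distinct, nonempty, and rooted, so they avoid $r$ and hence $r\in R$. In the nested case, $T\setminus S\neq\emptyset$ since $S\subsetneq T$.

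\emph{Running time.} The boundary values $c_G(S)$ and $c_G(T)$ are already stored from the enumeration; alternatively, they can be computed as rectangle sums. In the chosen DFS order of the inclusion forest, each of $S$ and $T$ is a single interval. Consequently $V\setminus T$ is a union of at most two intervals. By \cref{lem:capacity_matrix}, both $c(E_G(S,T))$ and $c(E_G(S,V\setminus T))$ are sums of $O(1)$ rectangle queries in the prefix table.

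The only point needing care is confirming that the ordering really makes every member of $\mathcal L$ a single interval. This holds because each vertex is attached to its smallest containing member, so the vertices of any member $S$ are exactly those attached within $S$'s subtree of the inclusion forest, and a DFS lists a subtree contiguously.
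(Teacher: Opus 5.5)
Your proposal is correct and follows the same route as the paper: you sort the crossing edges into three classes and observe that exactly one class ($E_G(S,T)$ in the disjoint case, $E_G(S,V\setminus T)$ in the nested case) is counted twice in $c_G(S)+c_G(T)$. Your remarks that the parts are nonempty and that each member of $\mathcal L$ is one interval in the inclusion-forest DFS order only make explicit what the paper sets up in the paragraph before the lemma; that interval structure is what makes each term a sum of $O(1)$ rectangle queries.
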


\begin{proof}
In the disjoint case, the edge class $E_G(S,T)$ is counted twice in
$c_G(S)+ c_G(T)$, while the other two crossing edge classes are
counted once.  In the nested case, the edge class $E_G(S,V\setminus T)$ is
counted twice and the other two crossing classes are counted once.
Subtracting the duplicated class gives the stated formulas.
\end{proof}

Inspect all $O(n^2)$ unordered pairs in $\mathcal L$ and retain the
minimum partition.  The nested case is necessary because the cuts are
oriented away from a fixed root: if the root belongs to one light optimum
side, the rooted orientation of that side is its complement and is nested
with the rooted orientation of another light side.

\begin{lemma}[Completeness of two-light-side elimination]
\label{lem:two_light_side_complete}
If some optimum $3$-cut has at least two sides of boundary at most
$2\lambda/3$, then, with probability $1-n^{-\Omega(1)}$, either
$\mathcal P_{\mathrm{cross}}$ or the pair scan is an optimum $3$-cut.
\end{lemma}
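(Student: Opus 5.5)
Let $(A,B,C)$ be an optimum $3$-cut with two sides, say $A$ and $B$, of boundary at most $2\lambda/3$. I will apply \cref{lem:sampled_light_list_complete} to $A$ and to $B$ and take a union bound. With probability $1-n^{-\Omega(1)}$ one of two things happens. Either $c_G(\mathcal P_{\mathrm{cross}})=\lambda$, and then we are done. Or the rooted orientations $A_r$ and $B_r$ both belong to $\mathcal L$. I condition on the second outcome and show that the pair scan examines a partition equal to $(A,B,C)$, up to the labelling of parts. By \cref{lem:two_cut_value}, the value the scan computes for each pair is the exact value of an explicit $3$-partition. So the minimum it retains is at most $\lambda$, hence equal to $\lambda$.

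Everything depends on where the fixed root $r$ lies; write $\overline X:=V\setminus X$.
\begin{enumerate}
\item If $r\in C$, then $A_r=A$ and $B_r=B$. These are distinct and disjoint, so the first case of \cref{lem:two_cut_value} gives the partition $(A,B,V\setminus(A\cup B))=(A,B,C)$.
\item If $r\in A$, then $A_r=\overline A=B\cup C$ and $B_r=B$. Here $B_r\subsetneq A_r$ because $C\ne\emptyset$, so the nested case gives $(B,(B\cup C)\setminus B,V\setminus(B\cup C))=(B,C,A)$.
\item If $r\in B$, the argument is symmetric: $A\subsetneq A\cup C=B_r$, and the nested case gives $(A,C,B)$.
\end{enumerate}
The rooted orientations are nonempty because every part of a $3$-cut is nonempty. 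They are distinct because $A\ne B$, and every nontrivial bipartition has a unique rooted orientation. So in each case the pair is among the $O(n^2)$ unordered pairs inspected.

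The only delicate point is the probability accounting. \cref{lem:sampled_light_list_complete} is stated for a fixed side, and the samples $\mathcal T$ are shared between the two sides. A union bound over the two fixed sides $A$ and $B$ of one fixed optimum keeps the failure probability at $n^{-\Omega(1)}$. No independence is needed.

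I also need that whenever $\mathcal P_{\mathrm{cross}}$ is produced, it is a genuine $3$-cut, which holds by \cref{lem:crossing_certificate_k34}. I expect no substantive obstacle beyond the case analysis on the root, and in particular the check that the nested orientation yields exactly the optimum partition and not some other one.
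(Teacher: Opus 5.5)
Your proposal is correct and follows the paper's proof. Like the paper, you invoke \cref{lem:sampled_light_list_complete} for both light sides and split on where the root lies: in the third side the rooted cuts are disjoint, and in a chosen side they are nested, so \cref{lem:two_cut_value} evaluates exactly the optimum partition. As in the paper, citing \cref{lem:sampled_light_list_complete} implicitly carries the standing hypothesis $c_G(\mathcal P_{\mathrm{small}})>\lambda$ needed by \cref{lem:light_side_two_respected}; this is harmless because the lemma is only applied in that branch of the algorithm.
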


\begin{proof}
Choose two light optimum sides.  By
\cref{lem:sampled_light_list_complete}, either
$\mathcal P_{\mathrm{cross}}$ is optimum or both rooted cut orientations
belong to $\mathcal L$ with high probability.  In the latter case, if the root
lies in the third side, the two rooted sides are disjoint.  If the root lies
in one of the chosen sides, its rooted side is the complement of that side
and contains the rooted side of the other chosen side.  In either case, the
three parts of the corresponding pair are exactly the three optimum parts,
so \cref{lem:two_cut_value} evaluates the optimum value.
\end{proof}

Suppose henceforth that the strict-small branch and the two-light-side branch
do not recover the optimum.  Then we may fix an optimum partition
$(A,B,C)$ for which $A$ is the unique light side.  Thus
$\lambda/2\le c_G(A)\le2\lambda/3$, while
$c_G(B), c_G(C)>2\lambda/3$.
We call $A$ the \emph{medium side}.  Since the enumeration stores only the
rooted orientation of its cut bipartition, the algorithm processes both
sides of every set in $\mathcal L$.  Consequently, the actual side $A$
is among the processed candidates whenever its rooted cut belongs to
$\mathcal L$.

\subsection{Tree Components and Good-Tree Patterns}
\label{subsec:good_tree_patterns}

Fix a tree $T$ and a vertex set $A$.  Write
$d_A:=|\delta_T(A)|$,
and let
$\kappa_A:=\operatorname{cc}(T[A]), \qquad \rho_A:=\operatorname{cc}(T \setminus A)$.

\begin{lemma}[Component identity in a tree]
\label{lem:tree_component_identity}
For every nontrivial $A\subsetneq V$,
$\kappa_A+\rho_A=d_A+1$.
\end{lemma}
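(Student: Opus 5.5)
Count edges of the forest obtained from $T$ by deleting the $d_A$ edges of $\delta_T(A)$. Here $T$ is a spanning tree on $V$, so it has $n-1$ edges and is connected.

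Removing the boundary edges $\delta_T(A)$ from $T$ leaves exactly $n-1-d_A$ edges. Every surviving edge has both endpoints in $A$ or both in $V\setminus A$, so the resulting forest is the disjoint union of the induced forests $T[A]$ and $T\setminus A$ (the latter meaning $T[V\setminus A]$). Both sets are nonempty because $A$ is nontrivial, so each induced forest is defined on a nonempty vertex set.

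Each induced subgraph of a tree is a forest, and a forest on $N$ vertices with $c$ components has $N-c$ edges. Hence
\[
  |E(T[A])|+|E(T[V\setminus A])| = (|A|-\kappa_A)+(|V\setminus A|-\rho_A) = n-\kappa_A-\rho_A .
\]
Equating this with $n-1-d_A$ gives $\kappa_A+\rho_A=d_A+1$.

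An equivalent way to see it: deleting an edge from a forest increases the number of components by exactly one. Deleting the $d_A$ boundary edges from the connected tree $T$ therefore yields $1+d_A$ components, and these are precisely the components of $T[A]$ together with those of $T\setminus A$.

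There is no real obstacle. The only point needing care is the convention that $T\setminus A$ denotes the subforest induced on $V\setminus A$, which the paper's usage of $\rho_A$ indicates.
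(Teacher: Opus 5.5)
Your proof is correct and is essentially the paper's argument. The paper contracts each component of $T[A]$ and of $T\setminus A$ and observes that the quotient is a tree with $\kappa_A+\rho_A$ vertices and $d_A$ edges, which is the same edge count you carry out directly on the forest $T-\delta_T(A)$ before contracting.
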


\begin{proof}
Contract every connected component of $T[A]$ and every connected component
of $T \setminus A$.  The resulting graph is a tree.  It has
$\kappa_A+\rho_A$ vertices and $d_A$ edges, so
$d_A=\kappa_A+\rho_A-1$.
\end{proof}

For the optimum partition $(A,B,C)$ and a fixed tree $T$, define
$\alpha:=|T\cap E_G(A,B)|, \qquad \beta:=|T\cap E_G(A,C)|, \qquad \gamma:=|T\cap E_G(B,C)|$.
Then
$d_A=\alpha+\beta, \qquad d_B=\alpha+\gamma, \qquad d_C=\beta+\gamma$,
and
$h=\alpha+\beta+\gamma$.
After deleting $A$, the residual $B$--$C$ cut crosses exactly
$\gamma=h-d_A$
edges of the forest $T \setminus A$.

\begin{lemma}[Coverage of all good-tree patterns]
\label{lem:good_pattern_dichotomy}
Let $T$ be good for $(A,B,C)$.  Then:

\begin{enumerate}
    \item if $d_A=1$, the forest $T \setminus A$ is a tree and the residual optimum
    crosses at most two of its edges;

    \item if $d_A=2$ and $T[A]$ is disconnected, the forest $T \setminus A$ is a
    tree and the residual optimum crosses at most two of its edges;

    \item if $d_A=2$ and $T[A]$ is connected, the forest $T \setminus A$ has two
    components and the residual optimum crosses at most two surviving tree
    edges;

    \item if $d_A\ge3$, then $d_A\le4$, the forest $T \setminus A$ has at most four
    components, and the residual optimum crosses at most one surviving tree
    edge.
\end{enumerate}
\end{lemma}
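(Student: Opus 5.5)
The plan is a direct case analysis on $d_A$. It combines \cref{lem:tree_component_identity} with the bookkeeping $d_A=\alpha+\beta$, $d_B=\alpha+\gamma$, $d_C=\beta+\gamma$, $h=\alpha+\beta+\gamma$, and the residual crossing count $\gamma=h-d_A$. Throughout, goodness supplies $h\le4$ and excludes the triples $(1,3,4)$ and $(1,4,3)$. We also use $d_B,d_C\ge1$, since $B$ and $C$ are nonempty proper subsets and $T$ is spanning. The number of components of $T\setminus A$ is $\rho_A=d_A+1-\kappa_A$ with $\kappa_A\ge1$. The number of residual optimum edges crossing $T\setminus A$ is exactly $\gamma$, since $B$--$C$ tree edges have both endpoints outside $A$ and survive the deletion.

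\emph{Case $d_A=1$.} Then $\kappa_A=1$ and $\rho_A=1$, so $T\setminus A$ is a tree. We have $\gamma=h-1\le3$. If $\gamma=3$, then $h=4$ and $\{\alpha,\beta\}=\{0,1\}$, so $(d_B,d_C)$ is $(4,3)$ or $(3,4)$. These are exactly the excluded triples, so $\gamma\le2$.

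\emph{Case $d_A=2$.} If $T[A]$ is disconnected, then $\kappa_A=2$ and $\rho_A=1$, so $T\setminus A$ is a tree. If $T[A]$ is connected, then $\kappa_A=1$ and $\rho_A=2$. In both subcases $\gamma=h-2\le2$.

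\emph{Case $d_A\ge3$.} Since $d_A=\alpha+\beta\le h\le4$, we get $d_A\le4$. Then $\rho_A\le d_A\le4$, and $\gamma=h-d_A\le1$.

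I do not expect any real obstacle. The only point needing care is the $d_A=1$ case, where $\gamma=3$ must be excluded. This is precisely where the excluded patterns of the goodness definition are used: $\alpha+\beta=1$ together with $h=4$ forces $\gamma=3$ and $(d_B,d_C)\in\{(3,4),(4,3)\}$.
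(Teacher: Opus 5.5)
Your proposal is correct and matches the paper's proof essentially step for step. Both split on $d_A$, use the component identity for the forest structure and $\gamma=h-d_A$ for the residual crossings, and invoke the excluded triples exactly to rule out $h=4$ when $d_A=1$; your remark that $d_B,d_C\ge1$ is never actually needed.
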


\begin{proof}
Since $T$ is good, $h\le4$.  If $d_A=1$, the excluded degree patterns imply
$h\le3$, and therefore
$\gamma=h-d_A\le2$.
\cref{lem:tree_component_identity} forces
$\kappa_A=\rho_A=1$.

If $d_A=2$, then $\gamma=h-2\le2$.  The component identity shows that
$T \setminus A$ is connected when $T[A]$ is disconnected, and has two components when
$T[A]$ is connected.

Finally, if $d_A\ge3$, then $d_A\le h\le4$ and
$\gamma=h-d_A\le1$.
Moreover,
$\rho_A=d_A-\kappa_A+1\le d_A\le4$.
\end{proof}

\noindent We will later show that the $d_A \ge 3$ case is easy, thus we focus on $d_A \le 2$.

\subsection{Virtual Links for Connected Two-Respecting Sides}
\label{subsec:virtual_links}

Suppose $d_A=2$ and $T[A]$ is connected.  By
\cref{lem:tree_component_identity}, the forest $T \setminus A$ has exactly two
components, say $K_1$ and $K_2$.  Let $u_i\in K_i$ be the outside endpoint
of the unique tree edge between $A$ and $K_i$.  Define the tree
$R_A^{\mathrm{virt}} := (T \setminus A)+u_1u_2$,
where the added link has capacity zero.  It need not be an edge of $G$.

\begin{lemma}[Virtual-link coverage]
\label{lem:virtual_link_coverage}
Let $T$ be good for an optimum $(A,B,C)$, suppose $d_A=2$, and suppose
$T[A]$ is connected.

\begin{enumerate}
    \item If $h\le3$, the residual $B$--$C$ cut crosses at most two edges of
    $R_A^{\mathrm{virt}}$.

    \item If $h=4$ and
    $(d_A,d_B,d_C)\in\{(2,2,4),(2,4,2)\}$,
    the residual $B$--$C$ cut crosses exactly its two surviving tree edges
    and does not cross the virtual link.

    \item The only connected $d_A=2$ good pattern not covered by the
    virtual link is
    $(d_A,d_B,d_C)=(2,3,3)$.
\end{enumerate}
\end{lemma}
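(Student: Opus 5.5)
The plan is to count tree edges. The residual $B$--$C$ cut in $R_A^{\mathrm{virt}}$ crosses the $\gamma=h-2$ surviving tree edges in $E_G(B,C)$. It also crosses the virtual link $u_1u_2$ exactly when $u_1$ and $u_2$ lie in different optimum sides among $B$ and $C$. Since $T[A]$ is connected and $d_A=2$, the two tree edges leaving $A$ are $a_1u_1$ and $a_2u_2$ with $a_i\in A$. Their classes are recorded by $(\alpha,\beta)$ with $\alpha+\beta=2$. So the whole statement reduces to locating $u_1,u_2$ in terms of $(\alpha,\beta,\gamma)$, with $\alpha+\beta=2$ and $\gamma=h-2\le2$ because $T$ is good.

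For part 1, take $h\le3$, so $\gamma\le1$. The link adds at most one crossing, giving at most $\gamma+1\le2$ crossed edges of $R_A^{\mathrm{virt}}$. The link has capacity zero, so it changes no cut value. Moreover, $R_A^{\mathrm{virt}}$ is a spanning tree of $V\setminus A$: it joins the two components $K_1,K_2$ given by \cref{lem:tree_component_identity}.

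For part 2, take $h=4$, so $\gamma=2$. From $d_B=\alpha+\gamma$ we get:
\begin{itemize}
\item $(2,2,4)$ gives $\alpha=0$ and $\beta=2$, so both $u_1,u_2\in C$;
\item $(2,4,2)$ gives $\alpha=2$ and $\beta=0$, so both $u_1,u_2\in B$.
\end{itemize}
In either case the link joins two vertices of the same side and is not crossed. The crossed edges are then exactly the $\gamma=2$ surviving tree edges. These lie in $T\setminus A$, because edges of $E_G(B,C)$ avoid $A$.

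For part 3, list the good patterns with $d_A=2$. Goodness gives $h\le4$, so $\gamma\in\{0,1,2\}$, and $(\alpha,\beta)\in\{(2,0),(1,1),(0,2)\}$.
\begin{itemize}
\item If $\gamma\le1$, then $h\le3$ and part 1 covers the pattern.
\item If $\gamma=2$, the patterns are $(2,4,2)$, $(2,3,3)$ and $(2,2,4)$.
\end{itemize}
Part 2 handles the first and last. For $(2,3,3)$ we have $\alpha=\beta=1$, so $u_1$ and $u_2$ lie in different sides. The link is then crossed, for $\gamma+1=3$ crossings, and this case is not covered.

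One point needs care: we must confirm that $u_i$ lies in $B$ or $C$ according to whether edge $i$ is counted in $\alpha$ or $\beta$. This is immediate, since $a_iu_i\in\delta_T(A)$ with $a_i\in A$ and $u_i\notin A$. The rest is bookkeeping, so I expect no real obstacle. The only other place to be careful is checking that the two components $K_1,K_2$ are exactly those meeting $u_1$ and $u_2$ respectively. This holds because each component of $T\setminus A$ must touch $A$ through a tree edge, and there are exactly two such edges.
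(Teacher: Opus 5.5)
Your proposal is correct and follows the paper's proof essentially step for step. You both parametrize by $(\alpha,\beta,\gamma)$ with $\alpha+\beta=2$ and $\gamma=h-2$. For $h\le3$ you bound the crossings by $\gamma\le1$ plus one possible link crossing, and for $h=4$ you read off from $\alpha,\beta$ whether $u_1,u_2$ lie on the same residual side. Your extra checks are harmless details the paper leaves implicit: that $R_A^{\mathrm{virt}}$ is a spanning tree of $V\setminus A$, that each $K_i$ contains $u_i$, and that the side of $u_i$ is determined by the class of its tree edge.
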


\begin{proof}
If $h\le3$, then the residual cut crosses
$\gamma=h-2\le1$
surviving tree edge.  The virtual link contributes at most one further
crossing.

Now suppose $h=4$.  Then $\gamma=2$, while
$\alpha+\beta=2$.
For the degree pattern $(2,2,4)$ we have
$\alpha=0, \qquad \beta=2$.
Thus both tree edges leaving $A$ enter $C$, so $u_1,u_2\in C$ and the virtual
link is not crossed.  The case $(2,4,2)$ is symmetric.  The remaining
possibility is $\alpha=\beta=1$, which gives the pattern $(2,3,3)$ and places
$u_1,u_2$ on opposite residual sides.
\end{proof}

\subsection{Deterministic Rectangle Nets}
\label{subsec:dense_oracles}

We need to choose edges deterministically from rectangular ranges
in order to obtain repair edges.

\begin{lemma}[Deterministic weighted rectangle net]
\label{lem:rectangle_net}
Let $P$ be a nonnegative weighted point set in $[N]\times[N]$, supported on
a union of $O(1)$ rectangles, and let $W:=w(P)>0$.  Suppose rectangle weights
are available in $O(1)$ time.  For every $0<\varepsilon<1$, one can
deterministically construct a set $Z_\varepsilon\subseteq\operatorname{supp}(P)$
of $O(\varepsilon^{-2})$ points such that every axis-aligned rectangle $R$
with $w(P\cap R)\ge\varepsilon W$ contains a point of $Z_\varepsilon$.
The construction uses $O(\varepsilon^{-2}\log N)$ rectangle-weight queries.
The same conclusion holds for a union of at most $q$ rectangles after
replacing $\varepsilon$ by $\varepsilon/q$.
\end{lemma}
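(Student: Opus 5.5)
We build $Z_\varepsilon$ as a grid-quantile net: first cut the $x$-axis into weight slabs, then cut each slab along the $y$-axis.

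\textbf{Step 1: vertical slabs.} By binary search on the $x$-coordinate, using prefix queries $w(P\cap([1,a]\times[N]))$, choose cut points $a_0<a_1<\cdots<a_s$ with $s=O(\varepsilon^{-1})$. The rule is that each open slab strictly between consecutive cut columns has weight less than $\varepsilon W/4$. Each cut column is placed at the first $x$-coordinate where the cumulative weight reaches the next multiple of $\varepsilon W/4$. Each cut point costs $O(\log N)$ queries.

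\textbf{Step 2: cells.} Inside each cut column $\{a_i\}\times[N]$, run the same procedure along $y$. Likewise, inside each union of a slab with its bounding columns, run it along $y$. This uses $O(\varepsilon^{-1})$ horizontal thresholds, each chosen so that the open pieces between consecutive thresholds have weight below $\varepsilon W/4$. The result is a family of $O(\varepsilon^{-2})$ cells, each found with $O(\log N)$ queries. Into $Z_\varepsilon$ we put one support point on each threshold line of each cell. Such a point is located by a one-dimensional binary search for a nonzero-weight coordinate, using rectangle queries, again at $O(\log N)$ queries per point. This gives $O(\varepsilon^{-2})$ points and $O(\varepsilon^{-2}\log N)$ queries in total.

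\textbf{Step 3: correctness.} Suppose $R=[x_1,x_2]\times[y_1,y_2]$ with $w(P\cap R)\ge\varepsilon W$ contains no net point. We aim for a contradiction by bounding $w(P\cap R)$ below $\varepsilon W$.
\begin{itemize}
\item If $[x_1,x_2]$ avoids all cut columns, then $R$ lies in one open slab, so its weight is below $\varepsilon W/4$.
\item Otherwise, $R$ meets some cut column $a_i$. Inside that column's $y$-decomposition, $R\cap(\{a_i\}\times[N])$ contains no chosen point.
\end{itemize}
The second case is the \textbf{main obstacle}. Points chosen only on threshold lines do not certify emptiness of a wide rectangle that spans many slabs. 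It seems to require that the horizontal thresholds be shared across slabs, or a two-level argument charging the weight of $R$ to at most two partial boundary slabs plus full interior slabs.

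Before committing, I would first try the standard simplification of taking the hitting condition weaker than stated: show only that every rectangle of weight $\ge\varepsilon W$ contains a point of positive weight in the support lying on a chosen horizontal line of a chosen vertical slab. The obstacle is whether this genuinely holds. A rectangle can have heavy weight in the interior of a slab while hitting no threshold column with positive weight. I would therefore choose thresholds per slab, so that each slab is cut into $O(\varepsilon^{-1})$ boxes of weight $\le\varepsilon W/4$, and take as net points weighted points at box corners found by query-guided search. The case analysis then bounds $w(R)$ by the contributions of the two boundary slabs plus the interior slabs. Each interior slab must contribute a full box, hence a net point, unless $R$ is thin in $y$.

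The thin case needs the symmetric horizontal-first decomposition as well. So I would take $Z_\varepsilon$ as the union of both orientations. Then a rectangle avoiding all net points must be thin in both directions, so it is covered by $O(1)$ boxes and has weight below $\varepsilon W$.

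\textbf{Step 4: support and unions of rectangles.} The restriction to support points and the query counts are preserved throughout. The extension to a union of $q$ rectangles follows from the pigeonhole principle: some single piece carries weight $\ge\varepsilon W/q$, so we apply the construction with $\varepsilon/q$.
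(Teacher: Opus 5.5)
\textbf{Gap in Step 3.} Your Step~1 and your pigeonhole reduction in Step~4 match the paper. Step~1 uses open slabs of weight below $\varepsilon W/4$, separated by single cut columns. The correctness argument, however, has a genuine gap, and the two-orientation repair does not close it.

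First, your per-slab thresholds are too coarse. Every open slab already weighs less than $\varepsilon W/4$, so asking the pieces inside a cell to weigh less than $\varepsilon W/4$ forces almost no thresholds. The rule as stated therefore produces only $O(\varepsilon^{-1})$ threshold points in total, not $O(\varepsilon^{-2})$. Such a small set cannot suffice in general, since $\varepsilon$-nets for axis-parallel rectangles can require superlinearly many points in $1/\varepsilon$.

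Second, the claim that a rectangle avoiding both nets is ``thin in both directions, hence covered by $O(1)$ boxes'' is false. Take a rectangle $R=J\times K$ with short $y$-range whose $x$-range spans all $s=\Theta(\varepsilon^{-1})$ slabs. It meets one box in each slab. With boxes of weight up to $\varepsilon W/4$, this bounds $w(P\cap R)$ only by $\Theta(W)$. The horizontal-first net does not help. Thinness with respect to the vertical slabs' thresholds and with respect to the horizontal slabs' thresholds concern unrelated partitions, and nothing confines $R$ to boundedly many boxes of either.

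\textbf{The missing idea.} Space the quantiles inside each slab at $\Theta(\varepsilon W/s)=\Theta(\varepsilon^2W)$, and replace the second orientation by averaging over slabs.

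The paper takes $a:=\varepsilon W/(8s)$. In each slab $I$ it selects one support point of $P\cap(I\times[N])$ at the $y$-coordinate where the cumulative weight reaches each positive multiple of $a$. This gives $O(s+W/a)=O(\varepsilon^{-2})$ points.

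Now suppose $w(P\cap R)\ge\varepsilon W$. At most two slabs meet $J$ without being contained in it, and together they carry at most $\varepsilon W/2$; a single heavy column is either contained in $J$ or disjoint from it. So the slabs contained in $J$ contribute at least $\varepsilon W/2$ inside $K$. By pigeonhole, some contained slab $I$ has $w(P\cap(I\times K))\ge\varepsilon W/(2s)=4a$. Hence some multiple of $a$ is first reached at a coordinate $y\in K$, and the point selected there lies in $I\times\{y\}\subseteq R$.

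In this argument the thin-but-wide rectangle is not a separate case. If every contained slab contributed less than $4a$ inside $K$, their total would be below $4as=\varepsilon W/2$.
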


\begin{proof}
Set $\gamma:=\varepsilon/4$.  Using prefix weights and binary search,
partition the first coordinate into $s=O(1/\varepsilon)$ consecutive slabs.
Every slab has weight at most $\gamma W$, except that a coordinate of weight
larger than $\gamma W$ forms a singleton slab.  Such heavy coordinates are
also only $O(1/\varepsilon)$ in number.

Set $a:=\varepsilon W/(8s)$.  In every slab $I$, consider the weighted
one-dimensional distribution obtained by projecting $P\cap(I\times[N])$ to
the second coordinate.  At every positive multiple of $a$ in its cumulative
weight, select one support point of $P$ at the coordinate where that multiple
is reached.  A support point is recovered by a second binary search.  The number of selected points, with duplicates removed,
is
$O\left(s+\sum_I\frac{w(P\cap(I\times[N]))}{a}\right) =O(\varepsilon^{-2})$.

Let $R=J\times K$ have weight at least $\varepsilon W$.  At most two
non-singleton slabs meet $J$ without being contained in it, and together
they have weight at most $2\gamma W=\varepsilon W/2$.  A heavy singleton
slab is either contained in $J$ or disjoint from it.  The slabs contained in
$J$ therefore contribute at least $\varepsilon W/2$ to $R$.  Some contained
slab $I$ contributes at least $\varepsilon W/(2s)=4a$ inside $K$.  The
one-dimensional quantile construction places a selected support point in
$I\times K$.  This proves the rectangle claim.  If a union of $q$ rectangles
has weight at least $\varepsilon W$, one constituent rectangle has weight at
least $\varepsilon W/q$, proving the final statement.
\end{proof}

The prefix table also lets us recover an arbitrary positive-weight endpoint
pair in a queried rectangle in $O(\log n)$ time by binary-searching first its
row and then its column.  Hence \cref{lem:rectangle_net} is
constructive for the edge point sets used below.

\subsection{Repair Edges for the \texorpdfstring{$(2,3,3)$}{(2,3,3)} Pattern}
\label{subsec:repair_edges}

We now handle a good tree with degree pattern $(2,3,3)$ in which the
medium side $A$ is connected.  The forest $T \setminus A$ has two components
$K_1,K_2$, and each contains one outside endpoint of a tree edge leaving
$A$.  Define
$F_{A,T} := E_G(K_1,K_2)$.
Equivalently, if
$\delta_T(A)=\{p,s\}$,
then $F_{A,T}$ consists of the graph edges whose fundamental tree paths
contain both $p$ and $s$.

An edge $e\in F_{A,T}$ is called a \emph{repair edge} for the optimum
$(A,B,C)$ if its endpoints lie in the same one of $B,C$.  For such an edge,
$R_{A,e}:=(T \setminus A)+e$
is a spanning tree of $G \setminus A$, and the residual optimum does not cross $e$.
It therefore crosses exactly its two original $B$--$C$ tree edges in
$R_{A,e}$.

We first establish that repair edges form a constant fraction of
$F_{A,T}$ by capacity.  We use the following local-optimality observation.

\begin{observation}[Local optimality of an optimum side]
\label{obs:local_optimality}
Let $(A,B,C)$ be an optimum $3$-cut.  For every nonempty proper
$X\subsetneq A$,
$c(E_G(X,A\setminus X)) \ge c(E_G(X,B))$,
and
$c(E_G(X,A\setminus X)) \ge c(E_G(X,C))$.
Analogous inequalities hold for subsets of $B$ and $C$.
\end{observation}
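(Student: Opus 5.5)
The plan is an exchange argument: move the set $X$ into another optimum part and compare capacities. Fix a nonempty proper subset $X\subsetneq A$ and consider the partition
\[
(A\setminus X,\ B\cup X,\ C).
\]
Since $X\ne A$, the part $A\setminus X$ is nonempty. The parts $B\cup X$ and $C$ are nonempty because $B,C$ are. So this is a valid $3$-cut, and by optimality of $(A,B,C)$ its value is at least $\lambda=c_G(A,B,C)$.

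Next compare the two values edge class by edge class. Moving $X$ from $A$ to $B$ changes only the status of edges with an endpoint in $X$.
\begin{itemize}
\item Edges of $E_G(X,A\setminus X)$ were uncut and become cut, adding $c(E_G(X,A\setminus X))$.
\item Edges of $E_G(X,B)$ were cut and become uncut, subtracting $c(E_G(X,B))$.
\item Edges of $E_G(X,C)$ remain cut.
\item Edges inside $X$ remain uncut.
\end{itemize}
Hence the new value equals $\lambda+c(E_G(X,A\setminus X))-c(E_G(X,B))$. The inequality ``new value $\ge\lambda$'' is exactly the first claim.

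The second inequality follows from the same argument with $X$ moved into $C$ instead, that is, using the partition $(A\setminus X,B,C\cup X)$. The analogous statements for subsets of $B$ and of $C$ follow by relabeling the parts, since the argument never used any property distinguishing $A$ from the other sides.

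There is no real obstacle here. The only point requiring care is that the modified partition still has exactly three nonempty parts, which is why $X$ must be a \emph{proper} nonempty subset.
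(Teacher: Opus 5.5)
Your proposal is correct and matches the paper's proof: move $X$ from $A$ into $B$ (resp.\ $C$), observe that the partition value changes by exactly $c(E_G(X,A\setminus X))-c(E_G(X,B))$ (resp.\ with $C$), and conclude from optimality that this change is nonnegative. Your explicit check that the modified partition still has three nonempty parts is a detail the paper leaves implicit.
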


\begin{proof}
Moving $X$ from $A$ to $B$ changes the partition value by
$c(E_G(X,A\setminus X))-c(E_G(X,B))$.
Optimality implies that this change is nonnegative.  Moving $X$ to $C$ gives
the second inequality.
\end{proof}

Every optimum side has boundary strictly below $5\lambda/6$.  Indeed, this
is immediate for $A$, and
$c_G(B)=2\lambda- c_G(A)- c_G(C)
<2\lambda-\lambda/2-2\lambda/3=5\lambda/6$. The argument for $C$ is symmetric.

\paragraph{Why repair edges are plentiful.}
The reason many such edges exist is that removing $A$ separates one residual optimum side into multiple tree components
that the graph itself must connect
substantially.  If $S_1\subseteq K_1$ and $S_2\subseteq K_2$ are the
two pieces of one optimum side $S$, then
\[
    2c(E_G(S_1,S_2))
    =
    c_G(S_1)+c_G(S_2)-c_G(S).
\]
The whole side $S$ has boundary below $5\lambda/6$, whereas, except
for the unique possible strict-small piece, its two fragments each have
boundary at least $\lambda/2$.  Their boundaries cannot both be large
while their union has small boundary unless
$\Omega(\lambda)$ capacity directly connects them.  If a strict-small fragment is
present, local optimality plays the same role: it prevents most of that
fragment's boundary from going to the other optimum parts and again
forces substantial same-side capacity across $K_1,K_2$.

Finally, every edge between $K_1$ and $K_2$ that is not a repair edge
is a true $B$--$C$ edge and these have low total capacity.

\begin{lemma}[Substantial repair capacity]
\label{lem:substantial_repair_capacity}
Let $(A,B,C)$ be an optimum $3$-cut with $A$ its unique medium side.  Suppose
$T$ has degree pattern $(2,3,3)$ and $T[A]$ is connected.  Let
$\mathcal R_{A,T}\subseteq F_{A,T}$
be the repair edges.  Then
$c(\mathcal R_{A,T})>\frac{\lambda}{18}$.
\end{lemma}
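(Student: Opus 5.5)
The plan is to exhibit one optimum side $S\in\{B,C\}$ that is split into two nonempty pieces $S_1:=S\cap K_1$ and $S_2:=S\cap K_2$. The repair edges then contain $E_G(S_1,S_2)$, since $A$ is removed and $K_1,K_2$ partition $V\setminus A$. It remains to lower-bound $c(E_G(S_1,S_2))$ using the identity $2c(E_G(S_1,S_2))=c_G(S_1)+c_G(S_2)-c_G(S)$, or local optimality.

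\emph{Step 1: some residual side is split.} In the pattern $(2,3,3)$ we have $\alpha=\beta=1$ and $\gamma=2$. Hence one tree edge joins $A$ to $B$ and one joins $A$ to $C$. After relabeling, $u_1\in B\cap K_1$ and $u_2\in C\cap K_2$.
\begin{itemize}
\item Since $K_1$ and $K_2$ are distinct components of $T\setminus A$, every one of the $\gamma=2$ tree $B$--$C$ edges lies inside $K_1$ or inside $K_2$.
\item Take such an edge inside $K_1$, say. It has a $C$-endpoint in $K_1$, so $C_1:=C\cap K_1\neq\emptyset$. Together with $u_2\in C\cap K_2$, this splits $C$ into two nonempty proper pieces.
\item Symmetrically, an edge inside $K_2$ splits $B$.
\end{itemize}
Since $\gamma=2>0$, some side $S$ is split. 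The edges of $E_G(S_1,S_2)$ lie in $F_{A,T}$ and have both endpoints in $S$, so they are repair edges.

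\emph{Step 2: bound the crossing capacity.} Every optimum side satisfies $c_G(S)<5\lambda/6$, as noted above. The cuts $S_1$, $S_2$ and $A$ give pairwise distinct bipartitions, and $A$ is not strict-small. By \cref{lem:unique_strict_small_cut}, at most one of $S_1,S_2$ has boundary below $\lambda/2$. There are two cases.
\begin{itemize}
\item If both pieces have boundary at least $\lambda/2$, the identity gives
\[
  c(E_G(S_1,S_2))>\tfrac12\bigl(\lambda-\tfrac{5\lambda}{6}\bigr)=\tfrac{\lambda}{12}.
\]
\item Otherwise, say $c_G(S_1)<\lambda/2$. Apply \cref{obs:local_optimality} to $X=S_1\subsetneq S$. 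It gives $c(E_G(S_1,S_2))\ge c(E_G(S_1,A))$ and $c(E_G(S_1,S_2))\ge c(E_G(S_1,S'))$, where $S'$ is the third side. Since $\delta_G(S_1)$ splits into these three classes, $c(E_G(S_1,S_2))\ge c_G(S_1)/3\ge\mu/3$. By \cref{cor:lambda_three_vs_lambda_two}, which applies because the strict-small branch failed, $\mu>\lambda/3$. Hence $c(E_G(S_1,S_2))>\lambda/9$.
\end{itemize}
In both cases the repair capacity exceeds $\lambda/18$.

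\emph{Expected obstacle.} The delicate point is Step 1: every case must yield a split side that is a genuine set of repair edges, meaning the pieces lie in different components $K_1,K_2$ and both endpoints are in the same side. A secondary point is justifying $c_G(S_1)\ge\mu$ and $\mu>\lambda/3$ in the strict-small subcase. This needs the hypothesis that $c_G(\mathcal P_{\mathrm{small}})>\lambda$, which is in force in this part of the algorithm. I would state this hypothesis explicitly when invoking \cref{cor:lambda_three_vs_lambda_two}.
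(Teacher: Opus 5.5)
Your proof is correct, but it is organized differently from the paper's. The paper distinguishes whether the two $B$--$C$ tree edges lie in different components of $T\setminus A$ or in the same one. It uses the resulting tree-component structure to identify a split side, and in the first case it deliberately chooses the side neither of whose pieces is strict-small.

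Your Step~1 avoids this case analysis. A $B$--$C$ tree edge inside $K_1$ (resp.\ $K_2$), together with $u_2$ (resp.\ $u_1$), already splits $C$ (resp.\ $B$). Your Step~2 then works for \emph{any} split side, because the cut identity and \cref{obs:local_optimality} only need $S=S_1\mathbin{\dot\cup}S_2$ with both pieces nonempty.

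The real difference is in the strict-small subcase. The paper bounds the piece through the $3$-cut $(S_1,S',V\setminus(S_1\cup S'))$, which gives $c_G(S_1)\ge\lambda-c_G(S')>\lambda/6$. This is the source of the constant $1/18$, and it uses only the stated hypotheses. You use $c_G(S_1)\ge\mu>\lambda/3$ instead. That yields the better bound $\lambda/9$ (hence $\lambda/12$ overall), but it imports $c_G(\mathcal P_{\mathrm{small}})>\lambda$ via \cref{cor:lambda_three_vs_lambda_two}.

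That assumption is legitimately in force wherever the lemma is used: if $\mathcal P_{\mathrm{small}}$ is optimal, there is nothing left to do. So this is not a gap. It is also dispensable, since rerunning \cref{lem:strict_small_failure_structure} with non-strict inequalities gives $\mu\ge\lambda/3$ whenever the fixed optimum has no strict-small side. The paper's version simply keeps the lemma self-contained.
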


\begin{proof}
Remove the four optimum-crossing tree edges.  The resulting five tree
components are each contained in one of $A,B,C$.  Since $A$ is connected in
$T$, it contributes one component.  The two $B$--$C$ tree edges are either
in different components of $T \setminus A$ or in the same component.

\paragraph{Case 1: the two $B$--$C$ tree edges lie in different components
of $T \setminus A$.}
Both $B$ and $C$ have one tree component in each of $K_1,K_2$.  Among these
four components, at most one is strict-small by
\cref{lem:unique_strict_small_cut}.  Hence one of $B,C$, say $S$,
has components $S_1\subseteq K_1$ and $S_2\subseteq K_2$ satisfying
$c_G(S_1), c_G(S_2)\ge\frac{\lambda}{2}$.
The edges $E_G(S_1,S_2)$ are repair edges, and
$2c(E_G(S_1,S_2))= c_G(S_1)+ c_G(S_2)- c_G(S)
>\lambda-5\lambda/6=\lambda/6$.
Thus
$c(E_G(S_1,S_2))>\frac{\lambda}{12}$.

\paragraph{Case 2: the two $B$--$C$ tree edges lie in one component of
$T \setminus A$.}
Exactly one component of $T\setminus A$ contains both $B$--$C$ tree
edges, while the other contains none.  The latter component is therefore
contained entirely in one of $B,C$; after interchanging $B$ and $C$, call
it $C_2$.  Let $K$ be the other component of $T\setminus A$ and put
\[
    C_1:=C\cap K=C\setminus C_2.
\]
The repair edges in $F_{A,T}$ are exactly the edges joining $C_1$ and $C_2$.
Write
\[
    g_C:=c(E_G(C_1,C_2)).
\]

If neither $C_1$ nor $C_2$ is strict-small, then
\[
    2g_C=c_G(C_1)+c_G(C_2)-c_G(C)
    >\lambda-\frac{5\lambda}{6}
    =\frac{\lambda}{6},
\]
and hence $g_C>\lambda/12$.

Otherwise, let $R\in\{C_1,C_2\}$ be the unique strict-small cut.
By \cref{obs:local_optimality}, applied to moving $R$ from $C$ to
$A$ and to $B$,
\[
    g_C=c(E_G(R,C\setminus R))
    \ge c(E_G(R,A)),
    \qquad
    g_C\ge c(E_G(R,B)).
\]
Consequently,
\[
    c_G(R)
    =
    g_C+c(E_G(R,A))+c(E_G(R,B))
    \le3g_C.
\]
The valid partition
\[
    \bigl(R,B,V\setminus(R\cup B)\bigr)
\]
has value
\[
    c_G(R)+c_G(B)-c(E_G(R,B))\ge\lambda.
\]
Since $c_G(B)<5\lambda/6$, this implies
$c_G(R)>\lambda/6$, and therefore
\[
    g_C>\frac{\lambda}{18}.
\]
This completes all cases.
\end{proof}

\begin{corollary}[Constant repair fraction]
\label{cor:constant_repair_fraction}
Under the assumptions of \cref{lem:substantial_repair_capacity},
$c(\mathcal R_{A,T})>c(F_{A,T})/10$.
\end{corollary}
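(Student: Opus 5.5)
The plan is to split $F_{A,T}$ into repair edges and non-repair edges and bound the non-repair part by an absolute multiple of $\lambda$. Every edge of $F_{A,T}=E_G(K_1,K_2)$ has both endpoints in $V\setminus A=B\cup C$. Such an edge is either a repair edge, with both endpoints in the same one of $B,C$, or an edge of $E_G(B,C)$. Therefore
\[
c(F_{A,T})\le c(\mathcal R_{A,T})+c(E_G(B,C)).
\]

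Next I bound $c(E_G(B,C))$. Writing $s:=c(E_G(B,C))$, we have $\lambda=c_G(A)+s$. Since $c_G(A)\ge\lambda/2$, this gives $s\le\lambda/2$. Combining with \cref{lem:substantial_repair_capacity}, which gives $\lambda<18\,c(\mathcal R_{A,T})$, we obtain
\[
s\le \lambda/2<9\,c(\mathcal R_{A,T}).
\]

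Substituting into the first display gives $c(F_{A,T})<10\,c(\mathcal R_{A,T})$, which is the claim. The inequality is strict because the lemma's bound is strict and $c(\mathcal R_{A,T})>0$.

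There is essentially no obstacle. The only point to check is that $c_G(A)\ge\lambda/2$ holds in this setting, and it does because $A$ is the unique medium (hence not strict-small) light side, as fixed after the two-light-side elimination. A sharper bound on $s$ is not needed, since the constant $10$ already matches $1+18/2$.
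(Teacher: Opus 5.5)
Your proof is correct and follows essentially the same route as the paper. You split $F_{A,T}$ into repair edges and $B$--$C$ edges, bound $c(E_G(B,C))=\lambda-c_G(A)\le\lambda/2$, and combine this with $c(\mathcal R_{A,T})>\lambda/18$; the only cosmetic difference is that you clear denominators (via $\lambda/2<9\,c(\mathcal R_{A,T})$) where the paper compares the ratio with $\frac{\lambda/18}{\lambda/18+\lambda/2}$.
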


\begin{proof}
Every edge of $F_{A,T}$ is either a repair edge or a $B$--$C$ edge.  Moreover,
$c(E_G(B,C)) = \lambda- c_G(A) \le \frac{\lambda}{2}$.
Therefore
$c(F_{A,T}) \le c(\mathcal R_{A,T})+\frac{\lambda}{2}$.
Using $c(\mathcal R_{A,T})>\lambda/18$ gives
$\frac{c(\mathcal R_{A,T})}{c(F_{A,T})} > \frac{\lambda/18}{\lambda/18+\lambda/2} = \frac1{10}$.
\end{proof}

\begin{lemma}[Deterministic repair-edge set]
\label{lem:deterministic_repair_set}
There is an absolute constant $b$ such that, for every processed pair $(A,T)$
with $d_A=2$ and $T[A]$ connected, one can construct in $O(\log n)$ time a
set $Z_{A,T}\subseteq F_{A,T}$ of at most $b$ endpoint pairs with the
following property.  For every optimum $(A,B,C)$ realizing the connected
$(2,3,3)$ pattern, $Z_{A,T}$ contains a repair edge.
\end{lemma}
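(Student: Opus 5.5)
Our plan is to view the edges of $F_{A,T}$ as a weighted point set in the capacity matrix and apply \cref{lem:rectangle_net}. By \cref{lem:capacity_matrix}, in the DFS order of $\widehat T$ the sets $K_1$ and $K_2$ are each a union of $O(1)$ intervals. Indeed, $\delta_T(K_i)$ consists of at most two tree edges, since $d_A=2$ and $T\setminus A$ has exactly two components. So the point set $P:=\{(i,j): v_i\in K_1,\ v_j\in K_2\}$, weighted by $W_T[i,j]$, lies on a union of $O(1)$ rectangles. Its rectangle weights are $O(1)$-time prefix-sum queries, and its total weight is $W=c(F_{A,T})$. If $W=0$, then $F_{A,T}$ contains no repair edge. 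By \cref{lem:substantial_repair_capacity} this cannot happen when the pattern is realized, so we output $Z_{A,T}=\emptyset$.

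The key step is to show that the repair edges form a union of $O(1)$ rectangles of this point set. For an optimum $(A,B,C)$ realizing the connected $(2,3,3)$ pattern, we examine how $B$ and $C$ meet $K_1$ and $K_2$. Inside $K_1$, the sets $B\cap K_1$ and $C\cap K_1$ are separated by at most two tree edges of $T$, since $\gamma=2$. Together with the two edges of $\delta_T(K_1)$, this means $B\cap K_1$ is cut from its complement by $O(1)$ tree edges. The interval argument of \cref{lem:capacity_matrix} then expresses $B\cap K_1$ and $C\cap K_1$ as unions of $O(1)$ intervals, and likewise for $K_2$. The repair edges are exactly the points of $P$ in $\bigl((B\cap K_1)\times(B\cap K_2)\bigr)\cup\bigl((C\cap K_1)\times(C\cap K_2)\bigr)$. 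This set is a union of at most some absolute constant $q$ of rectangles, for example $q\le 2\cdot 5^2$. We do not need to know $B$ or $C$; only the existence of this description matters.

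Next, by \cref{cor:constant_repair_fraction} the repair edges carry weight more than $W/10$. So some rectangle among the $q$ carries weight at least $W/(10q)$. We apply \cref{lem:rectangle_net} with $\varepsilon:=1/(10q)$. This deterministically yields $Z_{A,T}\subseteq\operatorname{supp}(P)$ with $O(q^2)=O(1)$ points, and every rectangle of weight at least $\varepsilon W$ contains one of them. In particular, some point of $Z_{A,T}$ lies in a rectangle of repair edges. Every support point has positive weight and $v_i\in K_1$, $v_j\in K_2$, so it corresponds to an actual graph edge, hence an edge of $F_{A,T}$ that is a repair edge. Points are recovered by the binary searches described after \cref{lem:rectangle_net}.

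For the running time, \cref{lem:rectangle_net} uses $O(\varepsilon^{-2}\log n)=O(\log n)$ queries, each $O(1)$ time once the interval representations of $K_1$ and $K_2$ are known. Those representations take $O(1)$ time from the two tree edges of $\delta_T(A)$. Support-point recovery costs $O(\log n)$ per point, which over $O(1)$ points is $O(\log n)$ in total. The main obstacle is the rectangle-decomposition step. We must carefully bound the number of tree edges separating $B\cap K_i$ within the order of $\widehat T$, so that $q$ is an absolute constant uniform over all optima. We must also check that the net is built on $P$ restricted to $K_1\times K_2$, not the full matrix, so that the fraction bound of \cref{cor:constant_repair_fraction} applies relative to $W$.
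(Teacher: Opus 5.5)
Your proposal is correct and follows essentially the same route as the paper: you write $K_1,K_2$ and the four sets $B\cap K_i$, $C\cap K_i$ as $O(1)$ DFS intervals, so the repair edges lie on $O(1)$ rectangles of the point set $E_G(K_1,K_2)$, and you then apply \cref{lem:rectangle_net} with $\varepsilon=1/(10q)$ together with \cref{cor:constant_repair_fraction}. The only difference is cosmetic: here $T[A]$ is connected, so $|\delta_T(K_i)|=1$ exactly rather than at most two, which gives slightly smaller constants.
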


\begin{proof}
Each component $K_i$ of $T \setminus A$ satisfies $|\delta_T(K_i)|=1$ and is therefore
a union of $O(1)$ DFS intervals.  In the $(2,3,3)$ pattern, each of $B$ and
$C$ crosses three edges of $T$.  Hence each of
$B\cap K_1,B\cap K_2,C\cap K_1,C\cap K_2$ is also a union of $O(1)$ DFS
intervals.  The repair-edge point set is
$E_G(B\cap K_1,B\cap K_2)\mathbin{\dot\cup} E_G(C\cap K_1,C\cap K_2)$,
and is therefore supported on a union of at most $q_0$ axis-aligned
rectangles for an absolute constant $q_0$.

Apply \cref{lem:rectangle_net} to the weighted point set
$F_{A,T}=E_G(K_1,K_2)$ with $\varepsilon_0:=1/(10q_0)$.  By
\cref{cor:constant_repair_fraction}, the repair rectangles
have total weight greater than $c(F_{A,T})/10$, so one of them has weight
greater than $\varepsilon_0c(F_{A,T})$.  The net contains an endpoint pair in
that rectangle, which is a repair edge.  Since $\varepsilon_0$ is an absolute
constant, the net has constant size and is constructed with $O(\log n)$
rectangle queries.
\end{proof}

\subsection{An Optimizer for All Residual Trees}
\label{subsec:shared_residual_optimizer}

We now show that all direct, virtual, and repaired reference trees associated
with one sampled base tree can be optimized without rebuilding an
$\Theta(m)$-size data structure for every tag.

A \emph{tag} is a pair $(A,R)$, where $A$ is a processed candidate side with
$d_A\le2$ and $R$ is one of the following trees on $V\setminus A$:

\begin{enumerate}
    \item $R=T \setminus A$, when this graph is connected;

    \item $R=R_A^{\mathrm{virt}}$, when $T[A]$ is connected and $d_A=2$;

    \item $R=R_{A,e}=(T \setminus A)+e$ for an edge $e\in Z_{A,T}$.
\end{enumerate}

In the virtual case, regard the added link as a zero-capacity edge of the
residual graph.  In every case, $R$ is a spanning tree of the residual graph
$H_A:=G \setminus A$.

Fix the binary expansion $\widehat T$ of the sampled base tree and its
contraction map $\pi_T$.  For a candidate $A$, let
$\widehat A:=\pi_T^{-1}(A)$.  Associated with each tag $(A,R)$ is a
constant-degree expansion $\widehat R$: start with
$\widehat T-\widehat A$, and, in the virtual and repaired cases, add the
same link between its original endpoints.  Contracting the auxiliary
edges of $\widehat R$ recovers $R$.  Its maximum degree is at most four,
which is sufficient for the constant-branching centroid search.  The edges
corresponding to $E(R)$ are distinguished; the remaining edges are used
only by the search.

\begin{lemma}[Inherited boundary complexity]
\label{lem:inherited_boundary_complexity}
Let $(A,R)$ be a tag, let $f\in E(\widehat R)$, and let
$\widehat X_f$ be the vertex set of either component of
$\widehat R-f$.  Put $X_f:=\widehat X_f\cap(V\setminus A)$.  Then
\[
    \delta_{\widehat T}(\widehat X_f)
    \subseteq
    \delta_{\widehat T}(\widehat A)
    \cup\bigl(\{f\}\cap E(\widehat T)\bigr).
\]
Consequently,
$|\delta_{\widehat T}(\widehat X_f)|\le d_A+1\le3$, and $X_f$ is a
disjoint union of $O(1)$ intervals in the DFS order used by the capacity
matrix of $T$.
\end{lemma}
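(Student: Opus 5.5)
We compare the cut $\widehat X_f$ with the structure of $\widehat T$ edge by edge. The key observation is that $\widehat R$ is built from $\widehat T-\widehat A$ by possibly adding one link $\ell$. Hence every edge of $\widehat R$ other than $\ell$ is an edge of $\widehat T$ with both endpoints outside $\widehat A$.

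Fix an edge $g\in\delta_{\widehat T}(\widehat X_f)$; we classify it into three cases.
\begin{enumerate}
\item If $g$ has an endpoint in $\widehat A$, then its other endpoint lies outside $\widehat A$. This holds because $\widehat X_f\cap\widehat A=\emptyset$: both components of $\widehat R-f$ consist of vertices of $\widehat R$, which avoid $\widehat A$. So $g\in\delta_{\widehat T}(\widehat A)$.
\item Otherwise both endpoints of $g$ lie in $V(\widehat T)\setminus\widehat A=V(\widehat R)$. Then $g$ is an edge of $\widehat T-\widehat A\subseteq\widehat R$, and it joins the two components of $\widehat R-f$. The only edge of $\widehat R$ joining those components is $f$ itself, so $g=f$.
\item In the second case $f$ is an edge of $\widehat T$, giving $g\in\{f\}\cap E(\widehat T)$. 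If $f$ is the added link (virtual or repair edge), it is not in $E(\widehat T)$ and case two cannot occur, which matches the statement. Note also that a repair edge $e$ could coincide with a tree edge only if it were in $T\setminus A$; that is impossible, since it joins distinct components $K_1,K_2$.
\end{enumerate}
This proves the inclusion.

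The cardinality bound follows at once. The distinguished edges crossing $\widehat A$ correspond bijectively to $\delta_T(A)$, and no auxiliary edge crosses $\widehat A$. Hence $|\delta_{\widehat T}(\widehat A)|=d_A\le2$, and adding $f$ contributes at most one more edge, giving $d_A+1\le3$.

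For the interval claim, apply \cref{lem:capacity_matrix} to $\widehat X_f\subseteq V(\widehat T)$ with $q=3$. This represents $\widehat X_f\cap V$ as at most four DFS intervals. Since $\widehat X_f$ avoids $\widehat A$, the set $\widehat X_f\cap V$ equals $X_f$. The intervals are found in $O(1)$ time from the $O(1)$ boundary edges.

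The only delicate step is the second case of the classification. There we must confirm that $\widehat T-\widehat A$ is a subgraph of $\widehat R$ on exactly the vertex set $V(\widehat T)\setminus\widehat A$. In particular, the lifted set $\widehat A=\pi_T^{-1}(A)$ must include the auxiliary gadget vertices of $A$, so that no auxiliary vertex of $\widehat A$ survives in $\widehat R$. This holds by the construction of $\widehat R$ and requires no further argument.
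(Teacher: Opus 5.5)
Your proof is correct and follows the paper's argument essentially step for step. An edge of $\widehat T$ crossing $\widehat X_f$ either has an endpoint in $\widehat A$, and so lies in $\delta_{\widehat T}(\widehat A)$, or lies in $\widehat T-\widehat A\subseteq\widehat R$ and must equal $f$; then $|\delta_{\widehat T}(\widehat A)|=d_A\le2$ and \cref{lem:capacity_matrix} give the count and the interval representation, with your remark that the virtual or repair link cannot coincide with an edge of $\widehat T$ being a harmless detail the paper leaves implicit.
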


\begin{proof}
Let $g\in\delta_{\widehat T}(\widehat X_f)$.  If one endpoint of $g$
belongs to $\widehat A$, then
$g\in\delta_{\widehat T}(\widehat A)$.  Otherwise both endpoints survive
in $\widehat T-\widehat A$, which is a subgraph of $\widehat R$.  If
$g\ne f$, then $g$ remains in $\widehat R-f$ while joining its two
components, a contradiction.  Thus $g=f$, which is possible only when
$f\in E(\widehat T)$.

Since auxiliary edges have both endpoints in the same contraction gadget,
$|\delta_{\widehat T}(\widehat A)|=|\delta_T(A)|=d_A$.  The cardinality
bound follows, and \cref{lem:capacity_matrix} gives the interval
representation of $X_f=\widehat X_f\cap V$.
\end{proof}

Root $R$ and $\widehat R$ consistently.  For
$f\in E(\widehat R)$, let $X_f$ denote the original vertices in the
rooted subtree below $f$, and define
$d_A(f):=c_{H_A}(X_f)$.  If $X_f$ and $X_g$ are disjoint, then
\[
    c_{H_A}(X_f\triangle X_g)
    =
    d_A(f)+d_A(g)-2c(E_G(X_f,X_g)).
\]
If $X_g\subseteq X_f$, then
\[
    c_{H_A}(X_f\setminus X_g)
    =
    d_A(f)+d_A(g)
    -2c(E_G(X_g,(V\setminus A)\setminus X_f)).
\]
By \cref{lem:inherited_boundary_complexity}, every set in these expressions
is a union of $O(1)$ intervals in the one capacity matrix of the base tree.
Hence every displayed quantity is available in $O(1)$ time.

\begin{lemma}[Shared residual-tree optimizer]
\label{lem:shared_residual_tree_optimizer}
Fix one sampled tree $T$ and its capacity matrix.  Let
$\mathcal Q_T$
be any collection of tags derived from $T$.  For every tag $(A,R)$, one can
compute a minimum cut of $G \setminus A$ crossing at most two edges of $R$, together
with its value, in total time
$O\bigl(n|\mathcal Q_T|\log n\bigr)$.
The preprocessing shared by all tags is $O(m+n^2)$ for the base tree.
\end{lemma}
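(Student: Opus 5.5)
The proof applies \cref{lem:oracle_two_respecting_optimizer} separately to each tag $(A,R)$, with $H:=H_A=G\setminus A$ (plus, in the virtual case, the zero-capacity link, which changes no cut value) and $W:=V\setminus A$. The expansion is the constant-degree tree $\widehat R$ constructed above. The shared preprocessing is only the capacity matrix of the base tree $T$ from \cref{lem:capacity_matrix}: it is built in the DFS order of $\widehat T$ and costs $O(m+n^2)$ once. The goal is to show that every oracle query required by \cref{lem:oracle_two_respecting_optimizer} takes $Q=O(1)$ time from this one matrix. Each tag then costs $O(n\log n)$ for the lemma's own tree preprocessing plus $O(n\log n)$ for the search, which gives $O(n|\mathcal Q_T|\log n)$ in total.

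\textbf{Per-tag tree structure.} For each tag we first build $\widehat R$ explicitly in $O(n)$ time by deleting $\widehat A$ from $\widehat T$ and, in the virtual or repaired case, inserting the link. We then root it and compute, for every $f\in E(\widehat R)$, a description of $X_f=R_f^\circ$ as a union of $O(1)$ DFS intervals of $\widehat T$. Taking $\widehat X_f$ to be the rooted side of $\widehat R-f$ and applying \cref{lem:inherited_boundary_complexity}, we get $\delta_{\widehat T}(\widehat X_f)\subseteq\delta_{\widehat T}(\widehat A)\cup(\{f\}\cap E(\widehat T))$. Hence the interval description of $X_f$ is determined by at most three fixed $\widehat T$-edges: the at most two edges of $\delta_{\widehat T}(\widehat A)$, common to the whole tag, and $f$ itself when $f$ is a $\widehat T$-edge. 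The parity argument of \cref{lem:capacity_matrix} produces these intervals in $O(1)$ time per edge. The edge set alone does not fix which side is rooted, so I would record for each $f$ which side is below it. One DFS of $\widehat R$ computes this, for instance by noting which endpoints of the $\delta_{\widehat T}(\widehat A)$ edges and which link endpoints lie below $f$.

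\textbf{Answering queries.} Every set in a query, namely $X_f$, $X_g$, $W\setminus X_f$, and $V\setminus X_f$, is a union of $O(1)$ intervals in the base order; $A$ itself is a union of at most $d_A+1\le 3$ intervals. The three query types reduce to constant-size rectangle sums.
\begin{itemize}
\item For $c_{H_A}(X_f)=c(E_G(X_f,W\setminus X_f))$, we note that $W\setminus X_f=(V\setminus A)\setminus X_f$ is again a constant-size interval union, obtained by intersecting the complement intervals with the complement of $A$.
\item The pairwise quantities $c(E_G(X_f,X_g))$ and $c(E_G(X_g,W\setminus X_f))$ are handled the same way.
\end{itemize}
In the repaired case $R$ contains $e$, an actual edge of $G$. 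This is harmless: $H_A$ is still $G\setminus A$, and $e$ only changes the tree, so its capacity is counted correctly by the matrix. In the virtual case the link has zero capacity and no matrix entry, which is also consistent.

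\textbf{Main obstacle.} The delicate step is guaranteeing that the interval representation of each $X_f$ really has constant size and is obtained in $O(1)$ time. This matters especially when $f$ is the inserted link, which is not an edge of $\widehat T$. Then $\widehat X_f$ is a union of components of $\widehat T-\widehat A$ with boundary inside $\delta_{\widehat T}(\widehat A)$, so \cref{lem:inherited_boundary_complexity} gives boundary at most $d_A\le2$. Once this is in place, summing the $O(n\log n)$ cost over tags and adding the one-time $O(m+n^2)$ matrix construction yields the claimed bounds. Correctness, meaning a minimum cut of $G\setminus A$ crossing at most two edges of $R$, follows directly from \cref{lem:oracle_two_respecting_optimizer}, because distinguished edges of $\widehat R$ correspond to $E(R)$.
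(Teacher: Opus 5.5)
Your proposal is correct and follows essentially the paper's route: one shared base-tree capacity matrix, an $O(n)$ per-tag pass that turns the at most three $\widehat T$-edges supplied by \cref{lem:inherited_boundary_complexity} into $O(1)$-interval representations of every $X_f$, and then \cref{lem:oracle_two_respecting_optimizer} with $Q=O(1)$, at $O(n\log n)$ per tag. The only difference is orientation: the paper roots $R$ at the base-tree root whenever that root lies outside $A$, so $\widehat X_f$ never contains the base-tree root and its crossing edges determine it outright, which makes your extra orientation bookkeeping unnecessary (the endpoint test you describe recovers only the crossing-edge set, not the orientation).
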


\begin{proof}
Consider one tag.  If the original base-tree root is outside $A$, root
$R$ there; otherwise root it at an arbitrary residual vertex.  Root
$\widehat R$ at the same original vertex.  Build the heavy-light
decomposition and LCA structure of $R$, and build the centroid
decomposition of $\widehat R$.

For every $f\in E(\widehat R)$, test ancestry using the DFS order in $\widehat R$
to determine which of the at most two outside endpoints of
$\delta_{\widehat T}(\widehat A)$ lie below $f$.  Together with $f$
itself when $f\in E(\widehat T)$, these are exactly the edges of
$\delta_{\widehat T}(\widehat X_f)$ identified in
\cref{lem:inherited_boundary_complexity}.  The rooted side
$\widehat X_f$ never contains the base-tree root: either that vertex is
the root of $R$, or it belongs to $\widehat A$.  Its crossing edges
therefore determine its side uniquely, and \cref{lem:capacity_matrix}
converts them into an $O(1)$-interval representation.  All these
representations are constructed in $O(n)$ total time.

The formulas preceding the lemma now provide the $Q=O(1)$ oracle required
by \cref{lem:oracle_two_respecting_optimizer}.  That lemma finds the
minimum residual cut crossing at most two distinguished edges, equivalently
at most two edges of $R$, in $O(n\log n)$ time.

Processing the tags sequentially and summing their running times gives
$O(n|\mathcal Q_T|\log n)$.  The $O(m+n^2)$ capacity-matrix
preprocessing is performed only once for the sampled base tree.
\end{proof}

The optimizer returns the minimum residual value
$\operatorname{res}(A,R):=\min\{ c_{G \setminus A}(X):
\emptyset\ne X\subsetneq V\setminus A,\ |\delta_R(X)|\le2\}$.
The resulting $3$-cut candidate has value
$c_G(A)+\operatorname{res}(A,R)$.
This value is computed from the capacity matrix.  We do not
scan the original edge list separately for every tag; only the final returned
partition is verified by one $O(m)$ scan.

\subsection{The Cases \texorpdfstring{$d_A\ge3$}{dA at least 3}}
\label{subsec:large_tree_degree}

Suppose $3\le d_A\le4$ and let
$F:=T \setminus A$.
By \cref{lem:tree_component_identity}, $F$ has at most four
components.  For a good tree, the residual optimum crosses at most one edge
of $F$.  This case can therefore be enumerated directly.

\begin{algorithm}[H]
\caption{$\textsc{EnumerateForestCompletion}(G,T,A)$}
\label{alg:enumerate_forest_completion}
\begin{algorithmic}[1]
\Require A graph $G$, a sampled tree $T$, and a candidate side $A$ with
$3\le |\delta_T(A)|\le4$.
\Ensure The best $3$-cut obtained from a cut of $T \setminus A$ crossing at most one
surviving tree edge.

\State Compute the connected components $K_1,\ldots,K_q$ of $T \setminus A$.
\State Initialize the best candidate to $+\infty$.
\For{each nontrivial bipartition of $\{K_1,\ldots,K_q\}$, up to complementation}
    \State Evaluate the corresponding residual cut and update the best candidate.
\EndFor
\For{each edge $f\in E(T \setminus A)$}
    \State Remove $f$, thereby splitting one component of $T \setminus A$ into two pieces.
    \State Let $\mathcal C_f$ be the resulting family of at most $q+1\le5$ pieces.
    \For{each nontrivial bipartition of $\mathcal C_f$, up to complementation}
        \State Evaluate the corresponding residual cut and update the best candidate.
    \EndFor
\EndFor
\State \Return the best candidate together with the side $A$.
\end{algorithmic}
\end{algorithm}

\begin{lemma}[Correctness and running time of forest completion]
\label{lem:forest_completion}
\cref{alg:enumerate_forest_completion} runs in
 $O(n)$
time after the preprocessing for $T$.  If a residual cut crosses at
most one edge of $T \setminus A$, the algorithm evaluates it or its complement.
\end{lemma}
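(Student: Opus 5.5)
The proof has two parts: correctness follows from the structure of cuts of $T\setminus A$ that cross at most one forest edge, and the running time from the fact that all enumerated objects have constant size. For correctness, fix a residual cut $X\subsetneq V\setminus A$ with $|\delta_{T\setminus A}(X)|\le1$.

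If $X$ crosses no edge of $F:=T\setminus A$, then every component $K_i$ lies entirely in $X$ or entirely in its complement. Hence $X$ is a union of components, and since $X$ is nonempty and proper it is a nontrivial bipartition of $\{K_1,\ldots,K_q\}$. The first loop enumerates it up to complementation.

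If $X$ crosses exactly one edge $f$, then $F-f$ has the pieces $\mathcal C_f$. No piece is crossed by $X$, so $X$ is a union of pieces. This is a nontrivial bipartition of $\mathcal C_f$, which the second loop enumerates. Since $c_{G\setminus A}(X)=c_{G\setminus A}((V\setminus A)\setminus X)$, evaluating $X$ or its complement gives the same candidate value.

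For the running time, I would first bound the sizes. By \cref{lem:tree_component_identity}, $q=\rho_A=d_A+1-\kappa_A\le4$, so $|\mathcal C_f|\le5$. Each loop therefore considers at most $2^4$ bipartitions, up to complementation. Computing the components of $F$ takes $O(n)$ time, and $F$ has $O(n)$ edges, so the outer loop runs $O(n)$ times.

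The main step is evaluating each bipartition in $O(1)$ time after the $O(m+n^2)$ preprocessing, without spending $O(n)$ time per edge $f$. I would use \cref{lem:capacity_matrix} for this. Each piece is a component of $T-(\delta_T(A)\cup\{f\})$, so its boundary in $T$ has at most $d_A+1\le5$ edges, and the same bound holds for any union of pieces. The side $X$ as a subset of $V$ therefore crosses $O(1)$ tree edges, so $X$ and $(V\setminus A)\setminus X$ are unions of $O(1)$ DFS intervals, obtained in $O(1)$ time from their crossing edge lists. The value $c_G(A)+c(E_G(X,(V\setminus A)\setminus X))$ is then a constant number of rectangle queries.

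The remaining detail is to obtain the crossing-edge list of each piece in $O(1)$ time. I would do this with ancestor tests via DFS entry and exit times. These tests determine which of the at most five removed tree edges bound each piece and on which side of each edge the piece lies. The rooted-orientation representation of \cref{lem:normal_form_p_respecting} then gives the piece's interval form. With constant work per edge $f$, the total time is $O(n)$.
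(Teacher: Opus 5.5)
Your proposal is correct and follows essentially the same route as the paper. Correctness rests on the same case split on whether the residual cut crosses zero or one edge of $T\setminus A$, and the running time on the same observation $\delta_T(X)\subseteq\delta_T(A)\cup\{f\}$, hence $|\delta_T(X)|\le5$, which feeds \cref{lem:capacity_matrix} for $O(1)$ evaluation of each of the $O(n)$ candidates. Your ancestor-test step, which recovers each piece's crossing edges and orientation in $O(1)$ time, just makes explicit a detail the paper leaves implicit.
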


\begin{proof}
There are at most $4$ components, so the first loop contains at most $8$
assignments.  Removing one edge creates at most $5$ pieces, so the inner
loop contains at most $16$ assignments for every $f$.  Hence the total
number of candidates is $O(n)$.

For every enumerated residual side $X$, all surviving tree edges crossing
$X$ belong to $\{f\}$, and every other tree edge crossing $X$ has one endpoint
in $A$.  Thus
$\delta_T(X) \subseteq \delta_T(A)\cup\{f\}$,
and
$|\delta_T(X)|\le5$.
The boundary-edge set and the side containing the tree root determine an
$O(1)$-interval representation by
\cref{lem:capacity_matrix}.  Therefore each residual value
is evaluated in $O(1)$ time.

If the target residual cut crosses no edge of $T \setminus A$, every component of
$T \setminus A$ lies wholly on one side and it occurs in the first loop.  If it crosses
one edge $f$, every component of $(T \setminus A)-f$ lies wholly on one side and it
occurs in the iteration associated with $f$.
\end{proof}

\subsection{Processing One Candidate Against One Tree}
\label{subsec:process_candidate}

For later use, we summarize the complete case analysis in one procedure.

\begin{algorithm}[H]
\caption{$\textsc{ProcessCandidate}(G,T,A)$}
\label{alg:process_candidate}
\begin{algorithmic}[1]
\Require A graph $G$, a sampled tree $T$ with its binary
expansion $\widehat T$, and a nontrivial candidate side $A$.
\Ensure A collection of valid $3$-cut candidates.

\State Compute $d_A:=|\delta_T(A)|$ and the components of $T[A]$ and $T \setminus A$.
\If{$d_A>4$}
    \State \Return the empty collection.
\ElsIf{$d_A\ge3$}
    \State \Return $\textsc{EnumerateForestCompletion}(G,T,A)$.
\ElsIf{$d_A=1$}
    \State Add the direct tag $(A,T \setminus A)$.
\Else
    \Comment{$d_A=2$}
    \If{$T \setminus A$ is connected}
        \State Add the direct tag $(A,T \setminus A)$.
    \Else
        \State Let $K_1,K_2$ be the two components of $T \setminus A$.
        \State Add the virtual tag $(A,(T \setminus A)+u_1u_2)$, where $u_i$ is the outside endpoint of $\delta_T(A)$ in $K_i$.
        \If{$c(E_G(K_1,K_2))>0$}
            \State Construct the constant-size set $Z_{A,T}$ from \cref{lem:deterministic_repair_set}.
            \For{each endpoint pair $e\in Z_{A,T}$}
                \State Add the repair tag $(A,(T \setminus A)+e)$.
            \EndFor
        \EndIf
    \EndIf
\EndIf
\State Run the shared residual-tree optimizer on all created tags.
\State Combine each returned residual cut $X$ with $A$ and output the best resulting $3$-cut.
\end{algorithmic}
\end{algorithm}

\begin{lemma}[Completeness of candidate processing]
\label{lem:process_candidate_complete}
Let $(A,B,C)$ be an optimum with unique medium side $A$, and let $T$ be good
for this optimum.  \cref{alg:process_candidate}, applied to the
true side $A$ and tree $T$, deterministically outputs a $3$-cut of value
$\lambda$.
\end{lemma}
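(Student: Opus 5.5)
The argument is a case split along \cref{lem:good_pattern_dichotomy}. In each case we exhibit a residual cut $X\subsetneq V\setminus A$ that the procedure evaluates and whose value is at most $c_{G\setminus A}(B)=c(E_G(B,C))$. The output $3$-cut $(A,X,(V\setminus A)\setminus X)$ then has value $c_G(A)+c_{G\setminus A}(X)\le c_G(A)+c(E_G(B,C))=\lambda$. Since every output is a valid $3$-cut, its value is also at least $\lambda$, so equality holds. Each case therefore reduces to showing that the target residual cut $(B,C)$ of $G\setminus A$, or its complement, is feasible for one of the created tags or enumerations.

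\textbf{Case $d_A>4$.} This is impossible. A good tree has $h\le4$ and $d_A\le h$.

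\textbf{Case $3\le d_A\le4$.} Part~4 of \cref{lem:good_pattern_dichotomy} shows that the residual cut $(B,C)$ crosses at most one edge of $T\setminus A$. \cref{lem:forest_completion} then shows that \textsc{EnumerateForestCompletion} evaluates $B$ or its complement.

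\textbf{Case $d_A=1$, or $d_A=2$ with $T\setminus A$ connected.} Parts~1--2 show that $T\setminus A$ is a spanning tree of $H_A$ that the residual cut crosses at most twice. The direct tag is created. By \cref{lem:shared_residual_tree_optimizer}, the optimizer returns a cut of value at most $c_{H_A}(B)$, because $B$ is feasible for the minimization defining $\operatorname{res}(A,T\setminus A)$.

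\textbf{Case $d_A=2$ with $T[A]$ connected.} By \cref{lem:tree_component_identity}, $T\setminus A$ has exactly two components, which is the branch taken by the algorithm. Connectedness of $T[A]$ is equivalent to disconnectedness of $T\setminus A$ here, so the algorithm's test matches the case split.

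If the pattern is anything other than $(2,3,3)$, \cref{lem:virtual_link_coverage} shows that $(B,C)$ crosses at most two edges of $R_A^{\mathrm{virt}}$. Its value in $H_A$ with the zero-capacity link added equals $c_{H_A}(B)$, so the virtual tag suffices.

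In the $(2,3,3)$ case, \cref{lem:substantial_repair_capacity} gives $c(F_{A,T})\ge c(\mathcal R_{A,T})>0$, so the algorithm builds $Z_{A,T}$. \cref{lem:deterministic_repair_set} guarantees that $Z_{A,T}$ contains a repair edge $e$. The residual cut does not cross $e$, so it crosses exactly its two $B$--$C$ tree edges in $R_{A,e}$. The repair tag therefore yields a value at most $c_{H_A}(B)$.

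In all cases the combination step outputs a $3$-cut of value $\lambda$. The procedure uses no randomness, which gives the ``deterministically'' claim.

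\textbf{Expected obstacle.} The main thing to check is that the virtual link does not distort residual values. Its capacity is zero, so cut values in $H_A$ plus the link equal cut values in $H_A$. One must also confirm that the oracle queries of \cref{lem:shared_residual_tree_optimizer} compute capacities in $G\setminus A$ rather than in $G$. This follows because the sets $X_f$ lie in $V\setminus A$ and the formulas preceding \cref{lem:shared_residual_tree_optimizer} subtract only residual edges.
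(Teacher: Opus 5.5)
Your proof is correct and follows the paper's argument. It uses the same case split along \cref{lem:good_pattern_dichotomy}: the direct tag, the virtual tag, the repair tag, and the forest enumeration each capture the residual $B$--$C$ cut, and the paper then concludes with $c_G(A)+c(E_G(B,C))=\lambda$ and optimality. Your additional checks, that $d_A>4$ cannot occur and that the guard $c(E_G(K_1,K_2))>0$ holds in the $(2,3,3)$ case by \cref{lem:substantial_repair_capacity}, are details the paper leaves implicit.
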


\begin{proof}
If $d_A=1$, or if $d_A=2$ and $T \setminus A$ is connected, the direct tag is a tree
of $G \setminus A$ and the residual optimum crosses at most two of its edges by
\cref{lem:good_pattern_dichotomy}.

Suppose $d_A=2$ and $T \setminus A$ has two components.  By
\cref{lem:virtual_link_coverage}, the virtual tag covers every good
pattern except $(2,3,3)$.  In the remaining pattern,
\cref{lem:deterministic_repair_set} supplies a repair edge, and
then the residual optimum
crosses exactly two edges of the repaired tree.

Finally, if $d_A\ge3$, the residual optimum crosses at most one edge of the
forest $T \setminus A$ and is included by
\cref{lem:forest_completion}.

In every case, the produced residual cut has value at most
$c(E_G(B,C))$.  Therefore the combined partition has value at most
$c_G(A)+c(E_G(B,C))=\lambda$.
Optimality of $\lambda$ forces equality.
\end{proof}

\subsection{The Minimum \texorpdfstring{$3$}{3}-Cut Algorithm}
\label{subsec:3_cut_algorithm}

We now assemble the preceding ingredients.  All candidates are maintained
with their values, and the algorithm returns the lightest one.

\begin{algorithm}[H]
\caption{$\textsc{ThreeCut}(G)$}
\label{alg:three_cut}
\begin{algorithmic}[1]
\Require A connected undirected multigraph with positive polynomially
bounded integral capacities.
\Ensure A minimum $3$-cut with high probability.

\State Compute the strict-small candidate $\mathcal P_{\mathrm{small}}$.
\State Construct the random skeleton $H$ with $\varepsilon=1/100$.
\State Compute its tree packing with $\eta=1/100$.
\State Draw $s_{\mathrm{trees}}=\Theta(\log n)$ tree samples.
\State Construct $\mathcal F$, its longest laminar prefix $\mathcal L$,
and $\mathcal P_{\mathrm{cross}}$ as in
\cref{subsec:reporting_light_cuts}.
\State Initialize the best candidate with the lighter of
$\mathcal P_{\mathrm{small}}$ and $\mathcal P_{\mathrm{cross}}$.
\State Build the inclusion-forest order and capacity matrix for
$\mathcal L$.
\For{each unordered pair $S,T\in\mathcal L$}
    \State If $S,T$ are disjoint or nested, evaluate the partition of
    \cref{lem:two_cut_value} and update the best candidate.
\EndFor
\State Let $\mathcal A$ contain both sides of every cut in $\mathcal L$,
with duplicates removed.
\For{each $T\in\mathcal T$}
    \State Build its capacity matrix.
    \For{each $A\in\mathcal A$}
        \State Run $\textsc{ProcessCandidate}(G,T,A)$ and update the best
        candidate.
    \EndFor
\EndFor
\State \Return the best candidate.
\end{algorithmic}
\end{algorithm}

\begin{lemma}[Correctness of the $3$-cut algorithm]
\label{lem:3_cut_algorithm_correct}
\cref{alg:three_cut} returns a minimum $3$-cut
with probability $1-n^{-\Omega(1)}$.
\end{lemma}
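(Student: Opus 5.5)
Every candidate the algorithm produces is a valid $3$-cut; $\mathcal P_{\mathrm{small}}$, $\mathcal P_{\mathrm{cross}}$, the pair-scan partitions and the completions output by \cref{alg:process_candidate} all have three nonempty parts. Hence the returned value is always at least $\lambda$. It therefore suffices to show that, with probability $1-n^{-\Omega(1)}$, some candidate has value exactly $\lambda$. Fix one optimum $3$-cut before any randomness is drawn. All probabilistic statements are made about this fixed optimum, so no union bound over optima is needed.

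The good events are the following. The skeleton guarantee of \cref{lem:random_cut_skeleton} holds. The packing of \cref{lem:fast_skeleton_packing} is computed without aborting. The two sampling conclusions recorded after \cref{lem:constant_probability_good_tree} hold for the fixed optimum. The global minimum-cut computations used for $\mathcal P_{\mathrm{small}}$ succeed. Each of these events has probability $1-n^{-\Omega(1)}$, so a union bound over this constant number of events gives the claimed bound. The sample size $s_{\mathrm{trees}}=\Theta(\log n)$, together with the constant per-sample probabilities $1/8$ and $1/16$, makes each sampling event fail with probability $n^{-\Omega(1)}$. I would also note that the lower bound on $\tau$ used in those lemmas requires $c_G(\mathcal P_{\mathrm{small}})>\lambda$. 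This is harmless: if that assumption fails, the strict-small candidate is already optimal.

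Conditioned on the good events, I would split into cases according to the fixed optimum $(A,B,C)$.

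(i) If some side is strict-small, \cref{lem:small_side_completion} shows that $\mathcal P_{\mathrm{small}}$ is optimal. So assume $c_G(\mathcal P_{\mathrm{small}})>\lambda$ from here on.

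(ii) If two sides are light, \cref{lem:two_light_side_complete} shows that either $\mathcal P_{\mathrm{cross}}$ or the pair scan attains $\lambda$.

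(iii) Otherwise, since the three side boundaries sum to $2\lambda$, at least one side is light, so exactly one side, say $A$, is light. Because $A$ is not strict-small, we have $\lambda/2\le c_G(A)\le 2\lambda/3$, while $c_G(B),c_G(C)>2\lambda/3$. By \cref{lem:sampled_light_list_complete}, either $\mathcal P_{\mathrm{cross}}$ is optimal, or the rooted orientation of $A$ lies in $\mathcal L$. In the latter case $A\in\mathcal A$, because $\mathcal A$ contains both sides of every cut in $\mathcal L$. Moreover, some $T\in\mathcal T$ is good for $(A,B,C)$. Applying \cref{lem:process_candidate_complete} to this pair $(A,T)$ yields a $3$-cut of value $\lambda$, which the main loop records.

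The only delicate point is making sure the hypotheses line up. Each of the cited lemmas holds for a fixed optimum or a fixed light side, and case (iii) needs the same fixed optimum both for membership of $A$ in $\mathcal L$ and for the existence of a good tree. Fixing the optimum in advance handles this, so I expect no real obstacle beyond this bookkeeping.
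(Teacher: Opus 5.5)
Your proposal is correct and follows essentially the same argument as the paper. You fix an optimum, dispatch the strict-small case with \cref{lem:small_side_completion}, then under $c_G(\mathcal P_{\mathrm{small}})>\lambda$ combine \cref{lem:sampled_light_list_complete}, \cref{lem:two_light_side_complete}, and \cref{lem:process_candidate_complete} with the good-tree event, and finish with a union bound over the constantly many random events; the only differences are that you split into the two-light and unique-light cases before invoking the coverage lemma (the paper invokes it first), and your remark that one may simply assume $c_G(\mathcal P_{\mathrm{small}})>\lambda$, since otherwise that candidate is already optimal, is a slightly cleaner phrasing than the paper's ``otherwise $\mathcal P_{\mathrm{small}}$ is not optimum.''
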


\begin{proof}
Every candidate considered by the algorithm is an explicit $3$-cut.  Fix an
optimum partition.  If it has a strict-small side,
\cref{lem:small_side_completion} shows that
$\mathcal P_{\mathrm{small}}$ is optimum with high probability. The probability of the packing taking too long and failing is exponentially small.

Otherwise $\mathcal P_{\mathrm{small}}$ is not optimum.  Since it is a
valid $3$-cut, $c_G(\mathcal P_{\mathrm{small}})>\lambda$, so
\cref{cor:lambda_three_vs_lambda_two} and the guarantees of
\cref{subsec:sampling_trees} apply.  Moreover, every optimum side
has boundary at least $\lambda/2$.  Choose a light optimum side $A$,
which exists by averaging and satisfies $c_G(A)\le2\lambda/3$.  By
\cref{lem:sampled_light_list_complete}, with high probability either
$\mathcal P_{\mathrm{cross}}$ is optimum or the rooted orientation of
$(A,V\setminus A)$ belongs to $\mathcal L$.

If the optimum has two light sides,
\cref{lem:two_light_side_complete} shows that either the crossing
candidate or the pair scan recovers it.  Otherwise $A$ is the unique light
side, so it is the medium side considered in the completion analysis.  The
algorithm processes the actual side $A$, because it processes both
orientations of every cut in $\mathcal L$.  By
\cref{lem:constant_probability_good_tree},
$\mathcal T$ contains a tree good for the fixed optimum
with high probability, and
\cref{lem:process_candidate_complete} then makes
$\textsc{ProcessCandidate}$ output the optimum.

The required tree-sampling events concern only a constant number of optimum
sides and each fails with probability $n^{-\Omega(1)}$.  The skeleton
guarantee holds simultaneously for all cuts with the same probability, and
the global-minimum-cut calls in the strict-small step are amplified to that
success probability.  A union bound proves the claim.
\end{proof}

\begin{lemma}[Running time]
\label{lem:running_time}
\cref{alg:three_cut} runs in
$O((m+n^2)\log^2 n)$ time and uses $O(m+n^2)$ working space.
\end{lemma}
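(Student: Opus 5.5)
The plan is to bound the lines of \cref{alg:three_cut} one at a time and sum the costs. The total is dominated by $\Theta(\log n)$ sampled trees, each costing $O((m+n^2)\log n)$.

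\textbf{Phases before the main loop.} The strict-small candidate $\mathcal P_{\mathrm{small}}$ uses at most three global-minimum-cut computations on edge-disjoint graphs, costing $O(m\log^2 n)$ by~\cite{GMW20a}. With $\varepsilon=\eta=1/100$ fixed constants, the skeleton of \cref{lem:random_cut_skeleton} and the packing of \cref{lem:fast_skeleton_packing} together cost $O(m\log^2 n)$. Their support has size $O(\log^2 n)$, so sampling $s_{\mathrm{trees}}=\Theta(\log n)$ trees from the distribution $y_T/\tau$ is cheap.

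\textbf{Constructing $\mathcal F$, $\mathcal L$ and the pair scan.} For each sampled tree, \cref{lem:respecting_enumeration} with $p=2$ and $M=2n+1$ costs $O(m+n^2)$, so $O((m+n^2)\log n)$ over all trees. Materializing the $O(n\log n)$ retained cuts, each of size at most $n$, and radix-sorting their incidence vectors costs $O(n^2\log n)$. The same bound covers the global truncation by linear-time selection. The binary search for the longest laminar prefix makes $O(\log n)$ calls to \cref{lem:laminarity_test}, each on total incidence $O(n^2)$, giving $O(n^2\log n)$. The inclusion-forest order and capacity matrix for $\mathcal L$ cost $O(m+n^2)$. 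The pair scan performs $O(n^2)$ evaluations, each $O(1)$ by \cref{lem:two_cut_value}.

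\textbf{The main loop.} This is the crux. The candidate set satisfies $|\mathcal A|\le 2|\mathcal L|=O(n)$. For each $T\in\mathcal T$, the capacity matrix costs $O(m+n^2)$. Per candidate $A$, several costs must each be $O(n\log n)$ with no hidden $m$-dependent term:
\begin{itemize}
\item computing $d_A$ and the components of $T[A]$ and $T\setminus A$ costs $O(n)$;
\item forest completion costs $O(n)$ by \cref{lem:forest_completion};
\item the test $c(E_G(K_1,K_2))>0$ is one $O(1)$ rectangle query, since each $K_i$ is an $O(1)$-interval union;
\item building $Z_{A,T}$ costs $O(\log n)$ by \cref{lem:deterministic_repair_set};
\item at most $b+2=O(1)$ tags are created per candidate.
\end{itemize}
By \cref{lem:shared_residual_tree_optimizer}, the tags of one tree therefore cost $O(n\cdot O(n)\log n)=O(n^2\log n)$ in total. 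One tree thus costs $O(m+n^2\log n)$, and all trees together cost $O((m+n^2)\log^2 n)$.

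The point to verify is that nothing in \textsc{ProcessCandidate} touches the edge list. Residual values come from the base tree's capacity matrix through \cref{lem:inherited_boundary_complexity}. Only the single returned best partition is verified by one $O(m)$ scan. The cost of forming $\widehat R$ and its decompositions per tag is $O(n)$, which fits within the per-tag bound.

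\textbf{Space.} Only one capacity matrix ($O(n^2)$) is kept at a time, and it is discarded before the next tree is processed. The skeleton is stored in $O(m)$ space by multiplicity, $\mathcal L$ takes $O(n^2)$, and each call to the optimizer uses $O(n\log n)$. Materializing $\mathcal F$ naively needs $O(n^2\log n)$ space. To stay within $O(m+n^2)$, store each retained cut by its at most two tree edges, as \cref{lem:respecting_enumeration} does. Expand only the $2n+1$ globally lightest cuts after the selection step, and deduplicate by a hash of the expanded vectors, or by sorting in batches per tree. This step is where the stated space bound needs care.
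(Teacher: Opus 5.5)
Your proposal is correct and follows the paper's own accounting: the dominant cost is $O(\log n)$ trees, each paying $O(m+n^2)$ for its capacity matrix plus $O(n^2\log n)$ for the constant-many tags of its $O(n)$ candidates via \cref{lem:shared_residual_tree_optimizer}, and the space bound comes from processing trees sequentially with a single $O(n^2)$ table alive at a time. Your extra care about materializing $\mathcal F$ goes beyond the paper's one-line space remark, but deduplication must happen before truncation to $2n+1$ cuts (otherwise the retained family can hold fewer than $2n+1$ distinct cuts and \cref{lem:sampled_light_list_complete} can fail), so use the per-tree batched merge, or expand entries in increasing order until $2n+1$ distinct cuts are found, rather than selecting the $2n+1$ lightest compact entries first.
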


\begin{proof}
The strict-small candidate costs $O(m\log^2 n)$.  With
$\varepsilon=\eta=1/100$, constructing the random skeleton and its tree
packing costs $O(m\log^2 n)$ by
\cref{lem:fast_skeleton_packing}.  For each of $O(\log n)$ trees,
\cref{lem:respecting_enumeration} with $p=2$ costs
$O(m+n^2)$.  Materializing and deduplicating the retained
$O(n\log n)$ cuts costs $O(n^2\log n)$.  Each call to the
linear-incidence laminarity test costs $O(n^2)$; the binary search for the
first nonlaminar prefix therefore costs $O(n^2\log n)$.  Constructing the
single crossing certificate, the inclusion-forest capacity table, and all
pair candidates costs $O(m+n^2)$.

The laminar family has $O(n)$ members, so $|\mathcal A|=O(n)$.  There
are $O(\log n)$ trees and hence $O(n\log n)$
candidate--tree pairs and tags.  Direct tree scans and all forest-completion
calls cost $O(n^2\log n)$.  For one tree,
\cref{lem:shared_residual_tree_optimizer} handles all its tags in
$O(n^2\log n)$ time; over all trees this is
$O(n^2\log^2 n)$.  Constructing their capacity matrices costs
$O((m+n^2)\log n)$, and the deterministic repair nets are lower order.
These bounds sum to $O((m+n^2)\log^2 n)$.

The skeleton is stored in $O(m)$ space, and only one
$O(n^2)$ capacity table is needed at a time.  Processing sampled trees
sequentially and storing cuts by explicit incidence vectors gives
$O(m+n^2)$ working space.
\end{proof}

\begin{theorem}[Weighted Minimum $3$-Cut]
\label{thm:weighted_three_cut}
Let $G=(V,E,c)$ be a connected undirected capacitated multigraph with
positive polynomially bounded integral capacities.  There is a randomized
algorithm that returns a
minimum $3$-cut with probability $1-n^{-\Omega(1)}$ in
$O((m+n^2)\log^2 n)$ time and $O(m+n^2)$ working space.
\end{theorem}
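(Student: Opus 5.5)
The theorem follows by running \cref{alg:three_cut} and combining the two preceding lemmas, after a trivial reduction to the case $n\ge3$. If $n<3$ there is no $3$-cut and we return $+\infty$, matching the convention $\lambda_3(G)=+\infty$. Otherwise the standing assumptions of \cref{sec:faster_three_cut} hold: $G$ is connected with positive polynomially bounded integral capacities, so $C=n^{O(1)}$, and the skeleton and packing bounds of \cref{lem:random_cut_skeleton,lem:fast_skeleton_packing} apply with $\log C=O(\log n)$.

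\textbf{Correctness.} I will invoke \cref{lem:3_cut_algorithm_correct} directly. It shows that every candidate considered is an explicit $3$-cut. It also shows that, with probability $1-n^{-\Omega(1)}$, one of three candidates is optimum:
\begin{itemize}
\item the strict-small candidate (\cref{lem:small_side_completion});
\item the crossing or pair-scan candidates (\cref{lem:sampled_light_list_complete,lem:two_light_side_complete});
\item the output of \textsc{ProcessCandidate} on a good tree (\cref{lem:constant_probability_good_tree,lem:process_candidate_complete}).
\end{itemize}
The failure events I need to union-bound over are:
\begin{itemize}
\item the skeleton guarantee failing;
\item the two minimum-cut calls failing;
\item the $\Theta(\log n)$ samples missing a $2$-respecting tree for each of the at most three relevant light sides;
\item the samples missing a good tree;
\item the packing exceeding its time budget, which has probability $e^{-\Omega(m)}$.
\end{itemize}
There are constantly many such events, each of probability $n^{-\Omega(1)}$, so a union bound gives the stated success probability. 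I will make the amplification explicit: choosing the constant in $s_{\mathrm{trees}}$ large drives the per-sample constant success probabilities ($>1/8$ and $>1/16$) to $1-n^{-c}$ for any desired $c$.

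\textbf{Running time and space.} I will cite \cref{lem:running_time}, which gives $O((m+n^2)\log^2 n)$ time and $O(m+n^2)$ working space. One point needs attention: that lemma charges the packing stage assuming it finishes. To make the bound worst-case, I will abort after the stated budget and output the strict-small candidate. This only adds an $e^{-\Omega(m)}$ term to the failure probability and does not affect correctness otherwise. The final $O(m)$ verification scan of the returned partition is lower order.

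\textbf{Main obstacle.} The only substantive issue is that the success events of \cref{lem:3_cut_algorithm_correct} are stated for a fixed optimum. It suffices to fix one optimum partition in advance, since recovering any optimum yields a minimum $3$-cut. So no union bound over possibly exponentially many optima is needed, and the theorem follows.
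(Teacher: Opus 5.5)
Your proposal is correct and follows the paper's proof, which is exactly the combination of \cref{lem:3_cut_algorithm_correct} and \cref{lem:running_time}; your remarks on aborting the packing, amplifying via the constant in $s_{\mathrm{trees}}$, and fixing a single optimum in advance are already built into those lemmas. One minor miscount: the strict-small step makes up to three minimum-cut calls (on $G$, $G[S]$, and $G[V\setminus S]$), not two, but this does not affect the constant-size union bound.
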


\begin{proof}
Combine \cref{lem:3_cut_algorithm_correct} and
\cref{lem:running_time}.
\end{proof}

\section{Reducing Minimum \texorpdfstring{$k$}{k}-Cut to Minimum \texorpdfstring{$3$}{3}-Cut}
\label{sec:reduction_kcut}

\subsection{Algorithmizing the Light-Cut Bounds}
\label{subsec:constructive_overflow_certificate}

We make the light-cut counting arguments of~\cite{GHLL22} algorithmic.
Our light and medium ranges use non-strict upper bounds.  Consequently, an
overflowing family may yield an explicit $k$-cut of value at most $U$,
rather than only a bound on the size of the family.

Fix a connected capacitated multigraph $G=(V,E,c)$, an integer $k\ge 3$, and a threshold $U>0$.

\begin{definition}[Venn diagram and atoms]
\label{def:venn_diagram}
For cuts $S_1,\ldots,S_t$, let
$\operatorname{At}(S_1,\ldots,S_t)$
denote the nonempty atoms of their Venn diagram (the nonempty sets $\cap_i X_i$ where $X_i \in \{S_i, V \setminus S_i\}$).  The atom partition has value at most
$\sum_{i=1}^t c_G(S_i)$,
because every edge crossing the atom partition crosses at least one of the cuts $S_i$.
\end{definition}

All cuts are oriented away from a fixed root $r$.

\subsubsection{Small Cuts}

\begin{lemma}[Small-Cut Certificate]
\label{lem:small_cut_certificate}
There is a deterministic algorithm which, given a family $\mathcal S$ of distinct $U$-small cuts, either outputs a $k$-cut of value at most $U$, or certifies that $|\mathcal S|\le 2^k$. The running time is
$2^{O(k)}\widetilde O(|\mathcal S|\, n + m)$.
\end{lemma}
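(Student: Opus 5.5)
The plan is to make the counting argument of~\cite{GHLL22} for small cuts constructive via the atom partition of \cref{def:venn_diagram}. Each $U$-small cut has value at most $U/(k-1)$, so any $k-1$ of them have total boundary at most $U$. Hence whenever $k-1$ cuts of $\mathcal S$ have a Venn diagram with at least $k$ nonempty atoms, we can coarsen the atom partition to exactly $k$ parts. This gives a $k$-cut of value at most $U$, because coarsening never increases the value. The whole algorithm therefore searches greedily for a subfamily whose atom count grows fast enough.

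Concretely, process the cuts of $\mathcal S$ in order while maintaining a selected subfamily $S_1,\ldots,S_t$ and its atom partition, with $t\le k-2$ and at least $t+1$ atoms. Initially $t=0$ and the single atom is $V$. For a new cut $S$, refine the current atoms by $S$ in $O(n)$ time using label arrays. If the number of atoms increases, add $S$ to the selection. Once $t=k-1$ the number of atoms is at least $k$, and we output the coarsened $k$-cut, computing its value with one $O(m)$ scan.

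If $S$ does not increase the atom count, then $S$ is a union of current atoms. Since cuts are rooted, $S$ is a union of atoms not containing $r$. The current partition has at most $2^{t}$ atoms. The selection has size at most $k-2$, and every unselected cut is a rooted union of atoms of the final selection. There are at most $2^{2^{k-2}}$ such unions, which is too weak. This is the main obstacle.

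To get $2^k$, I would change what is counted. Instead of tracking the number of atoms, track the maximum size of a coarsening achievable within budget, measured by the number of selected cuts. The key fact is that for cuts $S_1,\dots,S_j$, the atom partition has value at most $jU/(k-1)$. So we may select up to $k-1$ cuts, and any selection whose atoms number at least $k$ already wins. Thus if no certificate is found, every $(k-1)$-subfamily of $\mathcal S$ has at most $k-1$ atoms.

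Under that condition, take a maximal chain of selections in which each new cut strictly increases the atom count. The count goes $1,2,\ldots$ and stalls at some $a\le k-1$ atoms. Every cut of $\mathcal S$ is then a rooted union of these $a$ atoms, of which at most $a-1$ avoid $r$, giving $|\mathcal S|\le 2^{a-1}\le 2^k$ distinct rooted cuts. This works because each atom-increasing step consumes one cut, so the stalled selection uses $a-1\le k-2$ cuts, and the greedy pass above gives exactly this.

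Adding any further cut that splits an atom would reach $a+1$ atoms with $a$ cuts, which is still within budget. So maximality means every cut of $\mathcal S$ is a union of the final atoms, and the bound $2^{k}$ follows from counting unions.

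The running time is $O(n)$ per refinement test over $|\mathcal S|$ cuts, plus $O(m)$ for evaluating the output. This fits in $2^{O(k)}\widetilde O(|\mathcal S|n+m)$. The remaining care is to confirm that the output has exactly $k$ nonempty parts and to handle the case that distinct rooted cuts may coincide with complements; rooting at $r$ excludes this.
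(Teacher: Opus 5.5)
Your final argument is the same as the paper's. Greedily add cuts that split a current atom, and output a $k$-part coarsening as soon as the atom count reaches $k$; at most $k-1$ cuts have been used by then, so the value is at most $U$. Otherwise every cut of $\mathcal S$ is a rooted union of fewer than $k$ atoms, which gives $|\mathcal S|\le 2^k$.

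The apparent $2^{2^{k-2}}$ obstacle in your third paragraph comes only from your initial rule of outputting at $t=k-1$; stopping at $k$ atoms, as you later do, removes it. Your single pass over $\mathcal S$ is valid because a union of atoms remains a union of atoms under further refinement.
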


\begin{proof}
Maintain the atom partition generated by a selected sequence of cuts.  Initially the partition is $\{V\}$.  While the current partition has fewer than $k$ atoms, scan $\mathcal S$ and find a cut that splits some current atom.  If such a cut exists, add it to the sequence and refine the atom partition.

If the atom partition ever has at least $k$ atoms, suppose the selected cuts are
$S_1,\ldots,S_t, \qquad t\le k-1$.
Any $k$-part coarsening of the atom partition has value at most
$\sum_{i=1}^t c_G(S_i) \le t\cdot \frac{U}{k-1} \le U$.
Thus we output such a $k$-cut.

Otherwise the process stops with $t<k$ atoms, and no cut in $\mathcal S$ splits any atom.  Hence every cut in $\mathcal S$ is a union of atoms, up to the chosen canonicalization.  Thus
$|\mathcal S|\le 2^t\le 2^k$.
The running time follows because at most $k-1$ cuts are selected, each scan is linear in the explicit representations, and the final certificate is verified by scanning the edge list.
\end{proof}

\subsubsection{Sunflower-Core Search}

We use the following two standard combinatorial tools.  First, let
$\operatorname{Sun}(d,r)$ be a sunflower number for rank-$d$ set systems (a family of sets of size at most $d$):
every such family with more than
$\operatorname{Sun}(d,r)$ members contains an $r$-sunflower, which is a set of members such that all pairwise intersections are the same, this pairwise intersection is called the core. The members with the core removed are called the petals.
The bound of Bell, Chueluecha, and Warnke~\cite{BCW21} gives
$\operatorname{Sun}(d,r)\le (C r\log d)^d$
for an absolute constant $C$. The older Erd\H{o}s-Rado bound would work just as well for our applications.

Second, we use deterministic perfect hash families.  For every $N$ and
$q$, there is an explicit family
$\mathcal H_{N,q}\subseteq \{h:[N]\to[q]\}$
of size
$|\mathcal H_{N,q}|=e^q q^{O(\log q)}\log N$
such that for every $Y\subseteq[N]$ with $|Y|\le q$, some
$h\in\mathcal H_{N,q}$ is injective on $Y$ \cite{NSS95}.

We also use the following extension of the sunflower lemma.

\begin{lemma}[Sunflowers with distinct nonempty cores \cite{GHLL22}]
\label{lem:multiple_nonempty_core_sunflowers}
Let $\mathcal X$ be a rank-$k$ set system over an $N$-element universe.
Let $r\ge 2$ and $s\ge 1$.  If
$|\mathcal X|>\operatorname{Sun}(k,r)\,s\,N$,
then $\mathcal X$ contains $s$ sunflowers with $r$ petals and with
distinct nonempty cores.
\end{lemma}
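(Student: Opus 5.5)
We prove the statement by iterated pigeonholing on cores, applying the sunflower lemma $s$ times. Each application extracts one sunflower, and we keep the later sunflowers away from cores already found.

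\emph{Extraction step.} Suppose we have already found sunflowers with distinct nonempty cores $K_1,\ldots,K_j$, where $j<s$. Let $\mathcal X_j$ be the family of members of $\mathcal X$ that do not contain any $K_i$. A more robust variant is to charge sets rather than delete them: every member of a sunflower with core $K_i$ contains $K_i$. We want the next sunflower to have a nonempty core different from every $K_i$.

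\emph{Making the core nonempty.} Partition $\mathcal X$ according to an element it contains. Pick one element $x$ from each set (sets are nonempty; the empty set contributes at most one member and is negligible). By pigeonhole over the $N$ elements, some element $x$ is contained in more than $\operatorname{Sun}(k,r)\,s$ members. Apply the sunflower lemma to the sets in this class that contain $x$. Any $r$-sunflower among them has $x$ in every member, hence $x$ lies in the core, so the core is nonempty. This is why the threshold carries the factor $N$.

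\emph{Distinct cores.} To obtain $s$ sunflowers with distinct cores, proceed greedily, as follows.
\begin{enumerate}
\item Fix the heavy class $\mathcal X_x$, of size more than $\operatorname{Sun}(k,r)\,s$.
\item Find a sunflower in it with core $K\ni x$.
\item Remove one petal-member from the class and repeat.
\end{enumerate}
This alone need not produce new cores, since the same core could recur. The fix is to use the petal bound in the opposite direction. For a fixed core $K$, the members of $\mathcal X$ whose sunflower core is $K$ can be handled by passing to the link $\{X\setminus K : X\supseteq K\}$, which is a rank-$(k-|K|)$ system. Instead of that, the cleaner route is the following. Within the heavy class $\mathcal X_x$, repeatedly extract an $r$-sunflower and delete all members of the class containing its core, but only if fewer than $\operatorname{Sun}(k,r)$ such members remain. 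Otherwise we retain them, but then the link at that core is itself large and gives a further sunflower whose core strictly contains the previous one.

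\emph{Main obstacle.} The hard part is the bookkeeping showing that $s$ rounds succeed with only $\operatorname{Sun}(k,r)\,s$ sets per element. I would try to charge each round at most $\operatorname{Sun}(k,r)$ deleted members: after finding a core $K$, delete only a maximal subfamily that contains no $r$-sunflower with core $K$. If that charging fails, I would follow the proof in GHLL22, cited in the statement, which bounds the deletions per found core by $\operatorname{Sun}(k,r)$ using the maximal-sunflower argument.
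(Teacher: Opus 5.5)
\textbf{Verdict: there is a genuine gap.} The paper does not prove this lemma; it imports it from \cite{GHLL22}. I am therefore judging your argument on its own terms. The step that is supposed to produce $s$ \emph{distinct} cores is never carried out, and the accounting you set up for it cannot work.

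You spend the factor $N$ once, to obtain a single heavy class $\mathcal X_x$ with more than $s\operatorname{Sun}(k,r)$ members. You then try to extract all $s$ cores inside that class, which is impossible in general. Take $k\ge2$ and the star $\{\{x,y\}:y\ne x\}$. It has $N-1$ members, far more than $s\operatorname{Sun}(k,r)$ when $N$ is large, and all of them contain $x$. Yet every $r$-sunflower in it has core exactly $\{x\}$.

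The same example refutes your retention rule. More than $\operatorname{Sun}(k,r)$ members contain the core $K=\{x\}$, but the link $\{\{y\}:y\ne x\}$ has only empty-core sunflowers. So there is no sunflower whose core strictly contains $K$. A large link yields a strictly larger core only when a \emph{single} element $v\notin K$ lies in more than $\operatorname{Sun}(k,r)$ link sets.

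Your fallback, deleting a maximal subfamily with no $r$-sunflower with core $K$, does not stop $K$ from recurring. In the star it removes only $r-1$ edges. Your last sentence then defers to the cited proof.

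The missing idea is to pay $N\operatorname{Sun}(k,r)$ per core rather than $\operatorname{Sun}(k,r)$ per element, and to choose the core inclusion-maximal.
\begin{enumerate}
\item While the current family $\mathcal X'$ has more than $N\operatorname{Sun}(k,r)$ nonempty members, your pigeonhole step gives an $r$-sunflower with nonempty core. Choose $K$ inclusion-maximal among \emph{all} nonempty cores of $r$-sunflowers in $\mathcal X'$.
\item Suppose some $v\notin K$ lay in more than $\operatorname{Sun}(k,r)$ of the sets $X\setminus K$ with $K\subseteq X\in\mathcal X'$. These would contain an $r$-sunflower with core $D\ni v$. Since $(K\cup L_i)\cap(K\cup L_j)=K\cup(L_i\cap L_j)$, the corresponding members $X$ would form an $r$-sunflower in $\mathcal X'$ with core $K\cup D\supsetneq K$. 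This contradicts maximality.
\item Hence at most $(N-|K|)\operatorname{Sun}(k,r)+1\le N\operatorname{Sun}(k,r)$ members contain $K$. The $+1$ accounts for $X=K$, and the inequality uses $|K|\ge1$ and $\operatorname{Sun}(k,r)\ge1$.
\item Delete all of these members. No surviving member contains $K$, so $K$ can never reappear as a core.
\item After $j<s$ rounds, more than $(s-j)N\operatorname{Sun}(k,r)\ge N\operatorname{Sun}(k,r)$ members remain, so the next round can run. After $s$ rounds you have $s$ sunflowers with distinct nonempty cores.
\end{enumerate}

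Incidentally, the empty set is not negligible at this exact threshold. For $k=1$, $r=2$, $s=1$ one may take $\operatorname{Sun}(1,2)=1$. Then $\emptyset$ together with the $N$ singletons gives $N+1>N$ members, no two of which meet. So the members must be nonempty, as they are in the paper's application.
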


\begin{lemma}[Nonempty-core sunflower search]
\label{lem:algorithmic_sunflower_search}
Let $\mathcal X$ be a family of distinct nonempty subsets of an $N$-element universe, with
$|X|\le k \qquad\text{for all }X\in\mathcal X$.
Let
$r:=k+1, \qquad s:=2^k+1$.
There is a deterministic algorithm which either returns $s$ distinct nonempty cores
$C_1,\ldots,C_s$
and, for each $C_j$, an $r$-sunflower in $\mathcal X$ with core $C_j$, or certifies
$|\mathcal X|\le \operatorname{Sun}(k,k+1)(2^k+1)N$.
The running time is
$k^{O(k^2)}\widetilde O(|\mathcal X|+N)$.
\end{lemma}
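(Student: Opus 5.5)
The plan is to make the existence statement of \cref{lem:multiple_nonempty_core_sunflowers} constructive by combining a core-by-core search with color coding. The proof of that lemma works by pigeonholing: for a fixed candidate core $C$, one looks at the link family $\mathcal X_C:=\{X\setminus C: X\in\mathcal X,\ C\subsetneq X\}$, and a sunflower with core $C$ is exactly $r$ pairwise disjoint members of $\mathcal X_C$. We organize the algorithm so that whenever $|\mathcal X|$ exceeds the stated threshold, a greedy and color-coded search is guaranteed to succeed. Conversely, if the search fails, its failure certifies the size bound.

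\textbf{Step 1: enumerate candidate cores.} Every member has at most $k$ elements, so it has at most $2^k$ nonempty subsets. We enumerate all pairs $(C,X)$ with $\emptyset\neq C\subsetneq X\in\mathcal X$. Radix-sorting these pairs by $C$ groups the link families $\mathcal X_C$, all in $2^{O(k)}|\mathcal X|$ time after sorting the sets. Only cores with $|\mathcal X_C|\ge r$ are relevant.

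\textbf{Step 2: find $r$ disjoint links for a fixed core.} We want to decide whether $\mathcal X_C$ contains $r$ pairwise disjoint members, each of size at most $k$. First apply a greedy maximal disjoint subfamily $\mathcal M$. If $|\mathcal M|\ge r$ we are done. Otherwise the union $U_C$ of $\mathcal M$ has size at most $rk=k(k+1)$, and every member of $\mathcal X_C$ meets $U_C$.

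The union of any $r$ disjoint links has at most $q:=rk$ elements. Apply the perfect hash family $\mathcal H_{N,q}$, and for each $h$ look for $r$ links whose colour sets are disjoint and injective. This is a dynamic program over subsets of $[q]$ with at most $r$ layers, costing $2^{q}\cdot|\mathcal X_C|$ per hash function. The hash family has size $e^q q^{O(\log q)}\log N$, and $2^{O(k^2)}\le k^{O(k^2)}$. Summed over cores, this gives total time $k^{O(k^2)}\widetilde O(|\mathcal X|+N)$. We stop once $s$ successful cores have been found, and return them together with their sunflowers. Cores are distinct by construction.

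\textbf{Step 3: certify the bound when the search fails.} Suppose fewer than $s$ cores admit an $r$-sunflower. Then $\mathcal X$ does not contain $s$ sunflowers with $r$ petals and distinct nonempty cores, because any such sunflower with core $C$ would have been detected for $C$ by the exact colour-coding test. By the contrapositive of \cref{lem:multiple_nonempty_core_sunflowers} with $r=k+1$ and $s=2^k+1$, this gives $|\mathcal X|\le\operatorname{Sun}(k,k+1)(2^k+1)N$.

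\textbf{Main obstacle.} The delicate point is exactness in Step 2. The greedy step alone does not decide disjoint-$r$-packing, so the colour-coding DP must be exact for every core we test, with the hash family guaranteeing that some $h$ is injective on the union of a true packing. The remaining concern is bookkeeping. Summing $2^q|\mathcal X_C|$ over all cores stays at $2^{O(k^2)}|\mathcal X|$ only because each member of $\mathcal X$ contributes to at most $2^k$ link families. This accounting needs care to stay within the claimed running time.
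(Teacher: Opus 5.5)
Your overall plan is the paper's: bucket each member under its nonempty candidate cores, then search each bucket for $r$ pairwise disjoint petals. The search uses colour coding with $\mathcal H_{N,q}$, $q=kr$, and a DP over colour masks. You stop after $s$ successful cores, and otherwise apply the contrapositive of \cref{lem:multiple_nonempty_core_sunflowers}.

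The one step that fails as written is your restriction to proper supersets, $C\subsetneq X$. In the paper, an $r$-sunflower is simply $r$ distinct members with a common pairwise intersection, so one member may equal the core itself (one empty petal). \cref{lem:sunflower_certificate} explicitly allows this (``at most one petal is empty''), and \cref{lem:multiple_nonempty_core_sunflowers} does not promise sunflowers whose petals are all nonempty. So your claimed equivalence, ``a sunflower with core $C$ is exactly $r$ pairwise disjoint members of $\mathcal X_C$,'' breaks when $C\in\mathcal X$ is itself a member of the sunflower. In that case only $r-1$ disjoint nonempty links exist, so your test can fail for $C$. Step~3's claim that every guaranteed sunflower ``would have been detected'' is then unjustified, and you do not get the certificate $|\mathcal X|\le\operatorname{Sun}(k,k+1)(2^k+1)N$ at the stated threshold. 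An existence statement with all petals nonempty would need a different threshold, for instance more petals.

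The repair is what the paper does: enumerate all nonempty $C\subseteq X$, including $C=X$. This places $X$ in bucket $B_X$ with the empty petal and mask $\emptyset$. Because members are distinct, a bucket contains at most one empty petal. The DP processes each petal once, in decreasing order of $a$, so it uses that petal at most once, and the test then decides exactly whether some $r$-sunflower has core $C$. The incidence count becomes $\sum_X(2^{|X|}-1)\le(2^k-1)|\mathcal X|$, so your running-time accounting is unchanged. With this change your proof is essentially the paper's. The greedy maximal-disjoint-subfamily step and the extra injectivity requirement are harmless but unnecessary.
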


\begin{proof}
For every $X\in\mathcal X$, enumerate all nonempty subsets
$C\subseteq X$.
Insert $X$ into the bucket $B_C$, with associated petal
$P_C(X):=X\setminus C$.
The total number of bucket incidences is
$\sum_{X\in\mathcal X}(2^{|X|}-1) \le (2^k-1)|\mathcal X|$.

For a fixed bucket $B_C$, sets
$X_1,\ldots,X_r\in B_C$
form an $r$-sunflower with core $C$ if and only if the petals
$P_C(X_1),\ldots,P_C(X_r)$
are pairwise disjoint.  Therefore, for each bucket, it suffices to solve a
set-packing problem.

Let
$q:=kr\le k(k+1)$.
Construct $\mathcal H_{N,q}$.  Fix $h\in\mathcal H_{N,q}$.  For every petal $P$, define the color mask
$M_h(P):=\{h(v):v\in P\}\subseteq[q]$.
If petals have pairwise disjoint masks, then the petals are pairwise disjoint. Conversely, if $P_1,\ldots,P_r$ are pairwise disjoint, then
$|P_1\cup\cdots\cup P_r|\le q$,
so some $h\in\mathcal H_{N,q}$ is injective on their union, and for that $h$ the masks are pairwise disjoint.

For each bucket $B_C$ and hash function $h$, run the following dynamic
program.  Its states are $D[a,M]$, where $a\in\{0,\ldots,r\}$ and
$M\subseteq[q]$.  The state $D[a,M]$ means that $a$ petals have been selected with pairwise
disjoint masks and union of masks $M$.  Initialize
$D[0,\emptyset]=\textsc{true}$.
Process the petals one by one.  For a petal $P$, with mask $M_h(P)$, apply
the transition
\begin{equation*}
D[a,M]=\textsc{true}
    \quad\text{and}\quad
    M\cap M_h(P)=\emptyset
    \quad\Longrightarrow\quad
    D[a+1,M\cup M_h(P)]=\textsc{true},
\end{equation*}
in decreasing order of $a$.  Store predecessors.  If any state $D[r,M]$
becomes true, recover the corresponding $r$ petals.

For one hash function, the total number of processed bucket incidences is at
most $(2^k-1)|\mathcal X|$, and each incidence scans at most
$(r+1)2^q$
states.  Since
$q\le k(k+1)$,
and
$|\mathcal H_{N,q}|=e^q q^{O(\log q)}\log N$,
the total running time is
$e^q q^{O(\log q)}2^q2^k\,\widetilde O(|\mathcal X|+N) = k^{O(k^2)}\widetilde O(|\mathcal X|+N)$.

The algorithm records every nonempty core whose bucket contains an
$r$-packing of petals, and stops once $s=2^k+1$ distinct cores have been
recorded.  If fewer than $s$ are found, then by
\cref{lem:multiple_nonempty_core_sunflowers},
$|\mathcal X| \le \operatorname{Sun}(k,k+1)(2^k+1)N$.
\end{proof}

\subsubsection{Medium Cuts with Small Shores}

\begin{lemma}[Sunflower certificate for small shores]
\label{lem:sunflower_certificate}
Let $\mathcal M$ be a family of $U$-medium cuts such that every cut in
$\mathcal M$ has shore size at most $k$.  There is a deterministic
algorithm which either outputs a $k$-cut of value at most $U$, or certifies
$|\mathcal M| \le D_k n$,
where
$D_k:=\operatorname{Sun}(k,k+1)(2^k+1)$.
The running time is
$k^{O(k^2)}\widetilde O(|\mathcal M|\,n+m)$.
In particular, using $\operatorname{Sun}(k,k+1)\le (Ck\log k)^k$,
$D_k\le (C'k\log k)^k$.
\end{lemma}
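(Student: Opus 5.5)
Let $\mathcal X:=\{\operatorname{shore}(S):S\in\mathcal M\}$. Since a cut bipartition is determined by its shore, $\mathcal X$ is a family of distinct nonempty sets, each of size at most $k$, over the $N=n$ element universe $V$. We run \cref{lem:algorithmic_sunflower_search} on $\mathcal X$. If it certifies $|\mathcal X|\le\operatorname{Sun}(k,k+1)(2^k+1)n=D_kn$, then $|\mathcal M|\le D_kn$ and we are done. Computing the shores and the search together cost $k^{O(k^2)}\widetilde O(|\mathcal M|n+m)$.

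Otherwise the search returns $s=2^k+1$ distinct nonempty cores $C_1,\dots,C_s$, each with a $(k+1)$-sunflower $X_{j,1},\dots,X_{j,k+1}$ in $\mathcal X$. We feed the cores to \cref{lem:small_cut_certificate}. To do this we need every core to be a $U$-small cut, i.e.\ $c_G(C_j)\le U/(k-1)$. If some core violates this, we instead build a $k$-cut of value at most $U$ directly from that core's sunflower.

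\textbf{Key step: a large core gives a $k$-cut.} Fix a sunflower with core $C$ and disjoint petals $P_1,\dots,P_{k+1}$, so $X_i=C\cup P_i$. Take the $k$ parts $C\cup P_1,P_2,\dots,P_k$ together with $V\setminus(C\cup P_1\cup\dots\cup P_k)$. Some of these may be empty when petals are empty, but at most one petal is empty because the members are distinct. We then coarsen to exactly $k$ nonempty parts. For the value bound we mimic the GHLL argument: sum $c_G(X_i)$ over the $k+1$ members and use
\[
\sum_i c_G(C\cup P_i)\ \ge\ (k+1)\,c_G(C)-2\sum_{i}c(E_G(C,P_i))+\sum_i c_G(P_i)\quad\text{(approximately)}.
\]
Since the petals are disjoint, the terms $c(E_G(C,P_i))$ together are at most $c_G(C)$. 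This yields an upper bound on the value of the partition $(C,P_1,\dots,P_{k-1},\text{rest})$ or a similar partition. I expect the cleanest route is to take the $k$-cut $\{C,P_1,\dots,P_{k-2},\text{rest}\}$, choosing which petals to split off by averaging. Its value is at most $c_G(C)+\sum_{i\le k-2}c_G(P_i)$ minus the double counts, and the averaging is done against $\sum_i c_G(X_i)\le(k+1)2U/k$.

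If that averaging gives a partition of value at most $U$, we output it. Otherwise the failure must force $c_G(C)\le U/(k-1)$, i.e.\ $C$ is $U$-small. The heaviest petal can also be merged into ``rest''; this is where the $k+1$ petals, rather than $k$, supply the slack. Whether the inequality closes with exactly the non-strict thresholds of \cref{def:cut_sizes} is the main obstacle I expect. I would first try the bound in which $k-1$ petals are split off and the worst two are dropped.

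With all $s$ cores $U$-small and distinct, we apply \cref{lem:small_cut_certificate} to their rooted orientations. Distinct nonempty sets of size at most $k<n/2$ remain distinct cuts after rooting, with small cases handled by complementing. Because $s=2^k+1>2^k$, that lemma must output a $k$-cut of value at most $U$. The certificate step costs $2^{O(k)}\widetilde O(sn+m)$, which is within the claimed running time. The final bound $D_k\le(C'k\log k)^k$ follows from $\operatorname{Sun}(k,k+1)\le(Ck\log k)^k$ and $2^k+1\le 3^k$.
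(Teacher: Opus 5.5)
\textbf{There is a gap in the key step.} Your overall architecture is the paper's. You orient to shores and run \cref{lem:algorithmic_sunflower_search}. If $2^k+1$ cores come back, then either every core is $U$-small, so \cref{lem:small_cut_certificate} is forced to output a $k$-cut, or some core $C$ has $c_G(C)>U/(k-1)$. That last case is the actual content of the lemma, and you leave it open: you list several candidate partitions, verify none of them, and flag the inequality as an unresolved obstacle. The route you call cleanest, $\{C,P_1,\dots,P_{k-2},\text{rest}\}$, does not work for every $k$. The natural bound for it comes from the exact identity $c_G(P_i)=c_G(C\cup P_i)-c_G(C)+2c(E_G(P_i,C))$ (no ``approximately'' is needed) plus averaging, and it is $\frac{2(k-2)U}{k}-\frac{k^2-4k+2}{k}\,c_G(C)$. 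This closes for $k\ge4$, but for $k=3$ it reads $2U/3+c_G(C)/3$. The coefficient of $c_G(C)$ is then positive, so a heavy core makes the bound worse, and your step ``failure forces $C$ to be small'' is false there.

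\textbf{The missing idea.} Split off petals only, not the core, and choose them by how weakly they attach to $C$. Since the sets $X_i=C\cup P_i$ are distinct, at most one petal is empty. Keep $k$ nonempty petals and set $L_i:=E_G(P_i,C)$. These are pairwise disjoint subsets of $\delta_G(C)$, so after sorting, the $k-1$ lightest satisfy $\sum_{i=1}^{k-1}c(L_i)\le\frac{k-1}{k}c_G(C)$. The partition $(P_1,\dots,P_{k-1},V\setminus\bigcup_{i<k}P_i)$ is a $k$-cut, since the last part contains $C\neq\emptyset$. Every crossing edge touches some $P_i$ with $i<k$, so its value is at most
\[
\sum_{i=1}^{k-1}c_G(P_i)\le\frac{2(k-1)U}{k}-(k-1)c_G(C)+2\sum_{i=1}^{k-1}c(L_i)\le\frac{2(k-1)U}{k}-\frac{(k-1)(k-2)}{k}\,c_G(C)<U,
\]
where the last step uses $c_G(C)>U/(k-1)$. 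Strictness comes for free from the failure of $U$-smallness, so the non-strict thresholds cause no difficulty. The $(k+1)$st petal only guarantees $k$ nonempty petals. The factor $\frac{k-1}{k}$ that makes the arithmetic close comes from discarding the petal most heavily attached to $C$.
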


\begin{proof}
For every $S\in\mathcal M$, orient $S$ to its shore $X(S)$, so
$|X(S)|\le k$.
Let
$\mathcal X:=\{X(S):S\in\mathcal M\}$,
deduplicated as a set family.

Run \cref{lem:algorithmic_sunflower_search} on $\mathcal X$.  If it
certifies
$|\mathcal X|\le D_k n$,
then the same bound holds for $\mathcal M$.

Otherwise it returns $2^k+1$ distinct nonempty cores
$C_1,\ldots,C_{2^k+1}$,
each supporting a $(k+1)$-sunflower in $\mathcal X$.  Compute
$c_G(C_j)$ for all returned cores.

If every returned core is $U$-small, namely
$c_G(C_j)\le \frac{U}{k-1} \qquad\text{for all }j$,
then \cref{lem:small_cut_certificate}, applied to
$\{C_1,\ldots,C_{2^k+1}\}$,
must output a $k$-cut of value at most $U$, because the alternative
certificate would imply that this family has size at most $2^k$.

Thus some returned core $C$ satisfies
$c_G(C)> \frac{U}{k-1}$.  Fix such a sunflower with $k+1$ petals
and write $X_i=C\cup P_i$ for $i\in[k+1]$.
The petals $P_i$ are pairwise disjoint.  Since the sets $X_i$ are distinct,
at most one petal is empty.  Hence at least $k$ petals are nonempty.  Keep
$k$ nonempty petals and relabel them
$P_1,\ldots,P_k$.

For each $i$, define
$L_i:=E_G(P_i,C)$.
The sets $L_1,\ldots,L_k$ are pairwise disjoint subsets of $\delta_G(C)$.
Sort the petals so that
$c(L_1)\le c(L_2)\le\cdots\le c(L_k)$.
Then
$\sum_{i=1}^{k-1} c(L_i) \le \frac{k-1}{k} c_G(C)$.

For each $i$, $C\cup P_i$ is $U$-light, so
$c_G(C\cup P_i)\le \frac{2U}{k}$.
By inclusion-exclusion,
$c_G(P_i)= c_G(C\cup P_i)- c_G(C)+2c(L_i)
\le2U/k- c_G(C)+2c(L_i)$.
Consider the $k$-cut generated by the disjoint nonempty sets
$P_1,\ldots,P_{k-1}$.
Its value is at most
\begin{equation*}
\begin{aligned}
    \sum_{i=1}^{k-1} c_G(P_i)
    &\le
    \frac{2(k-1)U}{k}
    -
    (k-1) c_G(C)
    +
    2\sum_{i=1}^{k-1}c(L_i)                                  \\
    &\le
    \frac{2(k-1)U}{k}
    -
    (k-1) c_G(C)
    +
    2\frac{k-1}{k} c_G(C)                               \\
    &=
    \frac{2(k-1)U}{k}
    -
    \frac{(k-1)(k-2)}{k} c_G(C)                          \\
    &<
    \frac{2(k-1)U}{k}
    -
    \frac{(k-1)(k-2)}{k}\cdot \frac{U}{k-1}                    \\
    &=
    U.
\end{aligned}
\end{equation*}
Hence the produced $k$-cut has value at most $U$.

The running time consists of the algorithmic sunflower-search time,
$\widetilde O(|\mathcal M|n)$ for orienting and deduplicating the shores, and $\widetilde O(m)$ for checking the final certificate.
\end{proof}

\subsubsection{Venn Certificates}

\begin{lemma}[Venn certificates]
\label{lem:venn_certificate}
Let $S_1,\ldots,S_t$ be $U$-medium cuts.

\begin{enumerate}
    \item If $k$ is even,
    $t=\frac{k}{2}, \qquad |\operatorname{At}(S_1,\ldots,S_t)|\ge k$,
    then the atom partition contains a $k$-cut of value at most $U$.

    \item If $k$ is odd, let
    $r:=\frac{k+1}{2}$.
    If
    $|\operatorname{At}(S_1,\ldots,S_{r-1})|\ge k$,
    then the atom partition of $S_1,\ldots,S_{r-1}$ contains a $k$-cut of
    value at most $U$.

    Otherwise, if
    $|\operatorname{At}(S_1,\ldots,S_{r-1})|=k-1$
    and
    $|\operatorname{At}(S_1,\ldots,S_r)|\ge k+1$,
    then one can construct a $k$-cut of value at most $U$.
\end{enumerate}
Given the cuts by their incidence vectors, the corresponding certificate can
be constructed deterministically in $O(kn+m)$ time and $O(kn)$
additional space.
\end{lemma}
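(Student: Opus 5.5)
Each certificate will come from the basic fact in \cref{def:venn_diagram}: the atom partition of $S_1,\ldots,S_t$ has value at most $\sum_i c_G(S_i)$, and so does every coarsening of it.

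\textbf{Even $k$.} With $t=k/2$ medium cuts we have $\sum_i c_G(S_i)\le (k/2)(2U/k)=U$. If there are at least $k$ atoms, I merge atoms arbitrarily until exactly $k$ parts remain. The resulting $k$-cut has value at most $U$.

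\textbf{Odd $k$, first case.} Here $r-1=(k-1)/2$ cuts give total boundary at most $\frac{k-1}{2}\cdot\frac{2U}{k}=\frac{(k-1)U}{k}<U$. If they already have at least $k$ atoms, I coarsen to $k$ parts exactly as in the even case.

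\textbf{Odd $k$, second case.} This is the main obstacle. The first $r-1$ cuts give exactly $k-1$ atoms, and adding $S_r$ gives at least $k+1$ atoms. So $S_r$ splits at least two atoms $Q_1,Q_2$ of the old partition. Counting all $r$ cuts gives total boundary up to $\frac{(k+1)U}{k}$, which is too much, so I will charge only part of $\delta_G(S_r)$. Let $\mathcal Q$ be the old $(k-1)$-atom partition, with $c_G(\mathcal Q)\le\frac{(k-1)U}{k}$. Let $a_1:=c(E_G(Q_1\cap S_r,Q_1\setminus S_r))$ and $a_2:=c(E_G(Q_2\cap S_r,Q_2\setminus S_r))$. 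These are disjoint subsets of $\delta_G(S_r)$, so $a_1+a_2\le c_G(S_r)\le 2U/k$ and $\min\{a_1,a_2\}\le U/k$. I refine $\mathcal Q$ by splitting only the atom $Q_j$ with the smaller $a_j$. This is the same trick as the $k=3$ case of \cref{lem:crossing_certificate_k34}. The result has exactly $k$ nonempty parts and value at most $\frac{(k-1)U}{k}+\frac{U}{k}=U$. The only subtlety is checking that the refined partition's new crossing edges are exactly the edges inside $Q_j$ that cross $S_r$, which holds.

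\textbf{Running time.} I compute atom labels by hashing each vertex's $t$-bit membership vector, which takes $O(kn)$ time and space. For each atom I record whether $S_r$ splits it. One scan of the edge list then computes each $a_j$ and the values of the candidate partitions in $O(m)$, so the whole construction runs in $O(kn+m)$ time.
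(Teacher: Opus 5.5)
Your proof is correct and follows the paper's argument essentially step for step: the paper also coarsens the atom partition in the first two cases, and in the last case it refines only a split atom whose internal $S_r$-crossing capacity is at most $c_G(S_r)/j\le U/k$; you use just two of the $j\ge2$ split atoms, which gives the same bound. The one point to tighten is the implementation, since the statement requires a \emph{deterministic} $O(kn)$ bound: group the membership vectors with a binary trie or radix sort, as the paper does, rather than by hashing.
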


\begin{proof}
If $k$ is even and $t=k/2$, the atom partition generated by
$S_1,\ldots,S_t$ has value at most
$\sum_{i=1}^{k/2} c_G(S_i) \le \frac{k}{2}\cdot\frac{2U}{k} = U$.
Any $k$-part coarsening is a valid certificate.

Now assume $k$ is odd and $r=(k+1)/2$.  If the first $r-1$ cuts generate
at least $k$ atoms, their atom partition has value at most
$\sum_{i=1}^{r-1} c_G(S_i) \le (r-1)\frac{2U}{k} = \frac{k-1}{k}U \le U$.
Any $k$-part coarsening is valid.

It remains to handle the case where the first $r-1$ cuts generate exactly
$k-1$ atoms and $S_r$ raises the number of atoms to at least $k+1$.  Let
$\mathcal A:=\operatorname{At}(S_1,\ldots,S_{r-1})$.
The cut $S_r$ splits at least two atoms $A_1,\ldots,A_j$ of $\mathcal A$,
where $j\ge 2$.  The edge sets
$E_G(S_r\cap A_i,A_i\setminus S_r), \qquad i=1,\ldots,j$,
are disjoint subsets of $\delta_G(S_r)$.  Hence some $A_i$ satisfies
$c(E_G(S_r\cap A_i,A_i\setminus S_r)) \le \frac{c_G(S_r)}{j} \le \frac{U}{k}$.
Refine the $(k-1)$-atom partition only by splitting this atom $A_i$.  The
old atom partition has value at most
$(r-1)\frac{2U}{k} = \frac{k-1}{k}U$,
and the additional split costs at most $U/k$.  The resulting $k$-cut has
value at most $U$.

For the algorithm, compute for each vertex its membership
signature with respect to $S_1,\ldots,S_t$ and group equal signatures in a
binary trie.  Since $t\le(k+1)/2$, this constructs and labels all atoms in
$O(kn)$ time and $O(kn)$ space.  In the first two cases, arbitrarily merge
atoms until exactly $k$ parts remain.  In the final case, scan the edges
crossing $S_r$ and accumulate, for every atom $A_i$ split by $S_r$, the
value $c(E_G(S_r\cap A_i,A_i\setminus S_r))$.  This takes $O(m+n)$ time
and identifies an atom of additional splitting cost at most $U/k$.
Constructing the resulting partition takes another $O(n)$ time, proving
the stated bounds.
\end{proof}

\subsubsection{Overflow Certificate}

\begin{algorithm}[H]
\caption{$\textsc{OverflowCertificate}(G,k,U,\mathcal F)$}
\label{alg:overflow_certificate}
\begin{algorithmic}[1]
\Require A connected capacitated multigraph $G=(V,E,c)$, an integer $k\ge3$, a threshold $U>0$, and a family $\mathcal F$ of distinct cuts satisfying $c_G(S)\le 2U/k$ for every $S\in\mathcal F$.
\Ensure Either a $k$-cut of value at most $U$, or a certificate that $|\mathcal F|\le C_k n$.

\State $\mathcal S\gets\{S\in\mathcal F: c_G(S)\le U/(k-1)\}$.
\State Run \cref{lem:small_cut_certificate} on $\mathcal S$.
\If{a $k$-cut $\mathcal P$ is returned}
    \State \Return $\mathcal P$.
\EndIf

\State $\mathcal M\gets\mathcal F\setminus\mathcal S$.

\If{every cut in $\mathcal M$ has shore size at most $k$}
    \State Run \cref{lem:sunflower_certificate} on $\mathcal M$.
    \If{a $k$-cut $\mathcal P$ is returned}
        \State \Return $\mathcal P$.
    \Else
        \State \Return the resulting size certificate.
    \EndIf
\EndIf

\State Choose $S_1\in\mathcal M$ whose two sides both have size larger than $k$.
\State Initialize $\mathcal Q\gets(S_1)$.

\While{there exists $T\in\mathcal M$ such that, for $\mathcal Q=(S_1,\ldots,S_t)$, $|\operatorname{At}(S_1,\ldots,S_t,T)|\ge 2(t+1)$}
    \State Append $T$ to $\mathcal Q$.
    \If{the current prefix triggers \cref{lem:venn_certificate}}
        \State \Return the resulting $k$-cut.
    \EndIf
\EndWhile

\State Let $\mathcal Q=(S_1,\ldots,S_\ell)$.
\State Let $A_1,\ldots,A_i$ be the atoms of $\operatorname{At}(\mathcal Q)$ contained in $S_1$.
\State Let $B_1,\ldots,B_j$ be the atoms of $\operatorname{At}(\mathcal Q)$ contained in $V\setminus S_1$.

\State Form $H_1$ by contracting each $A_a$ to one vertex.
\State Form $H_2$ by contracting each $B_b$ to one vertex.

\State Let $\mathcal F_1$ be the cuts in $\mathcal M$ that are constant on every $A_a$, projected to $H_1$.
\State Let $\mathcal F_2$ be the cuts in $\mathcal M$ that are constant on every $B_b$, projected to $H_2$.

\State Recursively run $\textsc{OverflowCertificate}(H_1,k,U,\mathcal F_1)$.
\If{a $k$-cut is returned}
    \State Lift it to $G$ and return it.
\EndIf

\State Recursively run $\textsc{OverflowCertificate}(H_2,k,U,\mathcal F_2)$.
\If{a $k$-cut is returned}
    \State Lift it to $G$ and return it.
\EndIf

\State Combine the two recursive size certificates and return the resulting size certificate.
\end{algorithmic}
\end{algorithm}

\begin{lemma}[Constructive overflow certificate]
\label{lem:constructive_overflow_certificate}
Let
$D_k:=\operatorname{Sun}(k,k+1)(2^k+1)$.
There is a constant
$C_k=O(kD_k+2^k)$
such that \cref{alg:overflow_certificate} has the following
guarantee.  Given a connected capacitated multigraph $G=(V,E,c)$, a threshold
$U>0$, and a family $\mathcal F$ of distinct cuts satisfying
$c_G(S)\le \frac{2U}{k} \qquad\text{for every }S\in\mathcal F$,
the algorithm either outputs a $k$-cut of value at most $U$, or certifies
$|\mathcal F|\le C_k n$.
With the bound $\operatorname{Sun}(k,k+1)\le (Ck\log k)^k$, we have
$C_k\le (C'k\log k)^k$.

The algorithm runs in
\[
 k^{O(k^2)}
 \bigl(|\mathcal F|\,n^2\log n+mn\bigr)
\]
time.
\end{lemma}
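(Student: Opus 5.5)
We argue by induction on $n$ (the number of vertices of the current graph), following the recursive structure of \cref{alg:overflow_certificate}. Soundness is immediate. Every $k$-cut returned by \cref{lem:small_cut_certificate}, \cref{lem:sunflower_certificate}, or \cref{lem:venn_certificate} has value at most $U$. Lifting a $k$-cut of a quotient $H_1$ or $H_2$ to $G$ preserves its value, because contraction only deletes self-loops and keeps parallel edges. The substance is the size bound. Split $\mathcal F=\mathcal S\cup\mathcal M$. If no $k$-cut is returned, \cref{lem:small_cut_certificate} gives $|\mathcal S|\le 2^k$. When all medium cuts have shore size at most $k$, \cref{lem:sunflower_certificate} gives $|\mathcal M|\le D_k n$. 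So it remains to bound $|\mathcal M|$ in the branch with a cut $S_1$ both of whose sides exceed $k$ vertices.

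In that branch, we first check that the while loop either returns a certificate or stops after fewer than about $k/2$ steps. The invariant is $|\operatorname{At}(S_1,\ldots,S_t)|\ge 2t$. Hence after $t=k/2$ appended cuts (for even $k$), or after $r-1$ or $r$ cuts (for odd $k$), one of the hypotheses of \cref{lem:venn_certificate} fires. For odd $k$ with $r-1$ cuts, there are at least $k-1$ atoms: if at least $k$, case one applies; if exactly $k-1$, the next append produces at least $2r=k+1$ atoms, which is case two. Thus on termination $\ell\le\lceil k/2\rceil$, and the final atom count is at most $2\ell+1$ or so.

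Next, maximality of $\mathcal Q$ says every $T\in\mathcal M$ adds at most one new atom beyond $2\ell+1$ (i.e.\ $|\operatorname{At}(\mathcal Q,T)|<2(\ell+1)$). We want to conclude that each such $T$ is constant on all atoms $A_a\subseteq S_1$, or constant on all atoms $B_b\subseteq V\setminus S_1$, apart from a bounded number of exceptional cuts. This is the step I expect to be the main obstacle. It needs the precise atom count (at least $2\ell$, less than $2\ell+2$), so $T$ splits at most one atom. We also need that the atoms on each side are consistent with $S_1$, because $S_1\in\mathcal Q$. So $T$ splits no atom at all, or splits a single atom lying on one side, and is then constant on the other side. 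The cuts that split no atom are unions of at most $2\ell+1$ atoms, so there are at most $2^{O(k)}$ of them. This bound goes into the additive $2^k$-type term of $C_k$, or it is absorbed since such cuts are constant on both sides.

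Then $\mathcal M\subseteq \mathcal F_1\cup\mathcal F_2$ up to $2^{O(k)}$ exceptions, where the projections are injective. Also $H_1$ has $n_1=|V\setminus S_1|+i$ vertices and $H_2$ has $n_2=|S_1|+j$ vertices. The cuts keep their values and the thresholds are unchanged, so induction gives $|\mathcal F_1|\le C_kn_1$ and $|\mathcal F_2|\le C_kn_2$. We have $n_1+n_2=n+i+j\le n+2\ell+2\le n+k+2$. We need the sum to be at most $C_k n$ including the additive exceptions, and the fixed additive loss of about $(k+2)C_k$ per split is the difficulty. To handle it, I would strengthen the induction hypothesis to $|\mathcal F|\le C_k(n-k-2)+O(2^k)$, or more cleanly $|\mathcal F|\le C_k\,(n-c_0 k)$ for $n$ above a threshold. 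The requirement that both sides of $S_1$ exceed $k$ vertices gives $n_1,n_2<n$, each side losing at least $|S_1|-i\ge$ something positive. This is why large shores were demanded, and it should make the bookkeeping close with $C_k=O(kD_k+2^k)$. The base cases come from the sunflower branch, where $D_kn$ is available.

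For the running time, the recursion tree has depth at most $n$ and disjoint-ish families, with each level costing $k^{O(k^2)}(|\mathcal F|n\log n+m)$. Bounding the total by a factor of $n$ gives $k^{O(k^2)}(|\mathcal F|n^2\log n+mn)$.
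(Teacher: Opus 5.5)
Your outline matches the paper's: induction along the recursion, $|\mathcal S|\le 2^k$, the sunflower leaf, and the maximality argument that every $T\in\mathcal M$ splits at most one atom and therefore survives, injectively, in $\mathcal F_1$ or $\mathcal F_2$. But the counting induction, which is the whole content of the size bound, is left at ``should close'', and with the estimates you state it does not close.

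Write the hypothesis as $|\mathcal F|\le C_k(n-c)\pm b$. Two constraints apply:
\begin{itemize}
\item Since $|V(H_1)|+|V(H_2)|=n+i+j$, you need $c\ge i+j$.
\item The recursion only guarantees $|V(H_1)|=|V\setminus S_1|+i\ge k+2$ (and symmetrically), so the hypothesis must hold for every instance with at least $k+2$ vertices.
\end{itemize}
Your bound $i+j\le k+2$ forces $c\ge k+2$, so a child may have $|V(H_1)|-c\le 0$ while carrying a nonempty family. The ``threshold'' version is then unavailable exactly where the induction needs it. The $+O(2^k)$ version does not reproduce itself: the two children contribute $2b$, the current node adds up to $2^k$ small cuts, and your estimate leaves no strict slack $c>i+j$ to absorb them.

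The missing observation is that the count is tighter, and it matches the large-side condition exactly. Your loop analysis already shows that reaching $t=k/2$ (even $k$) or $t=r$ (odd $k$) always fires a Venn certificate, so $\ell<k/2$. Applying maximality to $T=S_1\in\mathcal M$ gives $i+j=|\operatorname{At}(\mathcal Q)|<2(\ell+1)\le k+1$, that is, $i+j\le k$. Since $i,j\le k<|S_1|,|V\setminus S_1|$, both children have between $k+1$ and $n-1$ vertices.

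The paper then proves, for every instance with $n>k$, that $|\mathcal F|\le\Gamma_k(n-k)-2^k$ with $\Gamma_k\ge(k+1)D_k+2^{k+1}$.
\begin{itemize}
\item At a sunflower leaf, $n\le(k+1)(n-k)$ gives $2^k+D_kn\le\Gamma_k(n-k)-2^k$.
\item At an internal node, the children contribute at most $\Gamma_k(n+k-2k)-2^{k+1}$, and the extra $-2^k$ pays for $|\mathcal S|\le 2^k$.
\end{itemize}

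Your running-time argument fails as written and needs the same potential. Sibling instances are not disjoint: cuts splitting no atom go to both children, and edges crossing $S_1$ or joining two distinct atoms survive in both quotients. Also, depth at most $n$ does not bound the number of nodes. The fix is that the children's potentials $|V(H)|-k\ge 1$ sum to at most the parent's. Hence there are at most $n-k$ leaves and $O(n)$ nodes, each with at most $|\mathcal F|$ cuts, $n$ vertices and $m$ edges. This yields $k^{O(k^2)}(|\mathcal F|\,n^2\log n+mn)$.
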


\begin{proof}
Let
$B_k:=2^k$.
We prove the statement that, for a sufficiently large
$\Gamma_k=O(kD_k+B_k)$,
every recursive instance with $n>k$ either outputs a $k$-cut of value at
most $U$, or certifies
$|\mathcal F|\le \Gamma_k(n-k)-B_k$.
This implies the stated bound after increasing constants.

The small-cut step is correct by \cref{lem:small_cut_certificate}.
If it does not output a $k$-cut, then its size certificate gives
$|\mathcal S|\le B_k$.

If every medium cut has shore size at most $k$, then
\cref{lem:sunflower_certificate} either outputs a $k$-cut of value
at most $U$, or certifies
$|\mathcal M|\le D_k n$.
Since $n >k$,
$n \le (k+1)(n -k)$.
Choosing
$\Gamma_k\ge (k+1)D_k+2B_k$
gives
$|\mathcal F| = |\mathcal S|+|\mathcal M| \le B_k+D_k n \le \Gamma_k(n-k)-B_k$.

It remains to consider the recursive case.  The algorithm has selected a
maximal sequence
$S_1,\ldots,S_\ell$
of medium cuts such that
$|\operatorname{At}(S_1,\ldots,S_t)|\ge 2t \qquad\text{for every }t\le \ell$,
and no prefix triggers \cref{lem:venn_certificate}.  Hence
$\ell<\frac{k}{2}$.
Let $A_1,\ldots,A_i$ be the atoms inside $S_1$, and let
$B_1,\ldots,B_j$ be the atoms outside $S_1$.  By maximality,
$i+j<2(\ell+1)\le k+1$,
and therefore
$i+j\le k$.

Both sides of $S_1$ have size larger than $k$, so both recursive graphs
$H_1$ and $H_2$ have fewer vertices than $G$ and more than $k$ vertices.
Moreover,
$|V(H_1)|+|V(H_2)| = n +i+j \le n +k$.

Every cut $T\in\mathcal M$ survives in at least one branch.  Indeed, if $T$
split some atom $A_a$ and also some atom $B_b$, then adding $T$ to
$\mathcal Q$ would split at least two current atoms, contradicting maximality
of $\mathcal Q$.  Hence
$|\mathcal M|\le |\mathcal F_1|+|\mathcal F_2|$.

By induction, unless a recursive call outputs a $k$-cut,
$|\mathcal F_1|\le \Gamma_k(|V(H_1)|-k)-B_k$,
and
$|\mathcal F_2|\le \Gamma_k(|V(H_2)|-k)-B_k$.
Therefore
$|\mathcal M|\le\Gamma_k(|V(H_1)|-k)-B_k
+\Gamma_k(|V(H_2)|-k)-B_k
=\Gamma_k(|V(H_1)|+|V(H_2)|-2k)-2B_k
\le\Gamma_k(n-k)-2B_k$.
Adding $|\mathcal S|\le B_k$ gives
$|\mathcal F| = |\mathcal S|+|\mathcal M| \le \Gamma_k(n-k)-B_k$.
This completes the induction.

It remains to analyze the running time.  Consider a recursive instance on a
graph $H$ with $N$ vertices and $M$ edges and a cut family
$\mathcal F_H$.  The small-cut and sunflower routines, the construction of
the maximal Venn sequence, the Venn-certificate tests from
\cref{lem:venn_certificate}, and the tests determining whether each cut
survives in either child can together be implemented in
\[
 k^{O(k^2)}
 \bigl(|\mathcal F_H|\,N\log N+M\bigr)
\]
time.  Here the atom labels are computed from the incidence vectors of the
cuts in the Venn sequence, and each remaining cut is tested against these
labels by scanning its incidence vector.

To sum this bound over the recursion tree, assign to an instance on $H$
the potential $|V(H)|-k$.  If its two recursive children are $H_1,H_2$,
then the construction above gives
\[
 (|V(H_1)|-k)+(|V(H_2)|-k)
 \le |V(H)|-k.
\]
Every recursive child has more than $k$ vertices, so every leaf has
positive integral potential.  The recursion tree therefore has $O(n)$
nodes.  Consequently, each input cut occurs in at most $O(n)$ recursive
families and each original edge occurs in at most $O(n)$ recursive
graphs.  Since every recursive graph has at most $n$ vertices,
\[
 \sum_H |\mathcal F_H|\,|V(H)|
 \le O(|\mathcal F|n^2),
 \qquad
 \sum_H |E(H)|\le O(mn).
\]
Summing the local bounds and using $\log |V(H)|\le\log n$ proves the
claimed running time.
\end{proof}

\begin{corollary}[Certifying overflow rule]
\label{cor:certifying_overflow_rule}
For a list of more than $C_kn$ distinct cuts $S$ satisfying
$c_G(S)\le 2U/k$,
$\textnormal{\textsc{OverflowCertificate}}$, applied to the first $C_kn+1$
such cuts, outputs an explicit $k$-cut of value at most $U$.
The additional time is
$k^{O(k^2)}\bigl(C_k n^3\log n+mn\bigr)$.
In particular, since $C_k\le (C'k\log k)^k$, this is
$k^{O(k^2)}(n^3\log n+mn)$.
\end{corollary}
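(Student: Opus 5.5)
This corollary is a direct specialization of \cref{lem:constructive_overflow_certificate}; the plan is to apply it to a truncated family. Let $\mathcal F'$ consist of the first $C_kn+1$ cuts of the given list. These cuts are distinct and satisfy $c_G(S)\le 2U/k$, so $\mathcal F'$ is a valid input to \textsc{OverflowCertificate}$(G,k,U,\mathcal F')$.

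First, correctness. By \cref{lem:constructive_overflow_certificate}, the algorithm has only two possible outcomes. It either outputs a $k$-cut of value at most $U$, or it certifies $|\mathcal F'|\le C_kn$. The second outcome is impossible, because $|\mathcal F'|=C_kn+1$. Hence the algorithm must return an explicit $k$-cut of value at most $U$.

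There is one subtlety: the certificate must be sound, not merely a claimed bound. The lemma's inductive argument shows that whenever no $k$-cut is produced, the inequality $|\mathcal F|\le \Gamma_k(n-k)-B_k$ genuinely holds. Some bookkeeping is also needed.
\begin{itemize}
\item The lemma requires $G$ to be connected. This is assumed throughout the section.
\item If $n\le k$, a $k$-cut exists only when $n=k$, and the bound $C_kn$ must still dominate the number of cuts. The number of cuts is below $2^n\le 2^k\le C_kn$, because $C_k\ge 2^k$. So for $n\le k$ the list cannot exceed $C_kn$ entries and the statement is vacuous.
\end{itemize}

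Second, running time. Substitute $|\mathcal F'|=C_kn+1$ into the bound $k^{O(k^2)}(|\mathcal F'|n^2\log n+mn)$. This gives $k^{O(k^2)}(C_kn^3\log n+mn)$. Selecting the first $C_kn+1$ cuts costs only linear time in their representations, $O(C_kn^2)$, which is dominated by this bound.

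Finally, since $C_k\le (C'k\log k)^k=k^{O(k)}$, the factor $C_k$ is absorbed into $k^{O(k^2)}$. This yields $k^{O(k^2)}(n^3\log n+mn)$. No step is a real obstacle; the only point needing care is the sound-certificate and small-$n$ bookkeeping above.
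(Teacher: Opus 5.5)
Your proposal is correct and is essentially the paper's proof: apply \cref{lem:constructive_overflow_certificate} to the first $C_kn+1$ cuts, observe that its size-certificate outcome is impossible, and substitute $|\mathcal F|=C_kn+1$ into the lemma's running-time bound, absorbing $C_k\le(C'k\log k)^k$ into $k^{O(k^2)}$. Your extra bookkeeping is sound but not needed once the lemma is invoked as stated; this covers connectivity, the vacuous case $n\le k$ (which uses $C_k\ge 2^k$, legitimate since $C_k$ may be enlarged), and the cost of truncating the list.
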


\begin{proof}
The size-certificate outcome of
\cref{lem:constructive_overflow_certificate} is impossible for a
family of size greater than $C_kn$, so the algorithm returns a $k$-cut
of value at most $U$.  Applying the running-time bound of that lemma with
$|\mathcal F|=C_kn+1$ gives
$k^{O(k^2)}(C_kn^3\log n+mn)$.  The final bound follows from
$C_k\le(C'k\log k)^k$.
\end{proof}

\subsection{Sampling a Respecting Tree}
\label{sec:random_respecting_tree}

We use the following interface to the packing of \cite{Qua22}, we reprove it for convenience.

\begin{lemma}[Approximate dual tree packing {\cite{CQX19,Qua22}}]
\label{lem:dual_tree_packing}
Let \(F=(V,E,c)\) be a connected weighted graph, put
\(n:=|V|\), \(m:=|E|\), and let \(2\le k\le n\).  For every
\(0<\eta<1/2\), one can deterministically compute in
\[
    O\!\left(m\eta^{-2}\log^3(2n)\right)
\]
time an implicit finitely supported family of nonnegative tree weights
\((y_T)\) and nonnegative edge slacks \((z_e)\).  Writing
\[
    \tau:=\sum_Ty_T,
\]
they satisfy
\begin{align}
    \sum_{T\ni e}y_T
    &\le c(e)+z_e
        \qquad (e\in E),
        \label{eq:dual_packing_load}\\
    (k-1)\tau
    &\ge(1-\eta)\frac{\lambda_k(F)}2+z(E).
        \label{eq:dual_packing_approx}
\end{align}
The support has
\[
    O\!\left(m\eta^{-2}\log^3(2n)\right)
\]
records.

The records can be sampled with probabilities proportional to their
weights.  Given \(r\) sampled record indices, their spanning-tree
extensions can be materialized in
\[
    O\!\left(m\eta^{-2}\log^3(2n)+rn\right)
\]
additional time.  The entire support can likewise be streamed with
\(O(n)\) output work per materialized tree.
\end{lemma}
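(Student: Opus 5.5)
We would obtain the lemma by solving the LP relaxation of Chekuri, Quanrud, and Xu for minimum $k$-cut approximately, using the multiplicative-weights scheme of Quanrud, and reading $(y_T,z_e)$ off as an approximate dual solution. The relaxation assigns edge lengths $x_e\ge0$, minimizes $\sum_e c(e)x_e$, and requires
\[
    \sum_{e\in T}x_e\ge1
    \qquad\text{for every spanning tree } T.
\]
Together with the box constraints $x_e\le1$, the optimum of this relaxation is at most $\lambda_k(F)/(k-1)$. The relevant comparison is $(k-1)\cdot\mathrm{LP}\le\lambda_k$ up to the factor $2$ that appears in the statement. Concretely, we use the \cite{CQX19} formulation in which the tree constraint is scaled so that the LP value lower-bounds $\lambda_k/(2(k-1))$ times the stated normalization.

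The dual of this LP is to maximize $\sum_T y_T-\sum_e z_e$ subject to $\sum_{T\ni e}y_T\le c(e)+z_e$. The load constraint \eqref{eq:dual_packing_load} is exactly dual feasibility. Weak duality then turns the LP lower bound into the target inequality: an $(1-\eta)$-approximate dual has value at least $(1-\eta)\lambda_k/(2(k-1))$, and rearranging gives \eqref{eq:dual_packing_approx}. We would verify the primal comparison directly. Take an optimum $k$-cut and set $x_e=1/(k-1)$ on its edges, scaled appropriately. Every spanning tree contains at least $k-1$ edges of a $k$-cut, so this point is feasible.

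Algorithmically, we would run the standard MWU packing loop. In each round, compute a minimum spanning tree with respect to exponential lengths $\exp(\text{load}_e/c(e))$. Edges whose load exceeds their capacity are charged to $z_e$ through a truncation: that is, $z_e$ is the excess load, and the objective penalizes it. After $O(\eta^{-2}\log n)$ phases, the usual potential argument certifies $(1-\eta)$-optimality. To reach $O(m\eta^{-2}\log^3 n)$ we would follow Quanrud's implementation. Consecutive MSTs are maintained implicitly as a sequence of edge swaps with a dynamic tree, and a record is the pair of a swap and the weight increment. This is why the support is ``implicit'' with $O(m\eta^{-2}\log^3 n)$ records. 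Materializing a sampled record then means replaying swaps from a checkpoint. With checkpoints stored every so often, $r$ samples cost one pass over the records plus $O(n)$ per output tree, and streaming the whole support works the same way.

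The main obstacle is the running-time claim: showing that the number of tree changes is $O(m\eta^{-2}\log^3 n)$ in total and not per iteration. The approach we would try first is Quanrud's epoch argument. Lengths increase multiplicatively, so each edge's length crosses each $(1+\eta)$-threshold only $O(\eta^{-1}\log n)$ times. An MST changes only when some threshold is crossed, and each crossing is handled by a dynamic-MST update in polylogarithmic time. The correctness of the duality constants ($1/2$ versus $1/(k-1)$) we regard as routine to check against \cite{CQX19}.
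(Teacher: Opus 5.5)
\textbf{There is a genuine gap in the duality step, and it carries over to the algorithm.}

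Inequality \eqref{eq:dual_packing_approx} needs a \emph{lower} bound $\mathrm{OPT}_{\mathrm{LP}}\ge\lambda_k(F)/2$ (in the right normalization), which a $(1-\eta)$-approximate dual then inherits. Your ``direct verification'' (put $x_e=1/(k-1)$ on an optimum $k$-cut) proves only $\mathrm{OPT}_{\mathrm{LP}}\le\lambda_k/(k-1)$. By weak duality that is an \emph{upper} bound on every dual value, so it says nothing about \eqref{eq:dual_packing_approx}. The bound you need is the integrality-gap theorem of \cite{CQX19}, $\lambda_k\le2(1-1/n)\,\mathrm{OPT}_{\mathrm{LP}}$. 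That is the substantive input, not a routine check of constants.

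Moreover, that bound is false for the relaxation you wrote down. With constraints $\sum_{e\in T}x_e\ge1$ the box $x_e\le1$ is redundant, so the LP value is the fractional tree-packing number $\tau^*$ and slack never helps. Take two copies of $K_N$ joined by one edge. Every spanning tree contains the bridge, so the LP value is at most $1$, while $\lambda_3\ge N-1$. Hence the bound $(1-\eta)\lambda_3/4$ you claim for an approximate dual fails for large $N$, and so does \eqref{eq:dual_packing_approx} for any slack-free packing.

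The relaxation that works is the one whose dual is literally the statement: minimize $c\cdot x$ subject to $\sum_{e\in T}x_e\ge k-1$ for all spanning trees $T$ and $0\le x_e\le1$. Its dual is $\max\,(k-1)\tau-z(E)$ subject to \eqref{eq:dual_packing_load}. Here the box is essential, since the $z_e$ are its multipliers. Extending forests to spanning trees shows this LP is equivalent to the forest relaxation $\sum_{e\in J}x_e\ge|J|-(n-k)$ of \cite{CQX19}.

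\textbf{The algorithm has the matching defect.} MWU with minimum spanning trees under exponential lengths solves the ordinary tree-packing LP, whose value is $\tau^*$. Declaring $z_e$ to be the excess load afterwards makes \eqref{eq:dual_packing_load} hold by fiat, but gives no reason for \eqref{eq:dual_packing_approx}, because your oracle never trades tree edges for slack. The oracle has to decide which edges of a tree are paid for through $z$. That is, it must return a forest $J$ minimizing $\ell(J)/(|J|+k-n)$ for the current lengths $\ell$, and such a forest is a prefix of the current MST.

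This is the paper's route:
\begin{enumerate}
\item Run Quanrud's solver on the positive forest-packing LP $\max\sum_J(|J|+k-n)y_J$ subject to $\sum_{J\ni e}y_J\le c(e)$.
\item Extend each support forest $J$ to the MST $T_J$ of which it is a prefix, and move $y_J$ to $T_J$.
\item Set $z_e:=\sum_{J:\,e\in T_J\setminus J}y_J$.
\end{enumerate}
Then \eqref{eq:dual_packing_load} is immediate, and the identity $(k-1)\tau-z(E)=\sum_J(|J|+k-n)y_J$ transfers the bound $(1-\eta)\lambda_k/2$ exactly.

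The same structure yields the compact records (insertion index, prefix length, weight), prefix-sum sampling, and materialization by one deterministic replay. The MST-change counting you flag as the main obstacle is simply inherited from Quanrud's running-time bound.
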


\begin{proof}
Quanrud's algorithm approximately solves the positive forest-packing
dual
\[
    \max
    \sum_J(|J|+k-n)y_J
    \quad\text{subject to}\quad
    \sum_{J\ni e}y_J\le c(e)
    \quad(e\in E),
\]
where \(J\) ranges over forests of \(F\).  After changing its internal
accuracy by a constant factor, it returns a feasible packing satisfying
\[
    \sum_J(|J|+k-n)y_J
    \ge(1-\eta)\operatorname{OPT}_{\mathrm{LP}}
    \ge(1-\eta)\frac{\lambda_k(F)}2.
\]

For every support forest \(J\), choose a spanning-tree extension
\(T_J\supseteq J\), transfer the weight \(y_J\) to \(T_J\), and define
\[
    z_e:=\sum_{J:\,e\in T_J\setminus J}y_J.
\]
Then
\[
    \sum_{T\ni e}y_T
    =
    \sum_{J:\,e\in J}y_J+
    \sum_{J:\,e\in T_J\setminus J}y_J
    \le c(e)+z_e.
\]
Moreover,
\[
\begin{aligned}
    (k-1)\tau-z(E)
    &=
    \sum_J
    \bigl(k-1-|T_J\setminus J|\bigr)y_J
    =
    \sum_J
    \bigl(k-1-(n-1-|J|)\bigr)y_J\\
    &=
    \sum_J(|J|+k-n)y_J
    \ge(1-\eta)\frac{\lambda_k(F)}2.
\end{aligned}
\]

In Quanrud's implementation, every inserted forest is a prefix of the
current maintained minimum spanning tree.  We choose that tree as
\(T_J\).  Record the insertion index, prefix length, and inserted weight.
Prefix sums support weighted sampling of record indices.  Because the
packing algorithm is deterministic, sorting the sampled indices and
replaying its update sequence once reconstructs the corresponding MST
states; writing one tree explicitly costs \(O(n)\).  The same replay
streams the complete support.
\end{proof}

\begin{lemma}[A light side is $(k-1)$-respected with significant probability]
\label{lem:random_light_side_respecting}
Let $k\ge3$, let $F$ be connected, and let $(y,z)$ be the packing of
\cref{lem:dual_tree_packing} with $\eta\le1/30$.  If
$c_F(S)\le2\lambda_k(F)/k$ and $T$ is drawn with probability
$y_T/\tau$, then
\begin{equation*}
\Pr[|\delta_T(S)|\le k-1]\ge
 \begin{cases}
  \dfrac{1-9\eta}{6(1-\eta)},&k=3,\\[2mm]
  \dfrac{1-4\eta}{3(1-\eta)},&k=4,\\[2mm]
  \dfrac1{k-1},&k\ge5.
 \end{cases}
\end{equation*}
In particular, the probability is $\Omega(1/k)$.
\end{lemma}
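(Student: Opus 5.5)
The plan is a first-moment argument on $D_T:=|\delta_T(S)|$ for a tree $T$ drawn with probability $y_T/\tau$. We bound $\mathbb E[D_T]$ from above using the load constraint \eqref{eq:dual_packing_load} and the value bound \eqref{eq:dual_packing_approx}, exactly as in \cref{lem:light_side_two_respected}. We then convert this into a lower bound on $\Pr[D_T\le k-1]$, using that every spanning tree crosses $S$ at least once. Note first that $F$ is connected with positive capacities and $k\ge3$, so $\lambda_k(F)>0$. Hence $\tau>0$ by \eqref{eq:dual_packing_approx}, and the sampling distribution is well defined.

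\textbf{Step 1 (upper bound on the expectation).} Summing \eqref{eq:dual_packing_load} over $e\in\delta_F(S)$ gives
\[
\tau\,\mathbb E[D_T]=\sum_T y_T|\delta_T(S)|=\sum_{e\in\delta_F(S)}\sum_{T\ni e}y_T\le c_F(S)+z(E)\le a+z,
\]
where $a:=2\lambda_k(F)/k$ and $z:=z(E)\ge0$. From \eqref{eq:dual_packing_approx} we have $\tau\ge(b+z)/(k-1)$ with $b:=(1-\eta)\lambda_k(F)/2$. Therefore
\[
\mathbb E[D_T]\le (k-1)\,\frac{a+z}{b+z}.
\]

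\textbf{Step 2 (conversion to a probability).} Let $p:=\Pr[D_T\le k-1]$. Since $D_T\ge1$ always and $D_T\ge k$ on the complementary event,
\[
\mathbb E[D_T]\ge p+k(1-p)=k-(k-1)p,
\qquad\text{so}\qquad
p\ge\frac{k-\mathbb E[D_T]}{k-1}.
\]

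\textbf{Step 3 (eliminating the slack $z$).} This is the only delicate point, because $z$ is not controlled. The function $z\mapsto(a+z)/(b+z)$ is monotone on $z\ge0$, and its direction depends on the sign of $a-b$.

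For $k\in\{3,4\}$ and $\eta\le1/30$ we have $a\ge b$, so the function is nonincreasing and the worst case is $z=0$. This gives
\[
\mathbb E[D_T]\le\frac{4(k-1)}{k(1-\eta)},
\]
which equals $8/(3(1-\eta))$ for $k=3$ and $3/(1-\eta)$ for $k=4$. Substituting into Step 2 yields
\[
p\ge\frac{3-8/(3(1-\eta))}{2}=\frac{1-9\eta}{6(1-\eta)}\quad(k=3),
\qquad
p\ge\frac{4-3/(1-\eta)}{3}=\frac{1-4\eta}{3(1-\eta)}\quad(k=4).
\]

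For $k\ge5$ we have $2/k\le2/5<(1-\eta)/2$, so $a<b$. Then $(a+z)/(b+z)<1$ for every $z\ge0$, hence $\mathbb E[D_T]<k-1$ and $p\ge1/(k-1)$.

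With $\eta\le1/30$ both small-$k$ bounds are positive absolute constants. All three cases are therefore $\Omega(1/k)$, which proves the final claim.
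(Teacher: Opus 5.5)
Your proposal is correct and follows the paper's proof essentially step for step. It uses the same first-moment bound $\mathbb E[D_T]\le (k-1)\bigl(2\lambda_k(F)/k+z(E)\bigr)/\bigl((1-\eta)\lambda_k(F)/2+z(E)\bigr)$, the same integrality conversion $p\ge (k-\mathbb E[D_T])/(k-1)$, and the same case split at $k\ge5$; your monotonicity-in-$z$ argument for $k\in\{3,4\}$ is just a rephrasing of the paper's coefficient comparison after cross-multiplication to eliminate the slack $z(E)$.
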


\begin{proof}
Write $D:=|\delta_T(S)|$, $\lambda:=\lambda_k(F)$, and $Z:=z(E)$.  The
packing constraints give
\[
 \mathbb E[D]\le
 (k-1)\frac{2\lambda/k+Z}{(1-\eta)\lambda/2+Z}.
\]

Suppose first that $k=3$.  Then
$\mathbb E[D]\le 2(2\lambda/3+Z)/((1-\eta)\lambda/2+Z)
\le8/(3(1-\eta))$; the last inequality follows because the coefficients
of $\lambda$ agree after cross-multiplication, while
$2\le8/(3(1-\eta))$ for the coefficients of $Z$.  Let
$p:=\Pr[D\le2]$.  Since every spanning tree crosses a nontrivial cut at
least once, while $D\ge3$ whenever $D>2$, we have
$\mathbb E[D]\ge p+3(1-p)=3-2p$.  Consequently,
\[
 p\ge\frac{3-\mathbb E[D]}2
 \ge\frac12\left(3-\frac{8}{3(1-\eta)}\right)
 =\frac{1-9\eta}{6(1-\eta)}.
\]

Now suppose that $k=4$.  Similarly,
$\mathbb E[D]\le3(\lambda/2+Z)/((1-\eta)\lambda/2+Z)
\le3/(1-\eta)$.  With $p:=\Pr[D\le3]$, integrality gives
$\mathbb E[D]\ge p+4(1-p)=4-3p$, and hence
\[
 p\ge\frac{4-\mathbb E[D]}3
 \ge\frac13\left(4-\frac{3}{1-\eta}\right)
 =\frac{1-4\eta}{3(1-\eta)}.
\]

Finally, let $k\ge5$.  Since $\eta\le1/30$, we have
$2/k\le2/5<(1-\eta)/2$, and therefore
$2\lambda/k+Z\le(1-\eta)\lambda/2+Z$.  Thus
$\mathbb E[D]\le k-1$.  If $p:=\Pr[D\le k-1]$, then
$\mathbb E[D]\ge p+k(1-p)=k-(k-1)p$, so
$k-(k-1)p\le k-1$, or $p\ge1/(k-1)$.
\end{proof}
\subsection{A Randomized Reduction to Minimum \texorpdfstring{$3$}{3}-Cut}
\label{sec:randomized_reduction_kcut}

Throughout this section, all capacities are positive.  For every
nonempty graph $H$, set $\lambda_1(H):=0$, and set
$\lambda_j(H):=+\infty$ when $|V(H)|<j$.

We dispose of disconnected subproblems before constructing a packing.  If
$H$ has connected components $H_1,\ldots,H_t$, then
$\lambda_j(H)=0$ when $t\ge j$.  If $t<j$, then $\lambda_j(H)$ is
the minimum of $\sum_{a=1}^t\lambda_{j_a}(H_a)$ over integers $j_a\ge1$
with $\sum_a j_a=j$.  A minimizing composition also gives the partition.
Memoizing every pair $(H_a,j_a)$, and using that every index is at most
$k$, makes the component dynamic program contribute only a $k^{O(k)}$
factor.  We may therefore assume below that every graph on which a packing
is constructed is connected.

Let $N$ be the number of vertices in the original input, and let $k$
be its cut parameter.  Fix $\eta:=1/30$.

In every connected
invocation with parameter $j\ge4$, compute the packing
described above once and draw
$s_k:=\Theta(k^2\log N+k^3\log(k+1))$ independent record indices, for a sufficiently large absolute constant, with
probabilities proportional to their weights.  Materialize their
spanning-tree extensions. Since
$j\le k$, \cref{lem:random_light_side_respecting} implies that a
fixed light side is not $(j-1)$-respected by any sampled tree with
probability at most
$\exp(-\Omega(s_k/j))\le k^{-\Omega(k^2)}N^{-\Omega(k)}$.
The constant-probability tree-sampling events in the $3$-cut algorithm,
its random-skeleton construction, and its randomized minimum-cut calls are
amplified to the same failure bound; this changes only the
$k^{O(k^2)}$ factor in its running time.  For $j=2$, we use the
$O(m\log^2 n)$-time minimum-cut algorithm of~\cite{GMW20a}, amplified to
the same success probability.  Thus the only random choices outside the
$2$-cut and $3$-cut base cases are samples from the tree-packing distributions.

For an invocation on an $h$-vertex graph, set $M_4:=2h$, and for
$j\ge5$ set $M_j:=C_jh$, where $C_j$ is the constant from
\cref{lem:constructive_overflow_certificate}.  Recall that
$C_j\le(C'j\log j)^j$.

\begin{algorithm}[H]
\caption{$\textsc{MinimumKCut}(H,j;k,N)$}
\label{alg:minimum_kcut}
\begin{algorithmic}[1]
\Require A positive capacitated graph $H$ with $h$ vertices and $e$ edges, an integer $j\ge1$, and
the original parameters $k,N$
\Ensure An explicit minimum $j$-cut, or $+\infty$ if $|V(H)|<j$
\If{$|V(H)|<j$}
    \State \Return $+\infty$.
\EndIf
\If{$j=1$}
    \State \Return the one-part partition of $V(H)$.
\EndIf
\If{$H$ is disconnected}
    \State Solve the component dynamic program described above and \Return
    its partition.
\EndIf
\If{$j=2$}
    \State \Return a global minimum cut of $H$.
\EndIf
\If{$j=3$}
    \State \Return the algorithm of
    \cref{thm:weighted_three_cut}, with the amplification described
    above.
\EndIf
\State Construct an arbitrary $j$-partition and use it as the current best
candidate.
\State Compute the packing in
\cref{lem:dual_tree_packing} and sample $s_k$ record indices from its
normalized weights.
\State Materialize their spanning-tree extensions.
\State Fix $r\in V(H)$ and root every sampled tree at $r$.
\State Initialize an empty raw list $\mathcal R$.
\For{each sampled tree $T$}
    \State Apply \cref{lem:respecting_enumeration} with
    $p=j-1$ and retain the $M_j+1$ lightest rooted cuts, or all such cuts
    if fewer exist.
    \State Append the retained compact representations to $\mathcal R$.
\EndFor
\State Materialize and deduplicate $\mathcal R$, and let $\mathcal L$
be its $M_j+1$ lightest distinct cuts, or all distinct cuts if fewer exist.
\If{$|\mathcal L|=M_j+1$}
    \State Let $B:=\max_{A\in\mathcal L} c_H(A)$ and
    $U^*:=jB/2$.
    \If{$j=4$}
        \State Find a crossing pair in $\mathcal L$ and add the
        $4$-cut from \cref{lem:crossing_certificate_k34} to the
        candidate set.
    \Else
        \State Add the $j$-cut returned by
        $\textsc{OverflowCertificate}(H,j,U^*,\mathcal L)$ to the
        candidate set.
    \EndIf
\EndIf
\For{each $A\in\mathcal L$}
    \State Recursively compute a minimum $(j-1)$-cut of $H[A]$; if it
    is finite, combine it with the part $V(H)\setminus A$ and add the
    resulting $j$-cut to the candidate set.
    \State Recursively compute a minimum $(j-1)$-cut of
    $H[V(H)\setminus A]$; if it is finite, combine it with the part $A$
    and add the resulting $j$-cut to the candidate set.
\EndFor
\State \Return a minimum-value candidate.
\end{algorithmic}
\end{algorithm}

The crossing pair in the $j=4$ branch can be found without a specialized
data structure: materialize the $2h+1$ sides and test all pairs in $O(h^3)$ time.  Such a pair exists because a laminar family of nonempty
subsets of $V(H)\setminus\{r\}$ has fewer than $2h$ members.  Every cut
in $\mathcal L$ has value at most $B=2U^*/j$, so the crossing
certificate is applicable when $j=4$, while
\cref{cor:certifying_overflow_rule} is applicable when $j\ge5$.

\begin{lemma}[Reduction correctness]
\label{lem:randomized_kcut_correctness}
Let $H$ be connected and let $j\ge4$.  Suppose that all recursive 
calls made by \cref{alg:minimum_kcut} are correct.  Fix
a minimum $j$-cut of $H$, and let $S$ be one of its sides satisfying
$c_H(S)\le2\lambda_j(H)/j$.  If some sampled tree
$(j-1)$-respects the rooted orientation of $S$, then the invocation
returns a minimum $j$-cut.
\end{lemma}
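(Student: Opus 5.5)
Let $S_r$ be the rooted orientation of $S$ and $\lambda:=\lambda_j(H)$. The argument splits on whether $S_r$ survives the two truncations, mirroring \cref{lem:sampled_light_list_complete}. Every candidate the algorithm adds is an explicit $j$-cut of $H$, so its value is at least $\lambda$. It therefore suffices to exhibit one candidate of value at most $\lambda$.

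\emph{Case 1: $S_r\in\mathcal L$.} Write the fixed optimum as $(S,P_2,\ldots,P_j)$. Then $(P_2,\ldots,P_j)$ is a $(j-1)$-cut of $H[V(H)\setminus S]$. Its value is $\lambda-c_H(S)$, because the edges crossing the optimum are exactly $\delta_H(S)$ together with the edges between distinct $P_i$'s. Now $S_r$ is either $S$ or $V(H)\setminus S$, and the algorithm recurses on both $H[S_r]$ and $H[V(H)\setminus S_r]$. Hence one recursive call is on $H[V(H)\setminus S]$. By assumption it returns a minimum $(j-1)$-cut of value at most $\lambda-c_H(S)$. Combining it with the part $S$ gives a $j$-cut of value at most $c_H(S)+\lambda-c_H(S)=\lambda$.

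A small check is needed here. The recursive call must return a finite answer, which holds because $|V(H)\setminus S|\ge j-1$. The combination adds exactly $c_H(S)$: the combined partition's cut edges are $\delta_H(S)$ plus the internal edges of the recursive cut.

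\emph{Case 2: $S_r\notin\mathcal L$.} By hypothesis some sampled tree $T$ $(j-1)$-respects $S_r$, and $S_r$ is a nonempty rooted cut, so \cref{lem:respecting_enumeration} enumerates it for $T$. If $S_r$ was dropped by the per-tree truncation, then $T$ contributed $M_j+1$ distinct cuts of value at most $c_H(S)$ under the fixed tie-breaking. These all enter $\mathcal R$, so $\mathcal R$ has at least $M_j+1$ distinct cuts of value at most $c_H(S)$. If instead $S_r$ was dropped only by the global truncation, then $\mathcal L$ already consists of $M_j+1$ distinct cuts, none heavier than $S_r$. In both cases $|\mathcal L|=M_j+1$ and $B\le c_H(S)\le 2\lambda/j$, so $U^*=jB/2\le\lambda$.

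\begin{itemize}
\item For $j=4$, $\mathcal L$ contains $2h+1$ nonempty subsets of $V(H)\setminus\{r\}$, while a laminar family of such sets has fewer than $2h$ members. So a crossing pair exists. Both members have value at most $B=2U^*/4$, and \cref{lem:crossing_certificate_k34} gives a $4$-cut of value at most $U^*\le\lambda$.
\item For $j\ge5$, $|\mathcal L|=C_jh+1$ exceeds $C_jh$, and every member satisfies $c_H\le 2U^*/j$. So \cref{cor:certifying_overflow_rule} produces a $j$-cut of value at most $U^*\le\lambda$.
\end{itemize}

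The main point requiring care is the per-tree truncation case. One must confirm that the discarded $S_r$ implies the global list overflows with cuts no heavier than $S_r$. This uses deduplication: the $M_j+1$ cuts retained from a single tree are already distinct by the uniqueness in \cref{lem:normal_form_p_respecting}. It also uses that ties are broken consistently, so "lighter than $S_r$" really means value at most $c_H(S)$.
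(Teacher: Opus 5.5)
Your proposal is correct and follows the paper's proof essentially step for step. You use the same split on whether $S_r\in\mathcal L$, the same recursive completion of the side containing the other $j-1$ parts, and the same overflow argument: either truncation forces $|\mathcal L|=M_j+1$ with $B\le c_H(S)$, after which the crossing certificate ($j=4$) or the overflow certificate ($j\ge5$) yields a $j$-cut of value at most $U^*\le\lambda_j(H)$. Your extra checks, namely finiteness of the recursive answer and distinctness of the per-tree cuts, are details the paper leaves implicit.
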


\begin{proof}
Let $S_r$ be the side of the cut $(S,V(H)\setminus S)$ that excludes
the tree root.  Consider a sampled tree that $(j-1)$-respects $S_r$.
If $S_r\in\mathcal L$, one of the two recursive calls for $S_r$ cuts
the side containing the other $j-1$ optimum parts.  Combining that
recursive partition with the remaining optimum side produces a candidate
of value $\lambda_j(H)$.

Suppose instead that $S_r\notin\mathcal L$.  If it was discarded from the
list retained for its sampled tree, that local list contains $M_j+1$
distinct cuts of value at most $c_H(S)$.  If it survived locally
but was discarded globally, the same statement holds for the global
truncation.  In either case, $|\mathcal L|=M_j+1$ and its heaviest value
$B$ satisfies $B\le c_H(S)\le2\lambda_j(H)/j$.  Consequently
$U^*=jB/2\le\lambda_j(H)$.

When $j=4$, the crossing branch returns a $4$-cut of value at most
$U^*$.  When $j\ge5$,
\cref{cor:certifying_overflow_rule} returns a $j$-cut of value at
most $U^*$.  Since no $j$-cut has value below $\lambda_j(H)$, in both
cases this certificate has value $\lambda_j(H)$.  Thus the
candidate set again contains an optimum, and the algorithm returns one.
\end{proof}

\begin{lemma}[Work in one invocation]
\label{lem:randomized_kcut_invocation_time}
Consider a connected invocation with parameter $j\ge4$, $h$ vertices,
and $e$ edges.  Excluding its recursive calls, its running time is
\[
 k^{O(k^2)}\bigl(e\log^3N+s_k(h^2+h^{j-1})
                 +eh+h^3\log N\bigr).
\]
\end{lemma}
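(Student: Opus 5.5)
We bound the work of \cref{alg:minimum_kcut} line by line, outside its recursive calls, and match each step to one of the four terms. The constant-time steps (the arbitrary initial partition, the checks $|V(H)|<j$ and $j\in\{1,2,3\}$, fixing the root) cost $O(h+e)$. The packing of \cref{lem:dual_tree_packing} with the fixed $\eta=1/30$ costs $O(e\log^3 h)\le O(e\log^3N)$. Sampling $s_k$ record indices and materializing their spanning-tree extensions costs another $O(e\log^3N+s_kh)$. Together these give the $e\log^3N$ term, and $s_kh$ is absorbed into $s_kh^2$.

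Next comes the enumeration. For each sampled tree we build the capacity matrix of \cref{lem:capacity_matrix} in $O(e+h^2)$ time. We then run \cref{lem:respecting_enumeration} with $p=j-1$ and $M=M_j+1$, which takes $j^{O(1)}h^{j-1}$ time and $O(h^2+jM_j)$ space. The $e$ per tree is not absorbed into $e\log^3N$, so we bound it differently: $s_k e$ is at most $s_k h^2$ only for simple graphs. For multigraphs, the matrix construction scans the edge list once, and we would charge $s_k e$ as follows. Since $s_k=O(k^2\log N+k^3\log k)$, we have $s_k e\le k^{O(1)}e\log N\le k^{O(k^2)}e\log^3N$. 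So the tree loop costs $k^{O(k^2)}(e\log^3N+s_k(h^2+h^{j-1}))$.

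For the global truncation, the raw list $\mathcal R$ has $s_k(M_j+1)=s_kC_jh+O(s_k)$ entries, each stored by at most $j-1$ tree edges. Materializing them as incidence vectors costs $O(s_kC_jh^2)$. Radix-sorting to deduplicate costs the same. Linear-time selection of the $M_j+1$ lightest distinct cuts costs $O(s_kC_jh)$. Since $C_j\le(C'j\log j)^j=k^{O(k)}$, all of this fits in $k^{O(k^2)}s_kh^2$.

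In the overflow branch there are two cases. For $j=4$, testing all pairs among $2h+1$ materialized sides costs $O(h^3)$, and building the crossing certificate costs $O(h+e)$. For $j\ge5$, \cref{cor:certifying_overflow_rule} costs $k^{O(k^2)}(h^3\log h+eh)$, which gives the $eh+h^3\log N$ terms.

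Finally, the recursive-call loop forms $2|\mathcal L|=O(C_jh)$ induced subgraphs $H[A]$ and $H[V(H)\setminus A]$. Each takes $O(h+e)$ to build, for $k^{O(k)}h(h+e)\le k^{O(k)}(h^2+eh)$ in total. Combining each returned partition with the remaining part and evaluating its value costs $O(e+h)$ per candidate, so it is covered by the same bound. Selecting the minimum candidate is linear. Summing everything and absorbing $h^2\le s_kh^2$ gives the stated bound.

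The main obstacle is the step I was least sure of: the per-tree capacity-matrix cost $O(e+h^2)$ multiplied by $s_k$. My plan is the charging $s_ke\le k^{O(1)}e\log N$ above, which is legitimate since $s_k$ depends only on $k$ and $\log N$. An alternative is to build the matrix $W_T$ in a fixed vertex order only once, paying $O(e+h^2)$. Each tree then only permutes indices, reusing the matrix through a reordering and rebuilding prefix sums in $O(h^2)$. Either route stays within the claimed expression.
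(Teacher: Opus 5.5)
Your proposal is correct and follows essentially the same step-by-step accounting as the paper's proof. This includes the key point that the per-tree capacity-matrix cost is charged as $s_k e\le k^{O(1)}e\log N\le k^{O(k^2)}e\log^3N$ rather than to $h^2$, and your extra bookkeeping for building the induced subgraphs and combining the recursive partitions (which the paper leaves implicit) correctly fits inside the $k^{O(k)}(h^2+eh)$ part of the bound.
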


\begin{proof}
Computing the implicit packing and materializing the $s_k$
sampled spanning-tree extensions costs
\[
    O(e\log^3N+s_kh),
\]
since $\eta$ is fixed and $h\le N$.  For each of the $s_k$ materialized
trees, constructing the capacity matrix and applying
\cref{lem:respecting_enumeration} with $p=j-1$
costs $O(e+h^2+j^{O(1)}h^{j-1})$.  Materializing and deduplicating
$s_k(M_j+1)$ cuts costs $j^{O(j)}s_kh^2$.  Since
$s_ke\le k^{O(1)}e\log^3N$ and $s_kh\le s_kh^2$, these terms are
covered by the first two terms in the statement.

For $j=4$, the direct
crossing test costs $O(h^3)$.  For $j\ge5$,
\cref{cor:certifying_overflow_rule} costs
$j^{O(j^2)}(C_jh^3\log h+eh)$.  The bounds on $C_j$ and $j\le k$
give the stated expression.
\end{proof}

\begin{theorem}[Randomized reduction to Minimum $3$-Cut]
\label{thm:randomized_reduction_kcut}
For every fixed $k\ge3$, \cref{alg:minimum_kcut}
returns a minimum $k$-cut with probability $1-n^{-\Omega(1)}$.  On an
$n$-vertex, $m$-edge graph, it runs in
$k^{O(k^2)}n^{k-3}(m+n^2)\log^2 n$ time.  The randomness is confined to
the randomized $2$- and $3$-cut base cases and to sampling from the
tree-packing distributions used by the reduction.
\end{theorem}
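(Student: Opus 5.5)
We prove the two claims separately: correctness by induction on the recursion with a union bound over failure events, and the running time by solving a recurrence in which the parameter drops by one at each level.

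\textbf{Correctness.} We induct on $j$, proving that every invocation $\textsc{MinimumKCut}(H,j;k,N)$ is correct except on an event of probability at most $k^{-\Omega(k^2)}N^{-\Omega(k)}$.

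The base cases are as follows.
\begin{itemize}
\item For $j\le1$ or $|V(H)|<j$ the output is trivially correct.
\item For $j=2$ correctness follows from the amplified minimum-cut algorithm of~\cite{GMW20a}.
\item For $j=3$ it follows from \cref{thm:weighted_three_cut}, amplified as described.
\item For disconnected $H$, the component dynamic program is exact provided its recursive subcalls are correct.
\end{itemize}

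For a connected invocation with $j\ge4$, fix a minimum $j$-cut. Averaging $\sum_i c_H(P_i)=2\lambda_j(H)$ gives a side $S$ with $c_H(S)\le2\lambda_j(H)/j$. By \cref{lem:random_light_side_respecting}, with $\eta=1/30$, a single sample $(j-1)$-respects $S$ with probability $\Omega(1/j)$. Since the $s_k$ samples are independent, the failure probability is at most $k^{-\Omega(k^2)}N^{-\Omega(k)}$. Conditioned on this event and on correctness of all recursive calls, \cref{lem:randomized_kcut_correctness} gives an optimum.

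We then take a union bound over all invocations. The number of invocations is at most $k^{O(k^2)}N^{k}$, since each level branches into $O(M_j)=j^{O(j)}N$ children and there are at most $k$ levels, plus component-DP memoized subcalls. Choosing the constant in $s_k$ large enough makes the total failure probability $N^{-\Omega(1)}$. The main care needed here is to check that the per-invocation failure exponent $\Omega(k)\log N+\Omega(k^2)\log k$ dominates the logarithm of the invocation count; this is exactly why $s_k$ contains both the $k^2\log N$ and the $k^3\log(k+1)$ terms.

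\textbf{Running time.} Let $T_j(h,e)$ denote the time for an invocation with parameter $j$ on $h$ vertices and $e$ edges. We prove by induction on $j\ge3$ that
\[
T_j(h,e)\le k^{O(j^2)}h^{j-3}(e+h^2)\log^2N .
\]
The base case $j=3$ is \cref{thm:weighted_three_cut}, including the $k^{O(k^2)}$ amplification factor. Since $T_2$ is near-linear and components are memoized, the component dynamic program adds only a $k^{O(k)}$ factor and superadditive sums, which preserve the form of the bound.

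For the inductive step with $j\ge4$, \cref{lem:randomized_kcut_invocation_time} bounds the local work by
\[
k^{O(k^2)}\bigl(e\log^3N+s_k(h^2+h^{j-1})+eh+h^3\log N\bigr).
\]
The recursive work is at most $2(M_j+1)=j^{O(j)}h$ calls of $T_{j-1}$ on subgraphs with at most $h$ vertices and $e$ edges, contributing
\[
j^{O(j)}h\cdot k^{O((j-1)^2)}h^{j-4}(e+h^2)\log^2N
= k^{O(j^2)}h^{j-3}(e+h^2)\log^2N .
\]

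The main obstacle is the logarithmic factors in the local work. We have $s_k=\Theta(k^2\log N+k^3\log k)$, so $s_kh^{j-1}$ carries only one $\log N$; it is dominated by $h^{j-3}\cdot h^2\log N$ whenever $j\ge4$. The term $e\log^3N$, however, exceeds $h^{j-3}e\log^2N$ when $h$ is small. I would handle it by noting that for $j\ge4$ we have $h^{j-3}\ge h$, so $e\log^3N\le h^{j-3}e\log^2N$ whenever $h\ge\log N$. When $h<\log N$, the whole invocation costs $k^{O(k^2)}\mathrm{poly}(\log N)$ if we handle it instead by brute force over all $j^h$ partitions, at cost $j^{h}e$, which is $N^{o(1)}$ only if $h\le\log N/\log k$. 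If this brute-force fallback fails to fit, I would instead accept the $\log^2 n$ claim by observing that the top-level packing term $m\log^3N$ is charged only once per invocation and is absorbed because $n^{k-3}(m+n^2)\ge m\,n\ge m\log n$ for $k\ge4$. Similarly, $h^3\log N+eh\le h^{j-3}(e+h^2)\log N$. With the induction complete, setting $j=k$, $h=n$, $e=m$ yields the claimed $k^{O(k^2)}n^{k-3}(m+n^2)\log^2n$.

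Finally, the randomness is confined as stated: the only random steps are the base-case calls and the record sampling from the packings, because the packing construction of \cref{lem:dual_tree_packing}, the enumeration, and \textsc{OverflowCertificate} are all deterministic.
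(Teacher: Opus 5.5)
\textbf{Verdict.} Your correctness argument is essentially the paper's: fresh samples in each invocation, the $\Omega(1/j)$ respecting probability, a union bound over a bounded number of invocations, and then \cref{lem:randomized_kcut_correctness} by induction. The gap is in the running-time induction, and the patches you offer for it do not work.

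\textbf{Where the induction fails.} The hypothesis $T_j(h,e)\le k^{O(j^2)}h^{j-3}(e+h^2)\log^2N$ cannot be derived from \cref{lem:randomized_kcut_invocation_time}. The packing term $e\log^3N$ of a single invocation is not dominated by $h^{j-3}(e+h^2)\log^2N$ once $h$ is below $\log N$. For instance, with $j=4$ and $e\ge h^2$ the ratio is at least $\log N/(2h)$. Because the randomized algorithm never aggregates parallel edges, such a tiny subinstance $H[A]$ can still carry $e$ close to $m$.

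\textbf{Why neither patch repairs it.} The brute-force fallback modifies \cref{alg:minimum_kcut}, so it proves nothing about the algorithm in the statement. Its cost $j^he$ with $h$ near $\log N$ is also $N^{\Theta(\log j)}e$, which is not polylogarithmic. The second patch absorbs only the packing cost of the top-level call. There can be $k^{O(k^2)}n^{k-4}$ invocations with parameter $4$, each paying up to $m\log^3N$. Your count $k^{O(k^2)}N^k$ of all invocations is far too coarse to absorb that total.

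\textbf{The missing idea.} The packing work must be charged per recursion level rather than against each subinstance's own budget. The paper sums globally. Each parameter-$(j+1)$ invocation has $O(C_{j+1}n)$ children, so there are at most $k^{O(k^2)}n^{k-j}$ parameter-$j$ invocations, each on at most $m$ edges. Hence all packing work is at most $k^{O(k^2)}n^{k-4}m\log^3n$, which is at most $k^{O(k^2)}n^{k-3}m\log^2n$ because $\log n\le n$. Every other cost class is summed over invocations in the same way.

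\textbf{Two ways to keep your local induction.}
\begin{enumerate}
\item Strengthen the hypothesis for $j\ge4$ to $T_j(h,e)\le D_j\bigl(h^{j-3}(e+h^2)\log^2N+h^{j-4}e\log^3N\bigr)$, with $D_j=j^{O(j)}D_{j-1}+k^{O(k^2)}$. This choice of $D_j$ also fixes the mismatch between your $k^{O(j^2)}$ and the $k^{O(k^2)}$ local overhead. The induction then closes: the $3$-cut base case has no $\log^3$ term, and the branching factor $j^{O(j)}h$ turns $h^{j-5}$ into $h^{j-4}$. Only at the top level do you invoke $\log n\le n$.
\item Go back to \cref{lem:dual_tree_packing}. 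Its actual cost on an $h$-vertex instance is $O(e\log^3(2h))$, which \cref{lem:randomized_kcut_invocation_time} coarsened to $e\log^3N$. Since $\log(2h)\le 2h\le 2h^{j-3}$ for $j\ge4$, the extra logarithm is absorbed locally, and your induction closes essentially as written.
\end{enumerate}
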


\begin{proof}
Every minimum $j$-cut has a side $S$ with
$c_H(S)\le2\lambda_j(H)/j$, because the sum of the boundaries of
its $j$ sides is $2\lambda_j(H)$.  Conditional on the complete history
before a connected invocation, fix such a side of a fixed minimum cut.
The samples in the invocation are fresh, so
\cref{lem:random_light_side_respecting} and the definition of $s_k$
show that the hypothesis of
\cref{lem:randomized_kcut_correctness} fails with probability
at most $k^{-\Omega(k^2)}N^{-\Omega(k)}$.

Every invocation retains $O(C_jh)$ cuts and makes
$O(C_jh)$ recursive calls, all with parameter $j-1$.
Consequently, for a top-level
parameter $k$, the number of invocations with parameter $j$ is at most
$k^{O(k^2)}n^{k-j}$, and the total number of invocations is at most
$k^{O(k^2)}n^{k-3}$.  Increasing the absolute constant in $s_k$, and
the sampling constants in the base $3$-cut algorithm, makes a union bound
over all invocations $n^{-\Omega(1)}$.  Induction on $j$, using
\cref{lem:randomized_kcut_correctness}, proves correctness on
the complementary event.  Calls arising from the component dynamic program
do not increase these bounds: its connected components have disjoint vertex
and edge sets, and every recursive index is smaller than its parent index.

It remains to sum the running times.  At parameter $j\ge4$, 
tree enumeration contributes at most
$k^{O(k^2)}n^{k-j}\cdot n^{j-1}\log n
=k^{O(k^2)}n^{k-1}\log n$.

Since there are at most
$k^{O(k^2)}n^{k-j}$ parameter-$j$ invocations, the
$e\log^3N$ packing terms over all nonbase invocations sum to
at most
$k^{O(k^2)}n^{k-4}m\log^3 n$; the maximum occurs at $j=4$.
The $s_kh$ work for materializing the sampled trees is 
dominated by the $s_kh^2$ capacity-matrix work.  The capacity-matrix
terms are at most
$k^{O(k^2)}n^{k-2}\log n$.

Overflow calls occur only for $j\ge5$; their $eh$ terms sum to at most
$k^{O(k^2)}n^{k-4}m$, and their $h^3\log n$ terms sum to at most
$k^{O(k^2)}n^{k-2}\log n$.  Finally, there are at most
$k^{O(k^2)}n^{k-3}$ base $3$-cut calls.  By
\cref{thm:weighted_three_cut}, their total cost is at most
$k^{O(k^2)}n^{k-3}(m+n^2)\log^2 n$.

Since $\log n\le n$, the nonbase packing bound
$n^{k-4}m\log^3 n$ is dominated by
$n^{k-3}m\log^2 n$.  All other terms are likewise bounded by
$k^{O(k^2)}n^{k-3}(m+n^2)\log^2 n$.  The $2$-cut calls made inside
component dynamic programs cost $O(e\log^2 n)$ on disjoint components
and are dominated by the same bound.
\end{proof}

\section{Derandomizing the \texorpdfstring{$3$}{3}-Cut Algorithm}
\label{sec:derandomizing_three_cut}

This section removes the remaining randomness from
\cref{sec:faster_three_cut}.  The obstacle is that a
global minimum cut may have value below $\lambda_3(G)/2$ and therefore we cannot directly use the
guarantee of the skeleton of~\cite{HLRW24}.  We therefore use the unbalanced part of that
construction to locate an edge that can be contracted safely.  In the resulting quotient the
actual global minimum cut is at least $\lambda_3(G)/2$, after which the
ordinary skeleton lower guarantee is sufficient.

We first aggregate all parallel edges with the same endpoints.  Let
$\bar m\le\binom n2$ be the number of resulting weighted edges.  Aggregation
takes $O(m)$ time and preserves every cut; all routines below run on
the aggregated graph.

\subsection{Skeleton Construction}
\label{sec:constant_range_skeleton}

We need the following version of the intermediate and final
guarantees in~\cite{HLRW24} with constants adjusted.  The first graph in the statement is used only
to locate a contraction; the second is the skeleton on which we pack trees. We defer the proof to \cref{app:skeleton_construction}.

\begin{restatable}[Constant-range unbalanced sparsifier and skeleton]
    {theorem}{constantRangeSkeleton}
\label{thm:constant_range_skeleton}
Fix a constant $\alpha\ge1$ and let $0<\xi<1/10$.  Let $F$ be a connected
$n$-vertex, $m$-edge weighted graph with
$\lambda:=\lambda_2(F)>0$.  One can deterministically construct unweighted
multigraphs $J$ and $K$ on $V(F)$, together with scales $W_0,W>0$, in
$m(\log n/\xi)^{O_\alpha(1)}$ time.  The scales satisfy $W_0,W\le\lambda$
and
\begin{equation*}
 \max\left\{\frac{\lambda}{W_0},\frac{\lambda}{W}\right\}
 \le\left(\frac{\log n}{\xi}\right)^{O_\alpha(1)}.
\end{equation*}
For every cut $S$ satisfying $c_F(S)\le\alpha\lambda$, we have
$|W_0 c_J(S)-c_F(S)|\le\xi\lambda$.  Moreover, $K$ has at
most $n(\log n/\xi)^{O_\alpha(1)}$ edges, every nontrivial cut $S$
satisfies $W c_K(S)\ge(1-\xi)\lambda$, and every cut $S$ satisfying
$c_F(S)\le\alpha\lambda$ also satisfies
$W c_K(S)\le c_F(S)+\xi\lambda$.
\end{restatable}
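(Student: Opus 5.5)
We will follow the deterministic minimum-cut pipeline of~\cite{HLRW24}. The only changes are that every constant governing which cuts are approximated is replaced by the fixed constant $\alpha$, and every additive error is scaled to $\xi\lambda$. The construction has two stages. First we build an \emph{unbalanced} sparsifier $J$ that approximates all cuts of value at most $\alpha\lambda$ within additive error $\xi\lambda$. Then we build a sparse skeleton $K$ from $J$, using an expander hierarchy together with the unbalanced guarantee.

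\textbf{Stage 1: the unbalanced graph $J$.} Compute a $(\phi,\cdot)$-expander decomposition of $F$ with $\phi=(\xi/\log n)^{\Theta_\alpha(1)}$, deterministically. Inside each expander cluster $X$, replace the induced weighted graph by an unweighted expander multigraph scaled by $W_0$, with degrees rounded to integer multiples of $W_0$. We take $W_0:=\xi\lambda/(\log n/\xi)^{c}$ for a large constant $c=c(\alpha)$. The bound $\lambda/W_0\le(\log n/\xi)^{O_\alpha(1)}$ then holds by construction.

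The key structural fact is the trimming/shaving argument of~\cite{HLRW24}, which we redo with the constant $\alpha$ in place of the constant $2$ (or $1+o(1)$) used there. A cut $S$ with $c_F(S)\le\alpha\lambda$ must be unbalanced inside every $\phi$-expander cluster it meets. That is, it cuts off at most $O(\alpha\lambda/(\phi\lambda))$ volume, measured in units of minimum degree, which is at least $\lambda$. As a result, its value is determined up to $\xi\lambda$ by the boundary-edge classes and the rounded degrees. Inter-cluster edges are kept with their weights rounded to multiples of $W_0$. The cumulative rounding error on any such $S$ is at most $|S\text{-incidence}|\cdot W_0$, and $|S|$ in the relevant sense is $(\log n/\xi)^{O_\alpha(1)}$. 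The parameter $c$ is chosen so that this error is below $\xi\lambda$. This gives $|W_0c_J(S)-c_F(S)|\le\xi\lambda$. The running time is that of deterministic expander decomposition, namely $m(\log n/\xi)^{O_\alpha(1)}$.

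\textbf{Stage 2: the skeleton $K$.} We follow the final step of~\cite{HLRW24}. Each expander of $J$ is sparsified deterministically: replace it by a constant-degree expander on its vertex-volume copies, with degrees scaled so that its total degree is $\tilde O(n)$. The scale $W$ is chosen analogously. This yields $n(\log n/\xi)^{O_\alpha(1)}$ edges. We then need two guarantees.

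The upper bound for light cuts follows by composing with Stage 1. For $c_F(S)\le\alpha\lambda$, Stage 1 gives $W_0c_J(S)\le c_F(S)+\xi\lambda/2$ (after halving $\xi$). The expander replacement preserves unbalanced cuts up to another $\xi\lambda/2$.

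The global lower bound $Wc_K(S)\ge(1-\xi)\lambda$ for \emph{every} nontrivial cut is the main obstacle. It requires a separate argument for balanced cuts, which are not controlled by Stage 1. There are two cases. If $S$ is unbalanced in every cluster, its value is approximated as above and hence is at least $\lambda-\xi\lambda$. If $S$ is balanced in some cluster, the expansion of the replacement expander gives $Wc_K(S)\ge\phi\cdot(\text{volume})\cdot W$, and we choose parameters so that this exceeds $\lambda$. This is exactly the dichotomy used in~\cite{HLRW24}, and we will verify that its thresholds still hold when the unbalanced regime is extended to $\alpha\lambda$. All rounding is done downward for the lower bound, and $W,W_0\le\lambda$ is ensured by the choices made above.
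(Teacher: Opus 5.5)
\textbf{Verdict: genuine gap.} Stage~1 cannot give the additive $\xi\lambda$ guarantee in a weighted graph, and Stage~2 inherits that failure and adds its own.

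\textbf{Stage 1.} A single expander decomposition followed by local replacement and per-edge rounding fails for two concrete reasons.

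First, the value of a small set $U$ inside a cluster is $c_F(U)=\operatorname{vol}_F(U)-2c(E_F(U))$. It is \emph{not} determined by boundary classes and rounded degrees, and a degree-matched unweighted expander replacement forgets $c(E_F(U))$. For example, let $u,v$ have weighted degree $\lambda$ and be joined by an edge of capacity $\lambda/2$. Then $c_F(\{u,v\})=\lambda$, but nothing forces the replacement to contain about $\lambda/(2W_0)$ copies of $uv$. The error is therefore $\Theta(\lambda)$ for every choice of $W_0$.

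Second, your bound ``error $\le$ (number of incident edges)$\cdot W_0$, with a polylogarithmic count'' is false. Take two dense clusters joined by $n$ edges of capacity $\lambda/n$. The cut separating them has value $\lambda$ and meets every cluster trivially. Yet rounding each crossing edge to a multiple of $W_0=\lambda(\xi/\log n)^{O(1)}\gg\lambda/n$ sends its value to $0$.

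The missing idea is what the paper takes from \cite{HLRW24}:
\begin{itemize}
\item Build an $O(\log n)$-level hierarchy of $s_0$-strong clusters (Lemma~5.1), refined by modified Lemmas~6.1 and~6.10.
\item Prove a constant-range structure lemma that uncrosses every cut of value at most $(\beta-1)\lambda$ at each level. Here $\beta=\lceil\alpha\rceil+2$ and $\varepsilon\le\theta/(4c_\alpha\alpha L_0)$ with $\theta=\Theta_\alpha(\xi)$, so the $(1+c_\alpha\varepsilon)^{L_0}$ growth over $L_0=O(\log n)$ levels stays below $(\beta-1)\lambda$.
\item This represents every target cut as $S=\mathop{\triangle}_{v\in D}\overline v$ by hierarchy nodes of total degree $\Delta_\alpha=\lambda(\log n/\theta)^{O_\alpha(1)}$.
\item The deterministic sampling of Lemma~7.1, analyzed by Lemma~7.2, then has error $(\Delta_\alpha/\lambda)^2\eta\lambda$. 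This depends only on the representation degree, not on how many edges cross $S$.
\end{itemize}

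\textbf{Stage 2.} Your lower bound uses Stage~1 in the ``unbalanced'' case. But Stage~1 only addresses cuts with $c_F(S)\le\alpha\lambda$. A cut that is unbalanced in every cluster but heavy in $F$ is uncontrolled, and downward rounding can make it arbitrarily small in $K$.

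Building expanders on vertex-volume copies also leaves vertices outside $V(F)$, and you give no way to remove them without lowering some terminal cut. Moreover, a volume-based construction has roughly $m$, not $n$, times polylogarithmically many edges.

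The paper instead uses a dichotomy on the representation degree $d(S)$. Let $G'$ give every edge of $J$ weight $W_0$ and scale the extension graph $X$ by $Z$.
\begin{itemize}
\item If $d(S)\ge\lambda/Z$, Lemma~7.4 gives $Zc_X(S^\star)\ge\lambda$ for every extension.
\item Otherwise, Lemma~7.2 applied to a minimal representation gives $W_0c_J(S)\ge(1-\theta)\lambda$.
\end{itemize}
This holds for \emph{every} nontrivial cut, regardless of $c_F(S)$. Splitting off all Steiner vertices (Theorem~7.7 of \cite{HLRW24}) preserves this Steiner minimum cut and never raises target cuts. Finally, a Nagamochi--Ibaraki $r$-certificate with $rW\ge\lambda$ and $r>(\alpha+C_\alpha\theta)\lambda/W$ has $n(\log n/\xi)^{O_\alpha(1)}$ edges. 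It preserves target cuts exactly and keeps the lower bound for all cuts.
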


\subsection{Eliminating the Unique Cut Below \texorpdfstring{$\lambda_3/2$}{lambda3/2}}
\label{sec:eliminating_small_cut}

Write $\mu:=\lambda_2(G)$ and $\lambda:=\lambda_3(G)$.  Use the
deterministic minimum-cut algorithm of~\cite{HLRW24} for the three calls in
the strict-small preprocessing, and let $(R,V\setminus R)$ be the global
minimum cut used there.  The numerical and structural consequences of
failure of this preprocessing were proved in
\cref{lem:strict_small_failure_structure} and
\cref{cor:lambda_three_vs_lambda_two}.  It remains to show that
the intermediate graph $J$ exposes an edge that can be contracted without
destroying a fixed optimum.

\begin{lemma}[The intermediate sparsifier contains a safe contraction]
\label{lem:safe_contraction_exists}
Suppose $c_G(\mathcal P_{\mathrm{small}})>\lambda$ and
$\mu<\lambda/2$, and fix an optimum $3$-cut.  Let
$R\subsetneq A$, $D:=A\setminus R$, and
$x:=c(E_G(R,D))$ be as in
\cref{lem:strict_small_failure_structure}.  If $J$ is the
intermediate graph supplied by
\cref{thm:constant_range_skeleton} with $\alpha=3$ and
$\xi=1/1000$, then $J$ contains an edge with one endpoint in $R$ and
the other in $D$.  Contracting the endpoints of any such edge preserves
the fixed optimum.
\end{lemma}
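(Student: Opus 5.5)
The plan is to use the additive approximation of $J$ on the cuts $R$, $D$, and $A=R\cup D$, all of which have $c_G$-value at most $3\mu$. Since $R$ and $D$ are disjoint, inclusion--exclusion gives
\[
2c(E_G(R,D)) = c_G(R)+c_G(D)-c_G(A),
\]
and the same identity holds in $J$ with $c(E_J(R,D))$ counted with multiplicity. So it suffices to show $c(E_J(R,D))>0$. Multiplying the $J$-identity by $W_0$ and comparing with the $G$-identity, each of the three cut values is perturbed by at most $\xi\mu$. This yields
\[
2W_0\,c(E_J(R,D))\ge 2x-3\xi\mu .
\]

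Next I verify that the three cuts lie in the range $c_G(S)\le\alpha\mu$ with $\alpha=3$, where $\lambda_2(G)=\mu$ plays the role of $\lambda$ in \cref{thm:constant_range_skeleton}. Here $c_G(R)=\mu$, while \cref{lem:strict_small_failure_structure} gives $c_G(D)<3\mu$ and $c_G(A)<2\mu$. The same lemma gives $x>\mu/3$. Hence
\[
2W_0\,c(E_J(R,D))>2\mu/3-3\mu/1000>0,
\]
so $J$ has an edge between $R$ and $D$. The only delicate point is making sure the sparsifier guarantee is applied with the graph's own min-cut value $\mu$ rather than $\lambda_3$; that is the step I would double-check, but the constants leave ample slack.

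For safety of the contraction, both endpoints of such an edge lie in $R\cup D=A$, a single part of the fixed optimum $(A,B,C)$. Contracting them therefore maps the optimum to a $3$-cut of the quotient with the same value. Every cut of the quotient lifts to a cut of $G$ of the same value, so $\lambda_3$ does not decrease. Hence the image of the fixed optimum remains optimal, and it lifts back to the fixed optimum.
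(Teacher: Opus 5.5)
Your proposal is correct and follows the paper's proof essentially verbatim. You apply the additive guarantee of $J$ at scale $\lambda_2(G)=\mu$ to $R$, $D$, and $A$, and use the cut identity to get $2W_0c(E_J(R,D))\ge 2x-3\xi\mu>0$ from $x>\mu/3$. You then observe that both endpoints lie in the single optimum part $A$, which is exactly the paper's argument; your remark that quotient cuts lift with the same value, so the image of the optimum stays optimal, only spells out what the paper leaves implicit.
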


\begin{proof}
\cref{lem:strict_small_failure_structure} gives
$x>\mu/3$, $c_G(D)<3\mu$, and $c_G(A)<2\mu$; also
$c_G(R)=\mu$.  Thus the additive approximation guarantee for $J$
applies to $R,D,A$.  Since $A=R\mathbin{\dot\cup}D$, the cut identity
$2c(E_J(R,D))=c_J(R)+c_J(D)-c_J(A)$ and the three approximation bounds
give
$2W_0c(E_J(R,D))\ge2x-3\xi\mu>0$.
Hence $J$ contains an edge between $R$ and $D$.  Both endpoints lie
in the optimum part $A$, so contracting them preserves the fixed optimum.
\end{proof}

Construct the graph $J$ and let $\mathcal E$ be the set of distinct
endpoint pairs of its edges crossing $(R,V\setminus R)$.  Since $R$ is a
minimum cut, the upper guarantee gives
$|\mathcal E|\le c_J(R)\le(1+\xi)\lambda_2(G)/W_0=\log^{O(1)}n$.
Let $G_\bot:=G$, and for every $e=uv\in\mathcal E$ let $G_e:=G/uv$ be the
quotient obtained by identifying $u$ and $v$.

\begin{lemma}[One quotient has the required minimum-cut floor]
\label{lem:good_quotient}
Suppose $c_G(\mathcal P_{\mathrm{small}})>\lambda_3(G)$, and fix an optimum
$3$-cut of $G$.  Some
$a\in\{\bot\}\cup\mathcal E$ has both of the following properties: the
optimum survives in $G_a$ with the same value, and
$\lambda_2(G_a)\ge\lambda_3(G)/2$.  For this quotient also
$\lambda_2(G_a)\le2\lambda_3(G)/3$.
\end{lemma}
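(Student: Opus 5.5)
We split into two cases according to whether $\mu\ge\lambda/2$, where $\mu:=\lambda_2(G)$ and $\lambda:=\lambda_3(G)$.

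\emph{Case $\mu\ge\lambda/2$.} Take $a=\bot$, so $G_\bot=G$. The optimum survives trivially, and $\lambda_2(G)=\mu\ge\lambda/2$.

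\emph{Case $\mu<\lambda/2$.} Here the hypotheses of \cref{lem:strict_small_failure_structure} hold. Relabel the fixed optimum as $(A,B,C)$ and orient the global minimum cut so that $R\subsetneq A$, with $D:=A\setminus R$. By \cref{lem:safe_contraction_exists}, $J$ contains an edge $uv$ with $u\in R$ and $v\in D$. This edge crosses $(R,V\setminus R)$, so its endpoint pair is some $e\in\mathcal E$. Since $u,v\in A$, contracting $uv$ maps $(A,B,C)$ to a $3$-cut of $G_e$ with the same crossing edges, hence the same value. Moreover every cut of $G_e$ lifts to a cut of $G$ of equal value, so $\lambda_3(G_e)\ge\lambda$, and therefore $\lambda_3(G_e)=\lambda$.

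We now show $\lambda_2(G_e)\ge\lambda/2$. Suppose some cut $S$ of $G_e$ has $c_{G_e}(S)<\lambda/2=\lambda_3(G_e)/2$. Its lift is a cut of $G$ not separating $u$ and $v$, with the same value. It is strict-small in $G$, since $\lambda_3(G)=\lambda$. Also $R$ is strict-small in $G$, because $\mu<\lambda/2$. By \cref{lem:unique_strict_small_cut}, the lift equals $R$ up to complementation. But $R$ separates $u\in R$ from $v\notin R$, a contradiction. Hence every cut of $G_e$ has value at least $\lambda/2$.

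\emph{Upper bound $\lambda_2(G_a)\le 2\lambda/3$.} In both cases $G_a$ contains the optimum $3$-cut unchanged, so its sides remain cuts of $G_a$ with the same boundaries. Some side has boundary at most $2\lambda/3$, because the three boundaries sum to $2\lambda$. That side is a cut of $G_a$, which gives $\lambda_2(G_a)\le 2\lambda/3$.

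The only nontrivial step is the uniqueness argument showing that contracting $uv$ destroys every cut below $\lambda/2$. It works because the lift of such a cut is strict-small with respect to the unchanged value $\lambda_3=\lambda$, so \cref{lem:unique_strict_small_cut} forces it to coincide with $R$, which is impossible since $R$ separates $u$ and $v$.
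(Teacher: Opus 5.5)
Your proof is correct and follows essentially the same route as the paper: you take $a=\bot$ when $\lambda_2(G)\ge\lambda_3(G)/2$, and otherwise contract the $R$--$D$ edge supplied by \cref{lem:safe_contraction_exists}. Both arguments then use \cref{lem:unique_strict_small_cut} against $(R,V\setminus R)$ to rule out any cut below $\lambda_3(G)/2$ in the quotient, and bound $\lambda_2(G_a)$ from above by the surviving light optimum side.
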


\begin{proof}
If $\lambda_2(G)\ge\lambda_3(G)/2$, take $a=\bot$.  Otherwise the hypotheses
of \cref{lem:safe_contraction_exists} hold.  Choose the edge
$e=uv$ supplied by that lemma.  Its endpoints lie in the same optimum part,
so the optimum survives the contraction.

Any cut of $G_e$ lifts to a cut of $G$ that does not separate $u$ and $v$.
If its value were below $\lambda_3(G)/2$ then, because $e$ crosses $(R,V\setminus R)$, it and the global minimum cut
$(R,V\setminus R)$ would be two distinct cut bipartitions of value below
$\lambda_3(G)/2$, contradicting
\cref{lem:unique_strict_small_cut}.  Thus
$\lambda_2(G_e)\ge\lambda_3(G)/2$.  Finally, every optimum has a side of
boundary at most $2\lambda_3(G)/3$; this side survives in the chosen
quotient and upper-bounds $\lambda_2(G_a)$.
\end{proof}

\subsection{The Deterministic Tree List}
\label{sec:deterministic_skeleton_tree_list}

The sampled skeleton and the faster packing of
\cref{subsec:sampling_trees} are used only by the randomized
algorithm.  For deterministic applications, the general dual packing
routine of \cref{lem:dual_tree_packing}  is used in
$O(m\log^3 n/\eta^2)$ time.  Here we obtain the smaller list needed for
$3$-cut directly by packing edge-disjoint trees in the deterministic
skeletons.

For every $a\in\{\bot\}\cup\mathcal E$, apply
\cref{thm:constant_range_skeleton} to $G_a$ with $\alpha=5/3$ and
accuracy $\xi=1/1000$, and denote the final skeleton and its scale by
$K_a,W_a$.  Put $\mu_a:=\lambda_2(G_a)$ and
$b_a:=\max\{1,\lceil2W_a/(\xi\mu_a)\rceil\}$.  Replace every edge of
$K_a$ by $b_a$ parallel copies and replace $W_a$ by $W_a/b_a$, reusing
$K_a,W_a$ for the resulting graph and scale.  This preserves every scaled
cut value $W_ac_{K_a}(S)$.  Moreover, $b_a=O(1)$ because the original
scale satisfies $W_a\le\mu_a$, and the new scale satisfies
$W_a\le\xi\mu_a/2$.

Let $q_a:=\lfloor\lambda_2(K_a)/2\rfloor$.  By Nash-Williams'
theorem~\cite{NW61}, $K_a$ contains $q_a$ edge-disjoint spanning trees,
which we find deterministically using Gabow's algorithm~\cite{Gab95} in $O(m \lambda_2(K_a) \log n)$ time.
Applying the skeleton guarantees to a minimum cut of $G_a$ gives
\[
 (1-\xi)\frac{\mu_a}{W_a}
 \le\lambda_2(K_a)
 \le(1+\xi)\frac{\mu_a}{W_a},
\]
so $\lambda_2(K_a)=\Theta(\mu_a/W_a)=\log^{O(1)}n$.  Finally,
\[
 q_aW_a
 \ge\left(\frac{\lambda_2(K_a)}2-1\right)W_a
 \ge\frac{1-\xi}{2}\mu_a-W_a
 \ge\frac{1-2\xi}{2}\mu_a.
\]

Lift every packed tree to a reference tree on $V(G)$.  For $a=\bot$ nothing
is changed.  If $a=uv$, replace the contracted vertex by $u$, attach all its
incident tree edges to $u$, and add the edge $uv$ to the tree.  All lifted
edges are assigned capacity zero when they are not edges of $G$.  This is
permitted by \cref{lem:oracle_two_respecting_optimizer}.  Let
$\mathcal K$ be the union of all lifted lists.  Both the number of quotients
and the number of trees per quotient are $\log^{O(1)}n$, so
$|\mathcal K|=\log^{O(1)}n$.

\begin{lemma}[Deterministic coverage of the light optimum sides]
\label{lem:skeleton_three_cut_coverage}
Suppose $c_G(\mathcal P_{\mathrm{small}})>\lambda_3(G)$.  Fix an optimum
$3$-cut and let $S$ be one of its sides with
$c_G(S)\le2\lambda_3(G)/3$.  Some $T\in\mathcal K$ satisfies
$|\delta_T(S)|\le2$.
\end{lemma}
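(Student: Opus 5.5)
Fix the optimum $3$-cut and the light side $S$.  We work in the quotient supplied by \cref{lem:good_quotient}: choose $a\in\{\bot\}\cup\mathcal E$ so that the fixed optimum survives in $G_a$ with the same value $\lambda:=\lambda_3(G)$, and so that
\[
\lambda/2\le\mu_a\le2\lambda/3 .
\]
Since the optimum survives, $S$ is a union of vertices of $G_a$ and $c_{G_a}(S)=c_G(S)\le2\lambda/3$.  Moreover, lifting a tree of $K_a$ (replacing the contracted vertex by $u$ and adding $uv$, which lies inside $S$ or inside its complement) does not change the set of tree edges crossing $S$.  It therefore suffices to find one of the $q_a$ edge-disjoint spanning trees of $K_a$ that crosses $S$ at most twice.

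To apply the upper guarantee of \cref{thm:constant_range_skeleton} with $\alpha=5/3$, we check that $c_{G_a}(S)\le\alpha\mu_a$.  This holds because $c_{G_a}(S)\le2\lambda/3\le\tfrac43\mu_a\le\tfrac53\mu_a$, using $\mu_a\ge\lambda/2$.  The guarantee then gives
\[
W_ac_{K_a}(S)\le c_{G_a}(S)+\xi\mu_a\le\bigl(\tfrac43+\xi\bigr)\mu_a .
\]
The $q_a$ trees are edge-disjoint, so the total number of their crossings of $S$ is at most $c_{K_a}(S)$.  Every tree crosses $S$ at least once.  If every tree crossed $S$ at least three times, we would get $3q_a\le c_{K_a}(S)$.

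We now combine this with the bound $q_aW_a\ge\frac{1-2\xi}{2}\mu_a$ established before the lemma:
\[
\tfrac{3(1-2\xi)}{2}\mu_a\le3q_aW_a\le W_ac_{K_a}(S)\le\bigl(\tfrac43+\xi\bigr)\mu_a .
\]
This would require $1.5-3\xi\le\tfrac43+\xi$, which is false for $\xi=1/1000$.  Hence some tree $T$ has $|\delta_T(S)|\le2$, and its lift belongs to $\mathcal K$.

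The step that needs the most care is the choice of quotient.  Without the floor $\mu_a\ge\lambda/2$ the skeleton lower bound is too weak, and $S$ might fail the $\alpha$-range condition.  Both are supplied by \cref{lem:good_quotient}, together with the check that lifting the tree preserves the crossing count.
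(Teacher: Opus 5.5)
Your proof is correct and follows essentially the same route as the paper. You pass to the quotient of \cref{lem:good_quotient}, apply the skeleton upper guarantee with $\alpha=5/3$, and average crossing numbers over the edge-disjoint packing using $q_aW_a\ge\frac{1-2\xi}{2}\mu_a$. The only difference is that you normalize by $\mu_a$ rather than by $\lambda_3(G)$, so you never need the upper bound $\mu_a\le 2\lambda_3(G)/3$. This yields the ratio $\frac{8/3+2\xi}{1-2\xi}<3$ in place of the paper's $\frac{8(1+\xi)}{3(1-2\xi)}<3$.
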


\begin{proof}
Choose the quotient $G_a$ from \cref{lem:good_quotient}.  If a pair was
contracted, its endpoints lie in one optimum part, so $S$ induces a cut of
the same value in $G_a$.  Since
$\lambda_2(G_a)\ge\lambda_3(G)/2$, we have
$c_{G_a}(S)\le2\lambda_3(G)/3\le(4/3)\lambda_2(G_a)<(5/3)\lambda_2(G_a)$,
so the skeleton upper guarantee applies.  Using also
$\lambda_2(G_a)\le2\lambda_3(G)/3$, it gives
$\smash{W_a c_{K_a}(S)\le c_{G_a}(S)+\xi\lambda_2(G_a)
\le2(1+\xi)\lambda_3(G)/3}$.

The packed trees are edge-disjoint in $K_a$, and hence
$\sum_{T\subseteq K_a}|\delta_T(S)|\le c_{K_a}(S)$.  Moreover, \cref{lem:good_quotient} gives
$q_aW_a\ge(1-2\xi)\lambda_2(G_a)/2\ge(1-2\xi)\lambda_3(G)/4$.  Consequently,
\begin{equation*}
 \frac1{q_a}\sum_{T\subseteq K_a}|\delta_T(S)|
 \le\frac{W_a c_{K_a}(S)}{q_aW_a}
 \le\frac{2(1+\xi)\lambda_3(G)/3}
          {(1-2\xi)\lambda_3(G)/4}
 =\frac{8(1+\xi)}{3(1-2\xi)}<3,
\end{equation*}
where the final inequality holds for $\xi=1/1000$.  Since every crossing
number is integral, some packed tree crosses $S$ at most twice.  Lifting this
tree to $G$ preserves the crossing number because the contracted pair lies
in one optimum part.
\end{proof}

\begin{lemma}[The deterministic list contains a good tree]
\label{lem:deterministic_good_skeleton_tree}
Suppose $c_G(\mathcal P_{\mathrm{small}})>\lambda_3(G)$, and let $(A,B,C)$ be
an optimum $3$-cut in which $A$ is the unique medium side.  Some tree in
$\mathcal K$ is good for $(A,B,C)$ in the sense of
\cref{lem:constant_probability_good_tree}.
\end{lemma}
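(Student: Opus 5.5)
We mirror the averaging argument of \cref{lem:constant_probability_good_tree}, replacing the random sample by a uniform average over the edge-disjoint packing of the chosen quotient. We choose $a$ from \cref{lem:good_quotient}. The fixed optimum survives in $G_a$ with the same value, and $\mu_a:=\lambda_2(G_a)$ satisfies $\lambda_3(G)/2\le\mu_a\le2\lambda_3(G)/3$. Because the contracted pair lies in one optimum part, the sides $A,B,C$ induce cuts of $G_a$ with unchanged boundaries. Lifting a packed tree to $V(G)$ adds only the edge $uv$ inside an optimum part. Hence the lift has the same crossing numbers $d_A,d_B,d_C$ and the same $h$, and it suffices to find a good tree among the $q_a$ packed trees of $K_a$.

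The key inequality is the pointwise bound from the proof of \cref{lem:constant_probability_good_tree}:
\[
\mathbf 1[T\text{ good}]\ge\frac53-\frac{d_A+2d_B+2d_C}{9}.
\]
This bound is purely combinatorial. It therefore suffices to show that the uniform average of $d_A+2d_B+2d_C$ over the $q_a$ trees is strictly below $15$, since then the average of the indicator is positive. We need the skeleton upper guarantee for each side. Every optimum side has boundary below $5\lambda_3(G)/6$, as noted before \cref{lem:substantial_repair_capacity}. Using $\mu_a\ge\lambda_3(G)/2$, this is at most $(5/3)\mu_a$, which is exactly why $\alpha=5/3$ was chosen. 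The guarantee then gives $W_ac_{K_a}(X)\le c_G(X)+\xi\mu_a$ for $X\in\{A,B,C\}$. Edge-disjointness of the packing gives
\[
\sum_T(d_A+2d_B+2d_C)\le c_{K_a}(A)+2c_{K_a}(B)+2c_{K_a}(C).
\]
Combining these two facts,
\[
W_a\sum_T(\cdots)\le 4\lambda-c_G(A)+5\xi\mu_a\le\frac72\lambda+\frac{10\xi}{3}\lambda,
\]
where $\lambda:=\lambda_3(G)$ and we used $c_G(A)\ge\lambda/2$ together with $\mu_a\le 2\lambda/3$.

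Dividing by $q_aW_a\ge(1-2\xi)\mu_a/2\ge(1-2\xi)\lambda/4$ bounds the average by
\[
\frac{4(7/2+10\xi/3)}{1-2\xi}=\frac{14+40\xi/3}{1-2\xi}.
\]
At $\xi=1/1000$ this is about $14.07<15$. So the average of the indicator is positive and some packed tree is good; lifting it gives the tree in $\mathcal K$.

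The main obstacle is making sure the constants close. This requires that the $\xi\mu_a$ additive losses are charged against $\lambda$ via $\mu_a\le2\lambda/3$, and that the applicability threshold $5\lambda/6\le(5/3)\mu_a$ holds. Both follow from \cref{lem:good_quotient}, which is what the choices $\alpha=5/3$ and $\xi=1/1000$ were designed for. A second point to check is that the pointwise inequality needs $h\ge2$ for good trees; this holds because at least two sides are crossed by any spanning tree.
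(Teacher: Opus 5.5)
Your proposal is correct and follows the paper's proof essentially step for step: the same quotient from \cref{lem:good_quotient}, the same applicability check $5\lambda_3(G)/6<(5/3)\lambda_2(G_a)$, the same edge-disjointness bound leading to $(14+40\xi/3)/(1-2\xi)<15$, and the same observation that lifting preserves the degree triple. The differences are cosmetic. The paper concludes directly that a tree with $X_T<15$ must be good, so it never needs the $h\ge2$ half of your pointwise inequality; your stated reason for $h\ge2$ is also loose, since the real reason is that a spanning tree needs at least two edges to connect three parts. Finally, the constant is about $14.04$ rather than $14.07$, which is immaterial.
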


\begin{proof}
Choose the quotient $G_a$ from \cref{lem:good_quotient}.  Every optimum
side survives in $G_a$ with the same boundary.  As observed before
\cref{lem:substantial_repair_capacity}, each side has boundary
strictly below $5\lambda_3(G)/6$.  Since
$\lambda_2(G_a)\ge\lambda_3(G)/2$, this is strictly below
$(5/3)\lambda_2(G_a)$, so the skeleton upper guarantee applies to all three
sides.

For a packed tree $T$, let
$X_T:=|\delta_T(A)|+2|\delta_T(B)|+2|\delta_T(C)|$.  Edge-disjointness gives
\begin{align*}
 W_a\sum_{T\subseteq K_a}X_T
 &\le W_a\bigl(c_{K_a}(A)
          +2 c_{K_a}(B)+2 c_{K_a}(C)\bigr) \\
 &\le c_G(A)+2 c_G(B)+2 c_G(C)
          +5\xi\lambda_2(G_a).
\end{align*}
Because
$c_G(A)+ c_G(B)+ c_G(C)=2\lambda_3(G)$, the first three
terms equal $4\lambda_3(G)- c_G(A)$.  Moreover,
$c_G(A)\ge\lambda_3(G)/2$ because $A$ is medium, while
$\lambda_2(G_a)\le2\lambda_3(G)/3$ by
\cref{lem:good_quotient}.  Therefore
\begin{equation*}
 W_a\sum_{T\subseteq K_a}X_T
 \le4\lambda_3(G)-\frac{\lambda_3(G)}2
       +5\xi\frac{2\lambda_3(G)}3
 =\left(\frac72+\frac{10\xi}{3}\right)\lambda_3(G).
\end{equation*}
On the other hand,
$q_aW_a\ge(1-2\xi)\lambda_2(G_a)/2
\ge(1-2\xi)\lambda_3(G)/4$.  Dividing the preceding upper bound by this
packing lower bound gives
\begin{equation*}
 \frac1{q_a}\sum_{T\subseteq K_a}X_T
 \le
 \frac{(7/2+10\xi/3)\lambda_3(G)}
      {(1-2\xi)\lambda_3(G)/4}
 =\frac{14+40\xi/3}{1-2\xi}<15,
\end{equation*}
where again $\xi=1/1000$.  Hence some packed tree satisfies $X_T<15$.

Indeed, such a tree must be good.  If
$h:=|T\cap\delta_G(A,B,C)|\ge5$, then
$X_T=4h-|\delta_T(A)|\ge3h\ge15$.  If $h\le4$ but $T$ is not good, its
degree triple is $(1,3,4)$ or $(1,4,3)$, and in either case $X_T=15$.
Thus $X_T<15$ excludes every nongood pattern.  Finally, lifting the tree
preserves its degree triple because the added contraction edge lies inside
one optimum part.
\end{proof}

\subsection{The Deterministic \texorpdfstring{$3$}{3}-Cut Algorithm}
\label{sec:deterministic_three_cut}

We modify \cref{alg:three_cut}.  Use the
deterministic minimum-cut algorithm of~\cite{HLRW24} in the strict-small
preprocessing, omit the random skeleton and its packing, and replace
the sampled family $\mathcal T$ by the deterministically constructed
family $\mathcal K$.

Fix a common root $r$.  For every $T\in\mathcal K$, apply
\cref{lem:respecting_enumeration} with $p=2$, retaining
the $2n+1$ lightest nonempty rooted cuts that $2$-respect $T$, or all
of them if fewer exist.  Materialize the retained cuts, radix-sort their
incidence vectors to remove duplicates, and retain the globally lightest
$2n+1$ distinct cuts.  Order the resulting family $\mathcal F$ by
nondecreasing boundary value, using the same fixed tie-breaking rule as in
\cref{subsec:reporting_light_cuts}.  Use
\cref{lem:laminarity_test} to find its longest laminar prefix
$\mathcal L$ and construct $\mathcal P_{\mathrm{cross}}$ as in
that section.  Finally, use every tree in $\mathcal K$, rather than the
sampled family $\mathcal T$, in the completion stage.

\begin{theorem}[Deterministic Minimum $3$-Cut]
\label{thm:deterministic_three_cut}
A minimum $3$-cut of an $n$-vertex, $m$-edge weighted
multigraph with positive polynomially bounded integral capacities can be
found deterministically in
$O(m)+\widetilde O(n^2)$ time.
\end{theorem}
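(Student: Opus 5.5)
The proof has two halves: correctness, obtained by re-running the argument of \cref{lem:3_cut_algorithm_correct} with every probabilistic statement replaced by a deterministic one, and running time, obtained by recomputing \cref{lem:running_time} with the $O(\log n)$ sampled trees replaced by the $\log^{O(1)}n$ trees of $\mathcal K$.

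\textbf{Correctness.} Fix an optimum $3$-cut and argue by cases.

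If the optimum has a strict-small side, \cref{lem:small_side_completion} still applies. Its proof only needs some exact global minimum cut, and exact induced minimum cuts, both of which the deterministic algorithm of~\cite{HLRW24} provides. So $\mathcal P_{\mathrm{small}}$ is optimum.

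Otherwise $c_G(\mathcal P_{\mathrm{small}})>\lambda$, and \cref{lem:skeleton_three_cut_coverage} replaces \cref{lem:light_side_two_respected}: every light optimum side is $2$-respected by some tree of $\mathcal K$. The proof of \cref{lem:sampled_light_list_complete} used randomness only to obtain such a tree. Its truncation argument (local lists of size $2n+1$, then a global list of size $2n+1$, then the longest laminar prefix) is purely combinatorial. Hence, deterministically, either $\mathcal P_{\mathrm{cross}}$ is optimum or the rooted orientation of the light side lies in $\mathcal L$. This makes \cref{lem:two_light_side_complete} deterministic as well.

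In the remaining case $A$ is the unique medium side. \cref{lem:deterministic_good_skeleton_tree} supplies a good tree in $\mathcal K$, and \cref{lem:process_candidate_complete} is already deterministic given a good tree and the true side $A$. One point needs checking here. The lifted trees contain zero-capacity edges that are not in $G$, such as the added contraction edge $uv$. The capacity matrix, the repair-edge set of \cref{lem:deterministic_repair_set}, and the residual optimizer all query only $G$'s capacities on vertex sets defined by the tree. So \cref{lem:oracle_two_respecting_optimizer} and \cref{lem:capacity_matrix} apply to any spanning tree on $V$, not just subgraphs of $G$.

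\textbf{Running time.}
\begin{itemize}
\item Aggregating parallel edges costs $O(m)$. After that, every routine runs on $\bar m\le\binom n2$ edges, so any $\bar m\log^{O(1)}n$ term is $\widetilde O(n^2)$.
\item The three deterministic minimum-cut calls cost $\widetilde O(\bar m)$.
\item Building $J$ with $\alpha=3$ costs $\bar m\log^{O(1)}n$. It yields $|\mathcal E|=\log^{O(1)}n$ quotients.
\item For each quotient, \cref{thm:constant_range_skeleton} costs $\bar m\log^{O(1)}n$. Gabow's packing on $K_a$, which has $n\log^{O(1)}n$ edges and $\lambda_2(K_a)=\log^{O(1)}n$, costs $n\log^{O(1)}n$. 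Over all quotients this totals $\widetilde O(n^2)$.
\item $|\mathcal K|=\log^{O(1)}n$. For each tree, the capacity matrix and $2$-respecting enumeration cost $O(\bar m+n^2)$, and the shared residual optimizer over $O(n)$ candidates costs $O(n^2\log n)$.
\item Deduplicating $O(n|\mathcal K|)$ cuts and running the laminarity binary search cost $\widetilde O(n^2)$.
\end{itemize}
Summing gives $O(m)+\widetilde O(n^2)$.

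The step needing the most care is confirming that nothing in the $3$-cut pipeline assumed tree edges belong to $G$ or that there were only $O(\log n)$ trees. I would audit each earlier lemma for this. I would also confirm that the skeleton construction is invoked only on the aggregated graph, so that its $\bar m\log^{O(1)}n$ cost stays within $\widetilde O(n^2)$.
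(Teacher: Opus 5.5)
Your proposal is correct and follows essentially the same route as the paper: the strict-small case via the deterministically computed $\mathcal P_{\mathrm{small}}$, the truncation/crossing argument made deterministic by \cref{lem:skeleton_three_cut_coverage}, the unique-medium-side case via \cref{lem:deterministic_good_skeleton_tree} and \cref{lem:process_candidate_complete}, and the same $O(m)$ aggregation plus $\widetilde O(\bar m+n^2)$ accounting over the polylogarithmically many trees of $\mathcal K$. Your explicit check that the lifted zero-capacity tree edges are harmless is the same observation the paper makes when constructing $\mathcal K$.
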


\begin{proof}
Every candidate considered by the modified algorithm is an explicit
$3$-cut whose value is computed.  Fix an optimum partition.  If it
has a strict-small side, \cref{lem:small_side_completion} shows that
the deterministically computed candidate
$\mathcal P_{\mathrm{small}}$ is optimum.

Suppose otherwise, and choose an optimum side $S$ satisfying
$c_G(S)\le2\lambda_3(G)/3$.  By
\cref{lem:skeleton_three_cut_coverage}, some tree in $\mathcal K$
$2$-respects the rooted orientation $S_r$ of this cut.  If $S_r$ is
discarded by either the local or global truncation, then
$\mathcal F$ contains $2n+1$ distinct cuts of boundary at most
$c_G(S)$.  If $S_r\in\mathcal F\setminus\mathcal L$, then every
cut in the first nonlaminar prefix also has boundary at most
$c_G(S)$.  In either case, the crossing certificate constructed
from that prefix has value at most
$3 c_G(S)/2\le\lambda_3(G)$, and hence is optimum.  Otherwise
$S_r\in\mathcal L$.

It follows that the crossing and two-light-side arguments of
\cref{subsec:two_light_sides} are deterministic.  If they do not
recover the optimum, the optimum has a unique medium side $A$.  The
algorithm processes both sides of every cut in $\mathcal L$, so it
processes $A$.  By \cref{lem:deterministic_good_skeleton_tree}, some
tree in $\mathcal K$ is good for this optimum, and
\cref{lem:process_candidate_complete} then recovers it, including
the deterministically chosen repair edge.  Thus the returned candidate is
a minimum $3$-cut.

Aggregate parallel edges in $O(m)$ time, and let $\bar m\le n^2$ be the
number of remaining edges.  The deterministic strict-small preprocessing,
the graph $J$, and all quotient skeletons take
$\widetilde O(\bar m)$ time in total.  The family $\mathcal K$ has
polylogarithmic size.  For each $T\in\mathcal K$,
\cref{lem:respecting_enumeration} costs
$O(\bar m+n^2)$; materializing, deduplicating, and finding the longest
laminar prefix costs $\widetilde O(n^2)$ in total.

There are $O(n)$ candidate sides and only
$\log^{O(1)}n$ completion trees.  The capacity matrices, residual tags,
deterministic repair sets, shared residual-tree optimizations, and forest
completions therefore cost another $\widetilde O(\bar m+n^2)$.  Since
$\bar m\le n^2$, the total running time is
$O(m)+\widetilde O(n^2)$.
\end{proof}

\section{Derandomizing the \texorpdfstring{$k$}{k}-Cut Algorithm}
\label{sec:derandomizing_reduction_overview}

The deterministic reduction uses a specialized algorithm for $k=4$ because the bound on respecting some tree in the tree packing is worse than for $k \ge 5$. It then enumerates light cuts, using the entire packing support after spectrally sparsifying, and finds the best $k$-cut refining each for $k\ge5$.  We first combine the
constant-range skeleton with safe contractions to obtain a deterministic
$O(m)+\widetilde O(n^3)$-time $4$-cut algorithm.

\subsection{Deterministic \texorpdfstring{$4$}{4}-Cut}
\label{sec:deterministic_four_cut}

The spectral-sparsifier argument used below for $k\ge5$ yields only a
$3$-respecting tree when $k=4$.  Its full packing support may contain
$\widetilde O(n)$ trees, so enumerating the $3$-respecting cuts of every
support tree would take $\widetilde O(n^4)$ time.  We instead reuse the two
graphs from \cref{thm:constant_range_skeleton}.  We use a recursive approach; when the current global
minimum cut is sufficiently large, the final skeleton supplies only
polylogarithmically many trees, one of which $3$-respects a light optimum
side.  Otherwise, the intermediate graph supplies a polylogarithmic-size
set of contractions, one of which preserves a fixed optimum while destroying that global minimum cut similar to \cref{lem:safe_contraction_exists}.  The
strict-small cut bound limits the latter branching to three levels.

We use the following bound on strict-light cuts (alternatively we could use \cref{lem:strict_small_cut_bound}).

\begin{lemma}[Strict-small cut bound {\cite{GHLL22}}]
\label{lem:strict_small_cut_bound}
Let $G$ be a graph with $|V(G)|\ge k$, let
$\lambda:=\lambda_k(G)$, and fix a root $r\in V(G)$.  There are at most
$2^{k-2}-1$ distinct nonempty sets $S\subseteq V(G)\setminus\{r\}$
satisfying $c_G(S)<\lambda/(k-1)$.
\end{lemma}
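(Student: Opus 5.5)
We argue by contradiction via the small-cut certificate idea of \cref{lem:small_cut_certificate}, but with the sharper threshold $\lambda/(k-1)$ strict. Let $\mathcal S$ be the family of distinct nonempty rooted sets $S\subseteq V(G)\setminus\{r\}$ with $c_G(S)<\lambda/(k-1)$. Every such $S$ is a proper nonempty subset not containing $r$, so it determines a distinct cut bipartition. The plan is to show that the atoms generated by all of $\mathcal S$ number at most $k-1$. Then every member of $\mathcal S$, being rooted, is a nonempty union of atoms avoiding the atom of $r$. That gives at most $2^{(k-1)-1}-1=2^{k-2}-1$ choices, which is the bound.

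To bound the atoms, run the greedy refinement. Start from $\{V\}$ and repeatedly add a member of $\mathcal S$ that splits some current atom. Each added cut increases the atom count by at least one, so after $t$ additions there are at least $t+1$ atoms. Suppose at some point the atom count reaches at least $k$, and stop at the first such moment, with $t\le k-1$ selected cuts $S_1,\dots,S_t$. Coarsen the atom partition to exactly $k$ parts. Every crossing edge crosses some $S_i$, so the value is at most $\sum_{i\le t}c_G(S_i)<t\lambda/(k-1)\le\lambda$. This contradicts $\lambda=\lambda_k(G)$; the inequality is strict because each $c_G(S_i)$ is strictly below $\lambda/(k-1)$ and $t\ge1$. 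The hypothesis $|V(G)|\ge k$ guarantees that $\lambda$ is finite, so the contradiction is meaningful.

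Hence the process terminates with at most $k-1$ atoms, and no member of $\mathcal S$ splits any atom. So every $S\in\mathcal S$ is a union of final atoms. Since $r\notin S$, the atom containing $r$ is excluded, and $S\neq\emptyset$. This leaves at most $2^{k-2}-1$ possibilities, and distinctness of the sets finishes the count.

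The only delicate point is making sure the greedy bound really uses $t\le k-1$ cuts, rather than the trivial $t\le k$. This is why we stop at the first time the count reaches $k$: at most $k-1$ splits are needed to get from one atom to $k$ atoms. We must also observe that the final family has at most $k-1$ atoms, not $k$. That is exactly the contrapositive of the stopping argument, so no real obstacle remains beyond this bookkeeping.
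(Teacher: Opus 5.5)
Your proof is correct and is essentially the standard argument: it is the greedy atom refinement used in \cref{lem:small_cut_certificate}. The strict threshold rules out reaching $k$ atoms, and rooting plus nonemptiness give the count of $2^{k-2}-1$ nonempty unions of the at most $k-2$ non-root atoms (the paper itself gives no proof here and only cites \cite{GHLL22}).
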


For a graph $Q$ and a nontrivial cut $A\subsetneq V(Q)$, write
$\overline A:=V(Q)\setminus A$ and define
\begin{equation}
\label{eq:four_cut_refinement_value}
 \Psi_4(Q,A):=c_Q(A)+\min\bigl\{
 \lambda_3(Q[A]),\
 \lambda_2(Q[A])+\lambda_2(Q[\overline A]),\
 \lambda_3(Q[\overline A])
 \bigr\}.
\end{equation}
As in the preliminaries, an infeasible term is interpreted as $+\infty$.

\begin{lemma}[Optimal refinement of a cut]
\label{lem:optimal_four_cut_refinement}
$\Psi_4(Q,A)$ is the minimum value of a $4$-cut of $Q$ that refines
$(A,\overline A)$.  A partition attaining this value can be found
deterministically in $O(|E(Q)|)+\widetilde O(|V(Q)|^2)$ time.
\end{lemma}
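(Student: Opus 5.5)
The proof has two parts: a structural identity and an algorithmic realization.

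\emph{Structural part.} Let $\mathcal P$ be a $4$-cut of $Q$ that refines $(A,\overline A)$. Each part of $\mathcal P$ lies in $A$ or in $\overline A$, and both sides receive at least one part. So the number of parts inside $A$ is $a\in\{1,2,3\}$, and $\overline A$ receives $4-a$ parts. The edges crossing $\mathcal P$ split into three disjoint classes: those in $\delta_Q(A)$, those inside $A$ separating parts, and those inside $\overline A$ separating parts. Hence
\[
c_Q(\mathcal P)=c_Q(A)+c_{Q[A]}(\mathcal P|_A)+c_{Q[\overline A]}(\mathcal P|_{\overline A}).
\]
The restriction $\mathcal P|_A$ is an $a$-cut of $Q[A]$, and $\mathcal P|_{\overline A}$ is a $(4-a)$-cut of $Q[\overline A]$. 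This gives the lower bound
\[
c_Q(\mathcal P)\ge c_Q(A)+\lambda_a(Q[A])+\lambda_{4-a}(Q[\overline A]),
\]
using $\lambda_1=0$ and $\lambda_j=+\infty$ when there are too few vertices. Conversely, take any optimal $a$-cut of $Q[A]$ and any optimal $(4-a)$-cut of $Q[\overline A]$. Their union is a $4$-cut of $Q$ refining $(A,\overline A)$, and by the same identity it has exactly this value. Minimizing over $a=3,2,1$ gives the three terms of $\Psi_4$. This argument does not need the induced subgraphs to be connected, because $\lambda_j$ of a disconnected graph is still the minimum $j$-cut value.

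\emph{Algorithmic part.} First compute $c_Q(A)$ and split the edges into $E(Q[A])$ and $E(Q[\overline A])$ with one $O(|E(Q)|)$ scan. Then aggregate parallel edges in each induced graph, so each has at most $\binom{|V|}{2}$ edges. On each side, compute:
\begin{itemize}
\item a global minimum cut, using the deterministic algorithm of~\cite{HLRW24} in $\widetilde O(\bar m)$ time;
\item a minimum $3$-cut, using \cref{thm:deterministic_three_cut} in $O(\bar m)+\widetilde O(|V|^2)$ time.
\end{itemize}
Return the best of the three combinations, with infeasible terms set to $+\infty$ when a side has too few vertices.

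\emph{Main obstacle.} These subroutines were stated for connected graphs, but $Q[A]$ may be disconnected. To handle this, compute the connected components of each side. If a side has at least $j$ components, then $\lambda_j=0$, realized by grouping components. Otherwise, use the component dynamic program from \cref{sec:randomized_reduction_kcut} for $j\le3$. It needs only $\lambda_1$, $\lambda_2$ and $\lambda_3$ of each component, and since the components partition the vertices and edges, the total cost is still $O(|E(Q)|)+\widetilde O(|V(Q)|^2)$.
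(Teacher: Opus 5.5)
Your proof is correct and takes essentially the paper's approach: the paper also splits a refining $4$-cut by how its parts fall between $A$ and $\overline A$ (the $(3,1)$, $(2,2)$, $(1,3)$ cases), pays $c_Q(A)$ once, and obtains the algorithm from \cref{thm:deterministic_three_cut} and the deterministic minimum-cut algorithm of~\cite{HLRW24} on the edge-disjoint induced subgraphs. You also handle possibly disconnected induced subgraphs explicitly via the component dynamic program, a detail the paper leaves implicit, and you do so correctly.
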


\begin{proof}
A $4$-cut refining $(A,\overline A)$ distributes its four parts between
the two sides as $(3,1)$, $(2,2)$, or $(1,3)$.  After paying $c_Q(A)$ for
all edges between the two sides, the minimum additional costs in these
three cases are respectively $\lambda_3(Q[A])$,
$\lambda_2(Q[A])+\lambda_2(Q[\overline A])$, and
$\lambda_3(Q[\overline A])$.  This proves the formula.  The algorithmic
claim follows from \cref{thm:deterministic_three_cut} and the deterministic
minimum-cut algorithm of~\cite{HLRW24}; the induced subgraphs have
edge-disjoint edge sets.
\end{proof}

We next describe the tree list used at a recursive node.  Let $Q$ be a
connected quotient, put $\mu:=\lambda_2(Q)$, and apply
\cref{thm:constant_range_skeleton} with $\alpha=13$ and
$\xi:=1/1000$, obtaining $J,K,W_0,W$.  We ensure $W\le\xi\mu/2$ as in \cref{sec:deterministic_skeleton_tree_list}, again using $K,W$ for the modified graph and
scale. We then identically compute a packing $\mathcal T_Q$ of value $q = \lfloor \lambda_2(K)/2 \rfloor$ satisfying $q W \ge \frac{1}{2} (1 - 2 \xi) \mu$ by Gabow's algorithm.

\begin{lemma}[Coverage at a well-scaled quotient]
\label{lem:four_cut_skeleton_coverage}
Let $\lambda:=\lambda_4(Q)$, and suppose $\lambda\le3\mu$.  If $S$ is a
side of a minimum $4$-cut satisfying $c_Q(S)\le\lambda/2$, then some
$T\in\mathcal T_Q$ satisfies $|\delta_T(S)|\le3$.
\end{lemma}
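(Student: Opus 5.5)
This is an averaging argument over the edge-disjoint packing $\mathcal T_Q$, following the pattern of \cref{lem:skeleton_three_cut_coverage}. The approach is to bound the average crossing number $\frac1q\sum_{T\in\mathcal T_Q}|\delta_T(S)|$ strictly below $4$. Integrality then yields a tree with $|\delta_T(S)|\le3$.

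First we check that the skeleton upper guarantee of \cref{thm:constant_range_skeleton} applies to $S$. Since $c_Q(S)\le\lambda/2$ and $\lambda\le3\mu$, we have $c_Q(S)\le 3\mu/2\le13\mu=\alpha\mu$. Hence $Wc_K(S)\le c_Q(S)+\xi\mu$. The packed trees are edge-disjoint in $K$, which gives
\[
 \sum_{T\in\mathcal T_Q}|\delta_T(S)|\le c_K(S).
\]
Combining this with $qW\ge\tfrac12(1-2\xi)\mu$, we obtain
\[
 \frac1q\sum_{T}|\delta_T(S)|\le\frac{Wc_K(S)}{qW}\le\frac{c_Q(S)+\xi\mu}{(1-2\xi)\mu/2}\le\frac{\lambda/2+\xi\mu}{(1-2\xi)\mu/2}.
\]

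The step that needs care is converting $\lambda$ back to $\mu$. With $\lambda\le3\mu$, the bound becomes $(3/2+\xi)/((1-2\xi)/2)=(3+2\xi)/(1-2\xi)$, which is slightly above $3$. That is still strictly less than $4$ for $\xi=1/1000$. So the average crossing number is below $4$, and some $T\in\mathcal T_Q$ has $|\delta_T(S)|\le3$.

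Because $Q$ is already the quotient under consideration, no lifting is required here. The one point to double-check is that $S$ is a nontrivial cut of $Q$, so that the crossing numbers are well defined and $\alpha=13$ covers the range. Both hold, since $S$ is a side of a $4$-cut of $Q$.
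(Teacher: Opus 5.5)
Your proposal is correct and follows the paper's proof essentially line for line. Both check that the skeleton upper guarantee applies because $c_Q(S)\le 3\mu/2$, use edge-disjointness in $K$ together with $qW\ge\tfrac12(1-2\xi)\mu$ to bound the average crossing number by $(3+2\xi)/(1-2\xi)<4$, and conclude by integrality.
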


\begin{proof}
We have $c_Q(S)\le\lambda/2\le3\mu/2$, so the upper guarantee
applies and gives $Wc_K(S)\le c_Q(S)+\xi\mu\le(3/2+\xi)\mu$.
Because the trees in $\mathcal T_Q$ are edge-disjoint in $K$,
$\sum_{T\in\mathcal T_Q}|\delta_T(S)|\le c_K(S)$.  Combining this with
$q W \ge \frac{1}{2} (1 - 2 \xi) \mu$ yields
\begin{equation*}
 \frac1q\sum_{T\in\mathcal T_Q}|\delta_T(S)|
 \le
 \frac{(3/2+\xi)\mu}{(1-2\xi)\mu/2}
 =\frac{3+2\xi}{1-2\xi}<4.
\end{equation*}
The crossing numbers are integral, so one of the trees crosses $S$ at
most three times.
\end{proof}

The intermediate graph $J$ is used when the hypothesis of
\cref{lem:four_cut_skeleton_coverage} fails.

\begin{lemma}[Safe contraction for $4$-cut]
\label{lem:four_cut_safe_contraction}
Let $\mathcal P=(P_1,P_2,P_3,P_4)$ be a minimum $4$-cut of $Q$, of value
$\lambda$, and let $(R,\overline R)$ be a global minimum cut of value
$\mu$.  Suppose that $\mathcal P$ does not refine
$(R,\overline R)$ and that $\mu<\lambda/3$.  Then the intermediate graph
$J$ constructed above has an edge crossing $(R,\overline R)$ whose
endpoints belong to the same part of $\mathcal P$.
\end{lemma}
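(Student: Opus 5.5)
The plan is to follow the template of \cref{lem:safe_contraction_exists}. First I would reduce to a configuration in which one side of $(R,\overline R)$ lies strictly inside a single optimum part. Then I would find two disjoint sets in that part whose union is that part, all three cuts lying in the range $c_Q(\cdot)\le13\mu$ where the intermediate guarantee of \cref{thm:constant_range_skeleton} applies, with $\Omega(\mu)$ capacity between the two sets. The cut identity then forces an edge of $J$ between them, and such an edge crosses $(R,\overline R)$ with both endpoints in one part.

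\emph{Step 1 (structure).} Suppose parts of $\mathcal P$ meet both $R$ and $\overline R$ in such a way that each side meets at least two parts. Let $x_R$ and $x_{\overline R}$ be the capacities of $\mathcal P$-edges lying inside $R$ and inside $\overline R$, so $x_R+x_{\overline R}\le\lambda$. I would restrict $\mathcal P$ to $R$ and to $\overline R$ and coarsen to four parts, choosing $a'$ parts in $R$ and $b'$ in $\overline R$ with $a'+b'=4$. This gives $4$-cuts of value at most $\mu$ plus a fraction of $x_R+x_{\overline R}$. Averaging over the admissible coarsenings (and over which restricted parts to merge) should give a candidate of value at most $\mu+\tfrac23\lambda<\lambda$, using $\mu<\lambda/3$. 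That contradicts optimality. Hence one side, say $R$ after possible complementation, lies in a single part $P_1$. Since $\mathcal P$ does not refine $(R,\overline R)$, we get $R\neq P_1$, so $R\subsetneq P_1$. Put $D:=P_1\setminus R$ and $x:=c(E_Q(R,D))$.

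\emph{Step 2 (inequalities).} Mirroring \cref{lem:strict_small_failure_structure}, I would compare $\mathcal P$ with the $4$-cuts obtained by moving $D$ or $R$ into another part, or by using $R$ together with a $3$-cut of $\overline R$ induced by $\mathcal P$. Together with minimality of $R$ against the cut $D$, these should give $x\ge c\mu$ for an absolute constant $c$ (the analogue of $x>\mu/3$). They should also give $c_Q(D),c_Q(P_1)\le O(\mu)$, and concretely at most $13\mu$, which matches the choice $\alpha=13$. The bound on $c_Q(P_1)$ should come from $c_Q(P_1)=c_Q(R)+c_Q(D)-2x\le\mu+c_Q(D)$.

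\emph{Step 3.} With $c_Q(R)=\mu$ and the bounds on $c_Q(D)$ and $c_Q(P_1)$, the intermediate guarantee applies to $R$, $D$ and $P_1$. Hence
\[
2W_0\,c(E_J(R,D))=W_0\bigl(c_J(R)+c_J(D)-c_J(P_1)\bigr)\ge2x-3\xi\mu>0
\]
for $\xi=1/1000$, so $J$ contains an edge from $R$ to $D$. Such an edge crosses $(R,\overline R)$ and has both endpoints in $P_1$, as required.

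The main obstacle is Step 2, specifically bounding $c_Q(D)$ by a constant multiple of $\mu$. No a priori bound on $\lambda/\mu$ is available, so this must come from exchange arguments with $D$ (for instance moving $D$ into a neighbouring part and using that $R$ is a minimum cut). I would first attempt $c_Q(D)\le x+(\text{edges from }D\text{ to other parts})$, and bound the latter via the optimality comparison against moving $D$ into the part it is most connected to.
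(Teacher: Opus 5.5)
\textbf{The gap is in Step 1.} Its claim is false, not merely unproven, so the reduction to $R\subsetneq P_1$ cannot be carried out.

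Let $Q$ have vertices $a,b,c,d,e$ and unit-capacity edges $ab,ac,cd,ce,de$. A $4$-cut of a $5$-vertex graph places exactly one pair in a common part, so $\lambda_4(Q)=5-1=4$, and $\mathcal P=(\{a,c\},\{b\},\{d\},\{e\})$ is a minimum $4$-cut. The set $R=\{a,b\}$ has $c_Q(R)=1=\lambda_2(Q)$, so $\mu=1<\lambda/3$. Also $\mathcal P$ does not refine $(R,\overline R)$, because $\{a,c\}$ crosses it. Nevertheless $R$ meets two parts and $\overline R=\{c,d,e\}$ meets three, so neither side lies in a single part.

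Your restricted candidates are $(R,\{c\},\{d\},\{e\})$, of value $\mu+x_{\overline R}=4$, and $\{a\},\{b\}$ together with a bipartition of the triangle on $\{c,d,e\}$, of value $\mu+x_R+2=4$. Neither is below $\lambda$. More generally, when $R$ meets two parts and $\overline R$ meets three, these candidates only guarantee value at most $\mu+\min\{x_{\overline R},\,x_R+\tfrac{2}{3}x_{\overline R}\}\le\mu+\tfrac{3}{4}\lambda$. The example attains this bound with $\mu=\lambda/4$, so no averaging reaches $\mu+\tfrac{2}{3}\lambda$. The lemma's conclusion does hold here, via the edge $ac$. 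However, that edge splits the part $\{a,c\}$ into $\{a,c\}\cap R$ and $\{a,c\}\setminus R$; it does not separate all of $R$ from $P_1\setminus R$.

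\textbf{The missing idea.} Stop trying to place $R$ inside one part, and treat all four splits $P_i=L_i\mathbin{\dot\cup}U_i$ at once, where $L_i:=P_i\cap R$ and $U_i:=P_i\setminus R$. Put $x_i:=c(E_Q(L_i,U_i))$ and $x:=\sum_i x_i$. Let $y$ be the capacity of edges crossing both $(R,\overline R)$ and $\mathcal P$, so that $\mu=x+y$.

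Since $\mathcal P$ does not refine $R$, the common refinement has at least five nonempty atoms, and its value is $\lambda+x$. Take two atoms on the same side of $R$ that lie in different parts, and let $w$ be the capacity between them. Merging them, then merging arbitrarily down to four parts, gives a $4$-cut of value at most $\lambda+x-w$. Optimality of $\mathcal P$ therefore gives $w\le x$. There are at most $12$ such pairs, and together they carry all $\lambda-y$ capacity that crosses $\mathcal P$ but not $R$. Hence $\lambda-y\le 12x$, that is, $\lambda\le\mu+11x$.

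This single inequality removes both of your obstacles:
\begin{enumerate}
\item With $\lambda>3\mu$ it gives $x>2\mu/11$, so $x_i>\mu/22$ for some $i$.
\item With $x\le\mu$ it gives the a priori bound $\lambda\le 12\mu$ that you said was unavailable. Hence $c_Q(P_i)\le\lambda\le 12\mu$, and $c_Q(L_i),c_Q(U_i)\le c_Q(P_i)+x_i\le 13\mu$.
\end{enumerate}
Your Step 3 then applies verbatim with $(L_i,U_i,P_i)$ in place of $(R,D,P_1)$. It gives $2W_0\,c(E_J(L_i,U_i))\ge 2x_i-3\xi\mu>0$, so $J$ has an edge between $L_i$ and $U_i$; that edge crosses $(R,\overline R)$ and has both endpoints in $P_i$.
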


\begin{proof}
For $i\in[4]$, let $L_i:=P_i\cap R$, $U_i:=P_i\setminus R$, and
$x_i:=c(E_Q(L_i,U_i))$, and put $x:=\sum_i x_i$.  Let $y$ be the total
capacity of edges that cross both $(R,\overline R)$ and $\mathcal P$.
Every edge crossing $R$ is counted either by $x$ or by $y$, and hence
$\mu=x+y$.

Consider two nonempty atoms of the common refinement of $\mathcal P$ and
$(R,\overline R)$ that lie on the same side of $R$ but belong to distinct
parts of $\mathcal P$, and let $w$ be the capacity between them.  The
common refinement has value $\lambda+x$.  Merge the chosen pair of atoms
and then merge arbitrary pairs of atoms until four parts remain.  The
result is a $4$-cut of value at most $\lambda+x-w$.  Optimality of
$\mathcal P$ therefore implies $w\le x$.

There are at most $2\binom42=12$ such atom pairs.  Their interactions are
exactly the edges crossing $\mathcal P$ but not $(R,\overline R)$, whose
total capacity is $\lambda-y$.  Consequently
$\lambda-y\le12x$, or equivalently $\lambda\le\mu+11x$.  Since
$\lambda>3\mu$, it follows that $x>2\mu/11$, and hence some $i$ satisfies
$x_i>\mu/22$.  Moreover, $x\le\mu$ gives $\lambda\le12\mu$.  For this
index $i$, every edge of $\delta_Q(P_i)$ is cut by $\mathcal P$, and hence
\begin{equation*}
 c_Q(P_i)\le\lambda\le12\mu,
 \qquad
 c_Q(L_i),c_Q(U_i)\le c_Q(P_i)+x_i\le13\mu.
\end{equation*}
In particular, the additive guarantee for $J$ applies to all three cuts.
Using $P_i=L_i\mathbin{\dot\cup}U_i$ and the cut identity in $J$, we obtain
\begin{align*}
 2W_0c(E_J(L_i,U_i))
 &=W_0\bigl(c_J(L_i)+c_J(U_i)-c_J(P_i)\bigr)\\
 &\ge c_Q(L_i)+c_Q(U_i)-c_Q(P_i)-3\xi\mu\\
 &=2x_i-3\xi\mu>0,
\end{align*}
where the last inequality uses $\xi=1/1000<1/33$.  Thus $J$ has an edge
between $L_i$ and $U_i$.  This edge crosses $R$, while both endpoints lie
in the optimum part $P_i$.
\end{proof}

For a quotient $Q$, let $\mathcal E_Q$ be the set of distinct endpoint
pairs of the edges of $J$ crossing a fixed global minimum cut $R$.  The
upper guarantee for $J$ gives
\begin{equation}
\label{eq:four_cut_branching_bound}
 |\mathcal E_Q|\le c_J(R)
 \le\frac{(1+\xi)\mu}{W_0}
 =\log^{O(1)}|V(Q)|.
\end{equation}

\begin{algorithm}[H]
\caption{$\textsc{DeterministicFourCut}(G)$}
\label{alg:deterministic_four_cut}
\begin{algorithmic}[1]
\Require A connected positive-capacity graph $G$ with $|V(G)|\ge4$
\Ensure A minimum $4$-cut of $G$
\State Aggregate parallel edges and discard self-loops
\State Let $\mathcal P_{\mathrm{best}}$ be an arbitrary $4$-cut of $G$
\Procedure{Search}{$Q,d$}
    \If{$|V(Q)|<4$}
        \State \Return
    \EndIf
    \State Compute a global minimum cut $R$ of $Q$
    \State Compute a partition attaining $\Psi_4(Q,R)$ and use its lift to
           update $\mathcal P_{\mathrm{best}}$
    \State Construct $J,K$ and the packed tree family $\mathcal T_Q$ as above
    \State Fix a root $r_Q\in V(Q)$ and set $M_Q\gets2|V(Q)|+1$
    \State $\mathcal R_Q\gets\emptyset$
    \For{$T\in\mathcal T_Q$}
        \State Apply \cref{lem:respecting_enumeration} with $p=3$ and
               retain the $M_Q$ lightest nonempty rooted cuts of $Q$ that
               $3$-respect $T$, or all of them if fewer exist, using a
               fixed tie-breaking rule
        \State Append their compact representations to $\mathcal R_Q$
    \EndFor
    \State Materialize and deduplicate $\mathcal R_Q$ using exact
           capacities in $Q$
    \State Let $\mathcal L_Q$ be its $M_Q$ lightest distinct cuts, or all
           distinct cuts if fewer exist
    \For{$A\in\mathcal L_Q$}
        \State Compute a partition attaining $\Psi_4(Q,A)$ and use its lift
               to update $\mathcal P_{\mathrm{best}}$
    \EndFor
    \For{each pair $A,B\in\mathcal L_Q$ that crosses}
        \State Use the lift of the four-atom partition
               generated by $A$ and $B$ to update
               $\mathcal P_{\mathrm{best}}$
    \EndFor
    \If{$d<3$}
        \For{$uv\in\mathcal E_Q$}
            \State \Call{Search}{$Q/uv,d+1$}
        \EndFor
    \EndIf
\EndProcedure
\State \Call{Search}{$G,0$}
\State \Return $\mathcal P_{\mathrm{best}}$
\end{algorithmic}
\end{algorithm}

Every quotient in the algorithm retains its natural map to the original
vertices and is stored with parallel edges aggregated.  A cut or partition
is lifted by replacing every quotient vertex with its preimage.  The
capacity matrix and the respecting-cut enumeration are formed directly on
the quotient vertex set; only retained cuts and candidate partitions are
lifted.

\begin{theorem}[Deterministic Minimum $4$-Cut]
\label{thm:deterministic_four_cut}
A minimum $4$-cut of an $n$-vertex, $m$-edge weighted multigraph with
positive polynomially bounded integral capacities can be found
deterministically in $O(m)+\widetilde O(n^3)$ time.
\end{theorem}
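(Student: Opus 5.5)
We argue correctness of \cref{alg:deterministic_four_cut} by tracking one fixed minimum $4$-cut $\mathcal P$ of $G$ along a single root-to-leaf path of the search tree. We then bound the running time by counting nodes of the search tree and summing the per-node work.

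\textbf{Correctness.} We maintain the invariant that the current quotient $Q$ is obtained from $G$ by contracting only pairs lying inside parts of $\mathcal P$. Under this invariant $\mathcal P$ survives in $Q$ with the same value $\lambda:=\lambda_4(G)=\lambda_4(Q)$. At a node $Q$ with global minimum cut $R$ of value $\mu$ there are three cases.

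\emph{Case (i): $\mathcal P$ refines $(R,\overline R)$.} Then \cref{lem:optimal_four_cut_refinement} gives $\Psi_4(Q,R)\le\lambda$, so the update already records an optimum.

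\emph{Case (ii): $\mu\ge\lambda/3$.} Take a side $S$ of $\mathcal P$ with $c_Q(S)\le\lambda/2$; one exists by averaging. By \cref{lem:four_cut_skeleton_coverage} some $T\in\mathcal T_Q$ $3$-respects $S$, hence also its rooted orientation $S_r$. If $S_r\in\mathcal L_Q$, then $\Psi_4(Q,S_r)\le\lambda$ and we are done. Otherwise the truncation argument of \cref{lem:randomized_kcut_correctness} applies. The list $\mathcal L_Q$ is then full, with $2|V(Q)|+1$ cuts each of value at most $c_Q(S)\le\lambda/2$. A rooted laminar family is smaller than this, so some pair in $\mathcal L_Q$ crosses. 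By \cref{lem:crossing_certificate_k34} with $k=4$, the four-atom partition of that pair has value at most $2\cdot\lambda/2=\lambda$, and the pair scan records it.

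\emph{Case (iii): $\mu<\lambda/3$ and $\mathcal P$ does not refine $R$.} Then \cref{lem:four_cut_safe_contraction} provides $uv\in\mathcal E_Q$ with both endpoints in one part of $\mathcal P$. The child $Q/uv$ therefore keeps the invariant.

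It remains to show that case (iii) occurs at most three times along the path, so the depth cap $d<3$ never cuts it off. In case (iii), $R$ is a cut of value $<\lambda/3$, and its rooted orientation in $G$ is a strict-small set for $k=4$. After contracting $uv$, which crosses $R$, that set is destroyed in all descendants. Contractions only remove cuts, so the values of the surviving cuts are unchanged. The sets destroyed at successive case-(iii) nodes are therefore distinct sets with $c_G<\lambda/3$. \cref{lem:strict_small_cut_bound} with $k=4$ bounds their number by $2^{2}-1=3$. Consequently, at depth $3$ the path must be in case (i) or case (ii). This is the step I expect to need the most care: one must check that the quotient's global minimum cut lifts to a fixed rooted set of $G$ distinct from the earlier ones, and that $|V(Q)|\ge4$ persists, which holds because $\mathcal P$ survives.

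\textbf{Running time.} By \eqref{eq:four_cut_branching_bound} each node has $\log^{O(1)}n$ children and the depth is at most $3$, so there are $\log^{O(1)}n$ nodes. Each node costs $\widetilde O(\bar m+n^2)$ for the minimum cut, the skeletons, and Gabow's packing, since $\lambda_2(K)=\log^{O(1)}n$. It also costs $|\mathcal T_Q|\cdot O(n^3)$ for $3$-respecting enumeration via \cref{lem:respecting_enumeration}, and $\widetilde O(n^2)$ for deduplication. The $O(n)$ evaluations of $\Psi_4$ cost $\widetilde O(n^2)$ each by \cref{lem:optimal_four_cut_refinement} on the aggregated graph, $\widetilde O(n^3)$ in total. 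The crossing pair scan over $O(n^2)$ pairs costs $O(n)$ per pair. Altogether each node costs $\widetilde O(n^3)$, and together with the initial $O(m)$ aggregation the total is $O(m)+\widetilde O(n^3)$.
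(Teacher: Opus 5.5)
Your argument follows the paper's proof essentially step for step. It uses the same surviving-path invariant and the same three cases at each node: refinement of $R$ via \cref{lem:optimal_four_cut_refinement}; $\mu\ge\lambda/3$ via \cref{lem:four_cut_skeleton_coverage} plus the full-list crossing argument; otherwise a safe contraction from \cref{lem:four_cut_safe_contraction}. It also uses the same distinctness argument for the lifted minimum cuts feeding \cref{lem:strict_small_cut_bound} with bound $2^2-1=3$, and the same ``$\log^{O(1)}n$ nodes times $\widetilde O(n^3)$ per node'' accounting.

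\textbf{Missing step: connected inputs.} The one step you leave out is the reduction to connected inputs, which the paper performs before running the algorithm. \cref{alg:deterministic_four_cut} and \cref{thm:constant_range_skeleton} require a connected input; the skeleton theorem assumes $\lambda_2(F)>0$, so that the scales $W_0,W$ are positive and $R$ is a positive-value cut. The theorem, however, is stated for arbitrary multigraphs. You need to add the following:
\begin{itemize}
\item if $n<4$ there is no $4$-cut;
\item with $t\ge4$ components there is a zero-value $4$-cut;
\item with $t\in\{2,3\}$ components, minimize $\sum_a\lambda_{j_a}(G_a)$ over positive integers $j_a$ with $\sum_a j_a=4$.
\end{itemize}
In the last case every $j_a\le3$, so deterministic minimum-cut calls and \cref{thm:deterministic_three_cut} supply all values and partitions in $O(m)+\widetilde O(n^2)$ time. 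Contracting a connected graph keeps it connected, so the recursion is then well defined.

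\textbf{Smaller point: crossing-pair evaluation.} This is shared with the paper's own accounting. Your $O(n)$ per pair pays only for the crossing test; evaluating the atom partition of every crossing pair is not obviously that cheap. Your correctness argument never needs more than one pair, though. When $S_r$ is discarded, every member of $\mathcal L_Q$ has value at most $c_Q(S)$. Hence any single crossing pair already yields a $4$-cut of value at most $\lambda$. Evaluating one crossing pair per node takes $O(n^2)$ time, which preserves the $\widetilde O(n^3)$ bound.
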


\begin{proof}
If $G$ is disconnected, let its components be $G_1,\ldots,G_t$.  For
$t\ge4$, a zero-value $4$-cut is immediate.  For $t\in\{2,3\}$, minimize
$\sum_a\lambda_{j_a}(G_a)$ over positive integers $j_a$ with
$\sum_a j_a=4$.  Every $j_a$ is at most three, so the necessary values and
partitions are supplied by deterministic minimum-cut and $3$-cut
computations.  This takes $O(m)+\widetilde O(n^2)$ time.  We may therefore
assume that $G$ is connected and apply
\cref{alg:deterministic_four_cut}.

Every candidate considered by the algorithm is an explicit $4$-cut of a
quotient and therefore lifts to a $4$-cut of $G$ with the same value.  It
remains to prove that one considered candidate is optimum.

Fix a minimum $4$-cut $\mathcal P$ of $G$, of value $\lambda$.  We trace a
path in the recursion along which $\mathcal P$ survives.  At a node on this
path, let $Q$ be the current quotient, let $R$ be its global minimum cut,
and put $\mu:=c_Q(R)$.  Since every cut of $Q$ lifts to a cut of $G$, while
$\mathcal P$ induces a $4$-cut of $Q$ with value $\lambda$, we have
$\lambda_4(Q)=\lambda$.

If $\mathcal P$ refines $(R,V(Q)\setminus R)$, then
\cref{lem:optimal_four_cut_refinement} gives
$\Psi_4(Q,R)=\lambda$, so the algorithm considers an optimum candidate.
Suppose next that $\mu\ge\lambda/3$.  Choose a side $S$ of $\mathcal P$
with $c_Q(S)\le\lambda/2$.  By
\cref{lem:four_cut_skeleton_coverage}, some tree in $\mathcal T_Q$
$3$-respects the rooted orientation of this cut.

If that rooted cut belongs to $\mathcal L_Q$, its optimum refinement has
value $\lambda$ and is considered by the algorithm.  Otherwise it was
discarded either from the local list for its covering tree or from the
global list.
In either case, $\mathcal L_Q$ contains $M_Q=2|V(Q)|+1$ distinct rooted
cuts of capacity at most $c_Q(S)$.  A laminar family of nonempty subsets of
$V(Q)\setminus\{r_Q\}$ has fewer than $2|V(Q)|$ members, so two cuts
$A,B\in\mathcal L_Q$ cross.  Their four atoms form a $4$-cut of value at
most $c_Q(A)+c_Q(B)\le2c_Q(S)\le\lambda$.  It is therefore optimal and is
considered by the algorithm.

The remaining case is that $\mathcal P$ does not refine $R$ and
$\mu<\lambda/3$.  By \cref{lem:four_cut_safe_contraction}, some pair
$uv\in\mathcal E_Q$ lies in one part of $\mathcal P$.  The corresponding
child $Q/uv$ therefore preserves $\mathcal P$ and continues the path.

It remains to show that this last case cannot occur at all four depths
$0,1,2,3$.  Let $R_i$ be the lift to $G$ of the minimum cut encountered at
depth $i$ along the path.  Every $R_i$ has value below $\lambda/3$.
Moreover, these lifted cut bipartitions are distinct: the contraction
chosen after processing $R_i$ identifies a pair separated by $R_i$, while
every cut lifted from a later quotient keeps that pair together.  After
orienting all these cuts away from one fixed original vertex, four such
cuts would contradict \cref{lem:strict_small_cut_bound} with $k=4$.
Consequently, by depth three the algorithm considers an optimum candidate,
and $\mathcal P_{\mathrm{best}}$ is optimum when the algorithm terminates.

We now prove the running time.  After the initial $O(m)$ aggregation, every
quotient has at most $n^2$ weighted edges.  By
\eqref{eq:four_cut_branching_bound}, every recursion node has
$\log^{O(1)}n$ children, and the depth is at most three.  Thus the recursion
contains $\log^{O(1)}n$ nodes.

Consider a node on an $h$-vertex, $e$-edge quotient.  Constructing the two
skeleton graphs and their packed trees takes
$e\log^{O(1)}h$ time, and $|\mathcal T_Q|=\log^{O(1)}h$.  For each packed
tree, \cref{lem:respecting_enumeration} with $p=3$ takes
$O(e+h^2+h^3)$ time.  Hence all tree enumeration at the node costs
$\widetilde O(e+h^3)$.  Only $O(h)$ compact cuts are retained
before global truncation, so materializing and deduplicating them via radix sort costs
$O(h^2)$.  Testing every pair in $\mathcal L_Q$ for crossing
costs $O(h^3)$ by direct scans of their incidence vectors.

Finally, the algorithm computes $\Psi_4(Q,A)$ for $O(h)$ cuts.  By
\cref{lem:optimal_four_cut_refinement}, these calls take
$\widetilde O(he+h^3)=\widetilde O(h^3)$ time because $e\le h^2$ after
aggregation.  Thus every recursion node costs $\widetilde O(h^3)$, and the
polylogarithmic number of nodes costs $\widetilde O(n^3)$ in total.  Adding
the initial aggregation proves the theorem.
\end{proof}

\subsection{Deterministic Spectral Sparsification}
\label{sec:deterministic_spectral_sparsification}

The case $k\ge5$ will use the full support of a packing in the
deterministic spectral sparsifier of \cite{BSS12}, compensating for the increased number of trees with a better respecting guarantee.

\begin{definition}[Spectral sparsifier]
\label{def:spectral_sparsifier}
For a weighted graph $G=(V,E,c)$ and $0<\varepsilon<1$, a spectral sparsifier is a reweighted
subgraph $H$ such that
$(1-\varepsilon)x^\top L_Gx\le x^\top L_Hx\le
(1+\varepsilon)x^\top L_Gx$ for every $x\in\mathbb R^V$.
\end{definition}

In particular, every cut and every $j$-cut is preserved within the same
$(1\pm\varepsilon)$ factors.

\begin{definition}[Matrix multiplication exponent]
\label{def:mm_exponents}
The square matrix multiplication exponent $\omega$ is the infimum over all real numbers $\rho$ such that two $n\times n$ matrices can be multiplied using $n^{\rho}$ arithmetic operations.
\end{definition}

\begin{lemma}[Running time of the BSS construction]
\label{lem:bss-running-time}
Let $G$ be an $n$-vertex, $m$-edge weighted multigraph and let
$0<\varepsilon\le1/2$.  The deterministic algorithm of Batson, Spielman, and
Srivastava~\cite{BSS12} can be implemented to construct a
$(1\pm\varepsilon)$-spectral sparsifier with $O(n\varepsilon^{-2})$ edges in
\[
    O\!\left(m+n^{\omega+1}\varepsilon^{-2}\right)
\]
arithmetic operations.
\end{lemma}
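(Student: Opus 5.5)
We prove \cref{lem:bss-running-time} by implementing the barrier-potential method of~\cite{BSS12} directly, with $n\times n$ matrix inverses recomputed at each step through fast matrix multiplication. The argument has two stages: reduce the input, then bound the cost of each iteration.

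\textbf{Reduction to a normalized instance.} First aggregate parallel edges in $O(m)$ time, so the graph has $\bar m\le\binom n2$ edges. Restrict to the range of $L_G$, that is, the orthogonal complement of the all-ones vector on each connected component. Components can be handled independently, or we can work modulo the common null space. Compute $L_G^{+/2}$ by an eigendecomposition, or work implicitly with $L_G^{+}$; either costs $O(n^\omega)$ operations up to the usual caveats. Then form the normalized vectors $v_e:=L_G^{+/2}b_e\sqrt{c(e)}$, which satisfy $\sum_e v_ev_e^\top=I$ on the range. Forming all $v_e$ explicitly would cost $\bar m n=O(n^3)$, which the claimed bound permits since $n^3\le n^{\omega+1}$.

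\textbf{Per-iteration cost.} BSS runs $O(n\varepsilon^{-2})$ iterations. Each iteration maintains the current matrix $A$ together with barriers $u,\ell$, and needs the matrices $(uI-A)^{-1}$, $(uI-A)^{-2}$, $(A-\ell I)^{-1}$, and $(A-\ell I)^{-2}$. These are computed by inversion and multiplication in $O(n^\omega)$ operations. With $M_u$ and $M_\ell$ in hand, the iteration must evaluate the quadratic forms $v_e^\top M v_e$ for every edge. Write $V$ for the $n\times\bar m$ matrix whose columns are the $v_e$. The needed quantities are the diagonals of $V^\top MV$. Compute $MV$ with $O(n^\omega\cdot\bar m/n)=O(n^{\omega+1})$ operations by splitting $V$ into $\bar m/n\le n$ square blocks, then take columnwise inner products in $O(n\bar m)$. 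Selecting an edge satisfying the BSS condition and applying the rank-one update to $A$ are lower order.

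\textbf{Main obstacle and totals.} Multiplying $O(n\varepsilon^{-2})$ iterations by $O(n^{\omega+1})$ operations each gives $n^{\omega+2}\varepsilon^{-2}$, which is too slow. This per-iteration scan of all edges is therefore the expected bottleneck. To remove a factor of $n$, I would first try to maintain the products $M_uV$ and $M_\ell V$ incrementally. A rank-one update of $A$, together with the resulting barrier shift, changes $(uI-A)^{-1}$ by a low-rank correction via Sherman--Morrison. The shift $uI$ is scalar, however, so a barrier change does not preserve low rank.

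The alternative is to sparsify in two stages. First apply a cheap deterministic reduction to $\bar m=O(n)$ edges, or group the edges into $O(n)$ bundles, so that each iteration costs $O(n^\omega)$ and the total becomes $O(n^{\omega+1}\varepsilon^{-2})$. Concretely, partition the edges into groups of size $n$. Within each group, run BSS restricted to that group's vectors with constant accuracy; each group costs $O(n^{\omega}\cdot n)$ and there are $\bar m/n\le n$ groups. Chaining these sparsifications only degrades the accuracy by a constant factor. A final BSS run on the resulting $O(n)$-edge union then costs $O(n\varepsilon^{-2}\cdot n^\omega)$. Adding the $O(m)$ aggregation yields the stated bound.
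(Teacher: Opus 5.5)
Your reduction stage is fine, but the per-iteration bottleneck you then fight comes from treating the $v_e$ as arbitrary dense vectors. The idea you are missing is that each $v_e$ is the image of a $2$-sparse vector under one common matrix: $v_e=\sqrt{c(e)}\,L_G^{+/2}b_e$. For a symmetric $n\times n$ matrix $M$, put $N:=L_G^{+/2}ML_G^{+/2}$. Then $v_e^\top Mv_e=c(e)\,b_e^\top Nb_e=c(e)\,(N_{uu}+N_{vv}-2N_{uv})$ for $e=uv$. So you compute $N$ in $O(n^\omega)$ operations for each of your four barrier matrices, and then all $\bar m\le n^2$ quadratic forms cost $O(n^2)$ in total. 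An iteration therefore costs $O(n^\omega)$, and the $O(n\varepsilon^{-2})$ iterations give $O(m+n^{\omega+1}\varepsilon^{-2})$ directly, with no second stage. This is the paper's proof. The paper also avoids forming $L_G^{+/2}$, which would need an eigendecomposition and is not an exact $O(n^\omega)$ arithmetic computation. It instead uses the identities $v_e^\top(uI-A)^{-1}v_e=c(e)\,b_e^\top K_u^\dagger b_e$ and $v_e^\top(uI-A)^{-2}v_e=c(e)\,b_e^\top K_u^\dagger L_GK_u^\dagger b_e$, where $K_u:=uL_G-L_H$ and $L_H$ is the current weighted sum of selected edge Laplacians. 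The lower barrier is handled analogously with $K_\ell:=L_H-\ell L_G$.

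Your two-stage fallback does not work.
\begin{itemize}
\item \emph{No shrinkage.} A sparsifier of a group of only $n$ edges need not be any smaller than the group; a spanning tree cannot lose a single edge. So the union over the $\bar m/n$ groups can still have $\Theta(\bar m)$ edges, and your final run faces an instance of the original size.
\item \emph{Cost.} A BSS run in dimension $n$ performs $\Theta(n)$ iterations with an $n\times n$ inversion each, however few candidate edges it has. Hence $\bar m/n$ group runs at $O(n^{\omega+1})$ each cost $O(n^{\omega+2})$ when $\bar m=\Theta(n^2)$. That is exactly the factor $n$ you set out to remove: splitting the edges multiplies the number of expensive iterations instead of making them cheaper.
\item \emph{Deeper schemes.} A merge-and-reduce tree that genuinely halves the edge count would need $\Theta(\log n)$ levels. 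It has the same first-level cost problem, and constant accuracy per level then compounds to an $n^{\Theta(1)}$ distortion rather than a constant one.
\end{itemize}
The fix has to be inside each iteration, as described above.
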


\begin{proof}
Aggregate parallel edges and delete self-loops in $O(m)$ time, leaving
at most $n^2$ candidate edges.  It suffices to analyze one connected
component.  Associate with every edge $e$ the vector
$v_e:=\sqrt{w_e}L_G^{\dagger/2}b_e$.  Then
\[
    \sum_e v_e v_e^{\mathsf T}=I_{\operatorname{im}(L_G)}.
\]
The BSS algorithm performs $O(n\varepsilon^{-2})$ iterations, but these vectors
need never be formed explicitly.

Let $L_H$ be the current weighted sum of selected edge Laplacians and
$A:=L_G^{\dagger/2}L_HL_G^{\dagger/2}$.  For an upper barrier $u$, put
$K_u:=uL_G-L_H$.  Direct calculation on $\operatorname{im}(L_G)$ gives
\[
    v_e^{\mathsf T}(uI-A)^{-1}v_e
    =
    w_e b_e^{\mathsf T}K_u^\dagger b_e
\]
and
\[
    v_e^{\mathsf T}(uI-A)^{-2}v_e
    =
    w_e b_e^{\mathsf T}
    K_u^\dagger L_GK_u^\dagger b_e.
\]
The analogous identities hold at the lower barrier with
$K_\ell:=L_H-\ell L_G$.

In one iteration, compute the two pseudoinverses and the associated
products $K^\dagger L_GK^\dagger$ in $O(n^\omega)$ arithmetic
operations.  Once these dense matrices are available, every required
edge quadratic form has the form
$b_{uv}^{\mathsf T}Mb_{uv}=M_{uu}+M_{vv}-2M_{uv}$, and hence all
candidates can be checked in $O(n^2)$ time.  Thus one iteration costs
$O(n^\omega)$, and the total cost is
$O(m+n^{\omega+1}\varepsilon^{-2})$.

For a disconnected graph, apply the construction independently to its
components and use
$\sum_i n_i^{\omega+1}\le n^{\omega+1}$.
\end{proof}

\subsection{Deterministic Light-Cut Enumeration for
\texorpdfstring{$k\ge5$}{k at least 5}}
\label{sec:deterministic_light_cut_enumeration}

The reduction in \cref{sec:randomized_reduction_kcut} uses randomness
only to obtain a short collection of packing trees covering a light optimum
side.  We replace that collection deterministically.

Fix $\varepsilon=\eta=1/100$.  Compute the spectral sparsifier $H$ and then compute the
$k$-cut packing of \cref{lem:dual_tree_packing} in $H$.  Let
$\mathcal D_k$ be its full support.  As before,
$|\mathcal D_k|=\widetilde O(n)$.

\begin{lemma}[Deterministic coverage for $k\ge5$]
\label{lem:spectral_kcut_coverage}
Let $k\ge5$, and suppose
$c_F(S)\le2\lambda_k(F)/k$.  Some tree
$T\in\mathcal D_k$ satisfies
$|\delta_T(S)|\le k-2$.
\end{lemma}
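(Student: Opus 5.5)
We argue by averaging over the packing, exactly as in \cref{lem:random_light_side_respecting}, but applied to the sparsifier $H$ and with the sharper threshold $k-2$ in place of $k-1$. Write $\lambda:=\lambda_k(F)$, $\lambda_H:=\lambda_k(H)$, and let $(y,z)$ be the packing of \cref{lem:dual_tree_packing} computed in $H$, with $\tau:=\sum_Ty_T$ and $Z:=z(E(H))$. The plan is to show that the $y$-weighted average of $|\delta_T(S)|$ over $\mathcal D_k$ is strictly less than $k-1$. Since every crossing number is a positive integer, some support tree then satisfies $|\delta_T(S)|\le k-2$.

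The first step transfers cut values between $F$ and $H$. By \cref{def:spectral_sparsifier}, applied to indicator vectors, $c_H(S)\le(1+\varepsilon)c_F(S)\le 2(1+\varepsilon)\lambda/k$. Applied to partitions, it gives $\lambda_H\ge(1-\varepsilon)\lambda$. Putting these together,
\[
 c_H(S)\le\frac{2\rho}{k}\lambda_H,\qquad \rho:=\frac{1+\varepsilon}{1-\varepsilon}.
\]

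The second step applies the packing constraints. Summing \eqref{eq:dual_packing_load} over $e\in\delta_H(S)$ gives
\[
 \sum_Ty_T|\delta_T(S)|\le c_H(S)+z(\delta_H(S))\le c_H(S)+Z.
\]
Combining this with \eqref{eq:dual_packing_approx}, the weighted average of $|\delta_T(S)|$ is at most
\[
 (k-1)\,\frac{2\rho\lambda_H/k+Z}{(1-\eta)\lambda_H/2+Z}.
\]
The ratio is strictly below $1$ as long as $2\rho/k<(1-\eta)/2$. This holds for every $Z\ge0$, because the two fractions then share the same additive $Z$ and the numerator's constant term is smaller. For $k\ge5$ and $\varepsilon=\eta=1/100$ the condition reads $2\rho/5\approx0.408<0.495$, so the average is strictly less than $k-1$.

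The final step is integrality. If every tree in the support crossed $S$ at least $k-1$ times, the average would be at least $k-1$, which contradicts the previous step. Hence some $T\in\mathcal D_k$ satisfies $|\delta_T(S)|\le k-2$. Trees of $H$ are spanning trees on $V(F)$ and crossing numbers depend only on $T$ and $S$, so no lifting is needed.

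The main obstacle is the slack term $Z$. The load constraint allows $z_e$ on edges of $\delta_H(S)$, and it must not destroy the strict inequality. The uniform bound $z(\delta_H(S))\le Z$ handles this, and it is exactly where strictness $2\rho/k<(1-\eta)/2$ is needed rather than equality. This is also why $k=4$ fails: there $2/k=1/2$ leaves no room, and the argument yields only $k-1$.
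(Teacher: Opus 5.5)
Your proposal is correct and follows essentially the same argument as the paper. You transfer the bounds to $H$ via the spectral guarantee, obtaining $c_H(S)<\frac{1-\eta}{2}\lambda_H$ for $k\ge5$, then combine the summed load constraints (with slack bounded by $Z$) with \eqref{eq:dual_packing_approx}; this gives a weighted average crossing number strictly below $k-1$, and integrality yields a support tree with $|\delta_T(S)|\le k-2$.
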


\begin{proof}
Write $\lambda_H:=\lambda_k(H)$ and $Z:=z(E(H))$.  Spectral
approximation and $k\ge5$ give
$c_H(S) \le\frac{2(1+\varepsilon)}{k(1-\varepsilon)}\lambda_H <\frac{1-\eta}{2}\lambda_H$.
The packing guarantee now implies
$(k-1)\tau \ge\frac{1-\eta}{2}\lambda_H+Z >c_H(S)+Z$.
The load constraints give
$\tau\mathbb E[|\delta_T(S)|]\le c_H(S)+Z$, so the expectation is
strictly smaller than $k-1$.  Some support tree therefore crosses $S$
at most $k-2$ times.
\end{proof}

Fix a common root $r$.  For every $T\in\mathcal D_k$, apply
\cref{lem:respecting_enumeration} with $p=k-2$, retaining
the $C_kn+1$ lightest nonempty rooted cuts that $(k-2)$-respect $T$,
or all such cuts if fewer exist.  Materialize and deduplicate the retained
cuts, and let $\mathcal L_k$ consist of the globally lightest $C_kn+1$
distinct cuts, or all distinct cuts if fewer exist.

If $|\mathcal L_k|=C_kn+1$, let
$B:=\max_{A\in\mathcal L_k}c_F(A)$ and $U^*:=kB/2$, and add the
cut returned by
\[
    \textsc{OverflowCertificate}(F,k,U^*,\mathcal L_k)
\]
to the candidate
set.  This call is valid because every $A\in\mathcal L_k$ satisfies
$c_F(A)\le B=2U^*/k$.

\begin{lemma}[Cost of deterministic enumeration]
\label{lem:deterministic_kcut_light_enumeration_time}
After parallel edges are aggregated, the construction of
$\mathcal L_k$ and its overflow candidate, for $k\ge5$, takes
$k^{O(k^2)}n^{k-1}\log^{O(1)}n$ time.  Including aggregation, the bound is
$O(m)+k^{O(k^2)}n^{k-1}\log^{O(1)}n$.
\end{lemma}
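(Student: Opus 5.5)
The bound follows by adding up the cost of each stage of the construction separately. After aggregation we have $\bar m\le n^2$ weighted edges, so every stage can be charged as a function of $n$ alone. The stages, in order, are: the spectral sparsifier, the dual tree packing on it, streaming the support, per-tree respecting-cut enumeration with truncation, global deduplication and truncation, and the overflow certificate.

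\emph{Sparsifier and packing.} With $\varepsilon=1/100$, \cref{lem:bss-running-time} builds $H$ in $O(\bar m+n^{\omega+1})$ arithmetic operations, and $H$ has $O(n)$ edges. The operations are on numbers of polynomial bit complexity, so the cost is $\widetilde O(n^{\omega+1})$ bit operations. Since $\omega<3$ and $k\ge5$, this is $O(n^4)\le O(n^{k-1})$.

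Running \cref{lem:dual_tree_packing} on $H$ with $\eta=1/100$ takes $O(n\log^3 n)$ time. Its support $\mathcal D_k$ has $O(n\log^3 n)$ records, and streaming it costs $O(n)$ per materialized tree, for $O(n^2\log^3 n)$ in total.

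\emph{Enumeration.} For each $T\in\mathcal D_k$ I build the capacity matrix of \cref{lem:capacity_matrix} for $F$ in $O(\bar m+n^2)=O(n^2)$ time. I then apply \cref{lem:respecting_enumeration} with $p=k-2$ and $M=C_kn+1$, which costs $k^{O(1)}n^{k-2}$. The per-tree cost is therefore $O(n^2+k^{O(1)}n^{k-2})$. Multiplying by $|\mathcal D_k|=O(n\log^3 n)$ gives $k^{O(1)}n^{k-1}\log^3 n$, because $k\ge5$ makes $n^3\le n^{k-2}$. This product, the number of support trees times the per-tree enumeration, is the dominant term and exactly the $n^{k-1}$ in the statement.

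Processing trees one at a time keeps the working space at $O(n^2+kC_kn)$. I would also keep the global truncation running, using the same buffered linear-time selection as in \cref{lem:respecting_enumeration}, so that we never store all $|\mathcal D_k|(C_kn+1)$ compact cuts at once. Storing them all would still be acceptable for time, but not for memory.

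\emph{Deduplication and global truncation.} The retained cuts number $O(C_kn\cdot n\log^3 n)$. Materializing each as an incidence vector and radix-sorting costs $O(C_kn^3\log^3 n)$. With $C_k\le(C'k\log k)^k=k^{O(k)}$, this is within $k^{O(k^2)}n^{k-1}\log^{O(1)}n$ since $k-1\ge4>3$.

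Alternatively, I deduplicate inside the running buffer by sorting compact boundary-edge representations canonically and materializing only survivors. The one subtlety is that different trees represent the same cut by different boundary edges. Hashing incidence vectors is not allowed in a deterministic argument, so I would fall back on materialization plus radix sort, accepting the $n^3$ factor, which is affordable here.

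\emph{Overflow certificate.} If $|\mathcal L_k|=C_kn+1$, \cref{cor:certifying_overflow_rule} costs $k^{O(k^2)}(n^3\log n+\bar m n)=k^{O(k^2)}n^3\log n$. Adding all stages gives $k^{O(k^2)}n^{k-1}\log^{O(1)}n$ after aggregation, and the initial $O(m)$ aggregation gives the second bound.

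The main obstacle is bookkeeping rather than mathematics. Two points need care: the sparsifier's bit complexity must be folded into the polylog factor, and the per-tree capacity matrices must be built on $F$, since cut values are evaluated in $F$, not in $H$. The $O(n^2)$ cost per tree then stays below the $n^{k-2}$ enumeration term because $k\ge5$.
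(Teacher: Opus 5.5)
Your proposal is correct and follows the paper's proof stage for stage. The stages are: sparsifier plus packing; one $O(n^2)$ capacity matrix and a $k^{O(1)}n^{k-2}$ enumeration per support tree, which gives the dominant $n^{k-1}$ term; materialization and radix-sort deduplication in $k^{O(k)}n^3\log^{O(1)}n$; and the $k^{O(k^2)}n^3\log n$ overflow certificate. Your count $|\mathcal D_k|=O(n\log^3 n)$ is slightly more careful than the paper's $O(n)$, and the extra factor is absorbed into $\log^{O(1)}n$.

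The one loose step is the bit-complexity remark. Numbers of merely polynomial bit length would not give a polylogarithmic overhead. Like the paper, you should simply count the $O(\bar m+n^{\omega+1})$ arithmetic operations of \cref{lem:bss-running-time} as time.
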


\begin{proof}
The aggregated graph has at most $n^2$ edges.  The deterministic
sparsifier and its packing take
$\widetilde O(n^{\omega+1})+m$ time, and
$|\mathcal D_k|=O(n)$.  For each support tree,
\cref{lem:respecting_enumeration} with $p=k-2$ costs
$k^{O(1)}n^{k-2}$ time after its $O(n^2)$-time capacity matrix has been
constructed.  Over the full support this is
$k^{O(1)}n^{k-1}$.

There are at most $C_kn\,|\mathcal D_k|
 =C_kn^2$ retained compact cuts.  Materializing and
deduplicating them costs $k^{O(k)}n^3$.  The overflow
certificate costs $k^{O(k^2)}n^3\log^{O(1)}n$.  Since $k\ge5$, all
these terms, including $\widetilde O(n^{\omega+1})$, are dominated by
$k^{O(k^2)}n^{k-1}\log^{O(1)}n$.
\end{proof}

\subsection{The Derandomized Reduction}
\label{sec:derandomized_reduction}

As in \cref{sec:randomized_reduction_kcut}, every disconnected
subproblem is handled by a dynamic program before
light-cut enumeration.  All calls described below are therefore on
connected graphs.  The component subproblems are solved using the
deterministic base cases and the same derandomized recursion.

The deterministic reduction uses
\cref{thm:deterministic_four_cut} as its $k=4$ base case.
Consider an invocation on $F$ with parameter $j\ge5$.  Construct
the family $\mathcal L_j$ and, when it is full, its overflow candidate as
above.  Initialize the best candidate with an arbitrary $j$-cut and the
overflow candidate, when present.  For every $A\in\mathcal L_j$,
recursively compute a minimum $(j-1)$-cut of $F[A]$ and combine it with
the part $V(F)\setminus A$; also compute a minimum $(j-1)$-cut of
$F[V(F)\setminus A]$ and combine it with the part $A$.  Return the
lightest candidate.

\begin{lemma}[Correctness of the derandomized reduction]
\label{lem:derandomized_reduction_correct}
The reduction above returns a minimum $k$-cut deterministically for every
fixed $k\ge4$.
\end{lemma}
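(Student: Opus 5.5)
We argue by induction on the parameter $j$, proving that an invocation with parameter $j\in\{4,\ldots,k\}$ on any graph $F$ returns a minimum $j$-cut of $F$. Disconnected inputs are handled by the component dynamic program. It returns an optimal composition provided every subcall with smaller index is correct. Those subcalls use the deterministic $2$-cut algorithm of~\cite{HLRW24}, \cref{thm:deterministic_three_cut}, \cref{thm:deterministic_four_cut}, or the induction hypothesis, so we may assume $F$ is connected. The base case $j=4$ is exactly \cref{thm:deterministic_four_cut}. Trivial cases with $|V(F)|<j$ return $+\infty$ by convention.

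For the inductive step, take $j\ge5$ and fix a minimum $j$-cut $\mathcal P$ of $F$ with value $\lambda:=\lambda_j(F)$. The boundaries of its sides sum to $2\lambda$, so some side $S$ satisfies $c_F(S)\le2\lambda/j$. Let $S_r$ be its rooted orientation, which has the same boundary. By \cref{lem:spectral_kcut_coverage} applied with parameter $j$, some $T\in\mathcal D_j$ satisfies $|\delta_T(S_r)|\le j-2$. Hence $S_r$ appears among the enumerated rooted cuts of that tree, by \cref{lem:normal_form_p_respecting} and \cref{lem:respecting_enumeration}. Every candidate the algorithm considers is an explicit $j$-cut of $F$, so the returned value is at least $\lambda$. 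It therefore suffices to show that some candidate has value at most $\lambda$.

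We split into two cases. First suppose $S_r\in\mathcal L_j$. Either $S_r=S$, or $S_r$ is the complement of $S$. In both cases the other $j-1$ parts of $\mathcal P$ form a $(j-1)$-cut of $F[V(F)\setminus S]$, which is one of the two recursively processed induced graphs. By the induction hypothesis (or the $4$-cut base case when $j-1=4$), that recursive call returns a $(j-1)$-cut of value at most the value of this restricted partition. Combining it with the part $S$ gives a $j$-cut of value at most $c_F(S)+c(\delta(\mathcal P)\cap E(F[V\setminus S]))=\lambda$.

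Now suppose $S_r\notin\mathcal L_j$. Then $S_r$ was discarded by either the local truncation for its tree or the global truncation. In both situations, the truncated list contains $C_jn+1$ distinct cuts, each of boundary at most $c_F(S)$ under the fixed tie-breaking rule. So $\mathcal L_j$ is full, and its maximum value satisfies $B\le c_F(S)\le2\lambda/j$, which gives $U^*=jB/2\le\lambda$. By \cref{cor:certifying_overflow_rule}, \textsc{OverflowCertificate} returns a $j$-cut of value at most $U^*\le\lambda$, and this cut is in the candidate set. Since every step is deterministic, this completes the induction.

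The main point requiring care is the truncation step. The bound $B\le c_F(S)$ must use exact capacities in $F$ rather than sparsifier values: the sparsifier $H$ is used only to select trees, while enumeration and truncation are evaluated in $F$. Coverage, in contrast, needs only that $S_r$ is $(j-2)$-respected by some support tree.
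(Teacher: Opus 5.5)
Your proposal is correct and follows essentially the same argument as the paper: induction with \cref{thm:deterministic_four_cut} as the base case, coverage of the rooted light side by \cref{lem:spectral_kcut_coverage}, and the same two cases, namely recursive completion when $S_r\in\mathcal L_j$, and otherwise a full list with $B\le c_F(S)$, so that $U^*=jB/2\le\lambda_j(F)$ and the overflow certificate is optimal. The extra points you check explicitly, the component dynamic program for disconnected subcalls and the use of exact capacities in $F$ for truncation, are left implicit in the paper's proof but are consistent with it.
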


\begin{proof}
The claim follows by induction on $k$.  The base case is
\cref{thm:deterministic_four_cut}.  Suppose $k\ge5$, fix a
minimum $k$-cut, and choose one of its sides $S$ satisfying
$c_F(S)\le2\lambda_k(F)/k$.  By
\cref{lem:spectral_kcut_coverage}, some tree in $\mathcal D_k$
$(k-2)$-respects the rooted orientation $S_r$ of this cut.

If $S_r\in\mathcal L_k$, one of the two recursive calls associated with
$S_r$ cuts the side containing the other $k-1$ optimum parts.  By the
induction hypothesis, that call returns their minimum $(k-1)$-cut, so the
resulting candidate has value $\lambda_k(F)$.

Suppose $S_r\notin\mathcal L_k$.  If it was discarded locally, the list
retained for its support tree contains $C_kn+1$ distinct cuts of boundary
at most $c_F(S)$.  If it survived locally but was discarded
globally, the same statement holds for the global truncation.  Hence
$|\mathcal L_k|=C_kn+1$ and
$B\le c_F(S)\le2\lambda_k(F)/k$.  Therefore
$U^*=kB/2\le\lambda_k(F)$.  The overflow rule returns a $k$-cut of value
at most $U^*$, which must consequently have value exactly
$\lambda_k(F)$.  In either case the candidate set contains an optimum,
and every candidate is a valid $k$-cut, proving the claim.
\end{proof}

\begin{theorem}[Derandomized reduction to Minimum $3$-Cut]
\label{thm:derandomized_reduction}
For every fixed $k\ge4$, Minimum $k$-Cut can be solved deterministically
in
$k^{O(k^2)}(n^{k-4}m+n^{k-1})\log^{O(1)}n$ time.
In particular, this is
$k^{O(k^2)}n^{k-3}(m+n^2)\log^{O(1)}n$.
\end{theorem}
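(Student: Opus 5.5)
Correctness is \cref{lem:derandomized_reduction_correct}, so only the running time needs an argument. We bound it by counting invocations per parameter and summing per-invocation costs, as in the proof of \cref{thm:randomized_reduction_kcut}. Let $T_j(h,e)$ be the cost on a connected $h$-vertex, $e$-edge graph with parameter $j$. For $j=4$, \cref{thm:deterministic_four_cut} gives $T_4(h,e)=O(e)+\widetilde O(h^3)$. For $j\ge5$, the local work is $O(e)+j^{O(j^2)}h^{j-1}\log^{O(1)}h$ by \cref{lem:deterministic_kcut_light_enumeration_time}. In addition there are at most $2(C_jh+1)$ recursive calls with parameter $j-1$, on induced subgraphs $F[A]$ and $F[V(F)\setminus A]$ of at most $h$ vertices and $e$ edges.

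Unrolling from $k$ down to $4$ gives at most $\prod_{i=5}^{j'}2(C_in+1)\le k^{O(k^2)}n^{k-j'}$ invocations with parameter $j'$, using $C_i\le(C'i\log i)^i$. This holds because $\prod_{i\le k}C_i=k^{O(k^2)}$. Each invocation has at most $n$ vertices and at most $m$ edges.

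Summing the costs term by term:
\begin{itemize}
\item The enumeration term at level $j'\ge5$ contributes $k^{O(k^2)}n^{k-j'}\cdot n^{j'-1}\log^{O(1)}n=k^{O(k^2)}n^{k-1}\log^{O(1)}n$.
\item The aggregation term $O(e)$ contributes at most $k^{O(k^2)}n^{k-5}m$ at level $5$, and less at higher levels.
\item The base calls at $j=4$ number $k^{O(k^2)}n^{k-4}$, each costing $O(m)+\widetilde O(n^3)$. Together they give $k^{O(k^2)}(n^{k-4}m+n^{k-1})\log^{O(1)}n$.
\end{itemize}
This is the claimed bound. Since $n^{k-4}m+n^{k-1}\le n^{k-3}(m+n^2)$, the second form follows immediately.

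The main obstacle is accounting for the disconnected subproblems: the recursive calls on $F[A]$ may be disconnected and then go through the component dynamic program. I would argue as in \cref{thm:randomized_reduction_kcut}. The components have disjoint vertex and edge sets. Every index in the DP is at most the parent's, and indices strictly below it whenever there are at least two components. Memoizing each pair (component, index) adds only a $k^{O(k)}$ factor. Convexity of $h\mapsto h^{j-1}$ shows that the per-level sums over components are bounded by the connected case. Components whose index falls to $1$, $2$ or $3$ are handled by trivial work, the deterministic minimum cut of~\cite{HLRW24}, or \cref{thm:deterministic_three_cut}, costing $O(e)+\widetilde O(h^2)$. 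This is dominated by the $j=4$ bound.

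A smaller point to check is that the spectral-sparsifier cost $\widetilde O(n^{\omega+1})$ is absorbed into $n^{k-1}$ for $j\ge5$. This holds because $\omega+1<4\le j-1$, as already noted in \cref{lem:deterministic_kcut_light_enumeration_time}.
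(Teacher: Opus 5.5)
Your proposal is correct and follows the paper's proof: you bound the parameter-$j$ invocations by $k^{O(k^2)}n^{k-j}$, sum the per-invocation bound of \cref{lem:deterministic_kcut_light_enumeration_time} over the levels $j\ge5$, add the $k^{O(k^2)}n^{k-4}$ calls to \cref{thm:deterministic_four_cut}, and absorb the $n^{\omega+1}$ sparsifier cost using $\omega<3$; your treatment of the component dynamic program is more explicit than the paper's, which leaves it to the analogous remark in the randomized analysis. Two small fixes: the invocation count should be the product over the ancestors' parameters, $\prod_{i=j'+1}^{k}2(C_in+1)$, not $\prod_{i=5}^{j'}$; and the $O(e)$ cost of building each child's induced subgraph should be charged to that child, which keeps it inside your $n^{k-4}m$ term.
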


\begin{proof}
Every nonbase invocation with parameter $j$ retains $O(C_jn)$ cuts and
makes $O(C_jn)$ recursive calls, all with parameter $j-1$.  Thus the
number of parameter-$j$ invocations below a top-level parameter $k$ is
at most $k^{O(k^2)}n^{k-j}$.

By \cref{lem:deterministic_kcut_light_enumeration_time}, the
nonrecursive work of a parameter-$j$ invocation is
$O(m_H)+j^{O(j^2)}n_H^{j-1}\log^{O(1)}n$, where $H$ is its input graph.
Replacing local parameters by the top-level bounds and summing over all
parameter-$j$ invocations gives
$k^{O(k^2)}n^{k-1}\log^{O(1)}n$ for enumeration and
$k^{O(k^2)}n^{k-4}m\log^{O(1)}n$ for graph aggregation and edge scans.
There are at most $k^{O(k^2)}n^{k-4}$ parameter-$4$ leaves.  By
\cref{thm:deterministic_four_cut}, their total cost is at most
\[
    k^{O(k^2)}
    \bigl(n^{k-4}m+n^{k-1}\bigr)\log^{O(1)}n.
\]

For $j\ge5$, the total spectral-sparsifier cost is at most
$k^{O(k^2)}n^{k-4+\omega}\log^{O(1)}n$, which is dominated by the
$n^{k-1}$ term because $\omega<3$.  Combining the bounds proves the
theorem.
\end{proof}

\paragraph{Acknowledgments.}
The author thanks Jason Li for suggestions regarding the exposition of this paper. The author used OpenAI’s GPT 5.6 Sol on Max effort to assist with drafting and revising portions of the exposition. The author assumes responsibility for all content.

\bibliographystyle{alpha}
\bibliography{references}

\appendix

\section{Deferred Proofs for the Skeleton Construction}
\label[appendix]{app:skeleton_construction}

We restate the adjusted version of the skeleton theorem of \cite{HLRW24} for convenience.

\constantRangeSkeleton*

\begin{proof}
Put \(n:=|V(F)|\), \(m_{\mathrm{in}}:=|E(F)|\), and
\(\lambda:=\lambda_2(F)\).  We use the notation of~\cite{HLRW24}:
$d_H^W(v)$ and $\operatorname{vol}_H^W(X)$ denote weighted degree and
volume in a weighted graph $H$, $w_H(X,Y):=c(E_H(X,Y))$,
and, for pairwise disjoint sets $X_1,\ldots,X_t$,
$w_H(X_1,\ldots,X_t)$ denotes the total capacity of edges joining
distinct sets in this family.  All constants hidden below may depend on
$\alpha$.

Aggregate parallel edges and delete self-loops in \(O(m_{\mathrm{in}})\)
time, and let \(\bar m\) be the number of remaining edge pairs.  Thus
\[
    \bar m\le \min\left\{m_{\mathrm{in}},\binom n2\right\}.
\]
We compute \(\lambda\) on this aggregated graph using~\cite{HLRW24} and
set \(\beta:=\lceil\alpha\rceil+2\).  Replace each remaining capacity by
\(\overline c(e):=\min\{c(e),\beta\lambda\}\).
No cut of value at most $\alpha\lambda$ contains an edge whose capacity
is changed.  Conversely, every cut containing such an edge has truncated
value at least $\beta\lambda>\lambda$.  Thus truncation preserves both
the minimum-cut value and every target cut.  We again denote the resulting
graph by $F$.  It satisfies
\begin{equation}
\label{eq:initial_volume_concise}
    \bar m\le\binom n2,
    \qquad
    \operatorname{vol}_F^W(V(F))\le2\beta \bar m\lambda.
\end{equation}
After every contraction, we likewise delete self-loops and aggregate
parallel edges.  Consequently every quotient has simple support, although
an aggregated edge may have capacity greater than $\beta\lambda$.

Let $\theta>0$ be a sufficiently small constant multiple of $\xi$,
to be fixed at the end.  Consider a quotient $G_j$ in the decomposition
hierarchy.  Every nontrivial cut of $G_j$ lifts to a cut of $F$, and
hence
\begin{equation}
\label{eq:quotient_connectivity_concise}
    \lambda_2(G_j)\ge\lambda,
    \qquad
    d_{G_j}^W(v)\ge\lambda
\end{equation}
for every vertex of a nontrivial quotient.

Set
\[
    \widetilde\delta:=\frac{\beta\lambda}{1.1}.
\]
Add an auxiliary self-loop at each vertex whose weighted degree is below
$\widetilde\delta$, of the weight needed to raise its degree to
$\widetilde\delta$.  If $G_j^+$ denotes the padded graph, then
\[
    \operatorname{vol}_{G_j^+}^W(V(G_j))
    \le
    \operatorname{vol}_{G_j}^W(V(G_j))
      +|V(G_j)|\widetilde\delta
    =
    O_\alpha\!\left(
        \operatorname{vol}_{G_j}^W(V(G_j))
    \right)
\]
by~\eqref{eq:quotient_connectivity_concise}.  The loops alter no cut.
Moreover, deleting the padding can only decrease the volume of either
shore, so every cluster that is $s_0$-strong in $G_j^+$ is also
$s_0$-strong in $G_j$.

Apply Lemma~5.1 of~\cite{HLRW24} to $G_j^+$.  Its proof permits
$s_0=10^{11}\widetilde\delta\tau^2$ for every sufficiently large
$\tau$.  Taking
\begin{equation}
\label{eq:s0_concise}
    \tau=(\log n/\theta)^{C_\alpha},
    \qquad
    s_0=\lambda(\log n/\theta)^{O_\alpha(1)}
\end{equation}
therefore produces clusters that are $s_0$-strong for every cut of
value at most $1.1\widetilde\delta=\beta\lambda$.
Their total intercluster boundary is at most
\[
    O\!\left(
        \sqrt{\frac{\widetilde\delta\log n}{s_0}}
    \right)
    \operatorname{vol}_{G_j^+}^W(V(G_j))
    =
    O_\alpha\!\left(\frac{\sqrt{\log n}}{\tau}\right)
    \operatorname{vol}_{G_j^+}^W(V(G_j)),
\]
and the running time is $|E(G_j)|(\log n/\theta)^{O_\alpha(1)}$.
We next establish the two constant-range modifications needed to refine
these clusters.  In these two local arguments, write $G:=G_j$.

\smallskip
\noindent\emph{Small-cluster claim.}
For every cluster $A$ and every sufficiently small
$\varepsilon>0$, the routine of Lemma~6.1 of~\cite{HLRW24} can be
modified to produce a partition $A=A_1\mathbin{\dot\cup}\cdots\mathbin{\dot\cup}A_k$
with the following properties.  For every $S\subseteq A$ satisfying
$c_{G[A]}(S)\le\beta\lambda$, there is a partition $\mathcal P$ of
$\{A_1,\ldots,A_k\}$ such that, for every $P\in\mathcal P$, the set
\[
    S\cap A_P,
    \qquad
    A_P:=\bigcup_{B\in P}B,
\]
is non-$(1-\varepsilon)$-boundary-sparse in $A_P$.  Moreover,
\begin{equation}
\label{eq:small_boundary_concise}
    \sum_{i=1}^k c_G(A_i)
    \le
    \varepsilon^{-O_\alpha(1)}
    (\log |A|)^{O_\alpha(1)}c_G(A),
\end{equation}
and the routine runs in $|A|^{O_\alpha(1)}$ time.

We verify the claim by adjusting the two stages in the proof of
Lemma~6.1.  Put $\rho:=1/4$.  During the first stage, process an active
set $A'$ only when
\[
    c_G(A')>(2\beta+2)\lambda.
\]
First split along an induced cut of value at most $\rho\lambda$, if one
exists.  Otherwise the induced minimum cut is greater than
$\rho\lambda$, so Theorem~6.2 of~\cite{HLRW24}, used with the constant
approximation parameter $\beta/\rho$, lists every induced cut of value
at most $\beta\lambda$.  Split along any listed
$(1-\varepsilon)$-boundary-sparse cut, or declare $A'$ finished if
none exists.

For the current partition $\mathcal D$, define
\[
    \Phi(\mathcal D)
    :=
    \sum_{A'\in\mathcal D}
    \max\{0,c_G(A')-(2\beta+1)\lambda\}.
\]
Suppose $A'$ is split into $U$ and $A'\setminus U$, and write
\[
    x:=c_{G[A']}(U),
    \qquad
    y:=w_G(U,V(G)\setminus A').
\]
If both child boundaries are above the breakpoint, the change in
$\Phi$ is
\[
    2x-(2\beta+1)\lambda\le-\lambda.
\]
If both are below it, the change is at most
\[
    -\bigl(c_G(A')-(2\beta+1)\lambda\bigr)<-\lambda.
\]
In the mixed case, orient the split so that $U$ is the child below the
breakpoint.  The change is then $x-y$.  If $x\le\rho\lambda$, the
global minimum-cut lower bound gives $x+y=c_G(U)\ge\lambda$, and hence
\[
    x-y\le2x-\lambda\le-\lambda/2.
\]
Otherwise $U$ is boundary-sparse.  Since the first case is unavailable,
$x>\rho\lambda$, and boundary sparsity gives
\[
    x-y
    <
    x-\frac{x}{1-\varepsilon}
    =
    -\frac{\varepsilon x}{1-\varepsilon}
    <
    -\frac{\varepsilon\rho}{1-\varepsilon}\lambda.
\]
Thus every split lowers $\Phi$ by
$\Omega(\varepsilon\lambda)$, so there are
\[
    O\!\left(\frac{c_G(A)}{\varepsilon\lambda}\right)
\]
first-stage splits.  Every split has induced value at most
$\beta\lambda$, and therefore
\[
    \sum_{B\in\mathcal D_{\mathrm{final}}}c_G(B)
    =
    c_G(A)+2\sum_{\text{splits}}c_{G[A']}(U)
    =
    O_\alpha(\varepsilon^{-1}c_G(A)).
\]

For every finished piece of boundary at most
$(2\beta+2)\lambda$, continue splitting along induced cuts of value at
most $\rho\lambda$ until none remains.  On the descendants of a proper
piece, define
\[
    \Psi(\mathcal D)
    :=
    \sum_{B\in\mathcal D}
    \bigl(c_G(B)-2\rho\lambda\bigr).
\]
Every such $B$ is a nontrivial cut of $G$, so
$c_G(B)\ge\lambda$ and $\Psi\ge0$.  A split of induced value
$x\le\rho\lambda$ changes $\Psi$ by
\[
    2x-2\rho\lambda\le0.
\]
Furthermore,
\[
    c_G(B)
    \le
    \frac{c_G(B)-2\rho\lambda}{1-2\rho}.
\]
Thus this cleanup increases the total boundary by at most the constant
factor $(1-2\rho)^{-1}$, and every output piece has induced minimum cut
greater than $\rho\lambda$.  If the piece is all of $V(G)$, no first
cleanup split exists because
$\lambda_2(G)\ge\lambda>\rho\lambda$.  High-boundary pieces finished
in the first stage already have this induced connectivity and contain no
relevant boundary-sparse cut.

It remains to run the recursion of Lemma~6.4 on each cleaned
low-boundary piece.  Enumerate all its induced cuts of value at most
$\beta\lambda$ and maintain them exactly as in that lemma.  Put
$\sigma:=1/3$.  In case~(a), split along a maintained cut of induced
value at most $\sigma\lambda$.  If its value is $x$, the global
minimum-cut lower bound shows that each child boundary is at most the
parent boundary minus
\[
    a\lambda,
    \qquad
    a:=1-2\sigma>0.
\]
In case~(b), the chosen cut is boundary-sparse and has induced value
$x>\sigma\lambda$, so each child boundary is at most the parent
boundary minus
\[
    \delta_\varepsilon\lambda,
    \qquad
    \delta_\varepsilon
    :=
    \frac{\varepsilon\sigma}{1-\varepsilon}.
\]

After the pruning step of a recursive instance $A'$, define
\[
    \chi(A')
    :=
    \max_C c_{G[A']}(C\cap A'),
    \qquad
    q:=1-\frac{\sigma}{\beta},
\]
where the maximum ranges over the surviving maintained cuts, and set
$\chi(A'):=0$ if none survives.  As in Observation~6.8
of~\cite{HLRW24}, case~(b) selects a side $U$ with
$|U|\le|A'|/2$, and every maintained cut crossing $U$ also crosses
$A'\setminus U$.

Suppose first that $\chi(U)\ge q\chi(A')$, and let $C$ attain
$\chi(U)$.  Then
\[
\begin{aligned}
    c_{G[A'\setminus U]}(C\cap(A'\setminus U))
    &\le
    c_{G[A']}(C\cap A')-c_{G[U]}(C\cap U)\\
    &\le
    \chi(A')-\chi(U)
    \le
    (1-q)\chi(A')
    \le
    \sigma\lambda.
\end{aligned}
\]
Thus the complementary recursive instance immediately takes case~(a).
Otherwise $\chi(U)<q\chi(A')$.

For $c>\sigma\lambda$, define
\[
    h(c):=\min\{t:q^tc\le\sigma\lambda\}.
\]
Since $c\le\beta\lambda$, the maximum possible rank is
$O_\alpha(1)$.  Let $f_h(b,d)$ be the maximum possible number of
leaves of an instance whose boundary is at most $b$, whose size is at
most $d$, and whose rank is at most $h$.  The preceding observations
give
\[
    f_h(b,d)
    \le
    \max\left\{
        1,\;
        2f_h(b-a\lambda,d)+f_h(b,d/2),\;
        f_{h-1}(b,d)
          +f_h(b-\delta_\varepsilon\lambda,d)
    \right\}.
\]
For $h=0$, only case~(a) occurs, and
\[
    f_0(b,d)\le2^{O(b/\lambda)}.
\]
Repeating the double induction in Cases~1--3 of the proof of
Lemma~6.4, now over the $O_\alpha(1)$ possible ranks, gives
\[
    f_h(b,d)
    \le
    \left(
        1+\frac{C_\alpha b}{\varepsilon\lambda}
    \right)^{C_\alpha}
    (2+2\log_2d)^{C_\alpha(1+b/\lambda)}.
\]
Every input to this recursion has
$b\le(2\beta+2)\lambda$, and neither kind of split increases either
child boundary.  Hence each cleaned low-boundary piece contributes at
most
\[
    \varepsilon^{-O_\alpha(1)}
    (\log|A|)^{O_\alpha(1)}
\]
times its own boundary to the final sum.  Combining this with the
$O_\alpha(\varepsilon^{-1}c_G(A))$ first-stage boundary bound, and
retaining the finished high-boundary pieces unchanged, proves
\eqref{eq:small_boundary_concise}.  Theorem~6.2 uses
$|A|^{O_\alpha(1)}$ time for the constant approximation parameter
$\beta/\rho$, and all recursive pieces form a laminar family of size
$O(|A|)$.  Thus the entire modified routine takes
$|A|^{O_\alpha(1)}$ time, proving the small-cluster claim.

\smallskip
\noindent\emph{Large-cluster claim.}
The routine of Lemma~6.10 of~\cite{HLRW24} can be modified to partition
a cluster $A$ into $A_0,A_1,\ldots,A_r$ such that
\begin{equation}
\label{eq:large_claim_concise}
    \operatorname{vol}_G^W(A_i)
    \le
    \frac{C_\alpha s_0^2}{\varepsilon\lambda}
    \quad(i\ge1),
    \qquad
    w_G(A_0,A_1,\ldots,A_r)
    \le
    C_\alpha\varepsilon^{-1}c_G(A).
\end{equation}
Moreover, every $U\subseteq A$ satisfying
\[
    c_{G[A]}(U)\le(\beta-1)\lambda,
    \qquad
    \operatorname{vol}_G^W(U)\le s_0
\]
admits a set $U^*\subseteq U$, disjoint from $A_0$, such that
\begin{equation}
\label{eq:large_uncrossing_concise}
    w_G(U\setminus U^*,V(G)\setminus A)
      +c_{G[A]}(U^*)
    \le
    (1+\varepsilon)c_{G[A]}(U).
\end{equation}
The routine runs in
\[
    \widetilde O\!\left(
        \left(
            \frac{s_0}{\varepsilon\lambda}
        \right)^{O_\alpha(1)}
        |E(G[A])|
    \right)
\]
time.

We describe the necessary changes to Section~6.2 of~\cite{HLRW24}.
Retain the thresholds
$0.1\lambda$, $0.2\lambda$, and $0.4\lambda$ used there.  In the
analogue of Lemma~6.12, set
\[
    \varpi:=\frac{\lambda^3}{50s_0^2}
\]
and replace every capacity $c(e)$ by
$\lfloor c(e)/\varpi\rfloor$.

If \(U\) is a set under consideration, then
\[
    |U|
    \le \frac{\operatorname{vol}_G^W(U)}{\lambda}
    \le \frac{s_0}{\lambda}.
\]
Since the support of \(G\) is simple, every cut of \(G[U]\) crosses at
most \(|U|^2/4\) support edges.  Rounding therefore decreases the
capacity of any such cut by at most
\[
    \frac{|U|^2}{4}\varpi
    \le
    \frac{s_0^2}{4\lambda^2}\varpi
    =
    \frac{\lambda}{200}.
\]
Consequently, an induced subgraph whose minimum
cut is at least $0.1\lambda$ has rounded minimum cut at least
$\lambda/(25\varpi)$, as required by Lemma~6.12.

Lemmas~6.13 and~6.14 are unchanged.  In the analogue of Lemma~6.15, a
set $U$ under consideration satisfies
\[
    |\delta_H(U)|
    \le
    \frac{\beta\lambda}{\varpi}.
\]
Each rounded edge belongs to at most $100\log \bar m$ packed forests, while
the packing contains at least
\[
    \frac{\lambda}{25\varpi}\log \bar m
\]
forests.  Hence some forest crosses $U$ at most $2500\beta$ times.
We may therefore replace $2525$ throughout Lemmas~6.15 and~6.17 by
\[
    M_\alpha:=\lceil2500\beta\rceil+1.
\]
The resulting version of Lemma~6.11 produces sets of volume at most
$3s_0$, has the same polynomial overlap bound, and covers every
internally $0.1\lambda$-connected subset of a set $U$ satisfying
\[
    c_{G[A]}(U)\le\beta\lambda,
    \qquad
    \operatorname{vol}_G^W(U)\le s_0
\]
by at most $M_\alpha$ such sets.

We next adjust Lemma~6.16.  Partition $U$ iteratively along internal
cuts of value below $0.1\lambda$, obtaining internally
$0.1\lambda$-connected sets $U_1,\ldots,U_k$.  Their total interpart
capacity is at most $0.1(k-1)\lambda$.  If more than
$k/2+6\beta$ of them had boundary greater than $0.4\lambda$ in
$G[A]$, then
\[
\begin{aligned}
    w_G(U_1,\ldots,U_k,A\setminus U)
    &=
    \frac12\left(
        \sum_{i=1}^k c_{G[A]}(U_i)+c_{G[A]}(U)
    \right)\\
    &>
    0.2\left(\frac{k}{2}+6\beta\right)\lambda\\
    &=
    0.1k\lambda+1.2\beta\lambda.
\end{aligned}
\]
On the other hand,
\[
    w_G(U_1,\ldots,U_k,A\setminus U)
    \le
    0.1(k-1)\lambda+\beta\lambda,
\]
a contradiction.  Repeating the same argument after intersecting all
sets with an active set $A'\subseteq A$ shows that, among
\[
    I:=\{i:U_i\cap A'\ne\emptyset\},
\]
at least $|I|/2-6\beta$ indices satisfy
\[
    c_{G[A']}(U_i\cap A')\le0.4\lambda.
\]
Thus the analogue of Lemma~6.19 decreases $|I|$ by a constant factor
whenever
\[
    |I|>K_\alpha,
    \qquad
    K_\alpha:=\lceil24\beta\rceil.
\]

We now choose the remaining parameters in the order needed by the
algorithm.  The modified Lemma~6.17 enumerates connected families of at
most $M_\alpha K_\alpha$ sets from Lemma~6.11.  Hence every resulting
source set has volume at most
\[
    B_\alpha s_0,
    \qquad
    B_\alpha:=3M_\alpha K_\alpha.
\]
Choose sufficiently large constants $R_\alpha\ge B_\alpha$ and
$D_\alpha$, and set
\begin{equation}
\label{eq:large_gamma}
    \gamma
    :=
    \frac{\varepsilon\lambda}{D_\alpha s_0}.
\end{equation}
Apply Lemma~6.17 with parameters $(\gamma,K_\alpha)$.  It produces a
source family $\mathcal S$ in which every vertex belongs to at most
\[
    P_\alpha
    :=
    \left(
        \frac{s_0}{\varepsilon\lambda}
    \right)^{O_\alpha(1)}
    \log^{O_\alpha(1)}n
\]
source sets, in time
\[
    \widetilde O\!\left(
        P_\alpha|E(G[A])|
    \right).
\]
The relevant coverage guarantee is the following.  Suppose
$U'\subseteq U\subseteq A$,
\[
    c_{G[A]}(U)\le\beta\lambda,
    \qquad
    \operatorname{vol}_G^W(U)\le s_0,
\]
the minimum cut of $G[U']$ is at least $\gamma\lambda$, and $U'$
is covered by at most $K_\alpha$ internally
$0.1\lambda$-connected subsets of $U$.  Then some
$S\in\mathcal S$ contains $U'$.

Every source has at most $B_\alpha s_0/\lambda$ vertices, and every
vertex belongs to at most $P_\alpha$ sources.  The intersection graph
of $\mathcal S$ therefore has maximum degree at most
\[
    \frac{B_\alpha s_0}{\lambda}P_\alpha.
\]
A greedy coloring partitions $\mathcal S$ into
\[
    \left(
        \frac{s_0}{\varepsilon\lambda}
    \right)^{O_\alpha(1)}
    \log^{O_\alpha(1)}n
\]
families, each consisting of pairwise vertex-disjoint source sets.

For completeness, we record the required modification of Lemma~6.18.
Let $\zeta>0$ be a sufficiently small absolute constant multiple of
$\varepsilon$.  Given pairwise disjoint source sets
$S_1,\ldots,S_q$ in the current active cluster $A'$, construct the
auxiliary isolating-cut graph as in~\cite{HLRW24}, except that
\[
    w(t_i,v)
    :=
    (1-2\zeta)w_G(v,V(G)\setminus A')
    \qquad(v\in S_i)
\]
and
\[
    w(t_*,v)
    :=
    \frac{\zeta\lambda}{R_\alpha s_0}\,d_G^W(v)
    \qquad(v\in A').
\]
Run the approximate-isolating-cut algorithm with accuracy
\[
    \zeta'
    \le
    \min\left\{
        \zeta,\,
        a_\alpha\frac{\zeta\lambda}{s_0}
    \right\},
\]
where $a_\alpha>0$ is sufficiently small.  The auxiliary graph has
conductance
\[
    \Omega_\alpha\!\left(
        \frac{\zeta\lambda}{s_0}
    \right).
\]
The proof of Lemma~6.18 therefore returns pairwise disjoint sets
$C_i\subseteq A'$ satisfying
\begin{equation}
\label{eq:modified_isolating_volume_boundary}
\begin{split}
    \operatorname{vol}_G^W(C_i)
    &\le
    \frac{C_\alpha s_0^2}{\varepsilon\lambda},\\
    c_{G[A']}(C_i)
    &\le
    (1-\zeta)w_G(C_i,V(G)\setminus A').
\end{split}
\end{equation}
Indeed, the singleton cut around $t_i$ has capacity at most
$\operatorname{vol}_G^W(S_i)\le B_\alpha s_0$, whereas every vertex
of $C_i$ contributes
\[
    \frac{\zeta\lambda}{R_\alpha s_0}d_G^W(v)
\]
to the regularizing terminal.  This proves the first inequality after
increasing $C_\alpha$.  The second follows from the flow guarantee
exactly as in the original proof:
\[
    c_{G[A']}(C_i)
    \le
    \frac{1-2\zeta}{1-\zeta'}
    w_G(C_i\cap S_i,V(G)\setminus A')
    \le
    (1-\zeta)w_G(C_i,V(G)\setminus A').
\]

Furthermore, whenever
\[
    U^\dagger\subseteq S_i,
    \qquad
    \operatorname{vol}_G^W(U^\dagger)\le s_0,
    \qquad
    c_{G[A']}(U^\dagger)\le\beta\lambda,
\]
there is $U^*\subseteq U^\dagger\cap C_i$ such that
\begin{equation}
\label{eq:modified_isolating_uncrossing}
    w_G(U^\dagger\setminus U^*,V(G)\setminus A')
      +c_{G[A']}(U^*)
    \le
    \left(1+\frac{\varepsilon}{2}\right)
    c_{G[A']}(U^\dagger).
\end{equation}
Moreover,
\begin{equation}
\label{eq:modified_isolating_change}
    U^*\ne U^\dagger
    \quad\Longrightarrow\quad
    c_{G[A']}(U^\dagger)\ge0.2\lambda.
\end{equation}
To see this, repeat Equations~(7)--(10) in the proof of Lemma~6.18.
Before the final case distinction, the two additional errors are
\[
    \zeta' O_\alpha(s_0)
    \qquad\text{and}\qquad
    O\!\left(\frac{\zeta\lambda}{R_\alpha}\right).
\]
The choices of $a_\alpha$ and $R_\alpha$ make their sum at most
$\zeta\lambda/2$.  If
$c_{G[A']}(U^\dagger)\ge0.2\lambda$, this error is absorbed
multiplicatively; choosing the constant in
$\zeta=\Theta(\varepsilon)$ sufficiently small gives
\eqref{eq:modified_isolating_uncrossing}.  If the boundary is below
$0.2\lambda$, the global minimum-cut lower bound forces
$U^*=U^\dagger$, exactly as in Equation~(10) of~\cite{HLRW24}.  This
proves~\eqref{eq:modified_isolating_change}.

Run these isolating-cut instances cyclically through the disjoint source
families, always deleting the output sets $C_i$ from the active
cluster.  The generalized Lemma~6.19 and the preceding calculation imply
that after $O(\log n)$ cycles, at most $K_\alpha$ of the pieces
$U_i$ still meet the active cluster.  Perform one additional cycle and
let $A_0$ be the remaining active set.  The deleted sets form
$A_1,\ldots,A_r$.

The volume conclusion in~\eqref{eq:large_claim_concise} follows from
\eqref{eq:modified_isolating_volume_boundary}.  To prove its boundary
conclusion, consider one batch of deleted sets and let $A'$ be the
active cluster immediately before the batch.  Put
\[
    x:=\sum_i w_G(C_i,V(G)\setminus A'),
    \qquad
    y:=\sum_i c_{G[A']}(C_i).
\]
The boundary of the active cluster decreases by at least $x-y$, while
the new interpart capacity created by the batch is at most $y$.  Since
$y\le(1-\zeta)x$,
\[
    y\le\frac1\zeta(x-y).
\]
Summing over all batches telescopes against the decrease of
$c_G(A')$, and hence
\[
    w_G(A_0,A_1,\ldots,A_r)
    \le
    \zeta^{-1}c_G(A)
    =
    O(\varepsilon^{-1})c_G(A).
\]

It remains to prove the uncrossing guarantee.  Fix $U\subseteq A$ as
in the claim, and let $A^\dagger$ be the active cluster after the first
$O(\log n)$ cycles.  Refine $U\cap A^\dagger$ iteratively along
internal cuts of value below $\gamma\lambda$, and let $\mathcal Q$
be the resulting partition.  Since
\[
    |U|
    \le
    \frac{\operatorname{vol}_G^W(U)}{\lambda}
    \le
    \frac{s_0}{\lambda},
\]
the refinement makes fewer than $s_0/\lambda$ cuts, and therefore
\begin{equation}
\label{eq:large_refinement_weight}
    w_G(\mathcal Q)
    \le
    \frac{s_0}{\lambda}\gamma\lambda
    \le
    \frac{\varepsilon\lambda}{D_\alpha}.
\end{equation}
Every $Q\in\mathcal Q$ has minimum cut at least
$\gamma\lambda$, is covered by at most $K_\alpha$ of the surviving
internally $0.1\lambda$-connected sets, and satisfies
\[
\begin{aligned}
    c_{G[A^\dagger]}(Q)
    \le
    c_{G[A^\dagger]}(U\cap A^\dagger)
      +w_G(\mathcal Q)
    \le
    (\beta-1)\lambda
      +\frac{\varepsilon\lambda}{D_\alpha}
    <
    \beta\lambda.
\end{aligned}
\]
Thus the modified Lemmas~6.17 and~6.18 apply to every part during the
additional cycle.

For each $Q_i\in\mathcal Q$, let $A_i^*\supseteq A_0$ be the active
cluster when its source family is processed, and let
$U_i^*\subseteq Q_i\cap A_i^*$ be the set obtained from
\eqref{eq:modified_isolating_uncrossing}.  Put
\[
    D_i:=(Q_i\cap A_i^*)\setminus U_i^*,
    \qquad
    D:=\bigcup_iD_i,
    \qquad
    \widehat U:=U\setminus D.
\]
Every vertex of $\widehat U$ lies in a deleted set, so
$\widehat U\cap A_0=\emptyset$.  Repeating the edge accounting in
Equations~(1)--(6) of the proof of Lemma~6.10 gives
\begin{equation}
\label{eq:large_calculation_concise}
    c_{G[A]}(\widehat U)-c_{G[A]}(U)
    \le
    \frac{\varepsilon}{2}c_{G[A]}(U)
      -w_G(D,V(G)\setminus A)
      +C_0w_G(\mathcal Q),
\end{equation}
where $C_0$ is an absolute constant.

Choose $D_\alpha\ge20C_0$.  If
$c_{G[A]}(U)\ge\lambda/10$, then
\eqref{eq:large_refinement_weight} gives
\[
    C_0w_G(\mathcal Q)
    \le
    \frac{\varepsilon}{2}c_{G[A]}(U),
\]
and \eqref{eq:large_calculation_concise} proves
\eqref{eq:large_uncrossing_concise} with $U^*:=\widehat U$.

Suppose instead that $c_{G[A]}(U)<\lambda/10$.  If some $D_i$ were
nonempty, \eqref{eq:modified_isolating_change} would give
\[
    c_{G[A_i^*]}(Q_i\cap A_i^*)\ge0.2\lambda.
\]
The comparison following Equation~(6) in~\cite{HLRW24}, together with
\eqref{eq:large_refinement_weight}, would give
\[
    c_{G[A_i^*]}(Q_i\cap A_i^*)
    \le
    c_{G[A]}(U)+2w_G(\mathcal Q)
    <
    0.2\lambda,
\]
a contradiction after increasing $D_\alpha$ if necessary.  Hence
$D=\emptyset$, so $\widehat U=U$, and the desired inequality is
immediate.  This proves~\eqref{eq:large_uncrossing_concise}.

Lemma~6.17 is run once, and the modified Lemma~6.18 is run for
\[
    \left(
        \frac{s_0}{\varepsilon\lambda}
    \right)^{O_\alpha(1)}
    \log^{O_\alpha(1)}n
\]
source-family/cycle pairs.  Each invocation costs the same polynomial
factor times $|E(G[A])|$.  This proves the asserted running time and
completes the large-cluster claim.

We now return to the notation $G_j$.  Combining the strong partition
with the two claims gives the following constant-range version of
Lemma~3.2 of~\cite{HLRW24}.  For every
$S\subseteq V(G_j)$ satisfying
\[
    c_{G_j}(S)\le(\beta-1)\lambda,
\]
there is a cut $S'$ crossing no output cluster such that
\begin{equation}
\label{eq:structure_concise}
\begin{split}
    c_{G_j}(S')
    &\le
    (1+c_\alpha\varepsilon)c_{G_j}(S),\\
    \operatorname{vol}_{G_j}^W(S\triangle S')
    &\le
    \frac{c_\alpha s_0^2}{\varepsilon\lambda}.
\end{split}
\end{equation}

Indeed, in every crossed strong cluster, choose the shore of smaller
padded volume; its unpadded volume is at most $s_0$.  Apply the
large-cluster claim to that shore.  If the resulting set differs from the
original shore, then applying the global minimum-cut lower bound to their
difference and using~\eqref{eq:large_uncrossing_concise} gives
\[
    \lambda
    \le
    (2+\varepsilon)c_{G_j[A]}(S\cap A).
\]
The induced boundaries for distinct clusters are disjoint subsets of
$\delta_{G_j}(S)$.  Hence only $O_\alpha(1)$ large clusters change,
and their total changed volume is $O_\alpha(s_0)$.

After this step, the cut crosses only the small pieces produced by the
large-cluster routine.  Each has volume at most
\[
    \frac{C_\alpha s_0^2}{\varepsilon\lambda}.
\]
For sufficiently small $\varepsilon$,
\[
    (1+\varepsilon)(\beta-1)\lambda<\beta\lambda,
\]
so the small-cluster claim applies to every induced cut encountered
here.  The uncrossing argument of Claims~3.6--3.7 of~\cite{HLRW24}
changes only $O_\alpha(1)$ groups: every changed group contributes
$\Omega(\lambda)$ disjoint induced boundary.  Thus the additional
changed volume is
\[
    O_\alpha\!\left(
        \frac{s_0^2}{\varepsilon\lambda}
    \right).
\]
The large-cluster uncrossing increases the cut by at most an
$O(\varepsilon)$ fraction, and the small-cluster uncrossing contributes
a factor $1/(1-\varepsilon)$.  Increasing $c_\alpha$ if necessary
gives both conclusions in~\eqref{eq:structure_concise}.

The preceding local argument is uniform for every sufficiently small
$\varepsilon>0$.  Let $\varepsilon_\alpha>0$ be smaller than all
constant upper bounds on $\varepsilon$ used above, and in particular
assume
\[
    \varepsilon_\alpha\le\frac1{4\beta}.
\]
Set
\[
    L_0:=\left\lceil\log_2(2\beta \bar m)\right\rceil+2
    =O(\log n)
\]
and fix
\[
    \varepsilon
    :=
    \min\left\{
        \varepsilon_\alpha,\,
        \frac{\theta}{4c_\alpha\alpha L_0}
    \right\}.
\]
We now construct the hierarchy with this value of $\varepsilon$.

Choose the exponent in~\eqref{eq:s0_concise} sufficiently large.  The
calculation of Lemma~3.1 of~\cite{HLRW24}, using the intercluster-boundary
bounds proved above, makes the unpadded volume of each quotient at most
half that of its predecessor.  By~\eqref{eq:initial_volume_concise} and
the lower bound $2\lambda$ on the volume of a nontrivial quotient, the
hierarchy has at most $L_0$ levels.  Moreover,
\begin{equation}
\label{eq:hierarchy_volume_concise}
    \sum_{j=0}^{L-1}
    \operatorname{vol}_{G_j}^W(V(G_j))
    =
    O_\alpha(\bar m\lambda).
\end{equation}

The rounded graphs used in Lemma~6.12 have, over all calls,
\[
    O\!\left(
        \sum_j
        \frac{\operatorname{vol}_{G_j}^W(V(G_j))}{\varpi}
    \right)
    =
    \bar m\left(\frac{s_0}{\lambda}\right)^{O_\alpha(1)}
\]
explicit edges, and the forest-packing overhead is
$(s_0/\lambda)^{O_\alpha(1)}$.  All remaining routines are charged to
the simple supports of the quotients.  The small pieces have at most
\[
    O_\alpha\!\left(
        \frac{s_0^2}{\varepsilon\lambda^2}
    \right)
    =
    (\log n/\theta)^{O_\alpha(1)}
\]
vertices by~\eqref{eq:quotient_connectivity_concise} and
\eqref{eq:large_claim_concise}.  Thus the entire hierarchy is constructed
in
\[
    \bar m(\log n/\theta)^{O_\alpha(1)}
\]
time.

Fix a target cut $S$.  Starting with $S_0=S$, apply
\eqref{eq:structure_concise} before each contraction.  For every
$j\le L$,
\[
\begin{aligned}
    c_{G_j}(S_j)
    \le
    \alpha\lambda(1+c_\alpha\varepsilon)^j
    \le
    \alpha\lambda e^{c_\alpha\varepsilon L_0}
    \le
    \alpha\lambda e^{\theta/(4\alpha)}
    \le
    (\alpha+\theta)\lambda
    <
    (\beta-1)\lambda.
\end{aligned}
\]
Here the penultimate inequality holds after decreasing the fixed upper
bound on $\theta$.  Thus the structure lemma remains applicable at
every nontrivial level.

The proof of Lemma~7.3 of~\cite{HLRW24} then represents $S$ as
\[
    S=\mathop{\triangle}_{v\in D}\overline v
\]
for hierarchy nodes $D$, where $\overline v\subseteq V(F)$ is the
preimage of $v$, and their total unpadded degree is at most
\begin{equation}
\label{eq:Delta_concise}
    \Delta_\alpha
    :=
    \frac{C_\alpha L_0s_0^2}{\varepsilon\lambda}
    =
    \lambda(\log n/\theta)^{O_\alpha(1)}.
\end{equation}
The same hierarchy and the same bound work simultaneously for every
target cut.

Set
\[
    Z:=\frac{\theta\lambda}{\Delta_\alpha},
    \qquad
    \nu:=\frac{\lambda}{Z},
    \qquad
    \eta:=\theta Z^2.
\]
The sampling construction underlying Lemma~7.1 of~\cite{HLRW24}
allows the common edge quantum to be fixed to any sufficiently small
constant multiple of its stated upper bound.  Fixing such a constant,
we take
\begin{equation}
\label{eq:W0_concise}
\begin{aligned}
    W_0
    =
    \Theta\!\left(
        \frac{\eta\lambda^2}
             {\nu\log(2L_0 \bar m)}
    \right)
    =
    \Theta\!\left(
        \frac{\theta Z^3\lambda}
             {\log(2L_0 \bar m)}
    \right)
    =
    \lambda
    \left(\frac{\theta}{\log n}\right)^{O_\alpha(1)}.
\end{aligned}
\end{equation}
for a sufficiently small absolute constant.
It produces an unweighted graph $J$, supported on the original edge
pairs, with
\[
    |E(J)|=\bar m(\log n/\theta)^{O_\alpha(1)}.
\]
By Lemma~7.2 of~\cite{HLRW24} and
\eqref{eq:Delta_concise}, every target cut satisfies
\begin{equation}
\label{eq:J_concise}
\begin{aligned}
    \left|W_0c_J(S)-c_F(S)\right|
    \le
    \left(
        \frac{\Delta_\alpha}{\lambda}
    \right)^2
    \eta\lambda
    =
    \theta^3\lambda.
\end{aligned}
\end{equation}

Let $X$ be the extension graph of Section~7.2 of~\cite{HLRW24}, formed
from the unpadded hierarchy degrees.  For a terminal cut
$S\subseteq V(F)$, write
\[
    d(S)
    :=
    \min\left\{
        \sum_{v\in D}d^W(v):
        S=\mathop{\triangle}_{v\in D}\overline v
    \right\}.
\]
Lemma~7.4 of~\cite{HLRW24} identifies $d(S)$ with the minimum value of
$c_X(S^\star)$ over all extensions
$S^\star\subseteq V(X)$ satisfying
$S^\star\cap V(F)=S$.

Form $G'$ by assigning weight $W_0$ to every edge of $J$ and
multiplying every edge weight of $X$ by $Z$.  If $S$ is a target
cut, then $d(S)\le\Delta_\alpha$, so it has an extension $S^\star$
satisfying
\begin{equation}
\label{eq:target_extension_concise}
\begin{aligned}
    c_{G'}(S^\star)
    =
    W_0c_J(S)+Zc_X(S^\star)
    \le
    c_F(S)+\theta^3\lambda+Z\Delta_\alpha
    =
    c_F(S)+(\theta+\theta^3)\lambda.
\end{aligned}
\end{equation}

Conversely, let $S$ be any nontrivial terminal cut and let
$S^\star$ be any extension.  If $d(S)\ge\lambda/Z$, Lemma~7.4 gives
\[
    Zc_X(S^\star)\ge\lambda.
\]
If $d(S)<\lambda/Z=\nu$, apply Lemma~7.2 to a representation attaining
$d(S)$.  Since $c_F(S)\ge\lambda$,
\begin{equation}
\label{eq:extension_lower_concise}
\begin{aligned}
    W_0c_J(S)
    \ge
    c_F(S)
      -\left(
          \frac{d(S)}{\lambda}
       \right)^2\eta\lambda
    \ge
    c_F(S)-Z^{-2}\eta\lambda
    =
    c_F(S)-\theta\lambda
    \ge
    (1-\theta)\lambda.
\end{aligned}
\end{equation}
Thus every extension of every nontrivial terminal cut has value at least
$(1-\theta)\lambda$ in $G'$.

Every $J$-edge has weight $W_0$ in $G'$.  Every edge of $X$
corresponding to a node $v\in V(G_j)$ has unscaled weight
$d_{G_j}^W(v)\ge\lambda$, and hence has weight at least
$Z\lambda$ in $G'$.  Set
\[
    W:=\theta\min\{W_0,Z\lambda\}
\]
and replace every edge $e$ of $G'$ by
\[
    \left\lfloor\frac{w_{G'}(e)}W\right\rfloor
\]
parallel unit edges.  Denote the resulting unweighted extension graph by
$H$.  Since $w_{G'}(e)\ge W/\theta$,
\[
    (1-\theta)w_{G'}(e)
    \le
    W\left\lfloor\frac{w_{G'}(e)}W\right\rfloor
    \le
    w_{G'}(e).
\]
Consequently, for every $Y\subseteq V(G')$,
\begin{equation}
\label{eq:rounding_cut_comparison}
    (1-\theta)c_{G'}(Y)
    \le
    Wc_H(Y)
    \le
    c_{G'}(Y).
\end{equation}
It follows that the Steiner minimum cut of $H$, with terminal set
$V(F)$, is at least
\[
    \frac{(1-\theta)^2\lambda}{W}.
\]
On the other hand, \eqref{eq:target_extension_concise} gives an extension
of every target cut satisfying
\[
    Wc_H(S^\star)
    \le
    c_F(S)+(\theta+\theta^3)\lambda.
\]

The graph $H$ has
\[
    |V(H)|=O(nL_0),
    \qquad
    |E(H)|=\bar m(\log n/\theta)^{O_\alpha(1)}.
\]
Indeed, its $J$-part contributes at most
\[
    |E(J)|\frac{W_0}{W}
\]
copies.  The total unscaled weight of $X$ is
\[
    \sum_{j=0}^{L-1}
    \operatorname{vol}_{G_j}^W(V(G_j))
    =
    O_\alpha(\bar m\lambda)
\]
by~\eqref{eq:hierarchy_volume_concise}, so its scaled copy count is at
most
\[
    O_\alpha\!\left(\frac{\bar m Z\lambda}{W}\right).
\]
Both $W_0/W$ and $Z\lambda/W$ are
$(\log n/\theta)^{O_\alpha(1)}$ by
\eqref{eq:Delta_concise} and~\eqref{eq:W0_concise}.

Apply Theorem~7.7 of~\cite{HLRW24}, with terminals $V(F)$, to split
off all Steiner vertices of $H$.  The preceding upper and lower bounds
show that the Steiner minimum cut of $H$ is
$\Theta(\lambda/W)$.  Since $\bar m\ge n-1$, the running time is
\[
    \widetilde O\!\left(
        |E(H)|+|V(H)|(\lambda/W)^2
    \right)
    =
    \bar m(\log n/\theta)^{O_\alpha(1)}.
\]
Delete the isolated Steiner vertices afterward and call the resulting
unweighted multigraph on $V(F)$ $K_0$.

Splitting off cannot increase a cut and preserves the Steiner minimum
cut.  For a target cut $S$, let $S^\star$ be the extension from
\eqref{eq:target_extension_concise}.  Since every Steiner vertex is
isolated after splitting,
\[
\begin{aligned}
    Wc_{K_0}(S)
    =
    Wc_{\widehat K}(S^\star)
    \le
    Wc_H(S^\star)
    \le
    c_F(S)+(\theta+\theta^3)\lambda,
\end{aligned}
\]
where $\widehat K$ denotes the graph immediately after splitting off.
Moreover, every nontrivial cut of $K_0$ separates terminals and is
therefore at least the preserved Steiner minimum cut.  After increasing
a constant $C_\alpha$ if necessary, we obtain
\begin{equation}
\label{eq:K0_bounds}
    Wc_{K_0}(S)\ge(1-C_\alpha\theta)\lambda
\end{equation}
for every nontrivial cut, while every target cut satisfies
\[
    Wc_{K_0}(S)
    \le
    c_F(S)+C_\alpha\theta\lambda.
\]

Finally, set
\[
    r
    :=
    \left\lceil
        \frac{(\alpha+2C_\alpha\theta)\lambda}{W}
    \right\rceil+1
\]
and let $K\subseteq K_0$ be an $r$-edge-connectivity certificate
from Nagamochi--Ibaraki~\cite{NI92}.  It satisfies
\[
    c_K(S)\ge\min\{r,c_{K_0}(S)\}
\]
for every cut and preserves every $K_0$-cut of size below $r$
exactly.  For every target cut,
\[
    c_{K_0}(S)
    \le
    \frac{(\alpha+C_\alpha\theta)\lambda}{W}
    <
    r,
\]
so its value is preserved.  Furthermore,
\[
    rW\ge\lambda,
\]
and therefore \eqref{eq:K0_bounds} and the certificate inequality imply
\[
    Wc_K(S)\ge(1-C_\alpha\theta)\lambda
\]
for every nontrivial cut.  Finally,
\[
    |E(K)|
    =
    O(nr)
    =
    n(\log n/\theta)^{O_\alpha(1)}.
\]

Equations~\eqref{eq:Delta_concise} and~\eqref{eq:W0_concise}, together
with the definition of $W$, give
\[
    W_0,W\le\lambda,
    \qquad
    \max\left\{
        \frac{\lambda}{W_0},
        \frac{\lambda}{W}
    \right\}
    \le
    \left(\frac{\log n}{\theta}\right)^{O_\alpha(1)}.
\]
Choose
\[
    \theta=c_\alpha'\xi
\]
for a sufficiently small positive constant $c_\alpha'$.  Then
\[
    \theta^3\le\xi,
    \qquad
    C_\alpha\theta\le\xi.
\]
Since the initial truncation preserved every target cut and
$\lambda_2(F)$, all guarantees hold for the original input graph.
Replacing $\theta$ by its constant multiple of $\xi$ in the running
time and scale bounds proves the theorem.
\end{proof}

\section{Consequences for \texorpdfstring{$3$}{3}-Cut and \texorpdfstring{$4$}{4}-Cut on Simple Graphs}
\label[appendix]{app:simple_graph_consequences}

We apply the border-and-island framework of~\cite{LV26} with the weighted algorithm above on the contracted multigraph.  We use the notation of that paper and write $p:=2.37134$ and $a:=3.250386$, the rounded upper bounds from \cite{LV26}.

\begin{theorem}[Minimum $3$-Cut and $4$-Cut in simple graphs]
\label{thm:simple_graph_consequences}
Minimum $3$-Cut in a simple unweighted $n$-vertex graph can be solved
with high probability in $O(n^2)$ time.  Minimum $4$-Cut can be solved with high
probability in $\widetilde O(n^{14/5})$ time.
\end{theorem}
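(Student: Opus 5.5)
The plan is to plug the new weighted algorithms into the border-and-island framework of~\cite{LV26} as black-box subroutines. I will then recompute the balance between the framework's two regimes. I cannot restate that framework precisely, since its details are only cited here. My assumption is the following. For a simple graph with minimum $k$-cut value $\lambda_k$, the framework either finds an optimum directly, or it produces a contracted weighted multigraph whose vertex count is sublinear in $n$ (polynomially related to $n/\delta$-type quantities, where $\delta$ is the minimum degree). The optimum survives in this multigraph, which is then handed to a weighted $k$-cut solver. The remaining ``island'' configurations, meaning optima with a small side that is nearly a clique or has few vertices, are handled by separate enumeration. The constants $p$ and $a$ are the exponents from~\cite{LV26} that govern the cost of that enumeration and of the contraction.

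\emph{Case $k=3$.} The first step is to bound every piece of the framework outside the weighted solver by $O(n^2)$. The input is simple, so $m\le n^2$, and I expect the border construction and the island enumeration for $k=3$ to be near-linear in $m$. Any $n^{p}$-type term should arise only for $k\ge4$ or be subsumed by the $n^2$ target. The second step is to apply \cref{thm:weighted_three_cut} to the contracted multigraph $G'$ with $n'$ vertices. Its cost is $O((m'+n'^2)\log^2 n')$, where $m'\le m$ after aggregating parallel edges. The quadratic term is $o(n^2)$ as long as $n'\le n/\log n$. The term $m'\log^2 n'$ is the obstacle, because $m'$ can be as large as $\Theta(n^2)$. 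To handle it, I will aggregate parallel edges first, which gives $m'\le n'^2$. Aggregation costs $O(m)$ by bucket sort and is the only linear scan needed. After it, the solver runs in $O(n'^2\log^2 n')=O(n^2)$ once the contraction guarantees $n'=O(n/\log n)$. If the framework only guarantees $n'=O(n/\lambda)$ with $\lambda$ small, then in the regime $\lambda\le\log n$ I will instead handle $G$ directly. A small $\lambda$ forces a low-degree vertex or a sparse side, so the enumeration cost in~\cite{LV26} for that regime is already $O(n^2)$.

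\emph{Case $k=4$.} Here I will use the randomized reduction \cref{thm:randomized_reduction_kcut} on the contracted graph, with running time $\widetilde O(n'(m'+n'^2))=\widetilde O(n'^3)$ after aggregation. I will then choose the framework's threshold parameter to balance this against the island and border costs, which~\cite{LV26} bounds using $p$ and $a$. Concretely, suppose the threshold $t$ gives $n'=\widetilde O(n/t)$ and an island cost of roughly $n^{2}t^{c}$ for the appropriate exponent $c$. Equating $(n/t)^3$ with that cost yields the stated $\widetilde O(n^{14/5})$, which corresponds to $t=n^{1/15}$ if the contracted-solver term is the binding one. I will verify this exponent arithmetic against the explicit trade-off curve in~\cite{LV26}.

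The main obstacle is the $k=3$ bound. Achieving exactly $O(n^2)$ with no logarithmic factors requires the contracted instance to be smaller than $n$ by a $\log n$ factor. It also requires every other step of the framework, including the small-side enumeration, to avoid a $\log$ factor. My first attempt will be to set the framework's degree threshold at $\Theta(\log n)$, so that the $\log^2 n'$ overhead of \cref{thm:weighted_three_cut} is absorbed.
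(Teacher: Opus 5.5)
There is a genuine gap. The proposal never pins down the cost structure of the border-and-island framework, and the two places where you commit to specifics do not go through.

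In~\cite{LV26}, with threshold $s=n^\theta$, the framework has the following costs:
\begin{itemize}
\item the small-$s$ branch costs $n^{1+(6k-6)\theta}$;
\item the sparsified no-island instance has $N=\widetilde O(n/s)$ vertices and $M=O(sn)$ edges;
\item the island branches have exponents $b_1=k-1-(k-2)\theta$, $b_2=k-2-(k-4)\theta$, and so on up to $b_{k-1}=\widehat\Phi(k-1)$.
\end{itemize}
For $k=4$, the new weighted bound $\widetilde O(N(M+N^2))=\widetilde O(n^2+n^3/s^3)$ only lowers the no-island exponent to $\max\{2,3-3\theta\}$. That exponent then lies below $b_1=3-2\theta$, so the contracted solver is not the bottleneck.

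The binding balance is instead between the small-$s$ branch and the one-island branch: $1+18\theta=3-2\theta$. This gives $\theta=1/10$ and exponent $14/5$, with the contracted solver slack at $n^{2.7}$. Your equation $(n/t)^3=n^2t^c$ with $t=n^{1/15}$ is reverse-engineered from the target and identifies the wrong binding pair. At $\theta=1/15$, the one-island branch alone costs $n^{43/15}>n^{14/5}$.

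For $k=3$, a threshold of $\Theta(\log n)$ cannot be shown to absorb the logarithms. The framework's own bounds already hide unspecified polylogarithmic factors, both in $N=\widetilde O(n/s)$ and in the one-island cost $\widetilde O(n^{2-\theta})$. With $s$ polylogarithmic, the stated bound does not even guarantee $N<n$.

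The fix is a polynomial threshold. With $s=n^{1/24}$ (any $0<\theta<1/12$ works):
\begin{itemize}
\item the small-$s$ branch is $\widetilde O(s^{12}n)=\widetilde O(n^{3/2})$;
\item the no-island branch is $\widetilde O((n/s)^2)$;
\item the one-island branch is $\widetilde O(n^{2-\theta})$;
\item the two-island branch is an exact $O(n^2)$ scan over all pairs.
\end{itemize}
The first three are $O(n^2)$ because the polynomial savings absorb the hidden logarithms. Note also that ``near-linear in $m$'' would not suffice for the two-island branch, since $m$ may be $\Theta(n^2)$ and any logarithmic overhead there breaks the $O(n^2)$ bound.
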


\begin{proof}
Let $s=n^\theta$, suppressing polylogarithmic factors for $k\ge4$.  As in~\cite{LV26}, the small-$s$ branch has exponent $f(\theta):=1+(6k-6)\theta$, while the sparsified instance has $N=\widetilde O(n/s)$ vertices and $M=O(sn)$ edges.  The new weighted bound gives
\begin{equation*}
 \widetilde O\!\left(N^{k-3}(M+N^2)\right)
 =\widetilde O\!\left(\frac{n^{k-2}}{s^{k-4}}+\frac{n^{k-1}}{s^{k-1}}\right),
\end{equation*}
so the no-island exponent is $b_0(\theta):=\max\{k-2-(k-4)\theta,k-1-(k-1)\theta\}$.  The remaining exponents are
\begin{equation*}
 b_1=k-1-(k-2)\theta,\quad b_2=k-2-(k-4)\theta,\quad
 b_i=k-i+\widehat\Phi(i)-(k-i)\theta\ (3\le i\le k-2),\quad
 b_{k-1}=\widehat\Phi(k-1).
\end{equation*}
For every threshold below, $\theta<1/2$, and hence $b_0,b_2<b_1$.

For $k=4$, $f=1+18\theta$, $b_1=3-2\theta$, $b_0=\max\{2,3-3\theta\}$, $b_2=2$, and $b_3 = \widehat \Phi(3) < p$.  Thus the new threshold is determined by $f=b_1$, namely $\theta_4=1/10$, and
\begin{equation*}
 E_4=1+18\theta_4=3-2\theta_4=\frac{14}{5}>
 \max\{3-3\theta_4,2,p\}.
\end{equation*}

For $k=3$, retain the polylogarithmic factors and choose
$s=n^{1/24}$.  The FPT branch runs in
$\widetilde O(s^{12}n)=\widetilde O(n^{3/2})=O(n^2)$.
The two-island branch takes $O(n^2)$ time by checking all pairs, while the
no-island branch takes
$\widetilde O((n/s)^2)=O(n^2)$; the one-island branch is no slower up to polylogarithmic factors, as in \cite{LV26}, and thus runs in $O(n^2)$.
Thus Minimum $3$-Cut on simple graphs is solvable in $O(n^2)$ time.
\end{proof}

\end{document}